\documentclass[opre,nonblindrev]{informs4}

\OneAndAHalfSpacedXI 

\usepackage{endnotes}
\let\footnote=\endnote
\usepackage{algorithm,wasysym}
\usepackage{algorithmicx,algpseudocode}
\usepackage{xcolor}
\usepackage{booktabs,multirow}
\usepackage{amsmath,amssymb,amsfonts}
\usepackage{natbib}
\bibpunct[, ]{(}{)}{,}{a}{}{,}%
\def\bibfont{\small}%
\usepackage{enumitem}
\usepackage{lmodern}
\usepackage{rotating}
\usepackage{fancyvrb}
\usepackage{xr}
\usepackage{bbm}
\usepackage{float}
\usepackage{placeins}
\usepackage{smile}
\usepackage{multibib}
\usepackage{mathrsfs}
\newcites{EC}{E-Companion References}

\def\theARTICLETOP{} %
\def\theLRHFirstLine{}
\def\theLRHSecondLine{}
\def\theRRHFirstLine{}
\def\theRRHSecondLine{}

\TheoremsNumberedThrough     
\ECRepeatTheorems
\JOURNAL{Operations Research}

\EquationsNumberedThrough    

\MANUSCRIPTNO{}

\begin{document}
	
\JOURNAL{Operations Research}
\MANUSCRIPTNO{}
\RUNAUTHOR{xxx et al.}
	\RUNTITLE{Expected Shortfall Factor Models}
	\TITLE{Expected Shortfall Factor Models: Common Tail Losses and Expected Returns}
	
	\ARTICLEAUTHORS{%
	\AUTHOR{Yujie Hou,\textsuperscript{a}  Xinbing Kong,\textsuperscript{b} Yalin Wang,\textsuperscript{c} Bin Wu,\textsuperscript{d}\textsuperscript{*}} 
		
	\AFF{$^{a}$University of Chinese Academy of Sciences; $^{b}$Southeast University; $^{c}$Shandong University; $^{d}$University of Science and Technology of China}
		
		\AFF{{$^{*}$Corresponding author}}
		\AFF{{\bf Contact:}
			houyujie26@mails.ucas.ac.cn (YH), xinbingkong@126.com (XK), wangyalin@mail.sdu.edu.cn (YW), bin.w@ustc.edu.cn (BW)}
	}

	\ABSTRACT{We develop an expected shortfall factor model (ESFM) to estimate and price common variation in the severity of lower-tail losses in large panels of asset returns. Mean factor models describe common variation in average returns, while quantile factor models describe common movements in tail thresholds. ESFM instead captures common variation in the average severity of losses below those thresholds. The model combines observed risk exposures with latent common factors. We estimate ESFM using an orthogonalized two-step procedure under which first-stage quantile estimation error has no first-order effect on the ES coefficient estimates. We establish nonasymptotic error bounds for the ES coefficients, a finite-sample Gaussian approximation, and consistent selection of the number of latent factors. Applied to a large panel of equities, ESFM uncovers common factors that react sharply to market stress and contain information not captured by mean and quantile factors. Average returns increase across portfolios sorted on ESFM exposure; high-minus-low portfolios earn annualized returns of 8.0\%--11.7\% and Fama--French five-factor alphas of 10.3\%--15.0\%. These spreads remain positive and statistically significant after conditioning separately and jointly on mean- and quantile-factor exposures. Tail-by-tail spanning tests show that ESFM factors retain significant alphas after controlling for standard traded factors and the corresponding mean and quantile factors. Adding ESFM to these benchmark factor sets increases the maximum attainable Sharpe ratio. These findings identify common loss severity as a distinct and priced dimension of downside risk.}
    
	\KEYWORDS{Expected shortfall; tail risk; factor models; asset pricing; risk management; panel data.}
	\maketitle

\section{Introduction}\label{sect:introduction}

Panel factor models for asset returns are commonly formulated for conditional means or conditional quantiles. Mean factor models describe common variation in average returns \citep{connor1986performance,bai2009panel}, whereas quantile factor models estimate common movements in specified conditional quantiles \citep{ando2020quantile,chen2021quantile}. A conditional mean does not describe the lower tail, and a conditional quantile gives only the tail cutoff, not the average return below it. Information within the tail is important because the occurrence and magnitude of extreme losses are distinct features of risk \citep{dierkes2024measuring}. For example, two stocks may have the same conditional 10\% quantile but substantially different average returns among their worst 10\% of outcomes. If these within-tail outcomes move together across stocks, they constitute a source of common downside risk that cannot be recovered from conditional means or quantiles alone. We study whether this common variation in loss severity can be estimated from a large panel of returns and whether stocks' exposures to it are related to expected returns.

The relation between downside risk and the cross-section of stock returns has received considerable attention. Existing studies examine covariation in declining markets through downside beta \citep{ang2006downside}, exposures to common tail risk \citep{kelly2014tail}, sensitivity during extreme market downturns \citep{vanOordt2016systematic}, and lower-tail dependence between individual stocks and the market \citep{chabiyo2018crash}. Other work studies firm-level crash probabilities \citep{jang2019probability}, option-implied bear-market risk \citep{lu2019bear}, return continuation following left-tail events \citep{atilgan2020lefttail}, or exposure to extreme realizations of several systematic factors \citep{chabiyo2022multivariate}. Using a measure of bear-market risk, \citet{massacci2026factor} develop a conditional latent factor model in which an estimated threshold separates good and bad market states and the factor structure may differ across states. More recently, \citet{barunik2026common} use quantile factor analysis to identify common movements in firm-level return quantiles and document a premium for exposure to the lower-tail factor. The resulting measures describe different features of adverse market conditions and need not convey the same information \citep{dierkes2024measuring}. None of these approaches, however, models common variation in the average return below asset-specific conditional quantiles. Such a model is needed to separate observed downside exposures from latent shocks that deepen losses across many assets and to determine whether exposures to those shocks are priced.

We propose an expected shortfall factor model (ESFM) for large panels of asset returns to estimate common variation in the severity of lower-tail outcomes. The central challenge is to separate latent shocks to tail-loss severity from variation associated with observed risk exposures. Doing so requires measuring cross-sectional comovement in tail-loss severity across a high-dimensional panel, even though the relevant tail-loss outcomes are not directly observed and must be constructed using estimated asset-specific thresholds. We address this generated-response problem through an orthogonalized transformation that ensures, under our conditions, that threshold-estimation error enters the second stage only through higher-order terms. ESFM accordingly decomposes conditional ES into a component associated with observed risk exposures and a latent common component with asset-specific loadings. Both the coefficients and the factor structure may depend on the tail probability level $\tau$. The model can therefore recover common movements in tail severity without specifying the underlying shocks in advance.

Expected shortfall (ES) provides the distributional object required for this analysis. At a lower-tail probability $\tau$, ES is the conditional average return given that the return lies below its conditional $\tau$-quantile. Unlike the quantile itself, ES depends on the outcomes below the cutoff and therefore changes as tail losses become more or less severe. ES is a coherent risk measure with an axiomatic foundation for portfolio risk assessment and plays an important role in financial risk management and regulation \citep{artzner1999coherent,Acerbi2002on,Rockafellar2014Superquantile,wang2021axiomatic}. Regression methods for ES allow the tail expectation to depend on observed predictors \citep{dimitriadis2019joint,Patton2019Dynamic,Barendse2020Efficiently,he2023robust,Zhang2025High}. These methods are developed mainly for a single response series or applied separately across assets. They therefore do not use cross-sectional comovement to distinguish asset-specific variation in tail severity from latent shocks that affect many assets at the same time. Without this distinction, a source of common downside risk may be missed even when each asset's ES is estimated accurately in isolation.


\subsection{Estimation and Theory}\label{subsec:intro_method}

Estimation is complicated by two features of the model. First, ES is not elicitable by itself, so its estimation requires the associated conditional quantile \citep{fissler2016higher}. Second, the panel used to estimate the ES factor structure is not directly observed: the tail-loss outcome for each asset depends on its unknown conditional quantile. We address these issues with a two-step procedure. The first step estimates the conditional quantile for each asset. The second step uses these estimates to construct an orthogonalized tail-loss outcome whose conditional expectation equals ES. We then jointly estimate the heterogeneous ES coefficients, factor loadings, and common factors from the resulting panel.

This is not a standard latent-factor problem with an observed dependent variable. The outcome is generated using estimated, asset-specific tail thresholds, and the first-step error enters both coefficient and factor estimation. We derive the estimator for this setting and propose an information criterion for selecting the number of ES factors. The criterion accounts for the generated tail-loss outcome rather than treating it as an observed return panel.

The construction of the second-step outcome is orthogonal to local perturbations in the first-step quantile. Consequently, quantile estimation error has no first-order effect on the ES coefficient estimator. Our main theoretical results are nonasymptotic. We establish simultaneous high-probability bounds for the first-stage quantile estimators and finite-sample error bounds for the asset-specific ES coefficients. The ES coefficient bound separates the oracle ES-regression error, of order $\sqrt{(p+1)/T}$, the contribution from first-step quantile estimation, of order $(p+\log N)/T$, and factor-estimation uncertainty, of order $N^{-1/2}+T^{-1/2}$. Thus, ordinary regression uncertainty decreases with the time-series dimension, the first-step error is higher order because of orthogonality, and estimation of the latent factor structure uses both dimensions of the panel. We further establish a finite-sample Gaussian approximation for standardized linear contrasts of the ES coefficients, with an explicit error bound in $N$, $T$, and the covariate dimension, and prove consistency of the proposed factor-number selector. Complementary asymptotic results are provided in the appendix. Across a range of Monte Carlo designs, ESFM produces smaller coefficient estimation errors than an ES regression that omits the latent factor structure.

\subsection{Common Downside Risk and the Cross-Section of Stock Returns}\label{subsec:intro_empirical}

Recent work on factor estimation with long panels suggests that extending the estimation window can help recover weak latent signals that are obscured in shorter samples \citep{anderson2026long}. We apply ESFM to daily returns on a large panel of liquid Chinese equities drawn from the CSI 300 universe from January 2000 through December 2023. This period includes the 2015 stock market crash, the 2018 trade tensions, and the COVID-19 episode. We extract latent ES factors and compare them with conventional mean- and quantile-factor benchmarks. The mean benchmark follows the approximate-factor approach to asset pricing \citep{connor1986performance,connor1988risk,lettau2020factors}. The quantile benchmark estimates common movements in tail thresholds and is closely related to the pricing evidence in \citet{barunik2026common}. We apply the same estimation and portfolio procedures across the three models, allowing us to assess the information in loss severity separately from that in average returns and tail thresholds.

The estimated ES factors display pronounced movements during periods of market stress, particularly around the 2015 crash. The quantile factors capture movements in the lower part of the return distribution, although their response to the most severe loss episodes is weaker. Generalized correlations between the ESFM factor space and the corresponding mean- and quantile-factor spaces are also relatively low. Together, these findings show that common movements in the severity of tail losses differ empirically from common movements in average returns and tail thresholds.

Exposures to the ES factors are also related to stock returns. Each month, we estimate stock-specific exposures using the preceding 60 months and form portfolios that are held for one month. Across $\tau\in\{0.10,0.20,0.30\}$, high-minus-low portfolios formed on ESFM exposures earn annualized average returns between 8.0\% and 11.7\%. Their Fama--French five-factor alphas range from 10.3\% to 15.0\%, with Newey--West $t$-statistics above 3.2. Average returns also increase monotonically from the lowest- to the highest-exposure ESFM quintile in all six combinations of tail level and observable-factor weighting. The return spread is therefore not generated solely by the unusual performance of one endpoint portfolio. Portfolios formed on Mean and QFM exposures produce smaller and less stable spreads. In two-pass regressions, the estimated price of ES factor risk remains positive when Mean and QFM exposures enter jointly and when the Fama--French factors are included.

The return relation also survives more demanding tests of incremental pricing content. At the stock level, dependent bivariate sorts produce annualized conditional high-minus-low spreads of approximately 5.6\%--9.9\% after conditioning separately on QFM and Mean exposures, jointly on both exposures, or on idiosyncratic volatility. The corresponding 95\% confidence intervals lie above zero in every specification. At the factor level, spanning regressions yield annualized ESFM alphas of 6.6\%--9.4\% after controlling jointly for MKT, SMB, HML, RMW, CMA, MOM, and the corresponding Mean and QFM factors; the associated Newey--West $t$-statistics range from 2.48 to 3.39. Adding ESFM raises the maximum Sharpe ratio from 1.52--1.81 for the benchmark factor sets to 1.78--2.05. Joint tests across the three tail levels are more sensitive to whether the observed factors are value or equal weighted, so the spanning evidence is most direct on a tail-by-tail basis. The stock-level and factor-level results therefore indicate that ESFM contributes to both the cross-section of expected returns and the investment opportunity set beyond nearby measures of downside risk.

The comparison with the quantile model is especially informative. Both models use information from the lower part of the conditional return distribution, but only ESFM incorporates the magnitude of losses below the estimated quantile. Their different pricing performance therefore indicates that movements in tail thresholds do not fully describe the severity of tail losses. This result is also distinct from measures based on a stock's comovement with extreme losses in the market or other prespecified factors \citep{vanOordt2016systematic,chabiyo2018crash,chabiyo2022multivariate}. Our evidence instead concerns stocks' exposures to latent common shocks that change loss severity below stock-specific tail thresholds.

\subsection{Relation to the Literature}\label{subsec:intro_literature}

{The econometric contribution is most closely related to the ES regression literature for univariate responses.} Existing work develops joint quantile--ES regressions, dynamic specifications, and robust or high-dimensional estimators based on observed predictors \citep{dimitriadis2019joint,Patton2019Dynamic,Barendse2020Efficiently,he2023robust,Zhang2025High}. We introduce a latent common component to a heterogeneous ES regression. Both the tail-loss outcome and the common factor structure must be estimated because the outcome used in the second step is constructed from first-stage quantile estimates. For this generated-outcome setting, we develop an orthogonalized two-step estimator, an information criterion for factor-number selection, nonasymptotic error bounds, and a finite-sample Gaussian approximation. The cross-section is used to recover common shocks to loss severity that asset-by-asset ES regressions leave unmodeled.\footnote{This panel structure is also related to likelihood-based factor models for non-Gaussian outcomes, such as \citet{kong2025high}, but the target and estimation problem differ: our dependent variable is an orthogonalized tail outcome rather than a directly observed binary response.}

The paper also contributes to models of distributional comovement in financial panels. Classical approximate factor models recover low-dimensional common variation in mean returns \citep{connor1986performance,connor1988risk,bai2009panel}, while characteristic-based factor models link time-varying loadings to firm characteristics \citep{kelly2019characteristics}. Related methods use observed proxies to improve latent-factor estimation \citep{fan2021augmented,wan2024mining}. Quantile factor models instead recover common variation at specified conditional quantiles \citep{ando2020quantile,chen2021quantile,barunik2021measurement,belloni2023highdimensional,yang2024assetpricing}. In asset pricing, \citet{barunik2026common} document that exposure to a common idiosyncratic lower-tail quantile factor is priced. ESFM models a different distributional object: the average return among outcomes below an asset-specific conditional quantile. It can therefore capture common changes in the magnitude of tail losses even when the corresponding quantile changes little. This object is not determined by a conditional mean or by a quantile at one probability level.

ESFM is also related to the asset-pricing literature on downside and tail risk. Prior studies examine covariation with market declines, exposure to aggregate tail risk, systematic tail risk, and lower-tail dependence with the market \citep{ang2006downside,kelly2014tail,vanOordt2016systematic,chabiyo2018crash}. Other work studies hybrid tail covariance, firm-level crash probabilities, bear-market exposure, return continuation following left-tail events, exposure to extreme realizations of multiple systematic factors, {and stock co-jump dependence \citep{bali2014hybrid,jang2019probability,lu2019bear,atilgan2020lefttail,pelger2020understanding,chabiyo2022multivariate,ding2024stock}.} In particular, \citet{chabiyo2022multivariate} define adverse states using extreme realizations of a prespecified set of systematic factors and measure whether an individual stock simultaneously enters its own tail. Aggregate and option-based approaches examine tail-risk premia, tail-risk dynamics, and ex ante concerns about rare losses \citep{bollerslev2011tails,massacci2017tail,gao2019tail}. {Most directly, \citet{massacci2026factor} allow the latent factor structure of returns to differ between good and bad market states defined by bear-market risk.} These approaches characterize downside risk through market states, tail-event probabilities or dependence, or aggregate tail-risk measures. ESFM instead recovers latent common variation in the average severity of losses below stock-specific conditional thresholds. These objects capture different features of adverse return distributions and need not contain the same information \citep{dierkes2024measuring}.

Finally, our pricing analysis relates to the broader latent-factor asset-pricing literature. \citet{herskovic2016common} show that idiosyncratic volatility has a strong common component with cross-sectional pricing implications. Other methods incorporate expected returns, observed characteristics, or pricing restrictions into latent-factor extraction \citep{lettau2020estimating,lettau2020factors,gu2021autoencoder}, {while related work interprets or shrinks high-dimensional factor representations and addresses omitted factors \citep{kozak2018interpreting,kozak2020shrinking,giglio2021asset,he2023shrinking,bryzgalova2023asset}}. ESFM extracts factors for a different statistical target: conditional tail expectations. Cross-sectional pricing errors and future portfolio returns do not enter its estimation criterion. The subsequent pricing tests therefore assess an implication of the estimated tail factor structure rather than impose that implication during factor extraction.

\paragraph{Organization.} Section~\ref{sec:Methodologies} introduces ESFM, the two-step estimator, and the factor-number criterion. Section~\ref{sec:Theoretical Results} develops the nonasymptotic theory and establishes the consistency of the factor-number selector. Section~\ref{sec:Simulation Study} reports the simulation evidence. Section~\ref{sec:Empirical Application} studies common downside risk and expected returns in the large panel equity market. {All asymptotic results are provided in the Supplemental Material}.

\paragraph{Notation.} For a vector or matrix $A$, $A^\top$ denotes its transpose, and $\|A\|=\left(\operatorname{tr}(A^\top A)\right)^{1/2}$ denotes the Euclidean norm when $A$ is a vector and the Frobenius norm when $A$ is a matrix. For a matrix $A$, its operator norm is defined as $\|A\|_{\rm op}=\sqrt{\lambda_{\max}(A^\top A)}.$ For a positive definite matrix $\Sigma$ and a conformable vector $v$, define the weighted norm $\|v\|_{\Sigma} = \sqrt{v^\top \Sigma v}$. For a symmetric matrix $A$, $\lambda_{\min}(A)$ and $\lambda_{\max}(A)$ denote its smallest and largest eigenvalues, respectively; $A\succ0$ and $A\succeq0$ indicate that $A$ is positive definite and positive semidefinite, respectively. For a full-column-rank matrix $A$, let $P_A=A(A^\top A)^{-1}A^\top$ denote the orthogonal projection matrix onto its column space. $I_k$ denotes the $k\times k$ identity matrix, and $\otimes$ denotes the Kronecker product. The symbols $\stackrel{p}{\longrightarrow}$, $\stackrel{d}{\longrightarrow}$, and $\stackrel{a.s.}{\longrightarrow}$ denote convergence in probability, in distribution, and almost surely, respectively. For a deterministic positive sequence $a_{NT}$, $X_{NT}=o_p(a_{NT})$ means that $X_{NT}/a_{NT}\stackrel{p}{\longrightarrow}0$, whereas $X_{NT}=O_p(a_{NT})$ means that $X_{NT}/a_{NT}$ is stochastically bounded. For any positive integer $m$, let $[m]=\{1,\ldots,m\}$. Finally, $\mathbbm{1}(\cdot)$ denotes the indicator function.

\section{Methodologies}\label{sec:Methodologies}
\subsection{Preliminaries}

Let $Y$ be a real-valued random variable with $E|Y|<\infty$, and let $F_Y$ denote its cumulative distribution function. For a given quantile level $\tau\in(0,1)$, the quantile and ES  are defined as 
\[
Q_{\tau}(Y)=\inf\{y\in\mathbb{R}:F_Y(y)\geq\tau\},\qquad \mathrm{ES}_{\tau}(Y)=\frac{1}{\tau}\int_0^\tau Q_u(Y)\,du.
\]
If the distribution of $Y$ is continuous at $Q_\tau(Y)$, then
\[
\mathrm{ES}_{\tau}(Y)=E\!\left[Y\mid Y\leq Q_\tau(Y)\right].
\]

The quantile and ES characterize different aspects of tail risk. The quantile determines the threshold below which a fraction $\tau$ of the distribution lies, but it does not account for the magnitude of outcomes beyond that threshold. In contrast, ES averages the outcomes within the lower tail and therefore captures the severity of losses once the quantile threshold has been crossed. Consequently, two random variables may have the same $\tau$-quantile while exhibiting substantially different ES values if their losses below the quantile differ in magnitude \citep{Acerbi2002on,Rockafellar2002conditional,Tasche2002expected}.

An important advantage of ES is its coherence as a risk measure. Under the lower-tail return convention adopted in this paper, ES satisfies the following properties. For random variables $Y_1$ and $Y_2$, a constant $c\in\mathbb{R}$, and $a\geq0$:

\begin{enumerate}
    \item \textit{Monotonicity}. If $Y_1\leq Y_2$ almost surely, then $\mathrm{ES}_\tau(Y_1)\leq\mathrm{ES}_\tau(Y_2).$    

    \item \textit{Translation equivariance}. $\mathrm{ES}_\tau(Y+c)=\mathrm{ES}_\tau(Y)+c$.

    \item \textit{Positive homogeneity}. $\mathrm{ES}_\tau(aY)=a\,\mathrm{ES}_\tau(Y)$.

    \item \textit{Superadditivity}. $\mathrm{ES}_\tau(Y_1+Y_2)\geq\mathrm{ES}_\tau(Y_1)+\mathrm{ES}_\tau(Y_2)$.
\end{enumerate}

These properties make ES suitable for measuring downside risk because it accounts for the magnitude of outcomes within a tail region of probability $\tau$. Although ES is not elicitable on its own, \cite{fissler2016higher} show that the quantile and ES are jointly elicitable, which provides a foundation for their joint estimation.

\subsection{Model Setup}\label{subsec:Model Setup} 

Suppose we observe, for each of $N$ units over $T$ periods, a response variable $Y_{it}$ along with a $(p+1)$-dimensional vector $X_{it}=(1,X_{it,1},\ldots,X_{it,p})^\top$. For a given quantile level $\tau\in (0,1)$, 
we characterize the  conditional $\tau$-quantile of $Y_{it}$ given $X_{it}$ using a standard linear quantile regression specification $Q_{\tau}(Y_{it}|X_{it})=X_{it}^\top\alpha_{i,\tau}^0$. 
Our primary interest is in the corresponding conditional ES. To capture cross-sectional information, we introduce the Expected Shortfall Factor Model (ESFM), which incorporates both observed covariate effects and latent common factors. Specifically, we assume that 
	\begin{equation}\label{eq:ES models}
		\mathrm{ES}_{\tau}(Y_{it}|X_{it},f_{t,\tau}^0)=X_{it}^\top\beta_{i,\tau}^0+\lambda_{i,\tau}^{0\top} f_{t,\tau}^0,\quad i\in[N],\quad t\in[T].
	\end{equation}
Here, $\alpha_{i,\tau}^0=(\alpha_{i,0,\tau}^{0},\alpha_{i,1,\tau}^0,\ldots,\alpha_{i,p,\tau}^0)^\top$ and $\beta_{i,\tau}^0=(\beta_{i,0,\tau}^0,\beta_{i,1,\tau}^0,\ldots,\beta_{i,p,\tau}^0)^\top$ denote the true quantile‐ and ES‐regression coefficients, respectively. $f_{t,\tau}^0$ is an $r_{\tau}^0$-dimensional vector  of unobservable factors, where $r_{\tau}^0$ denotes the true number of factors; $\lambda_{i,\tau}^{0}$ is the corresponding loading vector. Moreover, we include a leading constant equal to 1 in each $X_{it}$, so that the first elements of $\alpha_{i,\tau}^0$ and $\beta_{i,\tau}^0$ represent the intercept terms  in the quantile and ES models, respectively. Allowing $\alpha_{i,\tau}^0$, $\beta_{i,\tau}^0$, $f_{t,\tau}^0$, $\lambda_{i,\tau}^0$, and $r_{\tau}^0$ to vary with $\tau$ enables our ES factor model  to flexibly capture quantile-specific heterogeneity and tail-specific dynamics.\footnote{The specification nests a panel ES regression without latent factors when $r_\tau^0=0$ \citep{dimitriadis2019joint,he2023robust,Zhang2025High} and a pure ES factor model when $\beta_{i,\tau}^0=0$. It also permits distinct covariate vectors in the quantile and ES equations; we use a common vector $X_{it}$ throughout to simplify notation.}
	
We treat $f_{t,\tau}^0$ and $\lambda_{i,\tau}^0$ as  unknown parameters.\footnote{This latent factor structure is commonly referred to as the ``interactive effects'' model in the mean‐regression literature (e.g., \citealt{bai2009panel}) and has also been extended to the quantile regression framework (e.g., \citealt{ando2020quantile}).} By capturing common shocks through $f_{t,\tau}^0$ and unit‐specific loadings $\lambda_{i,\tau}^0$, it provides a flexible way to model cross‐sectional dependence in tail expectations. This factor structure parsimoniously captures residual tail dependence beyond observed covariate effects while allowing units to respond heterogeneously to common tail shocks.

\subsection{Two-Stage Estimation of the ES Factor Model}\label{subsec:Two-Stage Estimation}
Our objective is to estimate the  parameters
$B_\tau^0=(\beta_{1,\tau}^0,\ldots,\beta_{N,\tau}^0)^\top$, $\Lambda_\tau^0=(\lambda_{1,\tau}^0,\ldots,\lambda_{N,\tau}^0)^\top$ and $F_\tau^0=(f_{1,\tau}^0,\ldots,f_{T,\tau}^0)^\top$. A well-known result in the literature on factor models is that the factors $\{f_{t,\tau}^0\}$ and loadings $\{\lambda_{i,\tau}^0\}$ cannot be
separately identified without imposing normalizations; see \cite{bai2002determining}. Following \cite{bai2013principal}, we impose the following normalization
conditions without loss of generality:
	{
		\begin{align}\label{eq:constraints for factor}
			F_\tau^{0\top} F_\tau^0/T=I_{r_\tau^0},\quad\Lambda_\tau^{0\top}\Lambda_\tau^0/N \rightarrow \Sigma_{\Lambda_\tau^0} \succ 0 .
	\end{align}}
Let $\theta_\tau^0=(f_{1,\tau}^{0\top},\ldots,f_{T,\tau}^{0\top},\lambda_{1,\tau}^{0\top},\ldots,\lambda_{N,\tau}^{0\top})^\top$ denote the vector of true parameters. Define the admissible parameter space 
	\[
	\begin{aligned}
		\Theta^{r_\tau^0}=\{\theta_\tau\in\mathbb{R}^{(N+T) r_{\tau}^0}:
		&\ \lambda_{i,\tau},f_{t,\tau}\in\mathbb{R}^{r_{\tau}^0} \text{ for all } i, t, \\
		&\ \{f_{t,\tau}\} \text{ and } \{\lambda_{i,\tau}\} \text{ satisfy the normalization in \eqref{eq:constraints for factor}} \bigr\}.
	\end{aligned}
	\]
	
Since ES depends on the quantile but not vice versa, the quantile parameters $\alpha_{i,\tau}$ naturally act as nuisance parameters when ES is the primary object of interest. Inspired by the Neyman-orthogonality idea, which constructs score functions that are locally insensitive to first-order perturbations in nuisance parameters \citep{neyman1979c,chernozhukov2018double,Barendse2020Efficiently}, we adopt the two-stage strategy of \citet{Barendse2020Efficiently,he2023robust,Zhang2025High} to avoid non-convex optimization. Specifically, estimation of the ESFM in \eqref{eq:ES models} proceeds in two stages:
	\begin{itemize}
		\item \textbf{Stage 1: Quantile Regression}
        
{Since the quantile specification is only used to obtain the individual tail thresholds required for the subsequent ES estimation, we estimate the conditional quantile parameters separately for each unit using standard quantile regression: \footnote{This unit-by-unit time-series formulation is sufficient for our purpose and avoids imposing additional restrictions on the cross-sectional dependence structure at the quantile level.}}	
        
		\begin{align}\label{eq:step 1 QR}
			\widehat{\alpha}_{i,\tau}=\argmin_{\alpha_{i,\tau}\in\mathbb{R}^{p+1}}\frac{1}{T}\sum_{t=1}^T\rho_{\tau}(Y_{it}-X_{it}^\top\alpha_{i,\tau}),\quad\text{for each $i$},
		\end{align}
		and let $\widehat{A}_\tau=(\widehat{\alpha}_{1,\tau},\ldots,\widehat{\alpha}_{N,\tau})^\top$, where $\rho_{\tau}(u)=(\tau-\mathbbm{1}(u<0))u$ is the check loss function (\citealt{koenker1978regression}).
		
		\item \textbf{Stage 2: Orthogonalized ES Regression}
		
		Define 
		\begin{align*}
			S_0(\alpha_{i,\tau},\beta_{i,\tau},f_{t,\tau},\lambda_{i,\tau};Y_{it},X_{it})=\tau \lambda_{i,\tau}^\top f_{t,\tau}+\tau X_{it}^\top(\beta_{i,\tau}-\alpha_{i,\tau})-\left(Y_{it}-X_{it}^\top\alpha_{i,\tau}\right) \mathbbm{1}\left(Y_{it} \leq X_{it}^\top\alpha_{i,\tau}\right).
		\end{align*}
		Given $\widehat{A}_\tau$ obtained from the first step, the ES parameters are estimated as
		\begin{align}\label{eq:step 2 ES}
			\begin{aligned}
				(\widehat{B}_\tau,\widehat{\theta}_\tau)=&\argmin_{B_\tau\in\mathbb{R}^{N\times (p+1)},\theta_\tau\in\Theta^{r_\tau^0}}\frac{1}{NT}\sum_{i=1}^N\sum_{t=1}^TS_0^2(\widehat{\alpha}_{i,\tau},\beta_{i,\tau},f_{t,\tau},\lambda_{i,\tau};Y_{it},X_{it})\\
				=&\argmin_{B_\tau\in\mathbb{R}^{N\times (p+1)},\theta_\tau\in\Theta^{r_\tau^0}}\frac{1}{NT}\sum_{i=1}^N\sum_{t=1}^T\left[Z_{it}(\widehat{\alpha}_{i,\tau})-\tau(X_{it}^\top\beta_{i,\tau}+\lambda_{i,\tau}^\top f_{t,\tau})\right]^2.
			\end{aligned}
		\end{align}
		Here, for each $i\in[N]$ and $t\in[T]$, we define
		\begin{align}\label{eq:Z}
			Z_{it}(\alpha_{i,\tau})=\left(Y_{it}-X_{it}^\top\alpha_{i,\tau}\right) \mathbbm{1}\left(Y_{it} \leq X_{it}^\top\alpha_{i,\tau}\right)+\tau X_{it}^\top\alpha_{i,\tau}.
		\end{align}

	\end{itemize}
	The rationale for the estimation procedure in \eqref{eq:step 2 ES} lies in the following considerations. 
	Define 
    
    \begin{equation*}
		\begin{aligned}
			\psi_0(\alpha_{i,\tau}, \beta_{i,\tau}, \lambda_{i,\tau}; X_{it},f_{t,\tau}) =E\left(Z_{it}(\alpha_{i,\tau}) \mid X_{it},f_{t,\tau}\right) -\tau \left(X_{it}^\top \beta_{i,\tau}+\lambda_{i,\tau}^\top f_{t,\tau}\right),
		\end{aligned}
	\end{equation*}
	which satisfies $\psi_0(\alpha_{i,\tau}^0, \beta_{i,\tau}^0, \lambda_{i,\tau}^0; X_{it},f_{t,\tau}^0) =0$. Let $F_{Y_{it}|X_{it}}$ be the conditional distribution function of $Y_{it}$ given $X_{it}$. Provided that $F_{Y_{it}|X_{it}}$ is continuously 
	differentiable, it can be shown that for any $\beta_{i,\tau}$,
	\begin{align*}
		\partial_{\alpha_{i,\tau}} \psi_0(\alpha_{i,\tau}, \beta_{i,\tau}, \lambda_{i,\tau}; X_{it},f_{t,\tau})\mid_{\alpha_{i,\tau}=\alpha_{i,\tau}^0}  =\left\{\tau-F_{Y_{it} \mid X_{it}}\left(X_{it}^\top \alpha_{i,\tau}^0\right)\right\} X_{it}=0,
	\end{align*}
	thereby satisfying the Neyman orthogonality condition.
	Therefore, the quantile regression estimation error is first-order negligible in the ES regression estimation and thus does not affect the asymptotic distribution of the ES estimators.
	
Because the latent factors and loadings enter the second-step criterion
multiplicatively, the estimator
$\widehat{\theta}_{\tau}$ does not admit an analytical
closed-form solution. We therefore adopt an iterative procedure to estimate $\theta_\tau$ and $B_\tau$. 
For a given $F_{\tau}$ satisfying $F_{\tau}^{\top}F_{\tau}/T=I_{r_\tau^0}$,
let
$$
M_{F_{\tau}}=I_T-F_{\tau}(F_{\tau}^{\top}F_{\tau})^{-1}F_{\tau}^{\top}=I_T-\frac{1}{T}F_{\tau}F_{\tau}^{\top}.
$$
For each $i\in[N]$, define $X_i=(X_{i1},\ldots,X_{iT})^\top$ and $Z_i^\ast(\widehat{\alpha}_{i,\tau})=(Z^\ast_{i1}(\widehat{\alpha}_{i,\tau}),\ldots,Z^\ast_{iT}(\widehat{\alpha}_{i,\tau}))^\top$, where $Z_{it}^\ast(\widehat{\alpha}_{i,\tau})=\tau^{-1}Z_{it}(\widehat{\alpha}_{i,\tau})$.
Given $\widehat{\alpha}_{i,\tau}$ and $F_\tau$, the least squares estimator for $\beta_{i,\tau}$ is $\widehat{\beta}_{i,\tau}=\left(X_i^\top M_{F_\tau}X_i\right)^{-1}\left(X_i^\top M_{F_\tau}Z_i^\ast(\widehat{\alpha}_{i,\tau})\right)$. The corresponding residual matrix is
	\begin{align*}
		\widehat{W}_\tau=(\widehat{W}_{it,\tau})_{N\times T}, \quad \widehat{W}_{it,\tau}=Z_{it}^\ast(\widehat{\alpha}_{i,\tau})-X_{it}^\top\widehat{\beta}_{i,\tau}.
	\end{align*}
We then obtain updated factor estimates by applying classical principal components analysis (PCA) to $\widehat{W}_\tau^\top \widehat{W}_\tau/(TN)$. Let $\widehat{F}_\tau$ be $\sqrt{T}$ times the eigenvectors corresponding to the largest $r_\tau^0$ eigenvalues of  $\widehat{W}_\tau^\top \widehat{W}_\tau/(TN)$, and the  corresponding loading matrix be $\widehat{\Lambda}_\tau=\widehat{W}_\tau\widehat{F}_\tau/T$. The updates are repeated until convergence.\footnote{For the theoretical analysis, $(\widehat{B}_\tau,\widehat{F}_\tau)$ denotes a global minimizer of~\eqref{eq:step 2 ES}. Algorithm~\ref{alg:ES_factor} provides a computational procedure for obtaining a numerical solution to this problem.} The two-stage estimation procedure is summarized in Algorithm \ref{alg:ES_factor}.\footnote{We initialize the iteration by regressing $Z_{it}^{\ast}(\widehat{\alpha}_{i,\tau})$ on $X_{it}$ separately for each unit, forming the residual matrix $\widehat{W}_{\tau}^{(0)}$, and setting the initial $\widehat{F}_{\tau}$ equal to $\sqrt{T}$ times the eigenvectors associated with the largest $r$ (candidate number of factors) eigenvalues of $\widehat{W}_{\tau}^{(0)\top}\widehat{W}_{\tau}^{(0)}/(TN)$.}
	
	\begin{algorithm}[H]
		\caption{Two-Stage Estimation of the ES Factor Model}
		\label{alg:ES_factor}
		\begin{algorithmic}[1]
			
			\Require Quantile level $\tau$, candidate number of factors $r$, 
			initial factor estimator $\widehat{F}_\tau$.
			
			\State Estimate $\widehat{A}_\tau$ via quantile regression in \eqref{eq:step 1 QR}.
			
			\State Compute $Z_{it}(\widehat{\alpha}_{i,\tau})$ as in \eqref{eq:Z}, and define
			\[
			Z_i^\ast(\widehat{\alpha}_{i,\tau})
			=
			\frac{1}{\tau}
			\left(
			Z_{i1}(\widehat{\alpha}_{i,\tau}),\ldots,
			Z_{iT}(\widehat{\alpha}_{i,\tau})
			\right)^\top.
			\]
			
			\State Update $\widehat{\beta}_{i,\tau}$ by
			\[
			\widehat{\beta}_{i,\tau}
			=
			\left(X_i^\top M_{\widehat{F}_\tau}X_i\right)^{-1}
			X_i^\top M_{\widehat{F}_\tau}
			Z_i^\ast(\widehat{\alpha}_{i,\tau}).
			\]
			
			\State Compute
			\[
			\widehat{W}_{it,\tau}
			=
			Z_{it}^\ast(\widehat{\alpha}_{i,\tau})
			-
			X_{it}^\top \widehat{\beta}_{i,\tau},
			\]
			form $\widehat{W}_\tau=(\widehat{W}_{it,\tau})_{N\times T}$,
			and update $\widehat{F}_\tau$ as $\sqrt{T}$ times the eigenvectors associated with the largest $r$ eigenvalues of
			$\widehat{W}_\tau^\top \widehat{W}_\tau/(TN)$.
			
			\State Repeat Steps 3--4 until convergence. 
			Compute the loading matrix
			$
			\widehat{\Lambda}_\tau
			=
			\widehat{W}_\tau \widehat{F}_\tau/T.
			$
			
			\Ensure Final estimators 
			$\widehat{A}_\tau$, $\widehat{B}_\tau$, and $\widehat{\theta}_\tau$.
			
		\end{algorithmic}
	\end{algorithm}
	

	
\subsection{Determining the Number of Factors}\label{subsec:determine}
In practice, the number of latent factors $r_{\tau}^0$ in model \eqref{eq:ES models} is unknown and must be determined before implementing the two-stage ES estimation procedure. Although several methods exist for selecting the number of factors in panel mean models (e.g., \citealp{bai2002determining,ahn2013eigenvalue,kong2017number}), they are designed for models where the response variable is directly observed and follows a mean regression structure. Within the ES factor model framework, the target variable
$Z_{it}^{\ast}(\alpha_{i,\tau}^{0})$ is unobservable and is replaced by its feasible counterpart $Z_{it}^{\ast}(\widehat{\alpha}_{i,\tau})$, constructed using the first-stage quantile estimator. As a result, conventional factor-number selection methods are not directly applicable, as their theoretical validity relies on assumptions that may not hold in our framework.
	
In this paper, we propose a modified Information Criterion (IC) tailored to the  ES factor model. Let $\widehat{V}_\tau(r)$ denote the mean squared residual 
from the ES model, defined as
	\begin{equation}\label{eq:Vr}
		\widehat{V}_\tau(r)=\frac{1}{NT}\sum_{i=1}^{N}\sum_{t=1}^{T}
		\left[Z_{it}^{\ast}(\widehat{\alpha}_{i,\tau})-X_{it}^{\top}\widehat{\beta}_{i,\tau}(r)-\widehat{\lambda}_{i,\tau}(r)^{\top}\widehat{f}_{t,\tau}(r)\right]^{2},
	\end{equation}
where $\widehat{\beta}_{i,\tau}(r)$, 
$\widehat{\lambda}_{i,\tau}(r)$, 
and $\widehat{f}_{t,\tau}(r)$ 
are the estimators obtained from Algorithm \ref{alg:ES_factor} for a given number of factors $r$. Our information criterion is defined as
	\begin{equation}\label{eq:IC}
		\mathrm{IC}_{\tau}(r)=\log\widehat{V}_\tau(r)+r\cdot q(N,T),
	\end{equation}
where $q(N,T)$ is a penalty function that depends on the panel dimensions. 
Following \cite{bai2002determining},
we specify
$q(N,T)=\log\left((NT)/(N+T)\right)\left((N+T)/(NT)\right).$ The estimated number of factors is  selected as $\widehat{r}_{\tau}= \underset{0\le r\le R_{\max}}{\arg\min}\; \mathrm{IC}_{\tau}(r),$
where $R_{\max}$ is a prescribed
maximum number of factors.

\section{Theoretical Results}\label{sec:Theoretical Results}
In this section, we develop nonasymptotic theory for the proposed two-stage ES factor estimator. We first establish high-probability error bounds for the first-stage quantile estimator and the second-stage ES coefficient estimator in Section \ref{sec:3.1}. Section \ref{sec:3.2} establishes a finite-sample Gaussian approximation for standardized linear combinations of the estimation errors in the unit-specific ES coefficient. Section \ref{sec:3.3} establishes the consistency of the proposed information criterion for selecting the number of latent factors.

\subsection{Estimation Error Bounds}\label{sec:3.1}
We begin with the first-stage panel quantile regressions, which provide the generated responses used in the second-stage ES estimation. The following result provides simultaneous control of the first-stage estimation errors over all cross-sectional units.

Let $\Sigma_i=E(X_{it}X_{it}^{\top})$ denote the population second-moment matrix of the covariates for unit $i\in[N]$, and define the quantile regression error as $\epsilon_{it}=Y_{it}-X_{it}^{\top}\alpha_{i,\tau}^0$. We impose the following conditions.
\begin{assumption}
\label{assum:Regularity}
The following conditions hold uniformly over $i\in[N]$ and $t\in[T]$.
\begin{enumerate}[label=(\roman*)]
\item The conditional quantile restriction $P(\epsilon_{it}\leq0\mid X_{it})=\tau$ holds almost surely. Moreover, the conditional distribution of $\epsilon_{it}$ given $X_{it}$ admits a density $f_{\epsilon_{it}\mid X_{it}}(\cdot\mid X_{it})$, and there exist constants $\underline f,L_0>0$ such that, almost surely,
$f_{\epsilon_{it}\mid X_{it}}(0\mid X_{it})\geq\underline f$ and $\left|f_{\epsilon_{it}\mid X_{it}}(x\mid X_{it})-f_{\epsilon_{it}\mid X_{it}}(0\mid X_{it})\right|\leq L_0|x|$, where $x\in\mathbb R$.
\item There exist constants $0<\kappa_{\min}\leq\kappa_{\max}<\infty$ such that $\kappa_{\min}I_{p+1}\preceq\Sigma_i\preceq
\kappa_{\max}I_{p+1}$.
\item Let $W_{it}=\Sigma_i^{-1/2}X_{it}$, then $\ E(W_{it}W_{it}^{\top})=I_{p+1}$. There exists a constant $\upsilon_1\geq1$ such that, for every $u\in\mathbb S^{p}$ and $x\geq0$, $P\left(|u^{\top}W_{it}|\geq\upsilon_1x\right)\leq 2\exp(-x^2/2)$, where $\mathbb S^{p}$ denotes the unit sphere in $\mathbb R^{p+1}$.
\item For every $i\in[N]$, the process $\{(Y_{it},X_{it}^{\top},f_{t,\tau}^{0\top})^{\top}:t\in\mathbb Z\}$ is strictly stationary. Let $\mathscr F_{i,a}^{b}=\sigma((Y_{it},X_{it}^{\top},f_{t,\tau}^{0\top})^{\top}:a\leq t\leq b)$ and define its strong-mixing coefficients by
$$
\alpha_i(h)=\sup_{s\in\mathbb Z}\sup_{\substack{A\in\mathscr F_{i,-\infty}^{s}\\
B\in\mathscr F_{i,s+h}^{\infty}}}\left|P(A\cap B)-P(A)P(B)\right|.
$$
There exist constants $c_\alpha,a_\alpha>0$ such that $\sup_{i\in[N]}\alpha_i(h)\leq c_\alpha\exp(-a_\alpha h),\ h\geq1$.
\end{enumerate}
\end{assumption}

Assumption \ref{assum:Regularity} collects standard and relatively mild regularity conditions commonly used in quantile regression. Part (i) ensures correct specification of the conditional quantile and provides sufficient local curvature for identification. Part (ii) imposes uniform eigenvalue bounds on the covariate second-moment matrices, ensuring a stable and nondegenerate design. Part (iii) imposes a standard sub-Gaussian tail condition on the standardized covariates to control stochastic fluctuations. Parts (i)--(iii) are uniform panel analogues of Conditions 1 and 2 in \cite{he2023robust}. Part (iv) imposes strict stationarity and geometric $\alpha$-mixing, a commonly used weak-dependence condition that permits serial dependence within each unit as long as it decays sufficiently rapidly.

\begin{proposition}
\label{pro:quantile}
Let $s_{N,\delta}=p+2+\log\left({2N}/{\delta}\right)$ for every
$\delta\in(0,1)$. Under Assumption \ref{assum:Regularity}, there exist constants
$C_1,C_2>0$, depending only on $(\upsilon_1,c_\alpha,a_\alpha)$, such that, if $T\geq C_2\left\{s_{N,\delta}^{2}+L_0^2\underline f^{-4}s_{N,\delta}\right\}$, then, with probability at least $1-\delta$,
$$
\max_{i\in[N]}\|\widehat{\alpha}_{i,\tau}-\alpha_{i,\tau}^0\|_{\Sigma_i}\leq C_1\underline f^{-1}\sqrt{\frac{s_{N,\delta}}{T}}.
$$
\end{proposition}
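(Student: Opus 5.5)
The proof is unit by unit: show that for fixed $i$ the bound holds with probability at least $1-\delta/N$, then take a union bound over $i\in[N]$. The $\log(2N/\delta)$ term in $s_{N,\delta}$ comes from this union bound. Fix $i$, write $\Delta=\alpha-\alpha_{i,\tau}^0$, and measure it in the norm $\|\Delta\|_{\Sigma_i}$. Let $\widehat Q_i(\alpha)=T^{-1}\sum_t\rho_\tau(Y_{it}-X_{it}^\top\alpha)$ and $Q_i(\alpha)=E\widehat Q_i(\alpha)$. Work on the local ball $\mathbb B(r)=\{\Delta:\|\Delta\|_{\Sigma_i}\le r\}$ with $r=C_1\underline f^{-1}\sqrt{s_{N,\delta}/T}$. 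Because $\widehat Q_i$ is convex, it suffices to show that $\widehat Q_i(\alpha^0_{i,\tau}+\Delta)>\widehat Q_i(\alpha^0_{i,\tau})$ on the sphere $\|\Delta\|_{\Sigma_i}=r$. Convexity then forces the minimizer $\widehat\alpha_{i,\tau}$ into the ball.

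The first step is population curvature. Knight's identity together with Assumption~\ref{assum:Regularity}(i) gives
$Q_i(\alpha^0+\Delta)-Q_i(\alpha^0)=E\int_0^{X^\top\Delta}\{F_{\epsilon|X}(s)-F_{\epsilon|X}(0)\}ds\ge \tfrac{\underline f}{2}\|\Delta\|_{\Sigma_i}^2-\tfrac{L_0}{6}E|X^\top\Delta|^3$.
The sub-Gaussian condition (iii) gives $E|X^\top\Delta|^3\lesssim \upsilon_1^3\|\Delta\|_{\Sigma_i}^3$. The curvature is therefore at least $\tfrac{\underline f}{4}\|\Delta\|^2_{\Sigma_i}$ whenever $r\lesssim \underline f/(L_0\upsilon_1^3)$. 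This is exactly where the sample-size condition $T\gtrsim L_0^2\underline f^{-4}s_{N,\delta}$ enters.

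The second step controls the empirical process
$\sup_{\Delta\in\mathbb B(r)}|(\widehat Q_i-Q_i)(\alpha^0+\Delta)-(\widehat Q_i-Q_i)(\alpha^0)|$.
I will split this into a linear score part and a remainder. The score part is $\Delta^\top T^{-1}\sum_t(\mathbbm 1(\epsilon_{it}\le0)-\tau)X_{it}$. Each summand has mean zero by (i) and is sub-Gaussian in every direction by (iii). A Bernstein inequality for geometrically $\alpha$-mixing sequences, applied with a $1/2$-net of $\mathbb S^p$ (cardinality $5^{p+1}$), bounds it by $\lesssim r\sqrt{s/T}+r\,s/T$ with probability at least $1-\delta/(2N)$. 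The remainder is a sum of increments of a Lipschitz function of $X^\top\Delta$, each bounded by $|X_{it}^\top\Delta|$. I will control it by a chaining/bracketing argument over the ball, using a Bernstein-type bound for mixing sequences (Merlevède–Peligrad–Rio style) with blocking. The mixing blocks of length $\asymp\log T$ are absorbed by the condition $T\gtrsim s^2$, which I expect is where the $s_{N,\delta}^2$ term arises. The target is a remainder bound $\lesssim r\sqrt{s/T}$ as well.

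Finally, I combine the two steps. On the sphere, the objective difference is at least $\tfrac{\underline f}{4}r^2-C r\sqrt{s/T}$, which is positive once $C_1$ is large enough, giving the claim after the union bound. The main obstacle is the uniform empirical-process bound under $\alpha$-mixing rather than independence. Symmetrization is unavailable there, so I would first attempt Berbee coupling to independent blocks, then apply the standard i.i.d. contraction/Talagrand argument to those blocks. The coupling error is $T\,c_\alpha e^{-a_\alpha\ell}$, which is made at most $\delta/(4N)$ by choosing the block length $\ell\asymp \log(TN/\delta)$. The resulting extra logarithmic factors are absorbed by the requirement $T\ge C_2 s_{N,\delta}^2$.
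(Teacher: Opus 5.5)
Your architecture is the paper's. It uses convexity plus strict positivity on the sphere of radius $\rho_T\asymp\underline f^{-1}\sqrt{s_{N,\delta}/T}$, and Knight-identity curvature $\frac{\underline f}{2}\|\gamma\|^2-CL_0\upsilon_1^3\|\gamma\|^3$, which is where $T\gtrsim L_0^2\underline f^{-4}s_{N,\delta}$ enters. It adds a uniform deviation bound of order $\rho\sqrt{s_{N,\delta}/T}$ on the ball and a union bound over units with $x=\log(2eN/\delta)$.

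The gap is in the step you single out as the main obstacle. Berbee's coupling lemma is a $\beta$-mixing (absolute regularity) result: its coupling error is a $\beta$-coefficient. Assumption~\ref{assum:Regularity}(iv) only gives geometric strong mixing, and $\alpha$-coefficients are dominated by $\beta$-coefficients, not conversely. So a coupling error of $Tc_\alpha e^{-a_\alpha\ell}$ is not available under the hypotheses.

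Separately, your claim that the extra logarithms are absorbed by $T\ge C_2 s_{N,\delta}^2$ only works for factors in the second-order term. No sample-size condition can remove a logarithm multiplying the leading $\sqrt{s_{N,\delta}/T}$, and an i.i.d. contraction step on block sums puts one there. A block sum is only $\sqrt{\ell}$-Lipschitz in the vector $(X_{it}^\top\Delta)_{t\in\mathrm{block}}$, so vector contraction gives $r\sqrt{\ell(p+1)/T}$ with $\ell\asymp\log(TN/\delta)$. Even in the second-order term, the crude block-sum scale $\ell\|\Delta\|_{\Sigma_i}$ with $\ell\gtrsim\log(N/\delta)$ can produce a remainder of order $r\,s_{N,\delta}^2/T$. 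Absorbing that would require $T\gtrsim s_{N,\delta}^3$.

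The fix is the route you mention in passing, and it needs neither symmetrization nor coupling.
\begin{itemize}
\item Because $\rho_\tau$ is $1$-Lipschitz, the full increment $g_{it}(\gamma)-g_{it}(\gamma')$ is dominated pathwise by $|W_{it}^\top(\gamma-\gamma')|$. No score/remainder split is needed.
\item The increment is therefore sub-Gaussian with scale $\upsilon_1\|\gamma-\gamma'\|$. As a measurable function of $(Y_{it},X_{it})$, it inherits the $\alpha$-mixing rate.
\item The Merlev\`ede--Peligrad--Rio Bernstein inequality holds under $\alpha$-mixing. It bounds each fixed increment average by $C\upsilon_1\|\gamma-\gamma'\|\{\sqrt{u/T}+(u+\log(eT))^{3/2}/T\}$ outside probability $2e^{-u}$.
\item Chaining over $2^{-k}\rho$-nets with $u_k=x+c_0k(p+1)$, plus a union bound at each scale, gives the uniform bound. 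This is exactly how the paper proceeds.
\end{itemize}

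This also shows where $s_{N,\delta}^2$ really comes from: not block lengths, but the exponent $3/2=1/\gamma_1+1/\gamma_2$, with $\gamma_1=1$ from geometric mixing and $\gamma_2=2$ from sub-Gaussian tails. The bound $(s_{N,\delta}+\log(eT))^{3/2}/T\lesssim\sqrt{s_{N,\delta}/T}$ requires $T\gtrsim s_{N,\delta}^2$. For the same reason, your score-part bound should read $r\sqrt{s/T}+r(s+\log T)^{3/2}/T$, not $r\sqrt{s/T}+r\,s/T$.
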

Proposition \ref{pro:quantile} provides a simultaneous finite-sample guarantee for the first-stage estimators. With probability at least $1-\delta$, their leading error rate is $\sqrt{\left(p+\log(N/\delta)\right)/{T}}$. For a single cross-sectional unit, this bound reduces to $\sqrt{\left(p+\log(1/\delta)\right)/T}$, which is of the same order as the finite-sample rate established under independent observations in Proposition 3 of \cite{he2023robust}. It does, however, require the moderate-deviation condition $T\gtrsim\{p+\log(N/\delta)\}^2$, while the additional $\log N$ term
reflects the cost of simultaneous control over all $N$ units.

\begin{remark}
The conditions in Assumption \ref{assum:Regularity} are satisfied by a broad class of stationary location-scale quantile models with nondegenerate sub-Gaussian covariates, a conditional density that is positive and Lipschitz around the target quantile, and geometrically $\alpha$-mixing dynamics. These conditions allow heavy-tailed quantile errors, since no moment condition is imposed on $\epsilon_{it}$, and they do not require cross-sectional independence. Hence, when the sample-size condition in Proposition \ref{pro:quantile} holds, the first-stage quantile estimators are uniformly accurate over all units. Moreover, the Neyman-orthogonal construction of the second-stage generated response makes the ES estimator locally insensitive to perturbations in the first-stage quantile estimates. Consequently, first-stage estimation errors have no first-order effect on the ES estimation and enter only through higher-order terms. This substantially reduces the sensitivity of the proposed two-stage procedure to first-stage estimation uncertainty.
\end{remark}

We now turn to the second-stage ES coefficient estimator, which involves both an estimated quantile threshold and an unobserved factor component. Recall the generated response $Z_{it}(\cdot)$ defined in \eqref{eq:Z}, and let $\mathcal C_{it}=\sigma(X_{it},f_{t,\tau}^{0})$. At the true quantile coefficient, $Z_{it}(\alpha_{i,\tau}^{0})=\tau X_{it}^{\top}\beta_{i,\tau}^{0}+\tau\lambda_{i,\tau}^{0\top}f_{t,\tau}^{0}+\xi_{it,\tau}$, $E(\xi_{it,\tau}\mid\mathcal C_{it})=0$, where $\xi_{it,\tau}$ is the oracle ES regression error. In addition to Assumption \ref{assum:Regularity}, the following conditions control the factor-conditional quantile behavior and the tails of the variables entering the second-stage analysis.

\begin{assumption}\label{assum:ES-score}
The following conditions hold uniformly over $i\in[N]$ and $t\in[T]$.
\begin{enumerate}[label=(\roman*)]
\item Let $F_{it}(u\mid\mathcal C_{it})=P(\epsilon_{it}\leq u\mid\mathcal C_{it})$. The factor-conditional quantile restriction $F_{it}(0\mid\mathcal C_{it})=\tau$ holds almost surely. Moreover, there exists a constant $\overline f<\infty$ such that, for every $u\in\mathbb R$, $|F_{it}(u\mid\mathcal C_{it})-F_{it}(0\mid\mathcal C_{it})|\leq\overline f|u|$, almost surely.
\item For some fixed $\eta>0$ and constants $K_{\xi},K_{WF}<\infty$,
$E\left(|\xi_{it,\tau}|^{4+\eta}\mid\mathcal C_{it}\right)\leq K_{\xi}^{4+\eta}$ almost surely. In addition, for every $u\in\mathbb R^{p+1}$, $v\in\mathbb R^{r_\tau^0}$ satisfying $\|u\|^2+\|v\|^2=1$, and every $x\geq0$, $P\left(|u^{\top}W_{it}+v^{\top}f_{t,\tau}^{0}|\geq K_{WF}x\right)\leq2\exp(-x^2/2)$.
\end{enumerate}
\end{assumption}

Assumption \ref{assum:ES-score} imposes standard conditions for the second-stage ES analysis. Part (i) extends the conditional quantile restriction to include the latent factor and imposes a Lipschitz condition on the conditional distribution. Together, these conditions ensure that the bias induced by first-stage quantile estimation is quadratic, reflecting Neyman orthogonality. Part (ii) imposes conventional moment and sub-Gaussian tail conditions on the oracle ES error, covariates, and factors. These conditions control the stochastic fluctuations in the second stage, while serial dependence is accommodated through Assumption \ref{assum:Regularity}(iv).

To separate the oracle second-stage error from the perturbation introduced by first-stage quantile estimation, for $\gamma\in\mathbb R^{p+1}$, define $R_{it}(\gamma)=Z_{it}\left(\alpha_{i,\tau}^{0}+\Sigma_i^{-1/2}\gamma\right)-Z_{it}(\alpha_{i,\tau}^{0})$ and write $\mu_{it}(\gamma)=E\{R_{it}(\gamma)\mid\mathcal C_{it}\}$, $R_{it}^{c}(\gamma)=R_{it}(\gamma)-\mu_{it}(\gamma)$. For $\Gamma=(\gamma_1,\ldots,\gamma_N)$, let
$$
\mathcal R^{c}(\Gamma)=\left(R_{it}^{c}(\gamma_i)\right)_{t\leq T,\,i\leq N},\ \mathcal G(\rho)=\left\{\Gamma:
\max_{i\leq N}\|\gamma_i\|\leq\rho\right\},
$$
and collect the oracle errors in $\Xi_\tau=(\xi_{it,\tau})_{t\leq T,\,i\leq N}$.
\begin{assumption}
\label{assum:factor-spectral}
The following conditions hold with constants independent of $(N,T,p)$.
\begin{enumerate}[label=(\roman*)]
\item If $r_\tau^0>0$, the true factors satisfy the exact normalization $T^{-1}F_\tau^{0\top}F_\tau^0=I_{r_\tau^0}$. There exist constants
$0<c_\lambda\leq C_\lambda<\infty$ and $K_\lambda<\infty$ such that $c_\lambda I_{r_\tau^0}\preceq N^{-1}\Lambda_\tau^{0\top}\Lambda_\tau^0\preceq C_\lambda I_{r_\tau^0}$, and $\max_{i\leq N}\|\lambda_{i,\tau}^0\|
\leq K_\lambda$.
\item Irrespective of whether $r_\tau^0$ is zero, there exist constants $\overline\rho,K_\Xi,K_R>0$ such that $\left\{E\|\Xi_\tau\|_{\mathrm{op}}^4\right\}^{1/4}\leq K_\Xi(\sqrt N+\sqrt T)$ and, for every $0<\rho\leq\overline\rho$,
$\left[E\{\sup_{\Gamma\in\mathcal G(\rho)}\|\mathcal R^c(\Gamma)\|_{\mathrm{op}}^4\}\right]^{1/4}\leq K_R\rho\left\{\sqrt{N(p+1)}+\sqrt T\right\}$.
\end{enumerate}
\end{assumption}
Assumption \ref{assum:factor-spectral}(i) is the standard strong-factor condition. It ensures that each factor has a nonnegligible cross-sectional contribution, while the bound on individual loadings prevents the factor strength from being driven by only a few units. Part (ii) controls the spectral size of the oracle and generated-response errors. It allows cross-sectional dependence but rules out an additional pervasive error component that could be mistaken for a latent factor.

We next impose a joint identification condition tailored to the joint global minimization problem. Write $W_i=(W_{i1},\ldots,W_{iT})^\top$. For $C=(c_1^\top,\ldots,c_N^\top)^\top\in\mathbb R^{N\times(p+1)}$, let $\mathcal W(C)=(W_1c_1,\ldots,W_Nc_N)$, and let $\mathbb L_q$ denote the collection of $T\times N$ matrices with rank at most $q$. Moreover, for every $F\in\mathbb R^{T\times r_\tau^0}$ satisfying $T^{-1}F^\top F=I_{r_\tau^0}$, define $d_\tau(F)=\|P_F-P_{F_\tau^0}\|_{\mathrm{op}}$ and
$$
\mathcal Q_\tau^0(C,F)=\inf_{\ell_i\in\mathbb R^{r_\tau^0},\,i\in[N]}\frac{1}{NT}\sum_{i=1}^N\left\|F_\tau^0\lambda_{i,\tau}^0-W_ic_i-F\ell_i\right\|^2.
$$
\begin{assumption}
\label{assum:local-identification}
There exist constants $\rho_{F,0}\in(0,1)$, $\kappa_X,\kappa_F,K_{\mathrm{dec}}>0$, independent of $(N,T,p)$, such that the following conditions hold.
\begin{enumerate}[label=(\roman*)]
\item If $r_\tau^0=0$, then, for every $i\in[N]$, $\lambda_{\min}\left(T^{-1}W_i^\top W_i\right)\geq\kappa_X$.

\item If $r_\tau^0>0$, the following two conditions hold. First, for each $q\in\{r_\tau^0,2r_\tau^0\}$, every $C\in\mathbb R^{N\times(p+1)}$, and every $L\in\mathbb L_q$, there exist $\widetilde C=(\widetilde c_1^\top,\ldots,\widetilde c_N^\top)^\top$ and $\widetilde L\in\mathbb L_q$ such that $\mathcal W(C)+L=\mathcal W(\widetilde C)+\widetilde L$ and $N^{-1}\sum_{i=1}^N\|\widetilde c_i\|^2+(NT)^{-1}\|\widetilde L\|^2\leq K_{\mathrm{dec}}(NT)^{-1}\|\mathcal W(C)+L\|^2$. Second, for every $C\in\mathbb R^{N\times(p+1)}$ and every $F\in\mathbb R^{T\times r_\tau^0}$ satisfying $T^{-1}F^\top F=I_{r_\tau^0}$, $\mathcal Q_\tau^0(C,F)\geq\kappa_F\min\left\{d_\tau(F)^2,\rho_{F,0}^2\right\}+\kappa_X\mathbbm{1}\left\{d_\tau(F)\leq\rho_{F,0}\right\}N^{-1}\sum_{i=1}^N\|c_i\|^2$.
\end{enumerate}
\end{assumption}
Assumption \ref{assum:local-identification} separates global factor-space identification from local coefficient identification. When $r_\tau^0=0$, it reduces to the standard nonsingularity condition for the regression design. When $r_\tau^0>0$, the stable-decomposition condition prevents the regression and low-rank components from becoming arbitrarily large while nearly cancelling each other in the fitted signal. The second condition imposes a global separation requirement on the factor space, while requiring quadratic identification of the regression coefficients only when the candidate factor space lies in a neighborhood of the true factor space. Thus, a factor space outside this neighborhood incurs a nonnegligible signal loss even after the regression coefficients and loadings have been optimally adjusted. Once the estimated factor space has been localized, the regression coefficients and the factor space are jointly identified at a quadratic rate.

Let $\iota_\tau=\mathbbm{1}\{r_\tau^0>0\}$, $\rho_{N,T,\delta}^{Q}=C_1\underline f^{-1}\sqrt{{s_{N,\delta/8}}/{T}}$, where $s_{N,\delta}$ and $C_1$ are defined in Proposition \ref{pro:quantile}. The following theorem establishes a unified high-probability finite-sample bound for the unit-specific ES coefficient estimator, covering both cases with and without latent factors.
\begin{theorem}
\label{theo:nonasymES}
Under Assumptions \ref{assum:Regularity}--\ref{assum:local-identification}, there exist constants $C,c>0$, depending only on the fixed constants in these assumptions and on $\tau$, such that the following holds for every fixed $i\in[N]$ and $\delta\in(0,1)$. If $T\geq C\left\{s_{N,\delta/8}^{\,2}\log^4(2T)+L_0^2\underline f^{-4}s_{N,\delta/8}\right\}$, $\rho_{N,T,\delta}^{Q}\leq\min(\overline\rho,1)$, and $\mathfrak R_{N,T,p,\delta}\leq c$, then the following bound holds with probability at least $1-\delta$,
\begin{equation}
\label{eq:beta-only-main-bound}
\tau\|\widehat\beta_{i,\tau}-\beta_{i,\tau}^0\|_{\Sigma_i}\leq C\left(1+\iota_\tau\delta^{-1/(4+\eta)}\right)\mathfrak R_{N,T,p,\delta},
\end{equation}
where $\mathfrak R_{N,T,p,\delta}=\delta^{-1/4}\left(1+\rho_{N,T,\delta}^{Q}\right)\left\{\sqrt{({p+1})/{T}}+\iota_\tau\left(N^{-1/2}+T^{-1/2}\right)\right\}+\overline f\left(\rho_{N,T,\delta}^{Q}\right)^2$ is the overall second-stage error scale.
\end{theorem}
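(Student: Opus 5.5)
We work on the intersection of several high-probability events, each of probability at least $1-\delta/8$ or better; a union bound then gives probability at least $1-\delta$. The first event is the conclusion of Proposition~\ref{pro:quantile} with $\delta/8$ in place of $\delta$. On it, $\gamma_i:=\Sigma_i^{1/2}(\widehat\alpha_{i,\tau}-\alpha_{i,\tau}^0)$ satisfies $\max_i\|\gamma_i\|\le\rho^Q_{N,T,\delta}$, so $\widehat\Gamma\in\mathcal G(\rho^Q_{N,T,\delta})$. Our sample-size condition implies that of the proposition. The generated response then decomposes as
\[
Z_{it}(\widehat\alpha_{i,\tau})=\tau X_{it}^\top\beta^0_{i,\tau}+\tau\lambda^{0\top}_{i,\tau}f^0_{t,\tau}+\xi_{it,\tau}+\mu_{it}(\gamma_i)+R^c_{it}(\gamma_i).
\]
By Neyman orthogonality and Assumption~\ref{assum:ES-score}(i), $\mu_{it}(\gamma)=E[(\epsilon_{it}-X_{it}^\top\Sigma_i^{-1/2}\gamma)\{\mathbbm 1(\epsilon_{it}\le X_{it}^\top\Sigma_i^{-1/2}\gamma)-\mathbbm1(\epsilon_{it}\le0)\}\mid\mathcal C_{it}]$, since the linear term $\tau X^\top\Sigma^{-1/2}\gamma$ cancels $E[-X^\top\Sigma^{-1/2}\gamma\,\mathbbm 1(\epsilon\le0)\mid\mathcal C]$. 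The integrand is supported on an interval of length $|W_{it}^\top\gamma|$ and is bounded there by that length, so $|\mu_{it}(\gamma)|\le\overline f\,(W_{it}^\top\gamma)^2$. Sub-Gaussianity (Assumption~\ref{assum:Regularity}(iii)) together with a mixing Bernstein bound for $T^{-1}\sum_t(W_{it}^\top u)^4$ gives, uniformly in $i$, $T^{-1/2}\|\mu_i(\gamma_i)\|\lesssim\overline f(\rho^Q)^2$. This is the source of the $\overline f(\rho^Q)^2$ term.

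\emph{Case $r_\tau^0=0$.} Here $\widehat\beta_{i,\tau}$ is OLS of $Z^\ast_i$ on $X_i$. Writing in the $W$-coordinates,
\[
\tau\Sigma_i^{1/2}(\widehat\beta_{i,\tau}-\beta^0_{i,\tau})=(T^{-1}W_i^\top W_i)^{-1}T^{-1}W_i^\top(\xi_i+\mu_i+R^c_i).
\]
The Gram inverse is bounded via Assumption~\ref{assum:local-identification}(i), or alternatively via a matrix Bernstein bound under mixing; the $\log^4(2T)$ factor in the sample-size condition comes from blocking. For the three noise terms:
\begin{itemize}
\item The $\xi$ term is a martingale-difference-type sum. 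Its second moment is bounded by $K_\xi^2(p+1)/T$ using mixing covariance inequalities with the $4+\eta$ moments, and Chebyshev/Markov at level $\delta/8$ gives $\delta^{-1/4}\sqrt{(p+1)/T}$ (using fourth moments).
\item The $R^c$ term is conditionally centered with variance proportional to $\rho^Q$ (indicator increments), giving $\delta^{-1/4}\rho^Q\sqrt{(p+1)/T}$.
\item The $\mu$ term is bounded by Cauchy--Schwarz.
\end{itemize}
This yields the bound without the $\iota_\tau$ terms.

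\emph{Case $r_\tau^0>0$.} Write the residual matrix $E=\Xi_\tau+\mathcal M+\mathcal R^c(\widehat\Gamma)$. We proceed in three steps.
\begin{enumerate}
\item \emph{Crude rate for the fitted signal.} Optimality of the global minimizer compared with the truth gives the basic inequality $\|\mathcal W(\Delta)+L\|^2\le2\langle E,\mathcal W(\Delta)+L\rangle$, where $L=F\ell^\top-F^0\Lambda^{0\top}$ has rank at most $2r^0$. Split the cross term using the stable decomposition of Assumption~\ref{assum:local-identification}(ii). The low-rank part is bounded by $\|E\|_{\rm op}\sqrt{2r}\|\widetilde L\|$. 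The regression part is bounded by $\max_i\|W_i^\top E_i\|$-type terms, handled by the case-1 arguments and a union bound. Combined with the spectral moment bounds of Assumption~\ref{assum:factor-spectral}(ii) and Markov's inequality with fourth moments (hence $\delta^{-1/4}$), this gives $(NT)^{-1/2}\|\mathcal W(\Delta)+L\|\lesssim\mathfrak R$.
\item \emph{Localization of the factor space.} The quantity $\mathcal Q^0_\tau(\widehat C,\widehat F)$ is bounded by this signal error. Since $\mathfrak R\le c$ is small, the global separation part of Assumption~\ref{assum:local-identification}(ii) forces $d_\tau(\widehat F)\le\rho_{F,0}$, and hence $d_\tau(\widehat F)\lesssim\mathfrak R$.
\item \emph{Unit-level coefficient bound.} For fixed $i$, the profiled formula gives
\[
\widehat\beta_i-\beta^0_i=(X_i^\top M_{\widehat F}X_i)^{-1}X_i^\top M_{\widehat F}(F^0\lambda^0_i+\tau^{-1}E_i).
\]
The bias from $M_{\widehat F}F^0\lambda^0_i$ is $O(\|\lambda^0_i\|\,d_\tau(\widehat F))$, and $d_\tau(\widehat F)=O(N^{-1/2}+T^{-1/2}+\cdots)$. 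The noise term $X_i^\top M_{\widehat F}\xi_i$ splits into the oracle piece $X_i^\top\xi_i$ and a piece $X_i^\top P_{\widehat F}\xi_i$. The latter is bounded using $\|F^{0\top}\xi_i\|/T$, plus a deviation bounded by $d_\tau$ times a conditional $(4+\eta)$-moment Markov bound; this is the origin of the $\delta^{-1/(4+\eta)}$ factor.
\end{enumerate}

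\emph{Main obstacle.} The hardest part is the uniform-in-$\Gamma$ control of the generated-response perturbation inside the global minimization. Because $\widehat\Gamma$ depends on all the data, one cannot condition on it. The plan relies on the supremum formulation in Assumption~\ref{assum:factor-spectral}(ii) for the low-rank cross term. For the unit-$i$ regression cross term, we would need a uniform bound over $\|\gamma\|\le\rho$ for the empirical process $T^{-1}W_i^\top R^c_i(\gamma)$. For this we would try a bracketing/chaining argument under $\beta$-mixing via Berbee coupling, which again needs the $\log^4(2T)$ factor. A second delicate point is preventing the $\delta^{-1/(4+\eta)}$ and $\delta^{-1/4}$ factors from compounding multiplicatively beyond the stated product form.
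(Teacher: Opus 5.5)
Your architecture matches the paper's: the $\delta/8$ first-stage event, the exact split of the generated response into $\xi+\mu+R^c$, the basic inequality with the rank-$2r^0$ stable decomposition, localization of $\widehat F$ through the global-separation part of Assumption~\ref{assum:local-identification}(ii), and the profiled normal equation for unit $i$. In that equation the term $d_\tau(\widehat F)\,\|E_i\|/\sqrt T$ produces $\delta^{-1/(4+\eta)}\mathfrak R_{N,T,p,\delta}$, so the compounding you worry about is exactly the stated product form and is harmless.

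There is, however, a genuine gap in Step~1. You bound the regression part of the cross term through $\max_i\|W_i^\top E_i\|$ and a union bound over units. The oracle errors have only $4+\eta$ conditional moments, so a fourth-moment Markov bound at level $\delta/(8N)$ inflates the score by $(N/\delta)^{1/4}$. Even a Fuk--Nagaev or Bernstein-type bound would add a $\sqrt{\log N}$ factor. Neither is compatible with $\mathfrak R_{N,T,p,\delta}$, whose leading term $\delta^{-1/4}\sqrt{(p+1)/T}$ has no $N$-dependence; Corollary~\ref{cor:ES-beta-rate}(ii) relies on this.

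The missing observation is that the stable decomposition controls the \emph{average} $N^{-1}\sum_j\|\widetilde c_j\|^2$. Cauchy--Schwarz across units therefore pairs it with the root-mean-square score $\{N^{-1}\sum_j\|T^{-1}W_j^\top E_j\|^2\}^{1/2}$. Let $V_j=\sup_{\|\gamma_j\|\le\rho}\|T^{-1}W_j^\top E_j(\gamma_j)\|$. By Minkowski's inequality, $[E\{(N^{-1}\sum_jV_j^2)^{1/2}\}^4]^{1/4}\le\{N^{-1}\sum_j(EV_j^4)^{1/2}\}^{1/2}$. Per-unit fourth-moment bounds then give the $\delta^{-1/4}$ rate with no union bound and no cross-sectional independence. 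Because $\mathcal G(\rho)$ is a product of balls, the supremum over $\Gamma$ passes inside the average. Union bounds are harmless only for design quantities with exponential tails, such as $\max_j\lambda_{\max}(T^{-1}W_j^\top W_j)$ and the empirical fourth moments behind $\mathcal M$.

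Two smaller points.
\begin{itemize}
\item In Step~3 you invert $X_i^\top M_{\widehat F}X_i$ without a lower bound. The bound follows from the local-identification inequality. Take $C$ whose only nonzero row is the $i$th, equal to $av$, divide by $a^2$ and let $|a|\to\infty$. This gives $\lambda_{\min}(T^{-1}W_i^\top M_FW_i)\ge\kappa_X$ whenever $d_\tau(F)\le\rho_{F,0}$.
\item For your ``main obstacle,'' Berbee coupling needs $\beta$-mixing, which Assumption~\ref{assum:Regularity}(iv) does not provide. Bracketing is also unnecessary. Since $R_{it}(\gamma)=-\{\rho_\tau(\epsilon_{it}-W_{it}^\top\gamma)-\rho_\tau(\epsilon_{it})\}$, the perturbation is $1$-Lipschitz in $W_{it}^\top\gamma$. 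Hence the score increments $u^\top W_{it}\{R^c_{it}(\gamma)-R^c_{it}(\gamma')\}$ are sub-exponential with scale $\|\gamma-\gamma'\|$. The strong-mixing Bernstein inequality plus a net or chaining argument over the $\rho$-ball, exactly as in Lemma~\ref{lem:mixing-empirical-process}, then gives the uniform bound.
\end{itemize}
The same identity shows that the conditional variance of $R^c_{it}$ is of order $\rho^2$, not $\rho$. It is dominated by the linear piece $v\{\tau-\mathbbm{1}(\epsilon_{it}\le 0)\}$ with $v=W_{it}^\top\gamma$, not by indicator increments. This is what actually yields your $\rho^Q\sqrt{(p+1)/T}$ term.
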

Theorem \ref{theo:nonasymES} provides a finite-sample bound for a fixed unit-specific ES coefficient. The bound separates the oracle regression error from the two additional sources of uncertainty introduced by the proposed procedure. Within $\mathfrak R_{N,T,p,\delta}$, the term $\sqrt{(p+1)/T}$ is the oracle ES-regression error, the terms involving $\rho_{N,T,\delta}^{Q}$ capture the effect of first-stage quantile estimation, and $N^{-1/2}+T^{-1/2}$ captures factor-estimation uncertainty. Because the first-stage radius enters only through products with other estimation errors or through its square, its contribution is of second order, reflecting the Neyman orthogonality of the ES transformation. When $r_\tau^0=0$, the factor-estimation terms vanish, and the result reduces to the two-stage ES regression setting without latent factors.

\begin{corollary}
\label{cor:ES-beta-rate}
Under Assumptions \ref{assum:Regularity}--\ref{assum:local-identification}, the following
results hold for every fixed $i\in[N]$.
\begin{enumerate}[label=(\roman*)]
\item If $N,T\to\infty$ and $\{p+1+\log(2N)\}\log^2(2T)=o(\sqrt T)$,
then $\|\widehat\beta_{i,\tau}-\beta_{i,\tau}^0\|_{\Sigma_i}=o_p(1)$.
\item If $p$ is fixed and $N,T\to\infty$ with $N/T\to\kappa\in(0,\infty)$, then
$\sqrt T\|\widehat\beta_{i,\tau}-\beta_{i,\tau}^0\|_{\Sigma_i}=O_p(1)$.
\end{enumerate}
\end{corollary}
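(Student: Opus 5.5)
The corollary follows directly from Theorem~\ref{theo:nonasymES}. The plan is to choose a fixed $\delta\in(0,1)$, check the theorem's sample-size and smallness conditions for all large $(N,T)$, and then show that the resulting bound is either $o(1)$ or $O(T^{-1/2})$. Since $\tau\in(0,1)$ is fixed, the factor $\tau$ on the left of \eqref{eq:beta-only-main-bound} and the constant $C(1+\iota_\tau\delta^{-1/(4+\eta)})$ are harmless once $\delta$ is fixed. The conclusion then comes from the definition of $o_p$ and $O_p$: for every $\delta$ there is an event of probability at least $1-\delta$ on which the rescaled error is bounded by a constant $M_\delta$, or by a sequence tending to zero.

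For part (i), fix $\delta$. Then $s_{N,\delta/8}=p+2+\log(16N/\delta)\asymp p+1+\log(2N)$ up to additive constants depending on $\delta$. The hypothesis $\{p+1+\log(2N)\}\log^2(2T)=o(\sqrt T)$ gives $s_{N,\delta/8}^2\log^4(2T)=o(T)$, so the first sample-size condition holds eventually. Because $s_{N,\delta/8}\le s_{N,\delta/8}^2$, the term $L_0^2\underline f^{-4}s_{N,\delta/8}$ is also $o(T)$. The same hypothesis gives $s_{N,\delta/8}/T\to0$, hence $\rho^Q_{N,T,\delta}\to0$, which is eventually below $\min(\overline\rho,1)$. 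For the error scale, $\sqrt{(p+1)/T}\le\sqrt{s_{N,\delta/8}/T}\to0$, and $N^{-1/2}+T^{-1/2}\to0$ because $N,T\to\infty$. Also $\overline f(\rho^Q)^2\to0$ and $\delta^{-1/4}(1+\rho^Q)$ stays bounded. Hence $\mathfrak R_{N,T,p,\delta}\to0$, so it is eventually $\le c$, and the right side of \eqref{eq:beta-only-main-bound} tends to zero. This means that for every $\epsilon>0$ and $\delta>0$, $P(\|\widehat\beta_{i,\tau}-\beta^0_{i,\tau}\|_{\Sigma_i}>\epsilon)\le\delta$ for all large $(N,T)$, which is consistency.

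For part (ii), with $p$ fixed and $N/T\to\kappa$, we have $\log N=\log T+O(1)$. The condition of (i) therefore holds, since $\log^3 T=o(\sqrt T)$, and the theorem applies for each fixed $\delta$ once $T$ is large. In this regime:
\begin{itemize}
\item $\rho^Q_{N,T,\delta}\asymp\sqrt{\log(N/\delta)/T}$.
\item $N^{-1/2}+T^{-1/2}\le C_\kappa T^{-1/2}$ eventually.
\item $\sqrt{(p+1)/T}\asymp T^{-1/2}$.
\end{itemize}
The first-stage quadratic term is $\overline f(\rho^Q)^2\asymp\log(N/\delta)/T$, which is $o(T^{-1/2})$. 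Therefore $\sqrt T\,\mathfrak R_{N,T,p,\delta}\le M_\delta$ for a constant depending on $\delta$, $\kappa$ and the model constants, and so $\sqrt T\|\widehat\beta_{i,\tau}-\beta_{i,\tau}^0\|_{\Sigma_i}\le C\tau^{-1}(1+\delta^{-1/(4+\eta)})M_\delta$ with probability at least $1-\delta$ for all large $T$. This is exactly $O_p(1)$.

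The only point needing care, and the main obstacle, is the order of quantifiers. The constants $C,c$ in Theorem~\ref{theo:nonasymES} do not depend on $\delta$, but the threshold beyond which the conditions hold, and the constant $M_\delta$, do. The argument must therefore fix $\delta$ first, then take $(N,T)$ large, and check that the dependence on $\delta$ through $\log(16N/\delta)$ and $\delta^{-1/4}$ only inflates constants and never affects the rates.
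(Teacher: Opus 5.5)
Your proposal is correct and follows essentially the same argument as the paper. Both fix $\delta$, verify the sample-size and localization conditions of Theorem~\ref{theo:nonasymES} for all large $(N,T)$, and show $\mathfrak R_{N,T,p,\delta}=o(1)$ in part (i) and $\mathfrak R_{N,T,p,\delta}=O_\delta(T^{-1/2})$ in part (ii); the probability-$(1-\delta)$ bound then becomes $o_p(1)$ or $O_p(1)$ once $\delta$ is taken smaller than an arbitrary tolerance, with the constants $C,c$ free of $\delta$ and only the thresholds depending on it, exactly as you describe.
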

Corollary \ref{cor:ES-beta-rate} summarizes the large-sample behavior of the unit-specific ES coefficient estimator under two growth regimes. Part (i) establishes consistency while allowing $N$ and $T$ to grow at different rates and $p$ to increase subject to the stated condition. Part (ii) considers the stronger setting in which $p$ is fixed and $N\asymp T$, and shows that the estimator attains the conventional $\sqrt T$ rate. In both parts, the first-stage quantile estimation error enters only through higher-order remainder terms because of the orthogonality of the generated ES response.

\subsection{Finite-sample Gaussian Approximation}\label{sec:3.2}

We next establish a finite-sample Gaussian approximation for a fixed unit-specific ES coefficient. The approximation is formulated in terms of an oracle influence score that accounts for the cross-unit feedback induced by estimation of the latent factor space. We first introduce the population quantities needed to define this score.

Under the normalization $T^{-1}F_\tau^{0\top}F_\tau^0=I_{r_\tau^0}$, define the factor-residualized covariate $X_{jt}^{\circ}=X_{jt}-E(X_{jt}f_{t,\tau}^{0\top})f_{t,\tau}^0$. For $j,k\in[N]$, let
$a_{jk,\tau}^0=N\lambda_{j,\tau}^{0\top}(\Lambda_\tau^{0\top}\Lambda_\tau^0)^{-1}\lambda_{k,\tau}^0$, $\overline\xi_{jt,\tau}=\xi_{jt,\tau}-N^{-1}\sum_{\ell=1}^Na_{j\ell,\tau}^0\xi_{\ell t,\tau}$. Let $\mathcal J_\tau^0$ be the $N(p+1)\times N(p+1)$ block matrix whose $(j,k)$ block is $[\mathcal J_\tau^0]_{jk}=\mathbbm{1}\{j=k\}E(X_{jt}^{\circ}X_{jt}^{\circ\top})-N^{-1}{a_{jk,\tau}^0}E(X_{jt}^{\circ}X_{kt}^{\circ\top})$, and set $\mathcal G_\tau^0=(\mathcal J_\tau^0)^{-1}$, with $(i,j)$ block $\mathcal G_{ij,\tau}^0$. The oracle influence score for unit $i$ is $\psi_{it,\tau}^0=\sum_{j=1}^N\mathcal G_{ij,\tau}^0X_{jt}^{\circ}\overline\xi_{jt,\tau}$. When $r_\tau^0=0$, set $X_{jt}^{\circ}=X_{jt}$ and $a_{jk,\tau}^0=0$. The preceding definitions then reduce automatically to the usual oracle regression influence score without latent factors.

\begin{assumption}
\label{assum:Gaussian}
The following conditions hold with constants that do not depend on $(N,T,p)$.
\begin{enumerate}[label=(\roman*)]
\item The eigenvalues of $E(X_{jt}^{\circ}X_{jt}^{\circ\top})$ are bounded above and away from zero uniformly in $j$. Moreover, for constants $c_{\mathcal J},K_{\mathcal G}>0$, $\sigma_{\min}(\mathcal J_\tau^0)\geq c_{\mathcal J}$, $\max_{i\leq N}\sum_{j=1}^N\|\mathcal G_{ij,\tau}^0\|_{\mathrm{op}}\leq K_{\mathcal G}$.
\item Define $\mathcal V_{jk,t}=\left(f_{t,\tau}^{0\top},X_{jt}^{\top},X_{kt}^{\top},\overline\xi_{jt,\tau},\overline\xi_{kt,\tau}\right)^{\top}$, $\mathcal X_t=\sigma(f_{t,\tau}^0,X_{1t},\ldots,X_{Nt})$, and let $\mathscr X=\sigma(\mathcal X_t:t\in\mathbb Z)$. Uniformly in $N,j,k$, the processes $\{\mathcal V_{jk,t}:t\in\mathbb Z\}$ are strictly stationary and geometrically strong mixing, with mixing coefficients bounded by $c_{\mathcal V}e^{-a_{\mathcal V}h}$. For every fixed $i$, $\{\psi_{it,\tau}^0:t\in\mathbb Z\}$ satisfies the same conditions. Conditional on $\mathscr X$, the innovation processes $\{(\epsilon_{it},\xi_{it,\tau}):t\in\mathbb Z\}_{i=1}^N$ can be partitioned into mutually independent clusters of size at most $K_{\mathrm{cl}}$. Moreover, uniformly in $i,t$, $E(\xi_{it,\tau}\mid\mathscr X)=0$, $E\left(|\xi_{it,\tau}|^{4+\eta}\mid\mathcal X_t\right)\leq K_{\xi,G}^{4+\eta}\quad\text{a.s.}$
\item Define $\Omega_{i,\tau,T}=T^{-1}\sum_{t=1}^T\sum_{s=1}^T\operatorname{Cov}(\psi_{it,\tau}^0,\psi_{is,\tau}^0)$. For every fixed target unit $i$, there exist constants $0<c_\Omega\leq C_\Omega<\infty$ such that $c_\Omega I_{p+1}\preceq\Omega_{i,\tau,T}\preceq C_\Omega I_{p+1}$.
\end{enumerate}
\end{assumption}
Assumption \ref{assum:Gaussian} collects the additional conditions needed for the finite-sample Gaussian approximation. Part (i) ensures that the factor-residualized design and the population feedback system are well conditioned and bounds the propagation of estimation errors across units. Part (ii) provides the dependence and moment conditions required for the linear representation and the Berry--Esseen bound. The bounded-cluster condition permits within-cluster dependence while preserving the $N^{-1/2}$ order of the relevant cross-sectional averages. Part (iii) ensures that the finite-sample long-run variance is uniformly bounded and nondegenerate, making the standardization valid.
\begin{theorem}
\label{theo:Gaussian}
Under Assumptions \ref{assum:Regularity}--\ref{assum:Gaussian}, there exist constants $c,C>0$, depending only on the fixed constants in these assumptions and on $\tau$, such that, if ${\{p+1+\log(2NT)\}^{2}\log^4(2T)}/{T}\leq c$, $\mathfrak b_{N,T,p}\leq c$, then, for every fixed $i\in[N]$,
\begin{equation}
\label{eq:GA-main}
\sup_{\substack{u\in\mathbb S^p\\ z\in\mathbb R}}\left|P\left\{\frac{\tau\sqrt T\,u^{\top}(\widehat\beta_{i,\tau}-\beta_{i,\tau}^0)}{\sqrt{u^{\top}\Omega_{i,\tau,T}u}}\leq z\right\}-\Phi(z)\right|\leq C\left\{\frac{\log^2(2T)}{\sqrt T}+\mathfrak b_{N,T,p}^{2/3}\right\},
\end{equation}
where $\mathfrak b_{N,T,p}=\{{p+1+\log(2NT)}\}/{\sqrt T}+\iota_\tau\left[\sqrt{\{p+1+\log(2NT)\}/{N}}+{\sqrt T}/{N}\right]$ is the linearization error scale.
\end{theorem}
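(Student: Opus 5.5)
The plan is to prove \eqref{eq:GA-main} by a linearization and Berry--Esseen argument. First I will establish an asymptotically linear representation
\[
\tau\sqrt T\,(\widehat\beta_{i,\tau}-\beta_{i,\tau}^0)=\frac{1}{\sqrt T}\sum_{t=1}^T\psi_{it,\tau}^0+\Delta_{i,T},
\]
and show that, outside an event of small probability, $\|\Delta_{i,T}\|$ is bounded by a multiple of $\mathfrak b_{N,T,p}$. Next I will apply a Berry--Esseen bound for geometrically strong-mixing sums to the projected score $u^\top\psi_{it,\tau}^0/\sqrt{u^\top\Omega_{i,\tau,T}u}$. The final step is a standard perturbation argument: bound $P(|u^\top\Delta_{i,T}|>\epsilon)$ by a Markov-type inequality of order $(\mathfrak b/\epsilon)^{2}$, use anti-concentration of $\Phi$ to control an $\epsilon$-shift, and choose $\epsilon=\mathfrak b_{N,T,p}^{2/3}$. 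This balances $\epsilon$ against $(\mathfrak b/\epsilon)^2$ and gives the $\mathfrak b^{2/3}$ term.

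For the linearization I will start from the first-order conditions of the joint global minimizer \eqref{eq:step 2 ES}. By construction of Algorithm~\ref{alg:ES_factor}, these give
\[
\widehat\beta_{j,\tau}=(X_j^\top M_{\widehat F}X_j)^{-1}X_j^\top M_{\widehat F}Z_j^\ast(\widehat\alpha_{j,\tau})\quad\text{for all } j,
\]
together with the PCA eigen-equation for $\widehat F$. I will substitute $Z_{jt}(\widehat\alpha_j)=\tau X_{jt}^\top\beta_j^0+\tau\lambda_j^{0\top}f_t^0+\xi_{jt}+\mu_{jt}(\gamma_j)+R^c_{jt}(\gamma_j)$, with $\gamma_j=\Sigma_j^{1/2}(\widehat\alpha_j-\alpha_j^0)$.

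Following \citet{bai2009panel}, I will expand $M_{\widehat F}-M_{F^0}$ through the standard identity $\widehat F H^{-1}-F^0=T^{-1}(\ldots)$. This identity expresses the factor error as a linear functional of $\{\widehat\beta_j-\beta_j^0\}_j$ and of $\Xi_\tau$, plus quadratic remainders. Plugging it into the $\beta$ equations produces the stacked system
\[
\mathcal J_\tau^0\,\mathrm{vec}\{\tau(\widehat\beta_j-\beta_j^0)\}_j=\Big\{T^{-1}\sum_t X_{jt}^\circ\overline\xi_{jt}\Big\}_j+\text{remainder}.
\]
The cross-unit feedback term $N^{-1}a_{jk}^0E(X^\circ_jX_k^{\circ\top})$ comes from the projection of loadings. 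Inverting with $\mathcal G^0_\tau$ and using the row-sum bound in Assumption~\ref{assum:Gaussian}(i) isolates row $i$, giving $\psi_{it,\tau}^0$.

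Several ingredients control the remainders. Theorem~\ref{theo:nonasymES}, and its uniform-in-$i$ analogue from the same proof with $\delta$ replaced by a union over units, supplies consistency and rates for $\max_j\|\widehat\beta_j-\beta_j^0\|$ and for $d_\tau(\widehat F)$. Proposition~\ref{pro:quantile} controls the first-stage terms. Orthogonality via Assumption~\ref{assum:ES-score}(i) gives $|\mu_{jt}(\gamma_j)|\lesssim\overline f\|\gamma_j\|^2=O(s_N/T)$, which after the $\sqrt T$ scaling is $O((p+\log N)/\sqrt T)$. The centered part $R^c$ enters through $T^{-1/2}\sum_tX_{it}^\circ R_{it}^c(\gamma_i)$. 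I will handle it by a local empirical-process/chaining bound over $\|\gamma\|\le\rho$ under mixing, which yields $\rho\sqrt{p+\log(NT)}$, hence again of order $\mathfrak b$. The spectral bounds of Assumption~\ref{assum:factor-spectral}(ii) control the factor cross terms. The cluster-independence condition in Assumption~\ref{assum:Gaussian}(ii) makes cross-sectional averages such as $N^{-1}\sum_\ell a_{j\ell}\xi_{\ell t}$ of order $N^{-1/2}$, and this produces the terms $\sqrt{(p+\log NT)/N}$ and the incidental-parameter bias $\sqrt T/N$. Nonsingularity of $\mathcal J^0$ and bounded $\Omega_{i,\tau,T}$ then carry these bounds through the inversion and standardization.

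I expect the main obstacle to be this factor-feedback expansion. It requires showing that every second-order term in $\widehat F-F^0H$, including products of $\Xi_\tau$ with itself (the $\sqrt T/N$ bias and a $1/\sqrt T$-type term), is uniformly of order $\mathfrak b_{N,T,p}$ after $\sqrt T$ scaling, with explicit dependence on $p$. My approach will be to bound each term through operator norms of $\Xi_\tau$ and $\mathcal R^c$, using their fourth-moment bounds. Those fourth-moment bounds are what produce the polynomial tail behind the $2/3$ exponent.
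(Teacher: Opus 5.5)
Your three-step architecture is the paper's. Lemma~\ref{lem:Gaussian-Bahadur} bounds the linearization remainder by $C\mathfrak b_{N,T,p}/\sqrt{\varpi}$ outside an event of probability $C\varpi$. This is your $(\mathfrak b/\epsilon)^2$ tail; the square arises because the dominant remainders are quadratic, e.g.\ $\sqrt T a_\varpi^2$ with $a_\varpi\propto\varpi^{-1/4}$ coming from fourth-moment Markov bounds. Lemma~\ref{lem:Gaussian-oracle-BE} then supplies the mixing Berry--Esseen term, and the $\epsilon$-sandwich with $\varpi\asymp\mathfrak b_{N,T,p}^{2/3}$ finishes the proof.

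The genuine difference is the factor-space step. You expand $\widehat FH^{-1}-F^0$ through Bai's eigen-equation identity, which is the device the paper uses only for its asymptotic appendix results. The paper instead notes that, given $\widehat B$, $P_{\widehat F}$ is the leading rank-$r^0$ singular projector of $F^0\Lambda^{0\top}+D$. It then uses $P_{\widehat F}-P_{F^0}=\mathcal L(D)+\mathcal Q(D)$ with $\|\mathcal Q(D)\|_{\mathrm{op}}\le C\|D\|_{\mathrm{op}}^2/(NT)$. Because $\mathcal L(D)F^0\lambda_j^0=M_{F^0}\sum_k N^{-1}a_{jk}^0 d_k$, where $d_k$ is the $k$th column of $D$, this yields the feedback matrix and the $\overline\xi_{jt}$ correction in one step. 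It also leaves a single operator-norm quadratic remainder, instead of fifteen terms plus nonasymptotic control of $H$, $V_{NT}$ and $(F^{0\top}\widehat F/T)^{-1}$. Your route reaches the same system and shows more clearly where the $\sqrt T/N$ self-term comes from. The price is heavier bookkeeping with explicit $p$.

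One ingredient, as you state it, fails. A union over units in the proof of Theorem~\ref{theo:nonasymES} does not give a useful rate for $\max_j\|\widehat\beta_{j,\tau}-\beta_{j,\tau}^0\|$. That bound depends on $\delta$ polynomially, through $\delta^{-1/4}$ and $\delta^{-1/(4+\eta)}$ from fourth-moment Markov steps. Replacing $\delta$ by $\delta/N$ therefore costs a factor of order at least $N^{1/4}$, so the resulting bound on $\sqrt T\max_j\|\widehat\beta_{j,\tau}-\beta_{j,\tau}^0\|$ diverges when $N\asymp T$.

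Your plan relies on this bound. With $\mathcal J^0_\tau$ written on the left, the remainder contains $(\mathcal J_T-\mathcal J^0_\tau)v$, where $v_j=\tau\sqrt T(\widehat\beta_{j,\tau}-\beta_{j,\tau}^0)$. After inversion, the row-sum bound in Assumption~\ref{assum:Gaussian}(i) controls its diagonal blocks only by $C\sqrt{s(\varpi)/T}\,\max_j\|v_j\|$. For this to be $O(\mathfrak b_{N,T,p})$ you need a uniform near-$\sqrt T$ rate, which the union bound does not deliver.

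The paper avoids this in three ways:
\begin{enumerate}
\item Its only uniform-in-units input is the first-stage radius from Proposition~\ref{pro:quantile}, whose $\delta$-dependence is logarithmic.
\item Second-stage errors enter only through the averaged localization \eqref{eq:Gaussian-global-localization} from Lemma~\ref{lem:joint-global-perturbation}.
\item The sample matrix $\mathcal J_T$, built with $M_{F^0}$, stays on the left of \eqref{eq:Gaussian-sample-feedback} and is inverted by $\mathcal G_T$. It is compared with $\mathcal G^0_\tau$ via Lemma~\ref{lem:Gaussian-sample-stability} only on the leading score $U_{j,T}$, never on $v$.
\end{enumerate}
If you reorder your argument this way, it coincides with the paper's proof.
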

Theorem \ref{theo:Gaussian} provides a finite-sample Gaussian approximation for every linear projection of a fixed unit-specific ES coefficient. The first term in \eqref{eq:GA-main} is the Berry--Esseen error for the serially dependent oracle influence score. Within $\mathfrak b_{N,T,p}$, the first term collects the generated-response and sample-moment remainders, while the terms multiplied by $\iota_\tau$ arise from estimating the factor space. Thus, the factor-estimation contribution disappears when $r_\tau^0=0$. If $N\asymp T$, the approximation error converges to zero whenever $\{p+1+\log(2NT)\}^{2}\log^4(2T)/T\to0$.

\subsection{Factor Number Selection}\label{sec:3.3}
We finally study selection of the number of ES factors. The preceding results treat $r_\tau^0$ as given, whereas here it is fixed but unknown.

To express the residual criterion compactly, let $\widehat{\mathcal Z}_\tau^\ast$ denote the $T\times N$ matrix whose $i$th column is $Z_i^\ast(\widehat\alpha_{i,\tau})$. For $B=(b_1^\top,\ldots,b_N^\top)^\top\in\mathbb R^{N\times(p+1)}$, let $\mathcal X(B)$ denote the $T\times N$ matrix whose $i$th column is $X_i b_i$. Finally, let $\mathbb L_r$ be the collection of $T\times N$ matrices with rank at most $r$. The minimized residual criterion underlying \eqref{eq:IC} can then be written as 
$$
\widehat V_\tau(r)=\inf_{\substack{B\in\mathbb R^{N\times(p+1)}\\L\in\mathbb L_r}}\frac1{NT}\left\|\widehat{\mathcal Z}_\tau^\ast-\mathcal X(B)-L\right\|^2.
$$

Write
$L_\tau^0=F_\tau^0\Lambda_\tau^{0\top}$, $\mathcal S_\tau^0=\mathcal X(B_\tau^0)+L_\tau^0$ for the true common component and the true noiseless ES signal, respectively. For each candidate rank $r$, define
$$
\Delta_{\tau,NT}(r)=\inf_{\substack{B\in\mathbb R^{N\times(p+1)}\\L\in\mathbb L_r}}\frac1{NT}\left\|\mathcal S_\tau^0-\mathcal X(B)-L\right\|^2.
$$
Thus, $\Delta_{\tau,NT}(r)$ measures the signal lost by restricting the common component to have rank at most $r$, after the unit-specific regression coefficients have been optimally adjusted. Since $L_\tau^0\in\mathbb L_{r_\tau^0}$, $\Delta_{\tau,NT}(r)=0$ for every $r\geq r_\tau^0$.

\begin{assumption}
\label{assum:factor-selection}
The following conditions hold.
\begin{enumerate}[label=(\roman*)]
\item The upper bound $R_{\max}$ is fixed and satisfies $r_\tau^0\leq R_{\max}$. If $r_\tau^0>0$, there exists a constant
$\Delta_\tau>0$ such that $P\left\{\min_{0\leq r<r_\tau^0}\Delta_{\tau,NT}(r)\geq\Delta_\tau\right\}\rightarrow1$.
\item For some $0<\sigma_\tau^2<\infty$, $(NT)^{-1}\|\tau^{-1}\Xi_\tau\|^2\xrightarrow{p}\sigma_\tau^2$.

\item There exists a constant $K_{\mathrm{dec}}<\infty$ such that, on events whose probabilities approach one, the following property holds. For every $C=(c_1^\top,\ldots,c_N^\top)^\top\in\mathbb R^{N\times(p+1)}$ and $L\in\mathbb L_{2R_{\max}}$, there exist $\widetilde C=(\widetilde c_1^\top,\ldots,\widetilde c_N^\top)^\top$ and $\widetilde L\in\mathbb L_{2R_{\max}}$ such that $\mathcal X(C)+L=\mathcal X(\widetilde C)+\widetilde L$ and
$N^{-1}\sum_{i=1}^N\|\widetilde c_i\|_{\Sigma_i}^2+({NT})^{-1}\|\widetilde L\|^2\leq (NT)^{-1}K_{\mathrm{dec}}\|\mathcal X(C)+L\|^2$.
\end{enumerate}
\end{assumption}
Assumption \ref{assum:factor-selection} provides the additional conditions needed for consistent factor-number selection. Part (i) ensures that using too few factors produces a nonvanishing increase in the residual loss. Part (ii) keeps the residual criterion nondegenerate. Part (iii) ensures that the regression and factor components remain sufficiently distinguishable. Together, these conditions allow the information criterion to separate the true factor number from both underfitted and overfitted alternatives.
\begin{theorem}
\label{thm:selection}
Under Assumptions \ref{assum:Regularity}--\ref{assum:factor-spectral} and \ref{assum:factor-selection}, suppose that $N,T\rightarrow\infty$, $\varrho_{N,T,p}\to0$, and the sample-size condition in Proposition \ref{pro:quantile} holds eventually for every fixed confidence level. If $q(N,T)>0$, $q(N,T)\to0$, and ${\omega_{N,T,p}}/{q(N,T)}\to0$, then
$P(\widehat r_\tau=r_\tau^0)\rightarrow1$, where $\omega_{N,T,p}=(p+1)/{T}+N^{-1}+\varrho_{N,T,p}^2
\left\{({p+1})/{T}+N^{-1}\right\}+\varrho_{N,T,p}^4$ and $\varrho_{N,T,p}=\sqrt{\{{p+1+\log(2N)}\}/{T}}$. 
\end{theorem}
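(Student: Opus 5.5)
The approach follows the standard Bai--Ng argument for consistency of $\widehat r_\tau$, with the generated response handled by a deterministic matrix-level perturbation bound. Write $\widehat{\mathcal Z}_\tau^\ast=\mathcal S_\tau^0+\tau^{-1}\Xi_\tau+\tau^{-1}\mathcal E$. The matrix $\mathcal E$ has $(t,i)$ entry $R_{it}(\widehat\gamma_i)$ with $\widehat\gamma_i=\Sigma_i^{1/2}(\widehat\alpha_{i,\tau}-\alpha_{i,\tau}^0)$. We split $\mathcal E=\mathcal M+\mathcal R^c(\widehat\Gamma)$ into its conditional-mean part $\mathcal M=(\mu_{it}(\widehat\gamma_i))$ and the centered part.

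Step one fixes the scale of the perturbation. By Proposition~\ref{pro:quantile}, applied at any fixed $\delta$, we have $\max_i\|\widehat\gamma_i\|\le C\underline f^{-1}\varrho_{N,T,p}$ with high probability. Assumption~\ref{assum:ES-score}(i) together with the orthogonality computation gives $|\mu_{it}(\gamma)|\lesssim \overline f\,(\gamma^\top W_{it})^2$, so $(NT)^{-1}\|\mathcal M\|^2=O_p(\varrho^4)$ by the sub-Gaussian moments of $W_{it}$. On $\widehat\Gamma\in\mathcal G(\rho)$ with $\rho\asymp\varrho$, Assumption~\ref{assum:factor-spectral}(ii) gives $\|\mathcal R^c(\widehat\Gamma)\|_{\rm op}=O_p(\varrho\{\sqrt{N(p+1)}+\sqrt T\})$, and also $\|\Xi_\tau\|_{\rm op}=O_p(\sqrt N+\sqrt T)$.

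Step two handles overfitting, $r_\tau^0< r\le R_{\max}$. Let $E=\tau^{-1}(\Xi_\tau+\mathcal E)$. We have $\widehat V_\tau(r)\le\widehat V_\tau(r_\tau^0)\le (NT)^{-1}\|E\|^2$. For the reverse direction, write any competitor as $\mathcal S_\tau^0-\mathcal X(B)-L=\mathcal X(C)+L'$ with $L'\in\mathbb L_{2R_{\max}}$, and apply the stable decomposition of Assumption~\ref{assum:factor-selection}(iii). Expanding the square gives $(NT)^{-1}\|E\|^2-\widehat V_\tau(r)\le\sup\{2\langle E,D\rangle-\|D\|^2\}/(NT)$, where $D=\mathcal X(\widetilde C)+\widetilde L$. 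This supremum is bounded by the squared norm of the projection of $E$ onto such directions. The low-rank part is controlled by $2R_{\max}\|E\|_{\rm op}^2/(NT)$. The regression part is controlled by $\sum_i\|P_{X_i}E_i\|^2/(NT)$, which is of order $(p+1)/T$ via $\xi$'s conditional mean zero and mixing, plus $\varrho^2(p+1)/T$ and $\varrho^4$ contributions from $\mathcal E$. The cross term between the two parts is handled by Cauchy--Schwarz using the $K_{\rm dec}$ norm control. Together this gives $0\le\widehat V_\tau(r_\tau^0)-\widehat V_\tau(r)=O_p(\omega_{N,T,p})$, uniformly in $r\ge r_\tau^0$. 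Because $\widehat V_\tau(r_\tau^0)\to\sigma_\tau^2>0$ by Assumption~\ref{assum:factor-selection}(ii) (plus $\|E-\tau^{-1}\Xi_\tau\|^2/(NT)=o_p(1)$), taking logs yields $\mathrm{IC}_\tau(r)-\mathrm{IC}_\tau(r_\tau^0)\ge (r-r_\tau^0)q(N,T)-O_p(\omega)$. This is positive with probability tending to one since $\omega/q\to0$.

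Step three handles underfitting, $r<r_\tau^0$. For any $(B,L)$ with $L\in\mathbb L_r$, $\|\mathcal S_\tau^0+E-\mathcal X(B)-L\|^2\ge\|\mathcal S_\tau^0-\mathcal X(B)-L\|^2+\|E\|^2-2|\langle E,\mathcal S_\tau^0-\mathcal X(B)-L\rangle|$. The inner product is bounded in the same way as in step two, now with rank $r+r_\tau^0\le 2R_{\max}$, giving $o_p(NT)$ uniformly relative to the norm. Hence $\widehat V_\tau(r)\ge\Delta_{\tau,NT}(r)+\sigma_\tau^2-o_p(1)\ge\sigma_\tau^2+\Delta_\tau-o_p(1)$, by Assumption~\ref{assum:factor-selection}(i). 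Since $\widehat V_\tau(r_\tau^0)=\sigma_\tau^2+o_p(1)$ and $q\to0$, the log gap $\log(1+\Delta_\tau/\sigma_\tau^2)$ dominates the penalty $R_{\max}q$. Combining the finitely many $r$ then gives $P(\widehat r_\tau=r_\tau^0)\to1$.

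The main obstacle is step two, namely the sharp $O_p(\omega)$ bound on the overfitting gain. The supremum runs over the joint regression-plus-low-rank class with heterogeneous coefficients, and the generated-response error $\mathcal E$ is data-dependent through $\widehat\Gamma$. I would first reduce the supremum to projections, using the decomposition assumption so that the $N(p+1)$-dimensional regression part and the rank-$2R_{\max}$ part can be treated separately. I would then use the uniform-in-$\Gamma$ spectral bound of Assumption~\ref{assum:factor-spectral}(ii) to avoid conditioning on $\widehat\Gamma$.
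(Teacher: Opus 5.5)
Your proposal is correct and follows essentially the same route as the paper. The paper also writes the effective error exactly as $\tau^{-1}\{\Xi_\tau+\mathcal R^c(\widehat\Gamma)+\mathcal M(\widehat\Gamma)\}$ and uses the stable decomposition of Assumption~\ref{assum:factor-selection}(iii) to get $|\langle\mathcal U,D\rangle|/(NT)\lesssim\mathfrak s_{NT}\|D\|/\sqrt{NT}$ with $\mathfrak s_{NT}=O_p(\sqrt{\omega_{N,T,p}})$; from this it derives an $O_p(\omega_{N,T,p})$ overfitting gain and a $\Delta_{\tau,NT}(r)+o_p(1)$ underfitting loss (Lemmas~\ref{lem:selection-effective-error}--\ref{lem:selection-criterion-gap}). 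The only cosmetic difference is that you measure the regression part through $\|P_{X_i}E_i\|$, which additionally needs a uniform-in-$i$ lower bound on $\lambda_{\min}(T^{-1}X_i^\top X_i)$, whereas the paper's $\Sigma_i^{-1}$-weighted score $\|T^{-1}X_i^\top\mathcal U_i\|_{\Sigma_i^{-1}}$ avoids inverting the sample Gram matrix.
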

Theorem \ref{thm:selection} shows that the proposed information criterion consistently selects the number of ES factors. Underfitting produces the nonvanishing signal loss in Assumption \ref{assum:factor-selection}(i), whereas the reduction in the residual criterion produced by additional factors is at most $O_p(\omega_{N,T,p})$. The penalty must therefore converge to zero while dominating $\omega_{N,T,p}$. In particular, if $p$ is fixed and $N/T\rightarrow\kappa\in(0,\infty)$, then $\omega_{N,T,p}=O(T^{-1})$, and the penalty specified in Subsection \ref{subsec:determine}, $q(N,T)=(N+T)/(NT)\log\left((NT)/({N+T})\right)$, satisfies both requirements.

\section{Simulation Study}\label{sec:Simulation Study}
	
	We conduct a Monte Carlo study to assess the finite-sample performance of the proposed ESFM under a variety of economically and statistically relevant environments. The design is organized to isolate (i) heavy-tailed innovations, (ii) cross-sectional heterogeneity in slope coefficients, and (iii) latent factors that primarily affect downside risk through time-varying tail dispersion. Throughout, we compare ESFM to a baseline ES regression that ignores latent factors, with particular emphasis on whether incorporating tail factors improves the accuracy of slope estimation and factor-space recovery.
	\subsection{Simulation Design}
	
	For $i=1,\ldots,N$ and $t=1,\ldots,T$, we generate a covariate vector $X_{it}\in\mathbb{R}^{p+1}$ that includes an intercept and $p=3$ observed regressors. We consider tail probability levels $\tau\in\{0.01,0.05,0.10\}$. Panel dimensions vary over $N\in\{100,200,300\}$ and $T\in\{100,200,300\}$. For each $(N,T,\tau)$ configuration and each scenario described below, we run $MC=100$ replications with independent random seeds. In estimation, we set the number of latent ES factors to match the data-generating value ($r_\tau^0=2$) when reporting the main results. For details on the simulation experiment, see the appendix.
	
	For each simulated dataset, we estimate: (i) an ES regression that ignores latent factors (equivalently, ESFM with $r=0$), and (ii) the proposed ESFM with $r=r_\tau^0$ factors. Performance is summarized by (a) mean squared errors of the slope coefficients $\beta_i$ (averaged over $i$), (b) the root mean squared error of the estimated factor space, (c) ES estimators (included in the appendix), and (d) number of factors (included in the appendix). 
	
	We consider seven designs that vary the strength of tail-factor dependence, slope heterogeneity, covariate endogeneity, volatility-driven comovement, jump risk, and tail asymmetry. The complete data-generating processes are described in the appendix.
	
\subsection{Simulation Results}\label{Simulation Results}
	
Table~\ref{tab:Estimation error of beta} reports the mean squared errors (MSEs) of the estimated covariate coefficients $\beta_i$ under the expected shortfall regression (ESR) and the proposed ESFM. Results are reported for seven data-generating scenarios, three cross-sectional dimensions ($N=100,200,300$), three time dimensions ($T=100,200,300$), and three tail probability levels $\tau\in\{0.10,0.05,0.01\}$.
	
\begin{table}
	\footnotesize
	\TABLE
	{Estimation error of $\beta_\tau$ in expected shortfall (ES) models.
		\label{tab:Estimation error of beta}}
	{\footnotesize
		\setlength{\tabcolsep}{2.5pt}
		\renewcommand{\arraystretch}{0.85}
		\begin{tabular*}{\textwidth}
			{@{\extracolsep{\fill}}lccccccccccccccccccc}
			\toprule
			& \multicolumn{2}{c}{Scenario 1}
			& \multicolumn{2}{c}{Scenario 2}
			& \multicolumn{2}{c}{Scenario 3}
			& \multicolumn{2}{c}{Scenario 4}
			& \multicolumn{2}{c}{Scenario 5}
			& \multicolumn{2}{c}{Scenario 6}
			& \multicolumn{2}{c}{Scenario 7} \\
			& ESR & ESFM
			& ESR & ESFM
			& ESR & ESFM
			& ESR & ESFM
			& ESR & ESFM
			& ESR & ESFM
			& ESR & ESFM \\
			\midrule
			\multicolumn{15}{c}{Panel A: $\tau = 0.10$} \\
			\midrule
			\multicolumn{15}{c}{$N$ = 100} \\
			\midrule
			$T=100$ & 0.4130 & 0.3683 & 0.4179 & 0.3722 & 0.4143 & 0.3740 & 0.4146 & 0.3685 & 0.4146 & 0.3706 & 0.4166 & 0.3705 & 0.4147 & 0.3716 \\
			$T=200$ & 0.3559 & 0.3232 & 0.3536 & 0.3221 & 0.3557 & 0.3243 & 0.3545 & 0.3228 & 0.3525 & 0.3223 & 0.3523 & 0.3194 & 0.3556 & 0.3226 \\
			$T=300$ & 0.3306 & 0.3057 & 0.3311 & 0.3065 & 0.3286 & 0.3032 & 0.3296 & 0.3098 & 0.3297 & 0.3028 & 0.3297 & 0.3052 & 0.3285 & 0.3028 \\
			\midrule
			\multicolumn{15}{c}{$N$ = 200} \\
			\midrule
			$T=100$ & 0.4179 & 0.3662 & 0.4141 & 0.3643 & 0.4143 & 0.3631 & 0.4175 & 0.3664 & 0.4142 & 0.3631 & 0.4115 & 0.3616 & 0.4148 & 0.3639 \\
			$T=200$ & 0.3551 & 0.3223 & 0.3532 & 0.3193 & 0.3530 & 0.3191 & 0.3520 & 0.3185 & 0.3554 & 0.3206 & 0.3529 & 0.3184 & 0.3547 & 0.3211 \\
			$T=300$ & 0.3295 & 0.3060 & 0.3299 & 0.3071 & 0.3273 & 0.3030 & 0.3307 & 0.3056 & 0.3294 & 0.3064 & 0.3284 & 0.3067 & 0.3296 & 0.3075 \\
			\midrule
			\multicolumn{15}{c}{$N$ = 300} \\
			\midrule
			$T=100$ & 0.4148 & 0.3640 & 0.4173 & 0.3650 & 0.4151 & 0.3645 & 0.4170 & 0.3626 & 0.4170 & 0.3634 & 0.4162 & 0.3635 & 0.4174 & 0.3641 \\
			$T=200$ & 0.3533 & 0.3216 & 0.3530 & 0.3202 & 0.3529 & 0.3191 & 0.3550 & 0.3232 & 0.3529 & 0.3190 & 0.3537 & 0.3206 & 0.3541 & 0.3212 \\
			$T=300$ & 0.3292 & 0.3144 & 0.3297 & 0.3099 & 0.3288 & 0.3123 & 0.3289 & 0.3089 & 0.3296 & 0.3128 & 0.3287 & 0.3111 & 0.3285 & 0.3122 \\
			\midrule
			\multicolumn{15}{c}{Panel B: $\tau = 0.05$} \\
			\midrule
			\multicolumn{15}{c}{$N$ = 100} \\
			\midrule
			$T=100$ & 0.4809 & 0.3603 & 0.4826 & 0.3622 & 0.4840 & 0.3632 & 0.4855 & 0.3604 & 0.4813 & 0.3612 & 0.4836 & 0.3596 & 0.4826 & 0.3621 \\
			$T=200$ & 0.4160 & 0.3277 & 0.4131 & 0.3227 & 0.4133 & 0.3249 & 0.4129 & 0.3242 & 0.4090 & 0.3228 & 0.4134 & 0.3265 & 0.4163 & 0.3276 \\
			$T=300$ & 0.3806 & 0.3087 & 0.3803 & 0.3088 & 0.3801 & 0.3087 & 0.3779 & 0.3061 & 0.3816 & 0.3068 & 0.3810 & 0.3084 & 0.3805 & 0.3071 \\
			\midrule
			\multicolumn{15}{c}{$N$ = 200} \\
			\midrule
			$T=100$ & 0.4828 & 0.3551 & 0.4817 & 0.3553 & 0.4855 & 0.3550 & 0.4865 & 0.3585 & 0.4784 & 0.3515 & 0.4827 & 0.3536 & 0.4848 & 0.3568 \\
			$T=200$ & 0.4149 & 0.3227 & 0.4106 & 0.3202 & 0.4112 & 0.3193 & 0.4130 & 0.3197 & 0.4151 & 0.3212 & 0.4137 & 0.3207 & 0.4134 & 0.3208 \\
			$T=300$ & 0.3791 & 0.3053 & 0.3789 & 0.3050 & 0.3768 & 0.3028 & 0.3805 & 0.3058 & 0.3797 & 0.3049 & 0.3788 & 0.3050 & 0.3800 & 0.3057 \\
			\midrule
			\multicolumn{15}{c}{$N$ = 300} \\
			\midrule
			$T=100$ & 0.4813 & 0.3536 & 0.4833 & 0.3528 & 0.4806 & 0.3517 & 0.4871 & 0.3523 & 0.4844 & 0.3526 & 0.4833 & 0.3535 & 0.4838 & 0.3516 \\
			$T=200$ & 0.4137 & 0.3203 & 0.4127 & 0.3189 & 0.4133 & 0.3203 & 0.4152 & 0.3207 & 0.4136 & 0.3190 & 0.4126 & 0.3188 & 0.4134 & 0.3200 \\
			$T=300$ & 0.3786 & 0.3050 & 0.3797 & 0.3044 & 0.3795 & 0.3041 & 0.3791 & 0.3037 & 0.3795 & 0.3046 & 0.3793 & 0.3041 & 0.3786 & 0.3050 \\
			\midrule
			\multicolumn{15}{c}{Panel C: $\tau = 0.01$} \\
			\midrule
			\multicolumn{15}{c}{$N$ = 100} \\
			\midrule
			$T=100$ & 0.4048 & 0.3789 & 0.4010 & 0.3747 & 0.4041 & 0.3797 & 0.4054 & 0.3800 & 0.4094 & 0.3825 & 0.4045 & 0.3792 & 0.4070 & 0.3804 \\
			$T=200$ & 0.4895 & 0.3404 & 0.5002 & 0.3399 & 0.4952 & 0.3389 & 0.4958 & 0.3418 & 0.4904 & 0.3351 & 0.4954 & 0.3400 & 0.4994 & 0.3391 \\
			$T=300$ & 0.5224 & 0.3134 & 0.5212 & 0.3130 & 0.5210 & 0.3124 & 0.5211 & 0.3124 & 0.5359 & 0.3128 & 0.5215 & 0.3106 & 0.5212 & 0.3128 \\
			\midrule
			\multicolumn{15}{c}{$N$ = 200} \\
			\midrule
			$T=100$ & 0.4061 & 0.3803 & 0.4088 & 0.3809 & 0.4078 & 0.3808 & 0.4080 & 0.3819 & 0.4049 & 0.3774 & 0.4066 & 0.3798 & 0.4104 & 0.3820 \\
			$T=200$ & 0.4932 & 0.3385 & 0.4986 & 0.3365 & 0.4880 & 0.3370 & 0.4982 & 0.3347 & 0.4985 & 0.3351 & 0.5002 & 0.3386 & 0.4976 & 0.3387 \\
			$T=300$ & 0.5257 & 0.3100 & 0.5113 & 0.3108 & 0.5175 & 0.3123 & 0.5234 & 0.3097 & 0.5236 & 0.3117 & 0.5202 & 0.3115 & 0.5189 & 0.3114 \\
			\midrule
			\multicolumn{15}{c}{$N$ = 300} \\
			\midrule
			$T=100$ & 0.4083 & 0.3797 & 0.4074 & 0.3793 & 0.4059 & 0.3786 & 0.4059 & 0.3785 & 0.4099 & 0.3807 & 0.4050 & 0.3772 & 0.4048 & 0.3779 \\
			$T=200$ & 0.4998 & 0.3346 & 0.4957 & 0.3365 & 0.4960 & 0.3364 & 0.4989 & 0.3343 & 0.4956 & 0.3352 & 0.4974 & 0.3351 & 0.4944 & 0.3371 \\
			$T=300$ & 0.5236 & 0.3122 & 0.5186 & 0.3091 & 0.5231 & 0.3111 & 0.5222 & 0.3102 & 0.5180 & 0.3117 & 0.5228 & 0.3104 & 0.5188 & 0.3102 \\
			\bottomrule
		\end{tabular*}%
	}
	{This table reports the mean squared error (MSE) of the estimated covariate coefficients $\beta_\tau$ across Monte Carlo replications. Results are presented for seven data-generating scenarios, comparing the benchmark expected shortfall regression (ESR) and the proposed expected shortfall factor model (ESFM).}
\end{table}%

ESFM has a lower MSE than ESR in every design reported in Table~\ref{tab:Estimation error of beta}. The largest differences
occur at $\tau=0.01$ when $T=200$ or $300$: the MSE of ESFM declines as $T$ increases, whereas that of ESR rises. ESFM also
retains its advantage in the designs with stronger tail dependence, endogenous covariates, common volatility, and jumps.

Table~\ref{tab:factor_space_error} reports the root mean squared error (RMSE) of the estimated ES factor space. For each Monte Carlo replication, factor-space error is measured by the squared Frobenius distance between the estimated and true projection matrices. The reported RMSE is obtained by
averaging these squared distances across replications and taking the square root. Here, $P_F=F(F^\top F)^{-1}F^\top$ denotes the projection matrix onto the factor space. Results are reported for all seven scenarios and the same set of sample sizes and tail probabilities as in Table~\ref{tab:Estimation error of beta}.

\begin{table}
\footnotesize
\TABLE
{Root mean squared error of the estimated factor space.
\label{tab:factor_space_error}}
{
\setlength{\tabcolsep}{2pt}
\renewcommand{\arraystretch}{0.8}
\begin{tabular*}{0.95\textwidth}{@{\extracolsep{\fill}}lccccccc}
\toprule
& Scenario 1 & Scenario 2 & Scenario 3 & Scenario 4
& Scenario 5 & Scenario 6 & Scenario 7 \\
\midrule
\multicolumn{8}{c}{Panel A: $\tau = 0.10$} \\
\midrule
\multicolumn{8}{c}{$N=100$} \\
\midrule
$T=100$ & 0.4402 & 0.4403 & 0.4400 & 0.4403 & 0.4395 & 0.4407 & 0.4405 \\
$T=200$ & 0.3705 & 0.3704 & 0.3699 & 0.3704 & 0.3696 & 0.3708 & 0.3709 \\
$T=300$ & 0.3339 & 0.3336 & 0.3338 & 0.3320 & 0.3342 & 0.3335 & 0.3344 \\
\midrule
\multicolumn{8}{c}{$N=200$} \\
\midrule
$T=100$ & 0.4383 & 0.4389 & 0.4384 & 0.4383 & 0.4395 & 0.4393 & 0.4390 \\
$T=200$ & 0.3655 & 0.3669 & 0.3672 & 0.3667 & 0.3674 & 0.3678 & 0.3666 \\
$T=300$ & 0.3276 & 0.3273 & 0.3288 & 0.3289 & 0.3270 & 0.3274 & 0.3263 \\
\midrule
\multicolumn{8}{c}{$N=300$} \\
\midrule
$T=100$ & 0.4359 & 0.4361 & 0.4354 & 0.4379 & 0.4361 & 0.4363 & 0.4367 \\
$T=200$ & 0.3615 & 0.3630 & 0.3650 & 0.3606 & 0.3641 & 0.3632 & 0.3632 \\
$T=300$ & 0.3132 & 0.3184 & 0.3154 & 0.3191 & 0.3156 & 0.3159 & 0.3165 \\
\midrule
\multicolumn{8}{c}{Panel B: $\tau = 0.05$} \\
\midrule
\multicolumn{8}{c}{$N=100$} \\
\midrule
$T=100$ & 0.1962 & 0.1960 & 0.1962 & 0.1962 & 0.1960 & 0.1967 & 0.1967 \\
$T=200$ & 0.1393 & 0.1396 & 0.1393 & 0.1396 & 0.1393 & 0.1393 & 0.1393 \\
$T=300$ & 0.1140 & 0.1140 & 0.1140 & 0.1140 & 0.1140 & 0.1140 & 0.1145 \\
\midrule
\multicolumn{8}{c}{$N=200$} \\
\midrule
$T=100$ & 0.1960 & 0.1957 & 0.1960 & 0.1952 & 0.1967 & 0.1962 & 0.1957 \\
$T=200$ & 0.1389 & 0.1393 & 0.1393 & 0.1389 & 0.1393 & 0.1389 & 0.1393 \\
$T=300$ & 0.1136 & 0.1136 & 0.1140 & 0.1136 & 0.1140 & 0.1136 & 0.1136 \\
\midrule
\multicolumn{8}{c}{$N=300$} \\
\midrule
$T=100$ & 0.1954 & 0.1960 & 0.1952 & 0.1962 & 0.1960 & 0.1954 & 0.1957 \\
$T=200$ & 0.1389 & 0.1393 & 0.1393 & 0.1389 & 0.1389 & 0.1389 & 0.1389 \\
$T=300$ & 0.1131 & 0.1131 & 0.1136 & 0.1136 & 0.1131 & 0.1131 & 0.1131 \\
\midrule
\multicolumn{8}{c}{Panel C: $\tau = 0.01$} \\
\midrule
\multicolumn{8}{c}{$N=100$} \\
\midrule
$T=100$ & 0.1934 & 0.1931 & 0.1939 & 0.1934 & 0.1934 & 0.1936 & 0.1936 \\
$T=200$ & 0.1371 & 0.1364 & 0.1360 & 0.1367 & 0.1356 & 0.1364 & 0.1360 \\
$T=300$ & 0.1118 & 0.1118 & 0.1122 & 0.1122 & 0.1114 & 0.1114 & 0.1118 \\
\midrule
\multicolumn{8}{c}{$N=200$} \\
\midrule
$T=100$ & 0.1936 & 0.1944 & 0.1939 & 0.1929 & 0.1947 & 0.1939 & 0.1931 \\
$T=200$ & 0.1375 & 0.1382 & 0.1375 & 0.1375 & 0.1382 & 0.1371 & 0.1371 \\
$T=300$ & 0.1131 & 0.1136 & 0.1127 & 0.1136 & 0.1127 & 0.1131 & 0.1131 \\
\midrule
\multicolumn{8}{c}{$N=300$} \\
\midrule
$T=100$ & 0.1934 & 0.1939 & 0.1929 & 0.1934 & 0.1939 & 0.1939 & 0.1942 \\
$T=200$ & 0.1382 & 0.1382 & 0.1378 & 0.1382 & 0.1386 & 0.1382 & 0.1382 \\
$T=300$ & 0.1136 & 0.1136 & 0.1136 & 0.1136 & 0.1131 & 0.1127 & 0.1131 \\
\bottomrule
\end{tabular*}%
}
{This table reports the root mean squared error (RMSE) of the
estimated factor space associated with the expected shortfall factors.
Smaller values indicate more accurate recovery of the true factor space.}
\end{table}%

The results indicate that ESFM recovers the latent factor space with good accuracy. For a given $N$, the RMSE declines steadily as $T$ increases, indicating more accurate factor-space recovery. For instance, when $N=100$ and $\tau=0.10$, the RMSE decreases from about 0.44 at $T=100$ to 0.33 at $T=300$, with similar patterns for larger $N$. Increasing $N$ also generally improves accuracy at $\tau=0.10$, whereas the differences across $N$ are small at $\tau=0.05$ and $0.01$. The RMSEs are lower at $\tau=0.05$ and $0.01$ than at $\tau=0.10$, with little difference between the two more extreme tail levels. Error levels remain similar across scenarios. Overall, these findings show that ESFM reliably recovers the underlying tail-risk factor space in finite samples.
    
\section{Empirical Application}\label{sec:Empirical Application}
\subsection{Data}\label{subsec:Data}
    
Our empirical analysis is conducted using a large cross-section of Chinese equities. We focus on the constituents of the CSI 300 index over the period from January 2000 to December 2023.\footnote{The sample spans several major market episodes, including the 2015 Chinese stock market crash, the 2018 trade tensions, and the COVID-19 episode.} The resulting panel consists of approximately 250 equities with relatively stable time-series coverage. Daily closing prices are obtained from the Wind database.
	
For asset pricing analysis, we augment the dataset with standard risk factors. Specifically, we employ the Fama–French five-factor returns—including the market, size, value, profitability, and investment factors—along with the risk-free rate, obtained from publicly available data sources. Both equal-weighted (EW) and value-weighted (VW) Fama–French five-factor returns are considered. These factors enter our ESFM framework as observable covariates, capturing systematic components of asset returns.
	
	
\subsection{Estimation Results}\label{subsec:Estimation Results}

We compare ESFM with two benchmarks that separate loss severity from more familiar sources of common return variation. The mean factor model is estimated as an interactive-effects specification \citep{bai2009panel}. Its asset-pricing interpretation follows the approximate-factor tradition, in which latent common components of returns are used to explain expected returns \citep{connor1986performance,connor1988risk}; \citet{lettau2020factors} further show that economically useful latent factors should fit both return covariation and the cross-section of average returns. The quantile factor model is estimated following \citet{ando2020quantile} and captures common movements in conditional tail thresholds. Its role as an asset-pricing benchmark is motivated most directly by \citet{barunik2026common}, who document a distinct premium for exposure to a common lower-tail factor in idiosyncratic return quantiles. We condition all three models on the same Fama--French five factors and use identical samples and information sets. The comparison therefore asks whether common variation in within-tail loss severity contains pricing information beyond common variation in average returns or tail thresholds.


For each model, we report the cross-sectional averages of the coefficients on the observable factors and the estimated latent
factor paths. The main text uses value-weighted Fama--French factors and reports the factor paths at $\tau=0.10$; equal-weighted results and results for other tail levels are reported in the appendix.

	
\begin{figure}
\FIGURE
{\includegraphics[scale=0.45]{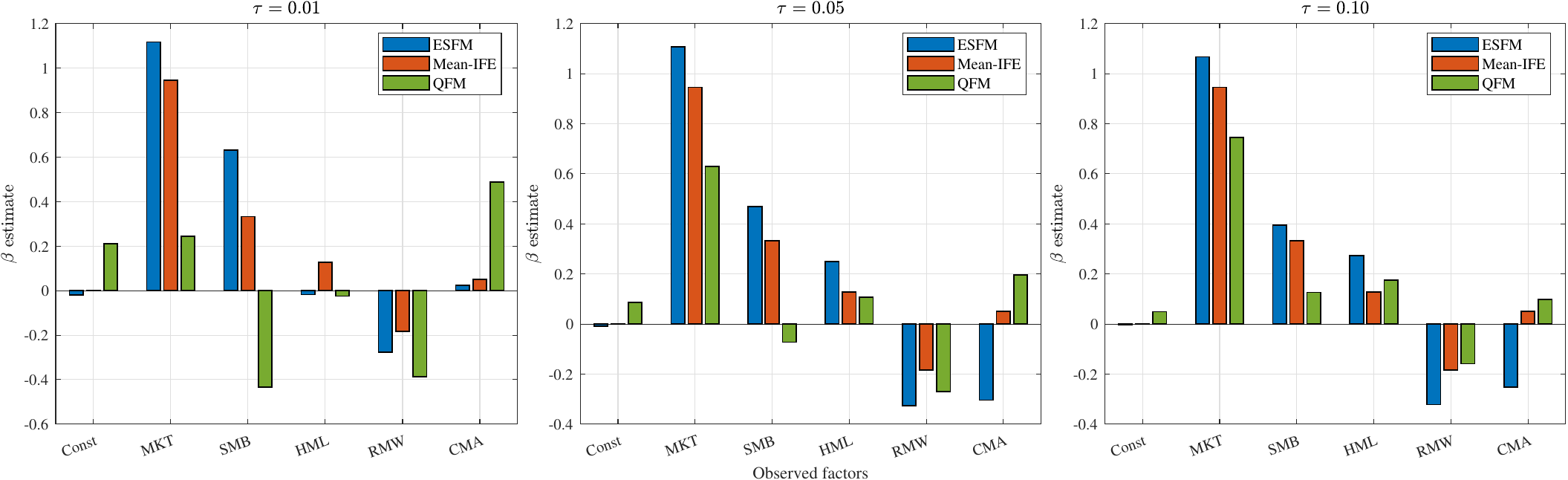}}
{Estimated coefficients ($\beta$) on observable risk factors. \label{fig:beta_VW}}
{This figure reports the estimated coefficients on observable factors (MKT, SMB, HML, RMW, CMA, and a constant) from three models: ESFM, the mean factor model (Mean-IFE), and the quantile factor model (QFM). The estimates are obtained using VW observable factor returns, and results are shown for three tail levels ($\tau = 0.01, 0.05, 0.10$). Each panel corresponds to a different $\tau$, and bars represent the average coefficients across assets.}
\end{figure}

Figure~\ref{fig:beta_VW} reports the cross-sectional averages of the estimated coefficients on the observable factors. The coefficient profiles differ across the mean, quantile, and ES specifications, particularly for the market factor. Because the three models characterize different conditional functionals, the magnitudes are not directly comparable as measures of model fit or explanatory importance. We therefore use the figure to summarize how the observable component varies across specifications, rather than to rank their performance.
	
	\begin{figure}
    \FIGURE
{\includegraphics[scale=0.59]{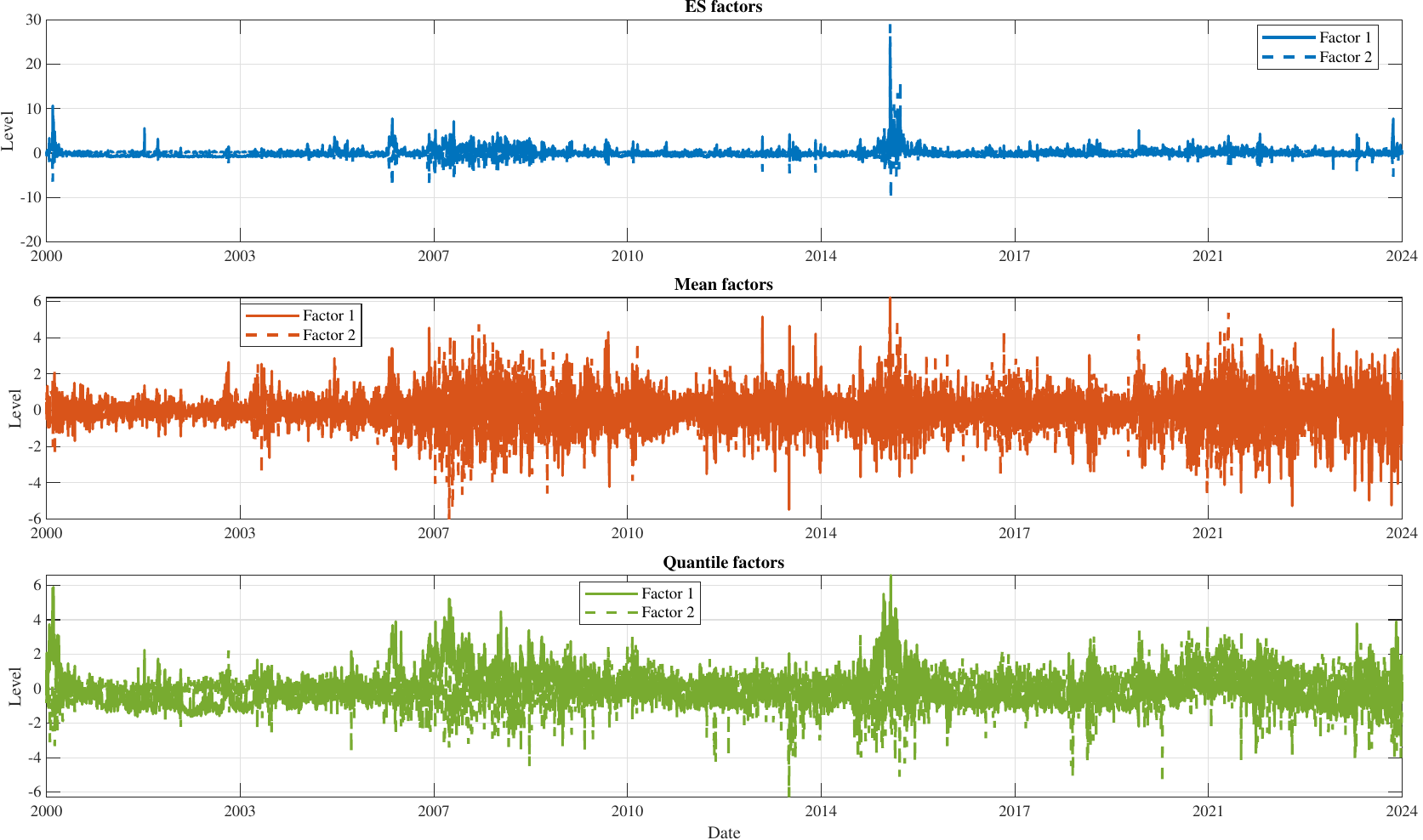}}
{Estimated factor paths across models ($r=2,\tau=0.10$). \label{fig:factor_VW_0p10}}
{This figure plots the estimated latent factor paths from three models: ESFM (top panel), the mean factor model (middle panel), and the quantile factor model (bottom panel). Each model extracts two factors ($r = 2$). The covariates in these models include the value-weighted Fama--French five factors. The quantile and ES factors are constructed using tail information at level $\tau = 0.10$.}
\end{figure}%
    
Figure~\ref{fig:factor_VW_0p10} plots the estimated factor paths for ESFM and the mean and quantile benchmarks. The mean factors vary comparatively smoothly, whereas the quantile factors exhibit more short-run fluctuations. The ESFM factors display their largest movements during market stress, particularly around the 2015 market crash. These differences reflect their respective targets: average returns, conditional tail thresholds, and the severity of returns below those thresholds.
	
Finally, following the quantile factor literature (e.g., \citealt{ando2020quantile}), we examine the correlations across factors estimated from different models.\footnote{The first and second generalized correlations are the ordered canonical correlations between the estimated factor spaces. They lie between zero and one, with larger values indicating greater overlap between the two
spaces.} The purpose of this exercise is to assess whether these models capture similar underlying sources of systematic risk or uncover distinct dimensions. This comparison provides a structural benchmark for the subsequent asset pricing analysis. The results are presented in Table \ref{tab:GC}.
	
\begin{table}
	\TABLE
	{Generalized correlations of estimated factors across models.
		\label{tab:GC}}
	{
		\begin{tabular}{lcccccccc}
			\toprule
			& \multicolumn{2}{c}{$\tau=0.01$}
			& & \multicolumn{2}{c}{$\tau=0.05$}
			& & \multicolumn{2}{c}{$\tau=0.10$} \\
			\cmidrule{2-3}\cmidrule{5-6}\cmidrule{8-9}
			& 1.GC & 2.GC
			& & 1.GC & 2.GC
			& & 1.GC & 2.GC \\
			\midrule
			\multicolumn{9}{c}{Panel A: based on the value--weighted Fama–French five factors} \\
			\midrule
			ESFM vs Mean & 0.1903 & 0.0632 & & 0.3026 & 0.0802 & & 0.3670 & 0.0819 \\
			ESFM vs QFM  & 0.1991 & 0.0025 & & 0.4756 & 0.006  & & 0.6187 & 0.0401 \\
			Mean vs QFM  & 0.1023 & 0.0401 & & 0.5918 & 0.0443 & & 0.6853 & 0.0377 \\
			\midrule
			\multicolumn{9}{c}{Panel B: based on the equal-weighted Fama–French five factors} \\
			\midrule
			ESFM vs Mean & 0.2863 & 0.0342 & & 0.3781 & 0.1084 & & 0.4229 & 0.1575 \\
			ESFM vs QFM  & 0.1904 & 0.0033 & & 0.4417 & 0.0029 & & 0.5462 & 0.0338 \\
			Mean vs QFM  & 0.1304 & 0.0391 & & 0.6277 & 0.1862 & & 0.7545 & 0.2471 \\
			\bottomrule
		\end{tabular}%
	}
	{This table reports the generalized correlations (GC) between estimated latent factors across the ESFM, mean factor model (Mean), and quantile factor model (QFM). Results are presented for three tail levels ($\tau = 0.01, 0.05, 0.10$).}
\end{table}%

Lower generalized correlation indicates weaker overlap in factor space and hence more model-specific information, whereas higher generalized correlation suggests that the corresponding factors reflect more similar underlying risk components. In our results, the ES factors are generally the least correlated with the other factors, especially at higher tail levels, implying that they contain additional information not captured by mean- or quantile-based specifications. By contrast, the mean and quantile factors are the most strongly related, particularly when $\tau$ is relatively high. This pattern changes when $\tau$ is very small, suggesting that the dependence structure across models becomes materially different in the far lower tail.
	
\subsection{Asset Pricing}\label{subsec:Asset Pricing}

We next ask whether exposures to the common components extracted by ESFM, the mean factor model, and the quantile factor model are priced in the cross-section. At the end of each month, we estimate factor exposures using the preceding 60 months, sort stocks into five portfolios, and hold the portfolios for one month. Portfolio returns are equal-weighted, and the high-minus-low spread measures the return associated with moving from the lowest to the highest exposure. We apply the identical procedure to all three models. This comparison is economically demanding: latent mean factors have long been used to explain expected returns \citep{connor1986performance,connor1988risk,lettau2020factors}, while \citet{barunik2026common} show that common lower-tail quantile exposure itself commands a distinct premium.\footnote{Related evidence from high-frequency panels shows that latent market-friction factors can also carry pricing information \citep{kong2026staleness}.} ESFM can therefore add pricing information only if within-tail loss severity matters beyond the common movements in average returns and tail thresholds captured by these benchmarks. We report results for $\tau\in\{0.10,0.20,0.30\}$, which balance the economic relevance of adverse states against the sampling instability of very low quantiles.
    
\subsubsection{Excess Returns}\label{subsubsec:Excess Returns}
To examine whether exposures to the estimated factors are priced, each month we sort stocks into five portfolios using exposures estimated over the preceding 60 months. The portfolios are equally weighted and held for one month, and the H--L return is the return difference between the highest- and lowest-exposure portfolios. We report results for $\tau\in\{0.10,0.20,0.30\}$, centered on $\tau=0.20$, to capture economically relevant downside states while retaining sufficient tail observations for estimation.\footnote{\citet{giglio2016systemic} and \citet{delikouras2019single} also study downside risk in moderately adverse states, with disappointment corresponding to approximately the worst 20\% of outcomes.}
	
\begin{figure}
\FIGURE
{\includegraphics[scale=0.46]{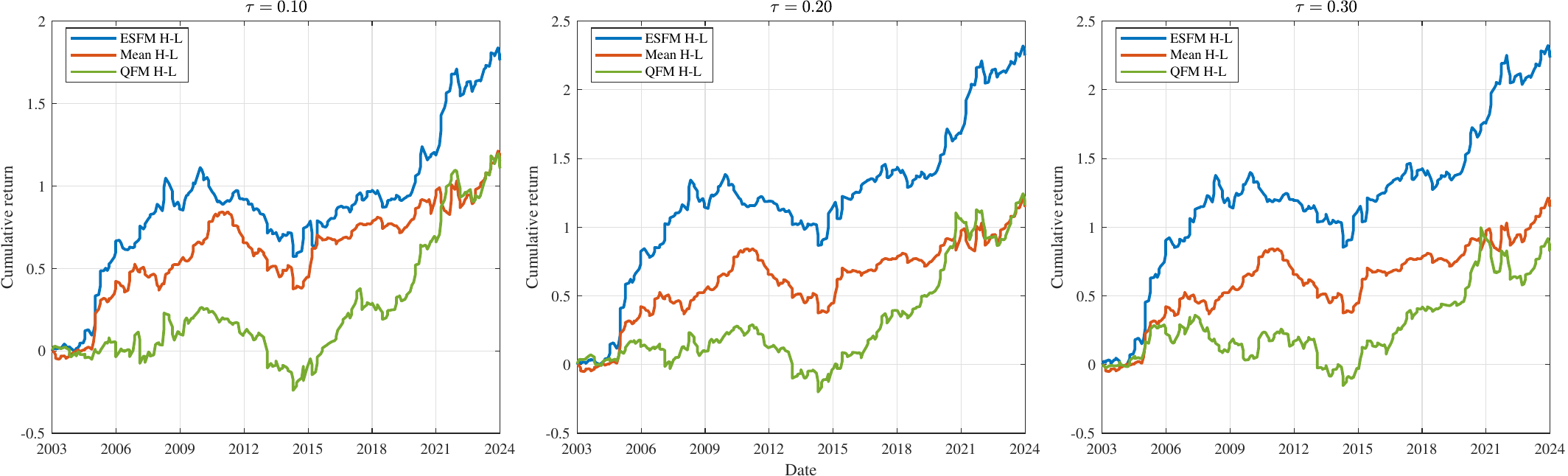}}
{Cumulative high-minus-low returns across models. \label{fig:cumu_returns_VW}}
{This figure plots cumulative returns of high-minus-low (H--L) portfolios formed on factor exposures estimated from three models: ESFM, the mean factor model (Mean), and the quantile factor model (QFM). Stocks are sorted into five portfolios based on estimated exposures, and the H--L portfolio is constructed as the difference between the highest- and lowest-exposure portfolios. Portfolio returns are computed using equal-weighted (EW) returns, while the underlying Fama--French five factors are constructed using value-weighted (VW) returns. Each panel corresponds to a different tail level ($\tau=0.10,0.20,0.30$).}
\end{figure}%

Figure~\ref{fig:cumu_returns_VW} reports the cumulative returns of the H--L portfolios across the three models. The ESFM portfolio accumulates a larger return over the sample than the portfolios based on the mean factor model and the quantile factor model at each value of $\tau$, although its performance varies over time. The Mean and QFM portfolios display weaker and less stable cumulative performance, with periods of stagnation and reversals. The differences are more visible around several market stress episodes, suggesting that exposure to expected shortfall risk is associated with economically meaningful return variation. This graphical evidence is consistent with the average return and alpha estimates reported below.

\begin{table}
\footnotesize
\TABLE
{Portfolio returns and alphas sorted on factor exposures across models. \label{tab:Alphas}}
{
\small
\setlength{\tabcolsep}{2.5pt}
\renewcommand{\arraystretch}{0.95}
\begin{tabular*}{\textwidth}{@{\extracolsep{\fill}}rrccccrccccrcccc}
\toprule
&       & \multicolumn{4}{c}{$\tau=0.10$} &       & \multicolumn{4}{c}{$\tau=0.20$} &       & \multicolumn{4}{c}{$\tau=0.30$} \\
\cmidrule{3-6}\cmidrule{8-11}\cmidrule{13-16}
No.G   &       & Average & CAPM  & FF3   & FF5   &       & Average & CAPM  & FF3   & FF5   &       & Average & CAPM  & FF3   & FF5 \\
\midrule
\multicolumn{1}{l}{5} & \multicolumn{1}{l}{ESFM} & 8.04\% & 9.01\% & 9.40\% & 10.31\% &       & 10.27\% & 11.26\% & 11.75\% & 12.25\% &       & 10.20\% & 11.28\% & 11.81\% & 12.34\% \\
&       & (2.62) & (3.09) & (3.38) & (3.33) &       & (3.08) & (3.49) & (3.84) & (3.61) &       & (2.96) & (3.41) & (3.78) & (3.48) \\
\multicolumn{1}{l}{5} & \multicolumn{1}{l}{Mean} & 5.29\% & 5.20\% & 5.33\% & 5.18\% &       & 5.29\% & 5.20\% & 5.33\% & 5.18\% &       & 5.29\% & 5.20\% & 5.33\% & 5.18\% \\
&       & (2.18) & (2.14) & (2.19) & (2.18) &       & (2.19) & (2.14) & (2.19) & (2.18) &       & (2.19) & (2.14) & (2.19) & (2.18) \\
\multicolumn{1}{l}{5} & \multicolumn{1}{l}{QFM} & 5.05\% & 5.92\% & 5.71\% & 5.73\% &       & 5.29\% & 5.98\% & 6.13\% & 8.38\% &       & 3.77\% & 3.87\% & 4.43\% & 5.25\% \\
&       & (1.80) & (2.13) & (1.97) & (1.87) &       & (1.87) & (2.07) & (2.13) & (2.51) &       & (1.22) & (1.24) & (1.58) & (1.48) \\
\multicolumn{1}{l}{10} & \multicolumn{1}{l}{ESFM} & 9.34\% & 10.55\% & 11.02\% & 13.09\% &       & 10.78\% & 11.97\% & 12.54\% & 14.04\% &       & 11.68\% & 13.01\% & 13.69\% & 14.96\% \\
&       & (2.48) & (2.99) & (3.28) & (3.33) &       & (2.73) & (3.13) & (3.46) & (3.28) &       & (2.84) & (3.27) & (3.69) & (3.39) \\
\multicolumn{1}{l}{10} & \multicolumn{1}{l}{Mean} & 8.27\% & 8.23\% & 8.52\% & 8.14\% &       & 8.27\% & 8.23\% & 8.52\% & 8.14\% &       & 8.27\% & 8.23\% & 8.52\% & 8.14\% \\
&       & (2.54) & (2.52) & (2.60) & (2.59) &       & (2.54) & (2.52) & (2.60) & (2.59) &       & (2.54) & (2.52) & (2.60) & (2.59) \\
\multicolumn{1}{l}{10} & \multicolumn{1}{l}{QFM} & 7.76\% & 8.74\% & 8.42\% & 8.26\% &       & 6.88\% & 7.79\% & 8.01\% & 10.56\% &       & 4.28\% & 4.49\% & 5.10\% & 7.28\% \\
&       & (2.23) & (2.53) & (2.33) & (2.11) &       & (1.90) & (2.09) & (2.18) & (2.47) &       & (1.19) & (1.24) & (1.54) & (1.92) \\
\bottomrule
\end{tabular*}
}
{The table reports annualized portfolio returns and alphas (in \%) sorted on factor exposures estimated from different models (ESFM, Mean, and QFM) across three tail levels ($\tau=0.10,0.20,0.30$). We report estimated intercept (alphas) from regressing the returns on various sets of asset pricing factors: market (CAPM), three factors of \cite{fama1993common} (FF3), and five factors of \cite{fama2015five}. Stocks are sorted into No.G $=5$ or $10$ portfolios based on estimated exposures, and portfolio returns are computed using equal-weighted (EW) scheme. The underlying Fama–French three and five factors are constructed using value-weighted (VW) returns. The Average column reports the annualized mean H--L return. The CAPM, FF3, and FF5 columns report annualized intercepts from the corresponding time-series regressions. Newey--West $t$-statistics (six lags) are reported in parentheses. The ``High–Low'' spread captures the return difference between the highest- and lowest-exposure portfolios (the average is placed in column ``Average'').}
\end{table}%

Table~\ref{tab:Alphas} complements this evidence by reporting the average H--L portfolio returns and the corresponding pricing alphas. Consistent with the graphical evidence, the ESFM portfolios deliver the largest return spreads within each sorting scheme, both economically and statistically. To assess whether these spreads can be explained by standard risk factors, we regress the H--L portfolio returns on CAPM, FF3, and FF5. The resulting alphas remain sizable and statistically significant across specifications. A statistically significant alpha indicates that the benchmark factors do not fully account for the average return on the ESFM spread. The consistently positive alphas therefore suggest that ESFM exposures contain incremental pricing information beyond conventional factors.
	

    Portfolios formed on mean- and quantile-factor exposures also earn positive spreads in several specifications, consistent with prior evidence that latent mean factors and lower-tail quantile risk can be priced \citep{lettau2020factors,barunik2026common}. Their spreads, however, are smaller and less stable than those generated by ESFM, and neither benchmark subsumes the ESFM premium in the joint two-pass regressions. The comparison should therefore not be read as evidence that mean or quantile risk is generally unpriced. Rather, holding the return panel, observed factors, estimation window, and pricing design fixed, the results show that exposure to common loss severity contributes an additional priced component that is not captured by common variation in average returns or tail thresholds.

The H--L return uses only the endpoint portfolios. To assess whether it reflects a broader cross-sectional pattern, we report average returns for all five ESFM-exposure quintiles. Portfolios are formed using the baseline procedure, with P1 and P5 containing stocks with the lowest and highest exposures, respectively.

    \begin{figure}
\FIGURE
{\includegraphics[scale=0.44]{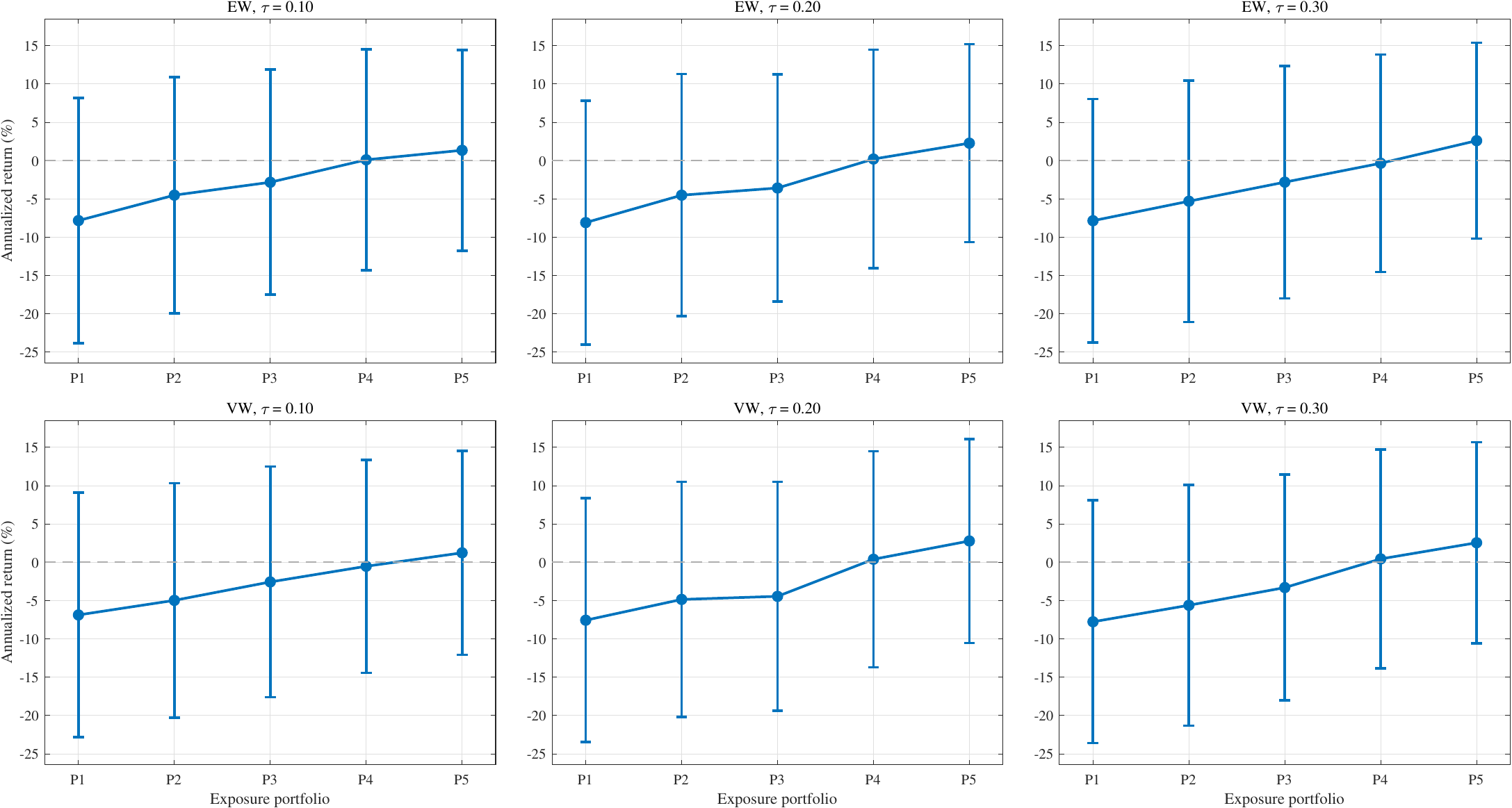}}
{Portfolio returns across ESFM exposure quintiles.
\label{fig:esfm_quintile_returns}}
{This figure reports annualized average returns for quintile portfolios sorted on estimated ESFM exposure. At each portfolio-formation date, stocks are assigned to five portfolios, with P1 containing stocks with the lowest exposure and P5 those with the highest exposure. The top and bottom rows use latent factors estimated with equal-weighted and value-weighted versions of the observable factors, respectively. The columns correspond to $\tau=0.10$, $0.20$, and $0.30$. Sorting variables are estimated using information available at portfolio formation. Portfolios are equally weighted and held for one month. Returns are reported in percent per annum, and the vertical bars denote 95\% confidence intervals based on Newey--West standard errors with six lags.}
\end{figure}

Figure~\ref{fig:esfm_quintile_returns} shows a pronounced return gradient across the ESFM exposure portfolios. Average returns increase monotonically from P1 to P5 in all six panels. Low-exposure portfolios earn annualized returns of approximately $-8\%$, whereas the returns of the highest-exposure portfolios are positive and generally between $1\%$ and $3\%$. The pattern is similar across the three tail levels and under both observable-factor specifications. Although the confidence intervals for individual portfolio means are wide, the monotonicity of the point estimates and the significant endpoint spreads reported in Table~\ref{tab:Alphas} indicate that the H--L premium is not generated solely by an isolated extreme portfolio.\footnote{The corresponding decile sorts, reported in Figure~\ref{fig:ec_esfm_decile_returns}, produce a similar low-to-high return pattern.}

\subsubsection{Fama-MacBeth Regressions}\label{subsubsec:Fama-MacBeth Regressions}
	
We complement the portfolio-sorting evidence with two-pass Fama–MacBeth regressions following \cite{fama1973risk}. This approach allows us to assess whether the pricing effects associated with ES-based exposures persist after controlling for other sources of risk in a unified cross-sectional framework. In the first step, we estimate factor loadings by regressing individual stock returns on the factor-mimicking portfolio returns constructed from ESFM, Mean, and QFM specifications. In the second step, we run cross-sectional regressions of next-period returns on these estimated loadings to obtain the corresponding prices of risk. We consider specifications with individual factors, jointly estimated factors, and augmentations that include the Fama–French five factors.
	
	\begin{table}
\footnotesize
\TABLE
{Fama--MacBeth regressions for factor risk premia. \label{tab:FamaMacBeth}}
{
\setlength{\tabcolsep}{2.5pt}
\renewcommand{\arraystretch}{0.95}
\begin{tabular*}{\textwidth}{@{\extracolsep{\fill}}lcccccccccc}
\toprule
& \multicolumn{1}{l}{ESFM} & \multicolumn{1}{l}{Mean} & \multicolumn{1}{l}{QFM} & \multicolumn{1}{l}{Joint} & \multicolumn{1}{l}{MKT} & \multicolumn{1}{l}{SMB} & \multicolumn{1}{l}{HML} & \multicolumn{1}{l}{RMW} & \multicolumn{1}{l}{CMA} & \multicolumn{1}{l}{FF5} \\
\midrule
\multicolumn{11}{c}{Panel A: $\tau=0.10$} \\
\midrule
ESFM  & 1.3647  &       &       & 0.9168  & 0.9857  & 0.9196  & 0.9438  & 1.1717  & 1.1575  & 0.9540  \\
Mean  &       & 0.5329  &       & 0.7383  &       &       &       &       &       &  \\
QFM   &       &       & 1.2761  & 0.8392  &       &       &       &       &       &  \\
MKT   &       &       &       &       & -3.7597  &       &       &       &       & -4.1021  \\
SMB   &       &       &       &       &       & 2.4672  &       &       &       & 1.9328  \\
HML   &       &       &       &       &       &       & -1.3910  &       &       & -1.4622  \\
RMW   &       &       &       &       &       &       &       & -1.3169  &       & -0.3998  \\
CMA   &       &       &       &       &       &       &       &       & 1.2188  & 0.4814  \\
Intercept & -4.9495  & -6.5308  & -6.0740  & -6.9415  & -3.5136  & -8.2988  & -6.8780  & -8.4304  & -7.6352  & -4.0660  \\
$R^2_{\rm adj}$ & 1.6926  & 1.2676  & 1.6981  & 3.9438  & 2.9411  & 4.1018  & 3.4559  & 3.7234  & 3.4245  & 7.5766  \\
\midrule
\multicolumn{11}{c}{Panel B: $\tau=0.20$} \\
\midrule
ESFM  & 1.5763  &       &       & 1.5114  & 1.3221  & 1.0540  & 1.2372  & 1.3216  & 1.1529  & 1.1411  \\
Mean  &       & 0.5329  &       & 0.8828  &       &       &       &       &       &  \\
QFM   &       &       & 0.9383  & 0.7903  &       &       &       &       &       &  \\
MKT   &       &       &       &       & -3.5932  &       &       &       &       & -4.1551  \\
SMB   &       &       &       &       &       & 2.5241  &       &       &       & 1.9809  \\
HML   &       &       &       &       &       &       & -1.3113  &       &       & -1.5855  \\
RMW   &       &       &       &       &       &       &       & -1.4412  &       & -0.4665  \\
CMA   &       &       &       &       &       &       &       &       & 1.2467  & 0.4924  \\
Intercept & -4.8764  & -6.5308  & -6.4467  & -5.1027  & -3.4591  & -8.4723  & -6.8684  & -8.6122  & -8.0020  & -4.0429  \\
$R^2_{\rm adj}$ & 1.5654  & 1.2676  & 1.5754  & 4.1850  & 2.9794  & 4.0088  & 3.4352  & 3.6315  & 3.2959  & 7.6445  \\
\midrule
\multicolumn{11}{c}{Panel C: $\tau=0.30$} \\
\midrule
ESFM  & 1.6291  &       &       & 1.4751  & 1.4691  & 0.9541  & 1.3822  & 1.2611  & 1.1664  & 1.0543  \\
Mean  &       & 0.5329  &       & 0.9036  &       &       &       &       &       &  \\
QFM   &       &       & -0.7204  & -0.5199  &       &       &       &       &       &  \\
MKT   &       &       &       &       & -3.5169  &       &       &       &       & -4.1316  \\
SMB   &       &       &       &       &       & 2.5074  &       &       &       & 1.8856  \\
HML   &       &       &       &       &       &       & -1.3335  &       &       & -1.5727  \\
RMW   &       &       &       &       &       &       &       & -1.4713  &       & 0.4522  \\
CMA   &       &       &       &       &       &       &       &       & 1.2274  & 0.4904  \\
Intercept & -4.7123  & -6.5308  & -7.0182  & -6.5134  & -3.5416  & -8.3365  & -6.6911  & -8.4056  & -7.8404  & -4.0318  \\
$R^2_{\rm adj}$ & 1.6064  & 1.2676  & 1.5518  & 4.3523  & 3.0341  & 4.0294  & 3.4728  & 3.6972  & 3.3020  & 7.7286  \\
\bottomrule
\end{tabular*}
}
{The table reports estimated prices of risk from two-pass Fama--MacBeth regressions. In the first stage, factor loadings are obtained from time-series regressions of individual stock returns on factor-mimicking portfolio returns constructed from ESFM, Mean, and QFM specifications. In the second stage, cross-sectional regressions are performed each month to estimate factor risk premia. We report specifications with individual factors, jointly estimated factors, and augmentations of the ESFM factor with Fama--French factors constructed using value-weighted returns. Columns labeled MKT, SMB, HML, RMW, and CMA add the corresponding factor to the ESFM factor, and the FF5 column adds all five Fama--French factors. The reported coefficients are time-series averages of monthly estimates. Adjusted $R^2$ is the time-series average of the monthly cross-sectional adjusted $R^2$ and is reported in percent. The coefficients are multiplied by 100 for presentation.}
\end{table}%

Table~\ref{tab:FamaMacBeth} reports a positive price of ESFM risk in every specification. When the Mean and QFM exposures enter jointly, the ESFM estimate ranges from $0.9168$ to $1.5114$ across the three tail levels. It remains between $0.9540$ and $1.1411$ when all five Fama--French factors are included. The QFM estimates are less stable and become negative at $\tau=0.30$. Thus, the positive relation between ESFM exposure and expected returns remains after controlling for the benchmark exposures considered in the table.

\subsubsection{Incremental Pricing Content}\label{subsubsec:Incremental Pricing Content}

The preceding results establish an unconditional relation between ESFM exposure and subsequent returns. This relation may nevertheless reflect the correlation of ESFM exposure with common movements in average returns, tail thresholds, or idiosyncratic volatility. To separate these effects, we implement dependent bivariate portfolio sorts in the spirit of \cite{barunik2026common}. Each month, stocks are first assigned to groups based on QFM exposure, Mean exposure, or idiosyncratic volatility. We also construct joint control cells using QFM and Mean exposures. Within each control cell,
stocks are then sorted into quintiles by ESFM exposure. The corresponding portfolios are obtained by averaging across control cells, producing variation in ESFM exposure while keeping the conditioning variables approximately constant.

\begin{figure}
\FIGURE
{\includegraphics[scale=0.45]{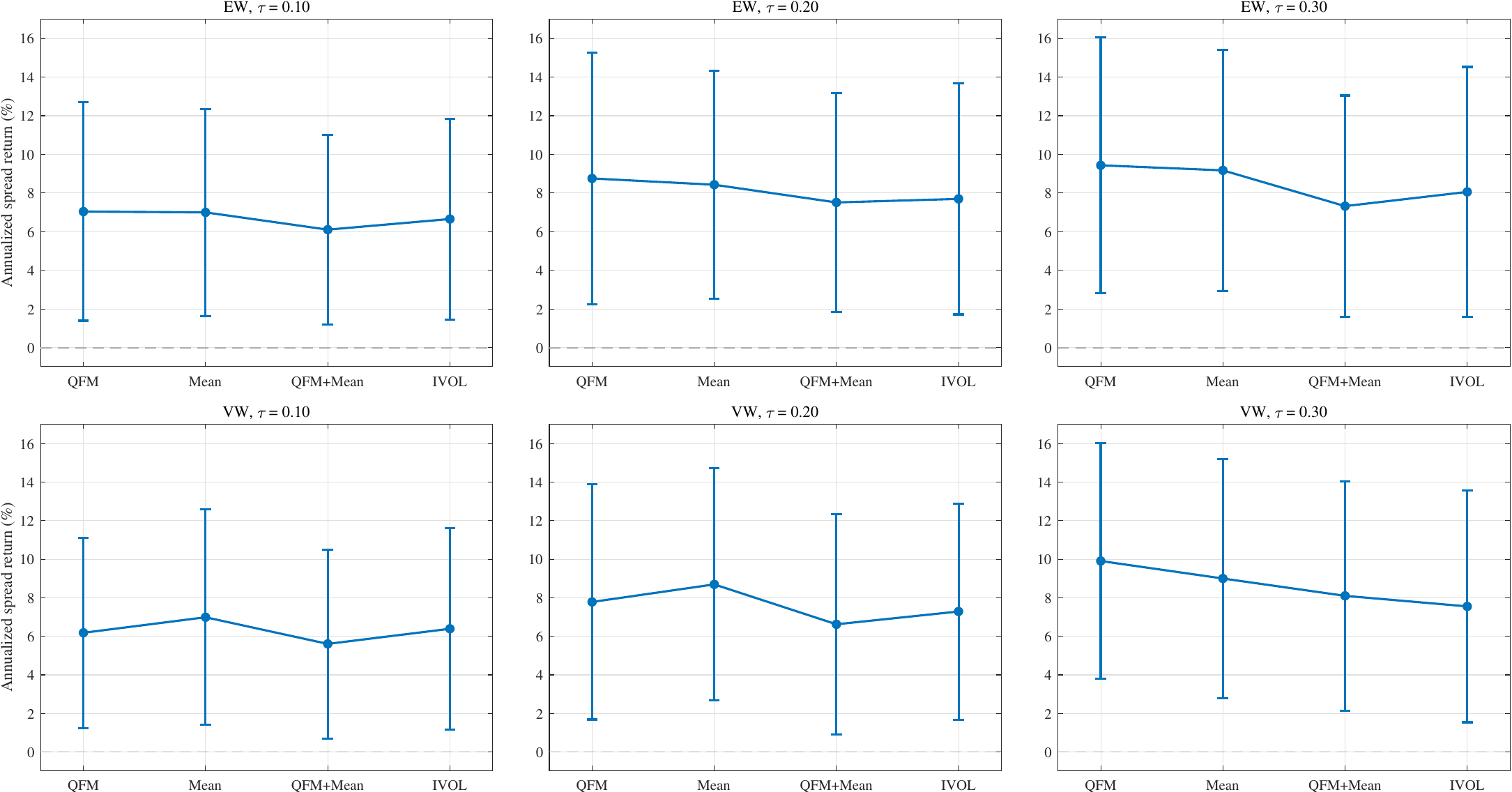}}
{Conditional return spreads across ESFM exposure portfolios.
\label{fig:conditional_esfm_returns}}
{This figure reports annualized high-minus-low returns from conditional portfolio sorts on estimated ESFM exposure. Each month, stocks are first assigned to five groups based on QFM exposure, Mean exposure, or idiosyncratic volatility. For the joint QFM--Mean specification, stocks are assigned to $3\times3$ control cells. Within each control cell, stocks are sorted into five portfolios by ESFM exposure, and the resulting high-minus-low returns are averaged across cells. The top and bottom rows use latent factors estimated with equal-weighted and value-weighted versions of the observable factors, respectively. The columns correspond to $\tau=0.10$, $0.20$, and $0.30$. Sorting variables are estimated using information available at portfolio formation. Portfolios are equally weighted and held for one month. Returns are reported in percent per annum, and the vertical bars denote 95\% confidence intervals based on Newey--West standard errors with six lags.}
\end{figure}

Figure~\ref{fig:conditional_esfm_returns} shows positive conditional H--L returns across all four control specifications. The annualized spreads range from approximately $5.6\%$ to $9.9\%$, and their confidence intervals remain above zero in every panel. The result holds when conditioning separately on QFM and Mean exposures, when conditioning jointly on both exposures, and when controlling for idiosyncratic volatility. It is also stable across the three tail levels and the two observable-factor specifications. Thus, high-ESFM-exposure stocks continue to earn higher subsequent returns than low-exposure stocks even among firms with similar mean, quantile, or volatility exposures. The persistence of the spread indicates that the ESFM premium is not driven by a mechanical association with movements in average returns, tail thresholds, or conventional idiosyncratic risk.

Conditional sorts assess whether ESFM exposure contains distinct information at the stock level. We next examine the complementary factor-level question of whether the traded ESFM factor can be replicated by existing factors. Following the factor-spanning and mean--variance efficiency logic emphasized by \cite{barillas2017which} and \cite{feng2026selecting}, we regress the ESFM H--L factor on MKT, SMB, HML, RMW, CMA, MOM, and the corresponding Mean and QFM factors. A nonzero intercept indicates that the benchmark factors do not span the average ESFM factor return. We also compare the maximum Sharpe ratio of the benchmark factor set with that obtained after adding ESFM. The latter comparison measures whether ESFM expands the attainable mean--variance frontier rather than merely repackaging existing factor exposures.

\begin{table}
\footnotesize
\TABLE
{ESFM factor spanning tests. \label{tab:spanning_esfm}}
{
\setlength{\tabcolsep}{4pt}
\renewcommand{\arraystretch}{0.95}
\begin{tabular*}{\textwidth}{@{\extracolsep{\fill}}lcccccc}
\toprule
& \multicolumn{3}{c}{Value-weighted factors} & \multicolumn{3}{c}{Equal-weighted factors} \\
\cmidrule{2-4}\cmidrule{5-7}
& $\tau=0.10$ & $\tau=0.20$ & $\tau=0.30$
& $\tau=0.10$ & $\tau=0.20$ & $\tau=0.30$ \\
\midrule
$\alpha$ (\% p.a.) 
& 6.60 & 8.65 & 9.37 & 7.35 & 7.91 & 8.66 \\
& (2.80) & (2.48) & (2.70) & (3.39) & (3.09) & (3.14) \\
Mean factor 
& 0.336 & 0.467 & 0.546 & 0.305 & 0.389 & 0.451 \\
& (2.94) & (4.09) & (4.60) & (3.32) & (3.98) & (4.44) \\
QFM factor 
& 0.361 & 0.161 & 0.085 & 0.306 & 0.193 & 0.040 \\
& (3.94) & (1.70) & (0.90) & (3.58) & (2.20) & (0.52) \\
$R^2$ (\%) 
& 48.85 & 42.59 & 45.52 & 45.78 & 46.96 & 49.72 \\
Max SR, benchmark 
& 1.56 & 1.64 & 1.52 & 1.78 & 1.81 & 1.78 \\
Max SR, with ESFM 
& 1.78 & 1.90 & 1.82 & 2.02 & 2.05 & 2.04 \\
$\Delta$SR$^2$ 
& 0.723 & 0.921 & 1.010 & 0.892 & 0.898 & 0.990 \\
Months 
& 262 & 262 & 262 & 262 & 262 & 262 \\
\midrule
Joint $\alpha$ test: $\chi^2(3)$ 
& \multicolumn{3}{c}{4.50} & \multicolumn{3}{c}{7.49} \\
$p$-value 
& \multicolumn{3}{c}{0.213} & \multicolumn{3}{c}{0.058} \\
Months 
& \multicolumn{3}{c}{188} & \multicolumn{3}{c}{188} \\
\bottomrule
\end{tabular*}
}
{The table reports spanning regressions for ESFM high-minus-low factors at three tail levels. The ESFM factor is regressed on MKT, SMB, HML, RMW, CMA, MOM, and the corresponding Mean and QFM factors. Columns labeled value-weighted and equal-weighted refer to the observable Fama--French factors used in estimation. Alphas are annualized and reported in percent. Newey--West $t$-statistics are shown in parentheses. Max SR denotes the maximum Sharpe ratio spanned by the benchmark factors, with and without the ESFM factor. The joint test is a Wald test that the three ESFM alphas are jointly zero.}
\end{table}%

Table~\ref{tab:spanning_esfm} shows that the benchmark factors do not absorb the average ESFM factor return in the individual spanning regressions. Annualized alphas range from $6.60\%$ to $9.37\%$ under the value-weighted observable-factor specification and from $7.35\%$ to $8.66\%$ under the equal-weighted observable-factor specification. The associated Newey--West $t$-statistics range from 2.48 to 3.39. Thus, at each tail level, the ESFM factor earns a positive and statistically significant return after controlling jointly for standard traded factors and the corresponding Mean and QFM factors.

The estimated factor loadings also clarify the relation among the three latent specifications. The ESFM factor loads positively and significantly on the Mean factor throughout. Its loading on the QFM factor is strongest at $\tau=0.10$ and declines toward zero as $\tau$ increases. This pattern is consistent with ESFM and QFM sharing information about the lower tail while remaining distinct: QFM tracks movements in the tail threshold, whereas ESFM also incorporates the severity of outcomes beyond that threshold. Importantly, the ESFM alpha remains economically large as its loading on QFM weakens.

Adding ESFM also improves the attainable investment opportunity set. Across the six specifications, the maximum Sharpe ratio increases from 1.52--1.81 for the benchmark factors to 1.78--2.05 after ESFM is included. The joint-alpha evidence is more qualified. The hypothesis that the three ESFM alphas are jointly zero is not rejected under the value-weighted observable-factor specification ($p=0.213$), whereas the equal-weighted specification provides marginal evidence against the joint null ($p=0.058$). The results therefore provide strong tail-by-tail evidence that ESFM contributes to the factor span and improves the mean--variance frontier, although joint significance across the three tail levels depends on the observable-factor specification.

	\section{Conclusion}\label{sec:Conclusion}

This paper develops an expected shortfall factor model (ESFM) to estimate common variation in the severity of lower-tail losses and examine whether exposures to this variation are priced. ESFM extracts latent common factors from average returns below asset-specific conditional quantiles. This object differs from the average-return comovement captured by mean factor models and the movements in tail thresholds captured by quantile factor models. The model combines observed risk exposures with latent common factors and allows coefficients and factor loadings to vary across assets. Estimation proceeds through an orthogonalized two-step procedure that prevents errors in the estimated quantiles from affecting the ES coefficient estimates to first order. We establish high-probability error bounds for the quantile and ES coefficient estimators, a finite-sample Gaussian approximation, and consistency of the factor-number selector. Complementary asymptotic results are provided in the appendix. Across a range of Monte Carlo designs, ESFM estimates the ES coefficients more accurately than separate asset-level ES regressions when common tail shocks are present.

Applied to a large panel of equities from 2000 to 2023, ESFM recovers common factors that react sharply to market stress and contain information not captured by mean and quantile factors. Portfolios sorted on ESFM exposures generate annualized high-minus-low returns of 8.0\%--11.7\% and Fama--French five-factor alphas of 10.3\%--15.0\%. These return spreads remain positive after conditioning separately and jointly on mean- and quantile-factor exposures and after conditioning on idiosyncratic volatility. Spanning tests further show that ESFM factors retain significant alphas after controlling for standard traded factors and the corresponding mean and quantile factors. Adding ESFM to these benchmark factor sets increases the maximum attainable Sharpe ratio. Because neither expected returns nor pricing errors enter factor extraction, these pricing results are not imposed by construction. Together, the findings identify common loss severity as a distinct and priced dimension of downside risk.

The framework can be extended in several directions. An ES connectedness model could extend quantile-based financial networks \citep{ando2022connectedness} by measuring spillovers in the severity of outcomes beyond a tail cutoff. A matrix ES factor model could preserve two-way structures in returns indexed by dimensions such as assets and markets. Allowing the number, loadings, or dynamics of the ES factors to change at unknown dates would provide a way to detect structural breaks in common downside risk. {Alternatively, following the state-varying factor framework of \citet{pelger2022state}, one could allow the ES loadings to evolve with observable state variables, enabling the model to capture systematic shifts in tail-risk exposures over time.} Finally, combining ESFM with regularization or other dimension-reduction methods could support ES prediction when the number of observed predictors is large.

	\ACKNOWLEDGMENT{All authors contributed equally and are listed in alphabetical order.}
    \theendnotes
	
\bibliographystyle{informs2014} 
\bibliography{ref.bib} 

\ECSwitch


\ECHead{Technical Proofs and Additional Results}

This appendix contains technical proofs and additional results for the paper.
\begin{itemize}
    \item Appendix \ref{sec:Proofs} provides the proofs of nonasymptotic results in the main text.
    \item Appendix \ref{sec:Additional Results} provides additional asymptotic results.
    \item Appendix \ref{sec:proof Additional Simulation Results} provides the proofs of additional asymptotic results.
    \item Appendix \ref{sec:Additional Simulation Results} provides the additional simulation results.
    \item Appendix \ref{sec:Additional Empirical Results} provides additional empirical results.
\end{itemize}

\section{Proofs of Nonasymptotic Results}\label{sec:Proofs}
For notational simplicity, we suppress the dependency on $\tau$ such that $\alpha_{i,\tau}=\alpha_i$, $\beta_{i,\tau}=\beta_i$, $f_{t,\tau}=f_t$, $\lambda_{i,\tau}=\lambda_i$, $B_\tau=B$, $F_\tau=F$, $\Lambda_\tau=\Lambda$, $r^0_\tau=r^0$, etc throughout the proof. We denote the estimated parameters as $\widehat{\alpha}_i$, $\widehat{\beta}_i$, $\widehat{f}_t$, $\widehat{\lambda}_i$, $\widehat{B}$, $\widehat{F}$, $\widehat{\Lambda}$ and $\widehat{r}$, etc. Also, we denote the true parameters as $\alpha_i^0$, $\beta_i^0$, $f_t^0$, $\lambda_i^0$, $B^0$, $F^0$, $\Lambda^0$ and $r^0$, etc.

The symbol $C$ denotes a positive constant whose value may change from line to line and which depends only on the fixed constants in the assumptions and on $\tau$. All statements about the second-stage estimator refer to an arbitrary joint global minimizer of \eqref{eq:step 2 ES}. For $B=(b_1^\top,\ldots,b_N^\top)^\top$, write $\mathcal X(B)=(X_1b_1,\ldots,X_Nb_N)$, and use $\langle A,B\rangle=\operatorname{tr}(A^\top B)$ for the Frobenius
inner product. The notation $\|A\|_\ast$ denotes the nuclear norm of $A$.

\subsection{Proof of Proposition \ref{pro:quantile}}
\proof{Proof of Proposition \ref{pro:quantile}.}
Fix $i\in[N]$ and define $\gamma=\Sigma_i^{1/2}(\alpha-\alpha_{i}^0)$ and $\widehat\gamma_i=\Sigma_i^{1/2}
(\widehat\alpha_{i}-\alpha_{i}^0)$. By construction, $\|\widehat\alpha_{i}-\alpha_{i}^0\|_{\Sigma_i}=
\|\widehat\gamma_i\|$. 

We first establish local curvature of the population quantile objective. Let $\psi(u)=\tau-\mathbbm{1}\{u\leq0\}$. Knight's identity gives
$$
\rho_\tau(\epsilon-v)-\rho_\tau(\epsilon)=-v\psi(\epsilon)+\int_0^v\left\{\mathbbm{1}(\epsilon\leq s)-\mathbbm{1}(\epsilon\leq0)\right\}\,ds.
$$
Setting $v=W_{it}^{\top}\gamma$ and using $E\{\psi(\epsilon_{it})\mid X_{it}\}=0$, we obtain
\begin{equation}\label{eq:tagA.5}
D_i(\gamma)=E\left[\int_0^{W_{it}^{\top}\gamma}\left\{F_{\epsilon_{it}\mid X_{it}}(s\mid X_{it})-F_{\epsilon_{it}\mid X_{it}}(0\mid X_{it})\right\}\,ds\right].
\end{equation}
For every $s\in\mathbb R$, the density condition in Assumption \ref{assum:Regularity}(i) implies $F_{\epsilon_{it}\mid X_{it}}(s\mid X_{it})-F_{\epsilon_{it}\mid X_{it}}(0\mid X_{it})=f_{\epsilon_{it}\mid X_{it}}(0\mid X_{it})s
+r_{it}(s)$,
where $|r_{it}(s)|\leq{L_0}s^2/2$. Therefore, for every $v\in\mathbb R$,
\begin{equation}\label{eq:tagA.6}
\int_0^v\left\{F_{\epsilon_{it}\mid X_{it}}(s\mid X_{it})-F_{\epsilon_{it}\mid X_{it}}(0\mid X_{it})\right\}\,ds\geq
\frac{\underline f}{2}v^2-\frac{L_0}{6}|v|^3.
\end{equation}
Combining \eqref{eq:tagA.5} and \eqref{eq:tagA.6} yields
$$
D_i(\gamma)\geq\frac{\underline f}{2}E(W_{it}^{\top}\gamma)^2-\frac{L_0}{6}E|W_{it}^{\top}\gamma|^3.
$$
Since $E(W_{it}W_{it}^{\top})=I_{p+1}$, we have $E(W_{it}^{\top}\gamma)^2=\|\gamma\|^2$. Moreover, the sub-Gaussian tail condition implies $\sup_{u\in\mathbb S^p}E|u^{\top}W_{it}|^3\leq C\upsilon_1^3$. It follows that
\begin{equation}\label{eq:tagA.7}
D_i(\gamma)\geq\frac{\underline f}{2}\|\gamma\|^2-C L_0\upsilon_1^3\|\gamma\|^3.
\end{equation}
Set $x_\delta=\log\left(\frac{2eN}{\delta}\right)=1+\log\left(\frac{2N}{\delta}\right)$. Then $p+1+x_\delta=p+2+\log\left({2N}/{\delta}\right)=s_{N,\delta}$. Define $a_T=\sqrt{\frac{s_{N,\delta}}{T}}$. Let $C_{\mathrm{ep}}>0$ denote the constant in the simplified bound in Lemma \ref{lem:mixing-empirical-process}, and define the deterministic radius
\begin{equation}\label{eq:tagA.8}
\rho_T=8C_{\mathrm{ep}}\upsilon_1\underline f^{-1}a_T.
\end{equation}

Under the sample-size condition $T\geq C s_{N,\delta}^2$, Lemma \ref{lem:mixing-empirical-process}, applied with
$x=x_\delta$ and $\rho=\rho_T$, gives
$$
P\left\{\sup_{\|\gamma\|\leq\rho_T}|(\widehat D_i-D_i)(\gamma)|>C_{\mathrm{ep}}\upsilon_1\rho_Ta_T\right\}\leq 2e^{-x_\delta}.
$$
A union bound over $i\in[N]$, together with $2Ne^{-x_\delta}={\delta}/{e}\leq\delta$, shows that, with probability at least $1-\delta$, simultaneously for all $i\in[N]$,
\begin{equation}\label{eq:tagA.9}
\sup_{\|\gamma\|\leq\rho_T}|(\widehat D_i-D_i)(\gamma)|\leq C_{\mathrm{ep}}\upsilon_1\rho_Ta_T.
\end{equation}
The remaining sample-size condition $T\geq C L_0^2\underline f^{-4}s_{N,\delta}$, where the dependence on $\upsilon_1$ is absorbed into $C$, ensures that $C L_0\upsilon_1^3\rho_T\leq{\underline f}/{4}$. It follows from \eqref{eq:tagA.7} that, whenever $\|\gamma\|=\rho_T$,
\begin{equation}\label{eq:tagA.10}
D_i(\gamma)\geq\frac{\underline f}{4}\rho_T^2.
\end{equation}
Combining \eqref{eq:tagA.9} and \eqref{eq:tagA.10}, we obtain
\begin{equation}\nonumber
\widehat D_i(\gamma)\geq D_i(\gamma)-|(\widehat D_i-D_i)(\gamma)|\geq\frac{\underline f}{4}\rho_T^2-C_{\mathrm{ep}}\upsilon_1\rho_Ta_T.
\end{equation}
By the definition of $\rho_T$ in \eqref{eq:tagA.8}, ${\underline f}\rho_T^2/{4}-C_{\mathrm{ep}}\upsilon_1\rho_Ta_T=8C_{\mathrm{ep}}^2\upsilon_1^2\underline f^{-1}a_T^2>0$. Therefore,
\begin{equation}\label{eq:tagA.11}
\widehat D_i(\gamma)>0\text{ for every }\gamma\text{ such that }\|\gamma\|=\rho_T.
\end{equation}

On the other hand, because $\widehat\alpha_{i}$ minimizes the empirical quantile objective, $\widehat D_i(\widehat\gamma_i)\leq\widehat D_i(0)=0$. Since $\widehat D_i$ is convex, \eqref{eq:tagA.11} implies $\|\widehat\gamma_i\|\leq\rho_T$. Indeed, suppose instead that $\|\widehat\gamma_i\|>\rho_T$ and define $\widetilde\gamma_i={\rho_T}\widehat\gamma_i/{\|\widehat\gamma_i\|}$. Convexity then gives
\begin{equation}\nonumber
\widehat D_i(\widetilde\gamma_i)\leq\frac{\rho_T}{\|\widehat\gamma_i\|}\widehat D_i(\widehat\gamma_i)+\left(1-
\frac{\rho_T}{\|\widehat\gamma_i\|}\right)\widehat D_i(0)\leq0,
\end{equation}
which contradicts \eqref{eq:tagA.11}, because $\|\widetilde\gamma_i\|=\rho_T$.

Consequently, with probability at least $1-\delta$,
\begin{equation}\nonumber
\max_{i\in[N]}\|\widehat\alpha_{i}-\alpha_{i}^0\|_{\Sigma_i}=\max_{i\in[N]}\|\widehat\gamma_i\|\leq 8C_{\mathrm{ep}}\upsilon_1\underline f^{-1}\sqrt{\frac{s_{N,\delta}}{T}}.
\end{equation}
Setting $C_1=8C_{\mathrm{ep}}\upsilon_1$ and absorbing the remaining fixed constants into $C_2$ completes the proof.
\Halmos
\endproof
\begin{lemma}\label{lem:mixing-empirical-process}
For $a\in\mathbb R^{p+1}$, define the unit-specific empirical quantile objective $\widehat Q_i(a)=T^{-1}\sum_{t=1}^T\rho_\tau(Y_{it}-X_{it}^{\top}a)$. For every $i\in[N]$ and $\gamma\in\mathbb R^{p+1}$, define
$\widehat D_i(\gamma)=\widehat Q_i\left(\alpha_{i}^0+\Sigma_i^{-1/2}\gamma\right)-\widehat Q_i(\alpha_{i}^0)$, and $D_i(\gamma)=E\{\widehat D_i(\gamma)\}$. Under Assumption \ref{assum:Regularity}, there exists a constant $C_m>0$, depending only on
$(c_\alpha,a_\alpha)$, such that, for every $\rho>0$ and $x\geq1$,
$$
P\left\{\sup_{\|\gamma\|\leq\rho}\left|\widehat D_i(\gamma)-D_i(\gamma)\right|>C_m\upsilon_1\rho\left[\sqrt{\frac{p+1+x}{T}}+\frac{\{p+1+x+\log(eT)\}^{3/2}}{T}\right]\right\}\leq2e^{-x}.
$$
Consequently, after enlarging $C_m$ if necessary, whenever $T\geq C_m(p+1+x)^2$,
we have
$$P\left\{\sup_{\|\gamma\|\leq\rho}\left|\widehat D_i(\gamma)-D_i(\gamma)\right|>C_m\upsilon_1\rho\sqrt{\frac{p+1+x}{T}}\right\}\leq2e^{-x}.
$$
\end{lemma}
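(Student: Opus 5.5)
We prove the uniform deviation bound by writing $\widehat D_i-D_i$ as a centered empirical process indexed by a $(p+1)$-dimensional ball, controlling it by a Bernstein inequality for geometrically $\alpha$-mixing sequences, and passing to the supremum with a chaining (or $\epsilon$-net plus bracketing) argument. For fixed $\gamma$ write
$$
\widehat D_i(\gamma)-D_i(\gamma)=\frac1T\sum_{t=1}^T\{g_\gamma(\epsilon_{it},W_{it})-Eg_\gamma(\epsilon_{it},W_{it})\},\qquad g_\gamma(\epsilon,w)=\rho_\tau(\epsilon-w^\top\gamma)-\rho_\tau(\epsilon).
$$
Since $\rho_\tau$ is $1$-Lipschitz, $|g_\gamma-g_{\gamma'}|\le|w^\top(\gamma-\gamma')|$. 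In particular $|g_\gamma|\le|W_{it}^\top\gamma|$, and this envelope is sub-Gaussian with scale $\upsilon_1\|\gamma\|$ by Assumption~\ref{assum:Regularity}(iii). The increments are therefore Lipschitz in $\gamma$ with a sub-Gaussian multiplier, which is all the quantile structure we need. Stationarity and the mixing bound in Assumption~\ref{assum:Regularity}(iv) carry over to $\{g_\gamma(\epsilon_{it},W_{it})\}_t$, because these are measurable functions of $(Y_{it},X_{it})$.

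The first step is a pointwise tail bound. Truncate the envelope at level $M=C\upsilon_1\rho\sqrt{x+\log(eT)}$. A union bound shows the truncation is inactive for all $t$ with probability $1-e^{-x}$, and the truncation bias is negligible. Then apply a Bernstein inequality for bounded geometrically $\alpha$-mixing sequences, such as that of Merlevède, Peligrad and Rio. Its variance term is $\lesssim\upsilon_1^2\rho^2$, since the long-run variance is bounded using the mixing rate together with Davydov's covariance inequality. Its range term is $M\log T$-type. This yields, for each fixed pair $\gamma,\gamma'$, a deviation of order $\upsilon_1\|\gamma-\gamma'\|\{\sqrt{u/T}+(u+\log eT)^{3/2}/T\}$ with probability $1-2e^{-u}$. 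The $3/2$ power arises from the product of $M\propto\sqrt{u+\log eT}$ and the $(u+\log T)$ factor in the mixing Bernstein bound.

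The second step extends this to the supremum. Take a $\rho/T$-net $\mathcal N$ of the ball $\{\|\gamma\|\le\rho\}$; its cardinality satisfies $\log|\mathcal N|\le(p+1)\log(3T)$. Chaining can be avoided by a one-step net argument:
\begin{enumerate}
\item On the net, take a union bound with $u=x+(p+1)\log(3T)$. To avoid an extra $\log T$ in the leading term, instead use generic chaining over dyadic nets with $u_k=x+(p+1)k$. This gives the leading term $\sqrt{(p+1+x)/T}$, since the increment bound is linear in the radius and the radii decay geometrically.
\item Off the net, the discretization error is at most $\frac{\rho}{T}\bigl(T^{-1}\sum_t\|W_{it}\|+E\|W_{it}\|\bigr)$. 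This is $\lesssim\upsilon_1\rho\sqrt{p+1}/T$ with high probability, which is absorbed into the second term.
\end{enumerate}
Collecting the bounds gives the displayed inequality with total failure probability $2e^{-x}$, after adjusting $C_m$. The simplified bound follows because $T\ge C_m(p+1+x)^2$ gives $(p+1+x)^{3/2}/T\le\sqrt{(p+1+x)/T}\,(p+1+x)/\sqrt T\lesssim\sqrt{(p+1+x)/T}$. The $\log(eT)$ contribution is handled the same way, either by absorbing it into the constant or by using the $\log^4(2T)$ factor that the paper imposes elsewhere.

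The main obstacle is the chaining step under mixing dependence. The Bernstein bound for mixing sequences carries an extra logarithmic factor in the range term, and this factor must be kept in the second-order term rather than the leading one. Doing so requires that the chaining use the sub-Gaussian (variance) part at every level, with the range part summing to the $(p+1+x+\log eT)^{3/2}/T$ term. I would first try a Talagrand-type bound via the blocking technique: split the series into independent blocks of length $\asymp\log(eT)/a_\alpha$ using Berbee coupling, then apply the i.i.d. Bousquet inequality together with a Rademacher contraction argument. Contraction is available because $g_\gamma$ is a $1$-Lipschitz function of $w^\top\gamma$. The symmetrized supremum is then bounded by $\rho\,E\|T^{-1}\sum\varepsilon_tW_{it}\|\lesssim\upsilon_1\rho\sqrt{(p+1)/T}$, which avoids chaining entirely.
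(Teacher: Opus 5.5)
\paragraph{Where the proposal has a gap.}
Most of your architecture matches the paper's. You use the Lipschitz increments $|g_\gamma-g_{\gamma'}|\le|W_{it}^\top(\gamma-\gamma')|$ with a sub-Gaussian multiplier, inheritance of the strong-mixing coefficients, Davydov's inequality for the long-run variance, a pairwise mixing Bernstein bound, and dyadic chaining with $u_k=x+c_0k(p+1)$, followed by the same simplification. The gap is in the pairwise step, which is exactly where the second-order term $\{p+1+x+\log(eT)\}^{3/2}/T$ has to come from.

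The Merlev\`ede--Peligrad--Rio inequality for \emph{bounded} geometrically strong-mixing sequences reads $P(|S_T|\ge z)\le\exp\{-cz^2/(Tv^2+M^2+zM\log^2T)\}$. At confidence $e^{-u}$ its range contribution is therefore $uM\log^2T$: the logarithm multiplies $u$ rather than being added to it, so there is no $M(u+\log T)$ product as you claim. Truncation must be inactive for all $t\le T$ with probability $1-e^{-u}$, which forces $M\asymp\upsilon_1\|\gamma-\gamma'\|\sqrt{u+\log(eT)}$. (This level must scale with the increment, not with $\rho$ as you wrote, or the range terms will not decay along the chain.) You then get $\upsilon_1\|\gamma-\gamma'\|\,u\sqrt{u+\log(eT)}\,\log^2T/T$, which exceeds the target by a factor between $\log T$ and $\log^2T$.

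That proves neither display. The first would carry the wrong second-order term. For the second, absorbing your term into $\sqrt{(p+1+x)/T}$ would need roughly $T\gtrsim(p+1+x)^2\log^4T$, not the hypothesis $T\ge C_m(p+1+x)^2$. A growing $\log T$ cannot be absorbed into $C_m$, and the $\log^4(2T)$ condition of Theorem~\ref{theo:nonasymES} is not available inside the lemma.

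\paragraph{The missing ingredient.}
What you need is the Merlev\`ede--Peligrad--Rio (2011) inequality for \emph{unbounded} sequences with semiexponential tails. The paper applies it to the centered increments directly, without truncation. Geometric mixing gives $\gamma_1=1$ and sub-Gaussian increments give $\gamma_2=2$, so $\gamma=(1/\gamma_1+1/\gamma_2)^{-1}=2/3$. The term $T\exp(-\lambda^{2/3}/C)$ in that bound is at most $e^{-u}$ exactly when $\lambda\ge\{C(u+\log T)\}^{3/2}$. This is the true source of the $3/2$ power. After rescaling by $\upsilon_1\|\gamma-\gamma'\|$ and dividing by $T$ it yields \eqref{eq:tagA.3}, and your chaining then goes through unchanged.

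Your fallback route does not close the gap. Berbee's coupling needs $\beta$-mixing, but Assumption~\ref{assum:Regularity}(iv) gives only geometric $\alpha$-mixing. Under $\alpha$-mixing you cannot couple whole blocks, and hence the process indexed by $\gamma$, to independent copies. Even with independent blocks, scalar contraction fails: the Rademacher signs attach to block sums, which are Lipschitz in $\ell$ linear forms rather than in the single form $w^\top\gamma$. So the i.i.d.\ bound $\upsilon_1\rho\sqrt{(p+1)/T}$ does not transfer without losing a factor $\sqrt\ell$ in the leading term.
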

\proof{Proof of Lemma \ref{lem:mixing-empirical-process}.}

For $\gamma\in\mathbb R^{p+1}$, define
$g_{it}(\gamma)=\rho_\tau\left(Y_{it}-X_{it}^{\top}\{\alpha_{i}^0+\Sigma_i^{-1/2}\gamma\}\right)-\rho_\tau
\left(Y_{it}-X_{it}^{\top}\alpha_{i}^0\right)$. Then $\widehat D_i(\gamma)=T^{-1}\sum_{t=1}^Tg_{it}(\gamma)$. By the strict stationarity in Assumption \ref{assum:Regularity}(iv), $D_i(\gamma)=E\{g_{i0}(\gamma)\}$. For every $\gamma,\gamma^\prime\in\mathbb R^{p+1}$, let $\xi_{it}(\gamma,\gamma^\prime)=g_{it}(\gamma)-g_{it}({\gamma^\prime})-E\{g_{it}(\gamma)-g_{it}({\gamma^\prime})\}$. 
Because the check loss is $1$-Lipschitz, $\left|g_{it}(\gamma)-g_{it}({\gamma^\prime})\right|\leq\left|W_{it}^{\top}(\gamma-\gamma^\prime)\right|$.
It follows from Assumption \ref{assum:Regularity}(iii) and a standard centering argument that there exist universal constants $C,c>0$ such that, for every $z\geq0$,
\begin{equation}\label{eq:tagA.1}
P\left\{|\xi_{it}(\gamma,\gamma^\prime)|> C\upsilon_1\|\gamma-\gamma^\prime\| z\right\}\leq 2\exp(-cz^2).
\end{equation}
In particular, integrating the tail bound in \eqref{eq:tagA.1} gives $\left[E|\xi_{it}(\gamma,\gamma^\prime)|^4\right]^{1/4}\leq C\upsilon_1\|\gamma-\gamma^\prime\|$.

For each fixed $(\gamma,\gamma^\prime)$, the process $\{\xi_{it}(\gamma,\gamma^\prime):t\in\mathbb Z\}$ is strictly stationary, and each $\xi_{it}(\gamma,\gamma^\prime)$ is measurable with respect to $\sigma(Y_{it},X_{it})\subseteq\mathscr F_{i,t}^{t}$. Consequently, $\sigma\{\xi_{it}(\gamma,\gamma^\prime):t\leq s\}\subseteq\mathscr F_{i,-\infty}^{s},\ \sigma\{\xi_{it}(\gamma,\gamma^\prime):t\geq s+h\}\subseteq\mathscr F_{i,s+h}^{\infty}$. Thus, the strong-mixing coefficients of $\{\xi_{it}(\gamma,\gamma^\prime):t\in\mathbb Z\}$ are bounded above by $\alpha_i(h)$. For $h\geq1$, Davydov's covariance inequality and \eqref{eq:tagA.1} yield
\begin{equation}\nonumber
\begin{aligned}
\left|\operatorname{Cov}\{\xi_{i0}(\gamma,\gamma^\prime),\xi_{ih}(\gamma,\gamma^\prime)\}\right|&\leq C\alpha_i(h)^{1/2}\left[E|\xi_{i0}(\gamma,\gamma^\prime)|^4\right]^{1/4}\left[E|\xi_{ih}(\gamma,\gamma^\prime)|^4\right]^{1/4}\\
&\leq C\upsilon_1^2\|\gamma-\gamma^\prime\|^2\alpha_i(h)^{1/2}.
\end{aligned}
\end{equation}
For $h=0$, \eqref{eq:tagA.1} directly implies the same bound without the mixing coefficient. Since $\alpha_i(h)$ decays geometrically,
\begin{equation}\label{eq:tagA.2}
\sum_{h=0}^{\infty}\left|\operatorname{Cov}\{\xi_{i0}(\gamma,\gamma^\prime),\xi_{ih}(\gamma,\gamma^\prime)\}\right|\leq C\upsilon_1^2\|\gamma-\gamma^\prime\|^2.
\end{equation}
The same argument applies to the centered truncations entering the variance proxy of the strong-mixing Bernstein inequality.

After rescaling by $C\upsilon_1\|\gamma-\gamma^\prime\|$, the marginal tails in \eqref{eq:tagA.1} satisfy the semiexponential condition with exponent $\gamma_2=2$, whereas geometric strong mixing corresponds to $\gamma_1=1$. Consequently,
$$
\left(\frac1{\gamma_1}+\frac1{\gamma_2}\right)^{-1}=\frac23.
$$
The Bernstein inequality of \citeEC{merlevede2011bernstein}, together with \eqref{eq:tagA.2}, implies that, after increasing the constant if necessary, for every $u\geq1$,
\begin{equation}\label{eq:tagA.3}
P\left[\left|\frac1T\sum_{t=1}^T\xi_{it}(\gamma,\gamma^\prime)\right|>C\upsilon_1\|\gamma-\gamma^\prime\|\left\{\sqrt{\frac{u}{T}}+\frac{\{u+\log(eT)\}^{3/2}}{T}\right\}\right]\leq2e^{-u}.
\end{equation}
The exponent $3/2$ in \eqref{eq:tagA.3} results from combining geometric strong mixing with sub-Gaussian marginal tails.

We next extend \eqref{eq:tagA.3} to the parameter ball $\mathbb B_{p+1}(\rho)=\left\{\gamma\in\mathbb R^{p+1}:\|\gamma\|\leq\rho\right\}$. For every $k\geq1$, let $\mathcal N_k$ be a $2^{-k}\rho$-net of $\mathbb B_{p+1}(\rho)$. The nets may be chosen so that $|\mathcal N_k|\leq(1+2^{k+1})^{p+1}$. For each $\gamma\in\mathbb B_{p+1}(\rho)$, let
$\pi_k(\gamma)\in\mathcal N_k$ be a nearest point and set $\pi_0(\gamma)=0$. Then $\|\pi_k(\gamma)-\pi_{k-1}(\gamma)\|\leq 3\cdot2^{-k}\rho$. Apply \eqref{eq:tagA.3} with $u_k=x+c_0k(p+1)$, where $c_0>0$ is a sufficiently large universal constant. A union bound over all pairs of net points appearing at each scale, followed by a harmless enlargement of the constants, gives an event with probability at least $1-2e^{-x}$ on which
\begin{equation}\label{eq:tagA.4}
\sup_{\|\gamma\|\leq\rho}\left|(\widehat D_i-D_i)(\gamma)\right|\leq C\upsilon_1\rho\sum_{k=1}^{\infty}2^{-k}\left[\sqrt{\frac{x+c_0k(p+1)}{T}}+\frac{\{x+c_0k(p+1)+\log(eT)\}^{3/2}}{T}\right].
\end{equation}
The passage from the nets to the entire ball is valid because both $\widehat D_i$ and $D_i$ are continuous in $\gamma$. Indeed, $|\{(\widehat D_i-D_i)(\gamma)-(\widehat D_i-D_i)(\gamma^\prime)\}|\leq\|\gamma-\gamma^\prime\|\left\{{T}^{-1}\sum_{t=1}^T\|W_{it}\|+E\|W_{i0}\|\right\}$. Moreover, $\sum_{k=1}^{\infty}2^{-k}\sqrt{x+c_0k(p+1)}\leq C\sqrt{x+p+1}$, and $\sum_{k=1}^{\infty}2^{-k}\{x+c_0k(p+1)+\log(eT)\}^{3/2}\leq C\{x+p+1+\log(eT)\}^{3/2}$. Substituting these bounds into \eqref{eq:tagA.4} proves the first conclusion.

Finally, if $T\geq C(p+1+x)^2$, then $(p+1+x)^3\leq C(p+1+x)T$. Moreover, $\log^3(eT)\leq CT\leq C(p+1+x)T$, because $p+1+x\geq1$. It follows that $\{p+1+x+\log(eT)\}^3\leq C(p+1+x)T$, and hence 
$$
\frac{\{p+1+x+\log(eT)\}^{3/2}}{T}\leq C\sqrt{\frac{p+1+x}{T}}.
$$ 
The second conclusion follows after enlarging $C_m$ if necessary.
\Halmos
\endproof

\subsection{Proof of Theorem \ref{theo:nonasymES}}
For $\Gamma=(\gamma_1,\ldots,\gamma_N)$, let $\mathcal M(\Gamma)=\left(\mu_{it}(\gamma_i)\right)_{t\leq T,\,i\leq N}$, and $\mathcal E(\Gamma)=\Xi+\mathcal R^c(\Gamma)+\mathcal M(\Gamma)$. For a $T\times N$ matrix $J=(J_1,\ldots,J_N)$, define
\begin{align}
\mathfrak s_{i}(J)
&=\left\|\frac1T W_i^\top M_{F^0}J_i\right\|,
\label{eq:beta-only-unit-score}\\
\mathfrak s_{N}(J)&=\left\{\frac1N\sum_{j=1}^N\mathfrak s_{j}(J)^2\right\}^{1/2}+\iota_\tau\frac{\|J\|_{\mathrm{op}}}{\sqrt{NT}}.\label{eq:beta-only-aggregate-score}
\end{align}
When $r^0=0$, we use the conventions $M_{F^0}=I_T$ and $F^0\Lambda^{0\top}=0$.
\proof{Proof of Theorem \ref{theo:nonasymES}.}
Apply Proposition \ref{pro:quantile} with confidence level $\delta/8$. The sample-size condition in the theorem is stronger than that in the proposition. Hence, with probability at least $1-\delta/8$,
\begin{equation}
\label{eq:theorem-first-stage-event}\max_{j\leq N}\|\widehat\alpha_{j}-\alpha_{j}^0\|_{\Sigma_j}\leq\rho_{N,T,\delta}^{Q}.
\end{equation}
Set $\widehat\gamma_{j}=\Sigma_j^{1/2}(\widehat\alpha_{j}-\alpha_{j}^0)$, $\widehat\Gamma=(\widehat\gamma_{1},\ldots,\widehat\gamma_{N})$. On the event in \eqref{eq:theorem-first-stage-event}, $\widehat\Gamma\in\mathcal G(\rho_{N,T,\delta}^{Q})$.

Let $\widehat{\mathcal Z}^\ast$ be the $T\times N$ matrix with $i$th column $Z_i^\ast(\widehat\alpha_{i})$, and let $\mathcal X(B^0)$ have $i$th column $X_i\beta_{i}^0$. By \eqref{eq:exact-generated-response}, the feasible second-stage response is exactly $\widehat{\mathcal Z}^\ast=\mathcal X(B^0)+F^0\Lambda^{0\top}+\tau^{-1}\mathcal E(\widehat\Gamma)$.

Apply Lemma \ref{lem:ES-score-control} with $\rho=\rho_{N,T,\delta}^{Q}$. On an event with probability at least
$1-\delta/2$, the required sample-design bound holds and
\begin{align}
\mathfrak s_i\{\mathcal E(\widehat\Gamma)\}&\leq C\mathfrak R_{N,T,p,\delta},
\label{eq:theorem-unit-score}\\
\mathfrak s_N\{\mathcal E(\widehat\Gamma)\}&\leq C\mathfrak R_{N,T,p,\delta},
\label{eq:theorem-panel-score}\\
\frac{\|\mathcal E_i(\widehat\gamma_i)\|}{\sqrt T}&\leq C\left\{\delta^{-1/(4+\eta)}+\rho_{N,T,\delta}^{Q}\right\},
\label{eq:theorem-column-energy}
\end{align}
where $\mathcal E_{i}(\gamma)=\xi_{i}+R_i(\gamma)$. Choose the constant $c$ in Theorem \ref{theo:nonasymES} sufficiently small. Then \eqref{eq:theorem-panel-score} and $\mathfrak R_{N,T,p,\delta}\leq c$ permit application of Lemma \ref{lem:joint-global-perturbation} to the joint global minimizer, with $J=\mathcal E(\widehat\Gamma)$. Equations \eqref{eq:theorem-unit-score}--\eqref{eq:theorem-column-energy}, together with $\rho_{N,T,\delta}^{Q}\leq1$, then yield
\begin{align*}
\tau\|\widehat\beta_{i}-\beta_{i}^0\|_{\Sigma_i}&\leq C\left[\mathfrak R_{N,T,p,\delta}+\iota_{\tau}\left\{1+\delta^{-1/(4+\eta)}+\rho_{N,T,\delta}^{Q}\right\}\mathfrak R_{N,T,p,\delta}\right]\\
&\leq C\left\{1+\iota_{\tau}\delta^{-1/(4+\eta)}\right\}\mathfrak R_{N,T,p,\delta}.
\end{align*}
The intersection of the first-stage and second-stage events has probability at least $1-\delta/8-\delta/2\geq1-\delta$. This proves \eqref{eq:beta-only-main-bound}.

\Halmos
\endproof
\proof{Proof of Corollary \ref{cor:ES-beta-rate}.}

We prove the two parts separately.

\emph{Part (i).}
Fix $\delta\in(0,1)$. By the definition of $s_{N,\delta}$, $s_{N,\delta/8}=p+2+\log\left({16N}/{\delta}\right)=O_\delta\{p+1+\log(2N)\}$. Consequently, the assumed growth condition implies ${s_{N,\delta/8}\log^2(2T)}/{\sqrt T}\rightarrow0$. It follows that ${s_{N,\delta/8}^{\,2}\log^4(2T)}/{T}\rightarrow0$, and ${s_{N,\delta/8}}/{T}\rightarrow0$.

Since $L_0$ and $\underline f$ are fixed constants,
$$
\left({s_{N,\delta/8}^{\,2}\log^4(2T)+L_0^2\underline f^{-4}s_{N,\delta/8}}\right)/{T}\rightarrow0.
$$
Thus, the sample-size condition in Theorem \ref{theo:nonasymES} holds for all sufficiently large $N$ and $T$.

Moreover,
$$
\rho_{N,T,\delta}^{Q}=O_\delta\left(\sqrt{{s_{N,\delta/8}}/{T}}\right)=o(1),
$$
so that $\rho_{N,T,\delta}^{Q}\leq\min(\overline\rho,1)$ eventually. Because $p+1\leq s_{N,\delta/8}$ and $N,T\to\infty$, $\sqrt{({p+1})/{T}}+\iota_\tau\left(N^{-1/2}+T^{-1/2}\right)=o(1)$.
Also, $(\rho_{N,T,\delta}^{Q})^2=O_\delta\left({s_{N,\delta/8}}/{T}\right)=o(1)$.
Since $\delta$ is fixed, it follows from the definition of $\mathfrak R_{N,T,p,\delta}$ that $\mathfrak R_{N,T,p,\delta}=o(1)$. Hence, the remaining localization condition $\mathfrak R_{N,T,p,\delta}\leq c$ also holds eventually.

Theorem \ref{theo:nonasymES} therefore gives, with probability at least $1-\delta$,
$$
\|\widehat\beta_{i,\tau}-\beta_{i,\tau}^0\|_{\Sigma_i}\leq\frac{C}{\tau}\left(1+\iota_\tau\delta^{-1/(4+\eta)}\right)\mathfrak R_{N,T,p,\delta}=o(1).
$$
More precisely, for arbitrary $\varepsilon,\zeta>0$, choose a fixed $\delta<\zeta$. The deterministic quantity on the right-hand side is smaller than $\varepsilon$ for all sufficiently large $N$ and $T$. Therefore, $P\left(\|\widehat\beta_{i,\tau}-\beta_{i,\tau}^0\|_{\Sigma_i}>\varepsilon\right)\leq\delta<\zeta$ eventually. Since $\zeta$ is arbitrary, $\|\widehat\beta_{i,\tau}-\beta_{i,\tau}^0\|_{\Sigma_i}=o_p(1)$.

\emph{Part (ii).}
Fix $\delta\in(0,1)$. Since $p$ is fixed and $N/T\rightarrow\kappa\in(0,\infty)$, we have $N\asymp T$ and hence $\log N=O(\log T)$. Therefore, $s_{N,\delta/8}=O(\log T)$ and $\rho_{N,T,\delta}^{Q}=O\left(\sqrt{{\log T}/{T}}\right)=o(1)$.
Moreover, 
$$
\frac{s_{N,\delta/8}^{\,2}\log^4(2T)+L_0^2\underline f^{-4}s_{N,\delta/8}}{T}=O\left(\frac{\log^6 T}{T}\right)=o(1).
$$
Thus, the sample-size condition and the condition on $\rho_{N,T,\delta}^{Q}$ in Theorem \ref{theo:nonasymES} hold for all sufficiently large $T$.

Since $p$ is fixed and $N\asymp T$, 
$$
\sqrt{\frac{p+1}{T}}+\iota_\tau\left(\frac{1}{\sqrt N}+\frac{1}{\sqrt T}\right)=O\left(T^{-1/2}\right).
$$
In addition, $\rho_{N,T,\delta}^{Q}T^{-1/2}=o\left(T^{-1/2}\right)$, $\left(\rho_{N,T,\delta}^{Q}\right)^2=O\left({\log T}/{T}\right)=o\left(T^{-1/2}\right)$. Because $\delta$ is fixed, all powers of $\delta^{-1}$ are constants. It follows that $\mathfrak R_{N,T,p,\delta}=O_\delta\left(T^{-1/2}\right)$. In particular, $\mathfrak R_{N,T,p,\delta}=o(1)$, so the remaining localization condition in Theorem \ref{theo:nonasymES} also holds
eventually.

Consequently, Theorem \ref{theo:nonasymES} implies that, with probability at least $1-\delta$,
$$
\sqrt T\|\widehat\beta_{i,\tau}-\beta_{i,\tau}^0\|_{\Sigma_i}\leq\frac{C}{\tau}\left(1+\iota_\tau\delta^{-1/(4+\eta)}\right)
\sqrt T\,\mathfrak R_{N,T,p,\delta}\leq C_\delta
$$
for some finite constant $C_\delta$ independent of $N$ and $T$. For every $\zeta>0$, choosing a fixed $\delta<\zeta$ therefore gives
$$
\limsup_{N,T\to\infty}P\left\{\sqrt T\|\widehat\beta_{i,\tau}-\beta_{i,\tau}^0\|_{\Sigma_i}>C_\delta\right\}\leq\delta<\zeta.
$$
Hence, $\sqrt T\|\widehat\beta_{i,\tau}-\beta_{i,\tau}^0\|_{\Sigma_i}=O_p(1)$, which proves the result.

\Halmos
\endproof
\begin{lemma}
\label{lem:generated-response-identity}
For every $i,t$ and $\gamma\in\mathbb R^{p+1}$,
\begin{align}
R_{it}(\gamma)&=-\left\{\rho_\tau(\epsilon_{it}-W_{it}^\top\gamma)-\rho_\tau(\epsilon_{it})\right\},
\label{eq:R-check-identity}\\
|R_{it}(\gamma)|&\leq|W_{it}^\top\gamma|.
\label{eq:R-Lipschitz}
\end{align}
Furthermore,
\begin{equation}
\label{eq:R-quadratic-mean}
\mu_{it}(\gamma)=\int_0^{W_{it}^\top\gamma}\{\tau-F_{it}(s\mid\mathcal C_{it})\}\,ds,\ |\mu_{it}(\gamma)|\leq\frac{\overline f}{2}|W_{it}^\top\gamma|^2.
\end{equation}
Consequently, if
$\widehat\gamma_i=\Sigma_i^{1/2}(\widehat\alpha_i-\alpha_i^0)$ and $\widehat\Gamma=(\widehat\gamma_1,\ldots,\widehat\gamma_N)$, then
\begin{equation}
\widehat{\mathcal Z}^{\ast}=\mathcal X(B^0)+F^0\Lambda^{0\top}+\tau^{-1}\mathcal E(\widehat\Gamma).
\label{eq:exact-generated-response}
\end{equation}
\end{lemma}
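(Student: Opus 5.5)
The plan is to verify \eqref{eq:R-check-identity} pointwise, derive \eqref{eq:R-Lipschitz} from it, obtain \eqref{eq:R-quadratic-mean} by conditional integration, and then assemble \eqref{eq:exact-generated-response} from the definitions.

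\emph{Identity \eqref{eq:R-check-identity}.} Write $v=W_{it}^\top\gamma$. Since $X_{it}^\top\Sigma_i^{-1/2}\gamma=W_{it}^\top\gamma$, we have $X_{it}^\top(\alpha_i^0+\Sigma_i^{-1/2}\gamma)=X_{it}^\top\alpha_i^0+v$ and $Y_{it}-X_{it}^\top\alpha_i^0=\epsilon_{it}$. Substituting into \eqref{eq:Z} gives
\[
R_{it}(\gamma)=(\epsilon_{it}-v)\mathbbm 1(\epsilon_{it}\le v)-\epsilon_{it}\mathbbm 1(\epsilon_{it}\le 0)+\tau v .
\]
Use the representation $\rho_\tau(u)=\tau u-u\mathbbm 1(u<0)$. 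Replacing strict by weak inequalities changes nothing, since $u\mathbbm 1(u=0)=0$. This yields
\[
\rho_\tau(\epsilon-v)-\rho_\tau(\epsilon)=-\tau v-(\epsilon-v)\mathbbm 1(\epsilon\le v)+\epsilon\mathbbm 1(\epsilon\le 0),
\]
which is exactly $-R_{it}(\gamma)$. The bound \eqref{eq:R-Lipschitz} then follows because $\rho_\tau$ is $\max(\tau,1-\tau)\le 1$ Lipschitz.

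\emph{Mean formula \eqref{eq:R-quadratic-mean}.} Apply Knight's identity, as in the proof of Proposition \ref{pro:quantile}:
\[
R_{it}(\gamma)=v\,\psi(\epsilon_{it})-\int_0^{v}\{\mathbbm 1(\epsilon_{it}\le s)-\mathbbm 1(\epsilon_{it}\le 0)\}\,ds .
\]
Now take conditional expectations given $\mathcal C_{it}$, noting that $v$ is $\mathcal C_{it}$-measurable because $W_{it}$ is a function of $X_{it}$.
\begin{itemize}
\item By Assumption \ref{assum:ES-score}(i), $E\{\psi(\epsilon_{it})\mid\mathcal C_{it}\}=\tau-F_{it}(0\mid\mathcal C_{it})=0$, so the first term drops out.
\item The integrand is bounded and the integration range is finite, so conditional Fubini lets us exchange expectation and integral. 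This gives $\mu_{it}(\gamma)=-\int_0^v\{F_{it}(s\mid\mathcal C_{it})-\tau\}\,ds$, which is the stated form.
\item The Lipschitz bound $|F_{it}(s\mid\mathcal C_{it})-\tau|\le\overline f|s|$ gives $|\mu_{it}(\gamma)|\le\overline f\left|\int_0^v |s|\,ds\right|=\overline f v^2/2$. The case $v<0$ is handled by orientation of the integral.
\end{itemize}

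\emph{Decomposition \eqref{eq:exact-generated-response}.} By the definition of the oracle error, $Z_{it}(\alpha_i^0)=\tau X_{it}^\top\beta_i^0+\tau\lambda_i^{0\top}f_t^0+\xi_{it}$. Writing $\widehat\alpha_i=\alpha_i^0+\Sigma_i^{-1/2}\widehat\gamma_i$, we get
\[
Z_{it}(\widehat\alpha_i)=Z_{it}(\alpha_i^0)+R_{it}(\widehat\gamma_i)=Z_{it}(\alpha_i^0)+\mu_{it}(\widehat\gamma_i)+R^c_{it}(\widehat\gamma_i).
\]
Dividing by $\tau$ and stacking columns produces $\mathcal X(B^0)+F^0\Lambda^{0\top}+\tau^{-1}\{\Xi+\mathcal R^c(\widehat\Gamma)+\mathcal M(\widehat\Gamma)\}$. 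This is the claimed identity, and it is purely algebraic and holds for any $\widehat\Gamma$.

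The only delicate point is the measurability subtlety in $\mu_{it}$. The argument $\widehat\gamma_i$ is random, so $\mu_{it}(\widehat\gamma_i)$ must be read as $\mu_{it}(\gamma)$ evaluated at $\gamma=\widehat\gamma_i$, not as a conditional expectation. With that reading the identity is exact, and everything else is routine sign bookkeeping.
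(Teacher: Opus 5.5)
Your proof is correct and follows the paper's argument. You obtain the check-loss identity by direct substitution, the bound $|R_{it}(\gamma)|\le|W_{it}^\top\gamma|$ from the $1$-Lipschitz property of $\rho_\tau$, and the stacked decomposition from the definitions of $\xi_{it}$, $\mu_{it}$ and $R^c_{it}$. The only variation is in the conditional-mean formula: you use Knight's identity plus conditional Fubini instead of differentiating $E\{R_{it}(\gamma)\mid\mathcal C_{it}\}$ in $v$ and integrating back, which spares you from having to justify differentiation under the conditional expectation.
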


\proof{Proof of Lemma \ref{lem:generated-response-identity}.}
Put $v=W_{it}^\top\gamma$. Direct substitution into the definition of $Z_{it}(\cdot)$ gives $R_{it}(\gamma)=(\epsilon_{it}-v)\mathbbm{1}\{\epsilon_{it}\leq v\}-\epsilon_{it}\mathbbm{1}\{\epsilon_{it}\leq0\}+\tau v$, which is equal to the right-hand side of \eqref{eq:R-check-identity}. Since the check loss $\rho_\tau$ is one-Lipschitz, \eqref{eq:R-Lipschitz} follows. Conditional on $\mathcal C_{it}$, the derivative of $E\{R_{it}(\gamma)\mid\mathcal C_{it}\}$ with respect to $v$ is
$\tau-F_{it}(v\mid\mathcal C_{it})$. Integrating from zero to $v$, using $F_{it}(0\mid\mathcal C_{it})=\tau$, and then applying the Lipschitz condition in Assumption \ref{assum:ES-score}(i) proves \eqref{eq:R-quadratic-mean}. Finally, combine the definition of $R_{it}$ with the oracle ES equation and divide by $\tau$ to obtain \eqref{eq:exact-generated-response}.
\Halmos
\endproof

\begin{lemma}
\label{lem:ES-score-control}
Under Assumptions \ref{assum:Regularity}--\ref{assum:factor-spectral}, there exist constants $C,K_W>0$ such that the following holds. For every fixed $i\in[N]$, $\delta\in(0,1)$, and $0<\rho\leq1$ satisfying $\rho\leq\overline\rho$ when $r^0>0$, if $T\geq C s_{N,\delta/8}^{\,2}\log^4(2T)$, then, with probability at least $1-\delta/2$,
\begin{align}
\sup_{\Gamma\in\mathcal G(\rho)}\mathfrak s_{i}\{\mathcal E(\Gamma)\}&\leq C\left[\delta^{-1/4}(1+\rho)\sqrt{\frac{p+1}{T}}+\overline f\rho^2\right],\label{eq:ES-unit-score-control}\\
\sup_{\Gamma\in\mathcal G(\rho)}\mathfrak s_{N}\{\mathcal E(\Gamma)\}&\leq C\left[\delta^{-1/4}(1+\rho)\left\{
\sqrt{\frac{p+1}{T}}+\iota_\tau\left(\frac1{\sqrt N}+\frac1{\sqrt T}\right)\right\}+\overline f\rho^2\right],
\label{eq:ES-panel-score-control}\\
\sup_{\|\gamma\|\leq\rho}\frac{\|\xi_{i}+R_i(\gamma)\|}{\sqrt T}&\leq C\left\{\delta^{-1/(4+\eta)}+\rho\right\},
\label{eq:ES-column-energy}\\
\max_{j\leq N}\lambda_{\max}\left(\frac1T W_j^\top W_j\right)&\leq K_W.
\label{eq:ES-design-upper}
\end{align}
\end{lemma}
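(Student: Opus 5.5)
We split $\mathcal E(\Gamma)=\Xi+\mathcal R^c(\Gamma)+\mathcal M(\Gamma)$ and bound each piece separately in both scores. Each of the four displays is proved on its own event with failure probability at most $\delta/8$, and a union bound then gives probability $1-\delta/2$. Throughout, $\mathfrak s_j$ is sublinear in $J$, so $\mathfrak s_j\{\mathcal E(\Gamma)\}\le\mathfrak s_j(\Xi)+\mathfrak s_j\{\mathcal R^c(\Gamma)\}+\mathfrak s_j\{\mathcal M(\Gamma)\}$, and the same holds for $\mathfrak s_N$ through the triangle inequality in $\ell_2$ over $j$ and in operator norm.

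\emph{Design bound \eqref{eq:ES-design-upper}.} Apply a matrix version of Lemma \ref{lem:mixing-empirical-process}: a net over $\mathbb S^p$ together with the Merlevède--Peligrad--Rio Bernstein inequality for $(u^\top W_{jt})^2$, which is sub-exponential and geometrically mixing. Add a union bound over $j$ and level $x=\log(16N/\delta)$. This gives $\|T^{-1}W_j^\top W_j-I\|_{\rm op}\le C\{\sqrt{s/T}+s^{2}\log^2(2T)/T\}\le 1$ under $T\ge Cs^2\log^4(2T)$, with $s=s_{N,\delta/8}$. The extra $\log$ powers compared with Lemma \ref{lem:mixing-empirical-process} come from the squared sub-Gaussian tails.

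\emph{Oracle term.} Write $T^{-1}W_i^\top M_{F^0}\xi_i=T^{-1}W_i^\top\xi_i-T^{-1}(W_i^\top F^0/T)F^{0\top}\xi_i$. The summands $W_{it}\xi_{it}$ and $f_t^0\xi_{it}$ are mean zero because $E(\xi_{it}\mid\mathcal C_{it})=0$. They have $4$th moments by Assumption \ref{assum:ES-score}(ii), and by Davydov their long-run covariances are summable. Hence $E\|T^{-1}W_i^\top\xi_i\|^4\le C\{(p+1)/T\}^2$ and $E\|T^{-1}F^{0\top}\xi_i\|^4\le C/T^2$, while $\|W_i^\top F^0/T\|_{\rm op}\le C$ on the design event. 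Markov's inequality at level $\delta/8$ applied to the fourth moment gives $\mathfrak s_i(\Xi)\le C\delta^{-1/4}\sqrt{(p+1)/T}$. For the panel score, bound $E\,N^{-1}\sum_j\mathfrak s_j(\Xi)^4$ in the same way and use Jensen, which yields the same rate. Assumption \ref{assum:factor-spectral}(ii) with Markov gives $\|\Xi\|_{\rm op}/\sqrt{NT}\le C\delta^{-1/4}(N^{-1/2}+T^{-1/2})$.

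\emph{Centered perturbation $\mathcal R^c$.} This is the main obstacle, because the bound must hold uniformly over $\Gamma\in\mathcal G(\rho)$, a set of dimension $N(p+1)$. For $\mathfrak s_j$ only $\gamma_j$ enters. I would therefore chain over the ball $\|\gamma_j\|\le\rho$ as in Lemma \ref{lem:mixing-empirical-process}, using $|R_{jt}(\gamma)-R_{jt}(\gamma')|\le|W_{jt}^\top(\gamma-\gamma')|$ from \eqref{eq:R-Lipschitz}. This Lipschitz property makes $W_{jt}R^c_{jt}(\gamma)$ a product of two sub-Gaussian quantities, each with scale of order $\rho$. I would bound $\sup_{\gamma}\|T^{-1}W_j^\top R_j^c(\gamma)\|$ through $\sup_{u\in\mathbb S^p,\gamma}$ in fourth moment, using a chaining integral that costs $\sqrt{(p+1)/T}$ up to log factors, which the sample-size condition absorbs. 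This gives $C\delta^{-1/4}\rho\sqrt{(p+1)/T}$. The $F^0$ projection part is handled the same way. The operator-norm part is exactly Assumption \ref{assum:factor-spectral}(ii) combined with Markov: $\rho\{\sqrt{N(p+1)}+\sqrt T\}/\sqrt{NT}$ matches the target $\rho\{\sqrt{(p+1)/T}+N^{-1/2}\}$.

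\emph{Bias term $\mathcal M$.} By \eqref{eq:R-quadratic-mean}, $|\mu_{jt}|\le\overline f|W_{jt}^\top\gamma_j|^2/2$. Cauchy--Schwarz then gives $\mathfrak s_j(\mathcal M)\le C\,T^{-1}\sum_t\|W_{jt}\|\,|W_{jt}^\top\gamma_j|^2$. On a high-probability event where uniform third-moment empirical averages are bounded (again via chaining), this yields $\overline f\rho^2$; a careful treatment should avoid a $\sqrt{p}$ factor by projecting onto unit directions. The operator norm satisfies $\|\mathcal M\|_{\rm op}\le\|\mathcal M\|_F\le\sqrt{NT}\,C\overline f\rho^2$, which reflects the quadratic, Neyman-orthogonal bias.

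\emph{Column energy \eqref{eq:ES-column-energy}.} Markov on $T^{-1}\sum_t\xi_{it}^2$ with the $(4+\eta)$-moment bound gives the $\delta^{-1/(4+\eta)}$ factor. Then $\|R_i(\gamma)\|/\sqrt T\le\rho\,\lambda_{\max}(T^{-1}W_i^\top W_i)^{1/2}\le C\rho$ by \eqref{eq:R-Lipschitz} and \eqref{eq:ES-design-upper}.
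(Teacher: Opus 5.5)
Your proposal follows essentially the same route as the paper's proof. It uses the same split $\Xi+\mathcal R^c+\mathcal M$, a mixing Bernstein/net argument for the design event, fourth-moment bounds plus Markov for the oracle and generated row scores (your power-mean/Jensen step plays the role of the paper's Minkowski inequality), Assumption~\ref{assum:factor-spectral}(ii) plus Markov for the operator norms, the quadratic bound \eqref{eq:R-quadratic-mean} with uniform empirical moments for $\mathcal M$, and a $(4+\eta)$-moment Markov bound with \eqref{eq:R-Lipschitz} for the column energy.

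One justification needs correcting. Davydov's inequality only makes the covariances summable, which gives an $L^2$ bound. The claim $E\|T^{-1}W_i^\top\xi_i\|^4\le C\{(p+1)/T\}^2$ therefore needs a fourth-order (Rosenthal/Yokoyama-type) moment inequality for strongly mixing sequences that uses the $(4+\eta)$-th moments, which is exactly what the paper's display \eqref{eq:mixing-fourth-moment} supplies.
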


\proof{Proof of Lemma \ref{lem:ES-score-control}.}
We first control the centered score terms. If $U_t$ is a centered scalar sequence with geometrically decaying strong-mixing coefficients and uniformly bounded $(4+\eta)$-th moments, the strong-mixing moment inequality gives
\begin{equation}
\label{eq:mixing-fourth-moment}
\left\{E\left|\frac1T\sum_{t=1}^T U_t\right|^4\right\}^{1/4}\leq\frac{C}{\sqrt T}\sup_t\{E|U_t|^{4+\eta}\}^{1/(4+\eta)}.
\end{equation}
The constant is uniform over $i$ because the mixing coefficients in Assumption \ref{assum:Regularity}(iv) are uniform.

The conditional moment restriction on $\xi_{it}$ and the sub-Gaussian condition on $(W_{it}^\top,f_{t}^{0\top})^\top$ imply the moment bound required in \eqref{eq:mixing-fourth-moment} for all scalar projections of $W_{it}\xi_{it}$ and $f_{t}^0\xi_{it}$. Because $E(\xi_{it}\mid\mathcal C_{it})=0$, these variables are centered. Using $M_{F^0}=I_T-T^{-1}F^0F^{0\top}$, a sphere-net argument in $\mathbb R^{p+1}$, and the fact that $r^0$ is fixed, we obtain
\begin{equation}
\label{eq:oracle-row-score-L4}
\sup_{j\leq N}\left[E\{\mathfrak s_{j}(\Xi)^4\}\right]^{1/4}\leq C\sqrt{\frac{p+1}{T}}.
\end{equation}

For the generated response, the Lipschitz property in \eqref{eq:R-Lipschitz} also gives $|R_{it}^c(\gamma)-R_{it}^c(\gamma')|\leq 2|W_{it}^\top(\gamma-\gamma')|$. Consequently, the scalar increments arising after multiplication by a projection of $W_{it}$ or $f_{t}^0$ are sub-exponential. The strong-mixing Bernstein inequality and a net argument over $\{\gamma:\|\gamma\|\leq\rho\}$ yield
\begin{equation}
\label{eq:generated-row-score-L4}
\sup_{j\leq N}\left[E\left\{\sup_{\|\gamma\|\leq\rho}\left\|\frac1T W_j^\top M_{F^0}R_j^c(\gamma)\right\|\right\}^{4}\right]^{1/4}\leq C\rho\sqrt{\frac{p+1}{T}}.
\end{equation}

For the conditional-mean component, \eqref{eq:R-quadratic-mean} and the uniform empirical third- and fourth-moment bounds for sub-Gaussian covariates imply
\begin{align}
\sup_{j\leq N}\sup_{\|\gamma\|\leq\rho}\left\|\frac1T W_j^\top M_{F^0}\mu_j(\gamma)\right\|&\leq C\overline f\rho^2,\label{eq:conditional-mean-row}\\
\sup_{\Gamma\in\mathcal G(\rho)}\frac{\|\mathcal M(\Gamma)\|_{\mathrm{op}}}{\sqrt{NT}}&\leq C\overline f\rho^2.
\label{eq:conditional-mean-operator}
\end{align}
For the second display, it is enough to use $\|\mathcal M(\Gamma)\|_{\mathrm{op}}\leq\|\mathcal M(\Gamma)\|$ and average the squared bound in \eqref{eq:R-quadratic-mean}.

For nonnegative random variables $V_j$, Minkowski's inequality yields
$$
\left[E\left\{\left(\frac1N\sum_{j=1}^NV_j^2\right)^{1/2}\right\}^4\right]^{1/4}\leq\left\{\frac1N\sum_{j=1}^N(EV_j^4)^{1/2}\right\}^{1/2}.
$$
Thus, \eqref{eq:oracle-row-score-L4} and \eqref{eq:generated-row-score-L4} control the average row score without any
cross-sectional independence assumption. Assumption \ref{assum:factor-spectral}(ii) and fourth-moment Markov inequalities give
\begin{align}
\frac{\|\Xi\|_{\mathrm{op}}}{\sqrt{NT}}&\leq C\delta^{-1/4}\left(\frac1{\sqrt N}+\frac1{\sqrt T}\right),
\label{eq:oracle-operator-control}\\
\sup_{\Gamma\in\mathcal G(\rho)}\frac{\|\mathcal R^c(\Gamma)\|_{\mathrm{op}}}{\sqrt{NT}}&\leq C\delta^{-1/4}\rho
\left\{\sqrt{\frac{p+1}{T}}+\frac1{\sqrt N}\right\}.
\label{eq:generated-operator-control}
\end{align}
Combining \eqref{eq:oracle-row-score-L4}--\eqref{eq:generated-operator-control} and allocating the failure probabilities proves \eqref{eq:ES-unit-score-control} and \eqref{eq:ES-panel-score-control}.

Finally, the conditional $(4+\eta)$-th moment bound gives $T^{-1/2}{\|\xi_{i}\|}\leq C\delta^{-1/(4+\eta)}$ with the required probability. On the design event \eqref{eq:ES-design-upper}, \eqref{eq:R-Lipschitz} implies
$$
\sup_{\|\gamma\|\leq\rho}\frac{\|R_i(\gamma)\|}{\sqrt T}\leq C\rho.
$$
The design event \eqref{eq:ES-design-upper} follows from the mixing Bernstein inequality applied to quadratic forms $(u^\top W_{jt})^2$, followed by a sphere-net argument and a union bound over $j$. This proves \eqref{eq:ES-column-energy} and completes the proof.
\Halmos
\endproof
\begin{lemma}
\label{lem:joint-global-perturbation}
Suppose Assumption \ref{assum:local-identification} holds and, when $r^0>0$, Assumption \ref{assum:factor-spectral}(i) holds. Consider the perturbed response
\begin{equation}
\label{eq:generic-perturbed-response}
\mathcal Z^\ast=\mathcal X(B^0)+F^0\Lambda^{0\top}+\tau^{-1}J,
\end{equation}
where $J=(J_1,\ldots,J_N)$ is a $T\times N$ matrix. Suppose also that $\max_{j\leq N}\lambda_{\max}\left(T^{-1}W_j^\top W_j\right)\leq K_W$. There exist constants $c_1,C>0$ such that, whenever $\mathfrak s_{N}(J)\leq c_1$, every joint global minimizer satisfies
\begin{align}
\iota_\tau\|P_{\widehat F}-P_{F^0}\|_{\mathrm{op}}&\leq C\mathfrak s_{N}(J),
\label{eq:global-factor-bound}\\
\left\{\frac1N\sum_{j=1}^N\|\widehat\beta_j-\beta_j^0\|_{\Sigma_j}^2\right\}^{1/2}&\leq C\mathfrak s_N(J),
\label{eq:global-average-beta-bound}\\
\tau\|\widehat\beta_i-\beta_i^0\|_{\Sigma_i}&\leq C\left[\mathfrak s_i(J)+\iota_\tau\left\{1+\frac{\|J_i\|}{\sqrt T}\right\}
\mathfrak s_N(J)\right]
\label{eq:global-fixed-beta-bound}
\end{align}
for every fixed $i\in[N]$.
\end{lemma}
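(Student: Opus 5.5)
We prove Lemma \ref{lem:joint-global-perturbation} by a basic-inequality argument for the joint least-squares problem, followed by a unit-level first-order condition. Write $\Delta_j=\widehat\beta_j-\beta_j^0$ and $c_j=\Sigma_j^{1/2}\Delta_j$, so that $X_j\Delta_j=W_jc_j$ and $\|c_j\|=\|\Delta_j\|_{\Sigma_j}$. Profiling out the loadings, a global minimizer $(\widehat B,\widehat F)$ minimizes
\[
(NT)^{-1}\sum_j\|M_{F}(\mathcal Z_j^\ast-X_jb_j)\|^2 .
\]
Substitute \eqref{eq:generic-perturbed-response}. Comparing the value at $(\widehat B,\widehat F)$ with the value at the truth $(B^0,F^0)$ gives
\[
\frac1{NT}\sum_j\|M_{\widehat F}(F^0\lambda_j^0-W_jc_j)\|^2\le \frac{2}{\tau NT}\langle J,\, M\text{-projected fitted signal difference}\rangle+\frac{1}{\tau^2NT}\big(\|P_{\widehat F}J\|^2-\|P_{F^0}J\|^2\big).
\]
The left side is at least $\mathcal Q^0(C,\widehat F)$.

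\textbf{Cross term.} The signal difference $\mathcal W(C)+L$, with $L$ of rank at most $2r^0$, is rewritten through the stable decomposition in Assumption \ref{assum:local-identification}(ii) as $\mathcal W(\widetilde C)+\widetilde L$. The pairing with $J$ then splits into two parts:
\begin{itemize}
\item a regression part, bounded by $\sum_j\|\widetilde c_j\|\,\|T^{-1}W_j^\top J_j\|$ together with the $F^0$-projection correction, which yields the row scores $\mathfrak s_j(J)$;
\item a low-rank part, bounded by $\|J\|_{\mathrm{op}}\|\widetilde L\|_\ast\le\sqrt{2r^0}\,\|J\|_{\mathrm{op}}\|\widetilde L\|$.
\end{itemize}
After Cauchy--Schwarz and the $K_{\mathrm{dec}}$ bound, the cross term is at most $C\,\mathfrak s_N(J)\{(NT)^{-1}\|\text{signal diff}\|^2\}^{1/2}$. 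The quadratic projection terms are $O(\|J\|_{\mathrm{op}}^2/(NT))\le C\mathfrak s_N(J)^2$, since $r^0$ is fixed.

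\textbf{Localization and global bounds.} Set $\mathcal D^2=(NT)^{-1}\|\text{signal diff}\|^2$. The preceding steps give $\mathcal D^2\le C\mathfrak s_N\mathcal D+C\mathfrak s_N^2$, hence $\mathcal D\le C\mathfrak s_N(J)$. The separation lower bound $\kappa_F\min\{d(\widehat F)^2,\rho_{F,0}^2\}\le\mathcal Q^0$ and smallness $\mathfrak s_N\le c_1$ together force $d(\widehat F)\le\rho_{F,0}$. The quadratic identification part of the same condition then gives \eqref{eq:global-factor-bound} and \eqref{eq:global-average-beta-bound}. The $\mathcal Q^0$ infimum over $\ell_i$ is at most the objective evaluated at $\widehat\Lambda$, and the design bound $K_W$ controls the moves between $W_jc_j$ and fitted values. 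When $r^0=0$ the argument reduces to unit-wise least squares with $\kappa_X$.

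\textbf{Fixed unit $i$.} This is the main obstacle, since global bounds only control averages. Use the exact first-order condition
\[
\widehat\beta_i=(X_i^\top M_{\widehat F}X_i)^{-1}X_i^\top M_{\widehat F}\mathcal Z_i^\ast .
\]
It gives
\[
\tau c_i=\big(T^{-1}W_i^\top M_{\widehat F}W_i\big)^{-1}T^{-1}W_i^\top M_{\widehat F}\{\tau F^0\lambda_i^0+J_i\}.
\]
Replace $M_{\widehat F}$ by $M_{F^0}$, with error $\|P_{\widehat F}-P_{F^0}\|_{\mathrm{op}}\le C\mathfrak s_N$, and control each piece:
\begin{itemize}
\item The $J_i$ piece contributes $\mathfrak s_i(J)+C\mathfrak s_N\|W_i\|_{\mathrm{op}}\|J_i\|/T\le\mathfrak s_i+C\mathfrak s_N\|J_i\|/\sqrt T$.
\item The signal piece gives $M_{\widehat F}F^0\lambda_i^0$, of norm at most $\sqrt T\,\|P_{\widehat F}-P_{F^0}\|_{\mathrm{op}}K_\lambda$, contributing $C\mathfrak s_N$.
\item The inverse Gram matrix is bounded because $d(\widehat F)$ is small and the residualized design is nondegenerate, with $\kappa_X$ inherited from Assumption \ref{assum:local-identification}.
\end{itemize}
Collecting these terms gives \eqref{eq:global-fixed-beta-bound}, with the factor terms vanishing when $\iota_\tau=0$.
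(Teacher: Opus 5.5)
Your proposal is correct and follows the paper's route: a basic inequality against the truth, the rank-$2r^0$ stable decomposition to bound the cross term by $C\mathfrak s_N(J)$ times the signal loss, localization through the separation part of $\mathcal Q_\tau^0$, and the profiled normal equation for the fixed unit. Working with the concentrated objective only adds the harmless term $\tau^{-2}(\|P_{\widehat F}J\|^2-\|P_{F^0}J\|^2)/(NT)\le C\mathfrak s_N(J)^2$, which the paper avoids by comparing unprofiled residuals.

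The one step you leave implicit is the unit-level bound $\lambda_{\min}(T^{-1}W_i^\top M_{\widehat F}W_i)\ge\kappa_X$. This is not immediate from the averaged form of the assumption. The paper obtains it from Assumption~\ref{assum:local-identification}(ii) by taking $C$ supported on row $i$ with scale $a\to\infty$, dividing by $a^2$, and using $d_\tau(\widehat F)\le\rho_{F,0}$.
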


\proof{Proof of Lemma \ref{lem:joint-global-perturbation}.}
We first consider $r^0>0$.  Notice first that the global minimum is attained.  Indeed, let $\mathscr S_{r^0}=\{\mathcal W(C)+L: C\in\mathbb R^{N\times(p+1)},\ L\in\mathbb L_{r^0}\}$. Because every $\Sigma_i$ is nonsingular and $X_ib_i=W_i(\Sigma_i^{1/2}b_i)$, this is exactly the set of fitted matrices $\{\mathcal X(B)+L:L\in\mathbb L_{r^0}\}$. If $S_m\in\mathscr S_{r^0}$ and $S_m\to S$, apply $\mathcal W(C)+L=\mathcal W(\widetilde C)+\widetilde L$ and $N^{-1}\sum_{i=1}^N\|\widetilde c_i\|^2+(NT)^{-1}\|\widetilde L\|^2\leq K_{\mathrm{dec}}(NT)^{-1}\|\mathcal W(C)+L\|^2$ with $q=r^0$ to each $S_m$. The resulting coefficient and low-rank components are bounded. Finite dimensionality therefore gives a convergent subsequence, and the closedness of $\mathbb L_{r^0}$ shows that $S\in\mathscr S_{r^0}$. Thus $\mathscr S_{r^0}$ is closed.  The least-squares projection onto this nonempty closed subset of a finite-dimensional Euclidean space exists. Every $L\in\mathbb L_{r^0}$ admits a factorization $L=F\Lambda^\top$ with $T^{-1}F^\top F=I_{r^0}$ (zero loading columns may be added when $\operatorname{rank}(L)<r^0$), so this projection is equivalent to the stated joint optimization problem.

Put $u_i=\Sigma_i^{1/2}(\widehat\beta_i-\beta_i^0)$ and define the difference between the true and fitted noiseless signals by
\begin{equation}
D=\mathcal X(B^0)+F^0\Lambda^{0\top}-\mathcal X(\widehat B)-\widehat F\widehat\Lambda^\top.
\label{eq:global-signal-difference}
\end{equation}
Because $X_i(\widehat\beta_i-\beta_i^0)=W_iu_i$, $D=-\mathcal W(U)+F^0\Lambda^{0\top}-\widehat F\widehat\Lambda^\top$, where $U=(u_1^\top,\ldots,u_N^\top)^\top$, and the low-rank term has rank at most $2r^0$. Set $\nu=\|D\|/\sqrt{NT}$.

We first bound $\nu$ without using the local coefficient-identification part of Assumption \ref{assum:local-identification}. Applying the rank-preserving stable decomposition with $q=2r^0$ to the preceding representation gives $\widetilde C$ and $\widetilde L\in\mathbb L_{2r^0}$ such that
\begin{equation}
D=\mathcal W(\widetilde C)+\widetilde L
\label{eq:global-stable-representation}
\end{equation}
and
\begin{equation}
\frac1N\sum_{j=1}^N\|\widetilde c_j\|^2+\frac{\|\widetilde L\|^2}{NT}\leq K_{\mathrm{dec}}\nu^2.
\label{eq:global-stable-size}
\end{equation}

Define the auxiliary unprojected score
$$
\overline{\mathfrak s}_N(J)=\left\{\frac1N\sum_{j=1}^N\left\|\frac1T W_j^\top J_j\right\|^2\right\}^{1/2}+\frac{\|J\|_{\mathrm{op}}}{\sqrt{NT}}.
$$
Then
\begin{equation}
\overline{\mathfrak s}_N(J)\leq C\mathfrak s_N(J).
\label{eq:projected-unprojected-score}
\end{equation}
To see this, use
$$
\frac1T W_j^\top J_j=\frac1T W_j^\top M_{F^0}J_j+\left(\frac1T W_j^\top F^0\right)\left(\frac1T F^{0\top}J_j\right).
$$
The design bound and $T^{-1}F^{0\top}F^0=I_{r^0}$ imply $\max_j\|T^{-1}W_j^\top F^0\|_{\mathrm{op}}\leq\sqrt{K_W}$. Moreover, since $r^0$ is fixed,
$$
\frac1N\sum_{j=1}^N\left\|\frac1T F^{0\top}J_j\right\|^2=\frac{\|F^{0\top}J\|^2}{NT^2}\leq r^0\frac{\|J\|_{\mathrm{op}}^2}{NT}.
$$
These inequalities prove \eqref{eq:projected-unprojected-score}.

By \eqref{eq:global-stable-representation}, Cauchy--Schwarz inequality, the operator--nuclear norm inequality, and $\|\widetilde L\|_\ast\leq\sqrt{2r^0}\|\widetilde L\|$,
\begin{align}
\frac{|\langle J,D\rangle|}{NT}&\leq\left\{\frac1N\sum_{j=1}^N\|\widetilde c_j\|^2\right\}^{1/2}\left\{\frac1N\sum_{j=1}^N\left\|\frac1T W_j^\top J_j\right\|^2\right\}^{1/2}+\frac{\|J\|_{\mathrm{op}}\|\widetilde L\|_\ast}{NT}\\
&\leq C\nu\mathfrak s_N(J).
\label{eq:joint-global-inner-product}
\end{align}

The true parameter is feasible. Hence global optimality gives $\|\tau^{-1}J+D\|^2\leq\|\tau^{-1}J\|^2$, and therefore
\begin{equation}
\nu^2\leq\frac{2}{\tau NT}|\langle J,D\rangle|.
\label{eq:joint-global-basic-inequality}
\end{equation}
Equations \eqref{eq:joint-global-inner-product} and \eqref{eq:joint-global-basic-inequality} yield
\begin{equation}
\nu\leq C\mathfrak s_N(J).
\label{eq:global-prediction-bound}
\end{equation}

By the definition of $\mathcal Q_\tau^0$, the particular choices $C=U$, $F=\widehat F$, and $\ell_i=\widehat\lambda_i$ imply $\nu^2\geq\mathcal Q_\tau^0(U,\widehat F)$. Consequently, Assumption \ref{assum:local-identification}(ii) gives
\begin{equation}
\nu^2\geq\kappa_F\min\{d_\tau(\widehat F)^2,\rho_{F,0}^2\}+\kappa_X\mathbbm 1\{d_\tau(\widehat F)\leq\rho_{F,0}\}\frac1N\sum_{j=1}^N\|u_j\|^2.
\label{eq:joint-identification-at-estimator}
\end{equation}
Choose $c_1$ sufficiently small that $C^2c_1^2<\kappa_F\rho_{F,0}^2$. Equations \eqref{eq:global-prediction-bound} and \eqref{eq:joint-identification-at-estimator} then rule out $d_\tau(\widehat F)\geq\rho_{F,0}$. Hence
\begin{equation}
\nu^2\geq\kappa_Fd_\tau(\widehat F)^2+\frac{\kappa_X}{N}\sum_{j=1}^N\|u_j\|^2.
\label{eq:localized-joint-identification}
\end{equation}
Combining \eqref{eq:global-prediction-bound} and \eqref{eq:localized-joint-identification} proves \eqref{eq:global-factor-bound} and
\eqref{eq:global-average-beta-bound}.

It remains to establish the fixed-unit bound. First, we derive local nonsingularity of the projected design from the same identification condition. Fix a normalized $F$ satisfying $d_\tau(F)\leq\rho_{F,0}$, fix $i\in[N]$ and $v\in\mathbb R^{p+1}$, and let $C_a$ have $i$th row $av$ and all other rows zero. Profiling the loadings gives
$$
\mathcal Q_\tau^0(C_a,F)=\frac1{NT}\sum_{j=1}^N\|M_F(F^0\lambda_j^0-W_jc_{a,j})\|^2.
$$
Divide the local-identification inequality by $a^2$ and let $|a|\to\infty$. Since $d_\tau(F)\leq\rho_{F,0}$, the coefficient term is active; all terms not involving $a$ vanish. We obtain $N^{-1} v^\top\left(T^{-1} W_i^\top M_FW_i\right)v\geq N^{-1}{\kappa_X}\|v\|^2$. Thus
\begin{equation}
\lambda_{\min}\left(\frac1T W_i^\top M_FW_i\right)\geq\kappa_X
\label{eq:projected-design-from-joint-ID}
\end{equation}
throughout the identified neighborhood. In particular, \eqref{eq:projected-design-from-joint-ID} applies to $\widehat F$.

At a joint global minimizer, holding $\widehat F$ fixed and profiling over the unrestricted pair $(\beta_i,\lambda_i)$ gives
\begin{equation}
u_i=\left(\frac1T W_i^\top M_{\widehat F}W_i\right)^{-1}\left\{\frac1T W_i^\top M_{\widehat F}F^0\lambda_i^0+\frac1{\tau T}W_i^\top M_{\widehat F}J_i\right\}.
\label{eq:joint-global-normal-equation}
\end{equation}
The inverse has operator norm at most $\kappa_X^{-1}$. Furthermore,
\begin{align}
\left\|\frac1T W_i^\top M_{\widehat F}F^0\lambda_i^0\right\|&\leq C d_\tau(\widehat F),
\label{eq:true-factor-leakage}\\
\left\|\frac1T W_i^\top(M_{\widehat F}-M_{F^0})J_i\right\|&\leq C d_\tau(\widehat F)\frac{\|J_i\|}{\sqrt T}.
\label{eq:score-projection-change}
\end{align}
Indeed, these inequalities follow from $\|W_i\|_{\mathrm{op}}/\sqrt T\leq\sqrt{K_W}$, $\|F^0\lambda_i^0\|/\sqrt T\leq K_\lambda$, and
$\|M_{\widehat F}-M_{F^0}\|_{\mathrm{op}}=d_\tau(\widehat F)$. Finally, $\|T^{-1}W_i^\top M_{F^0}J_i\|=\mathfrak s_i(J)$. Substitution into \eqref{eq:joint-global-normal-equation}, followed by \eqref{eq:global-factor-bound}, proves \eqref{eq:global-fixed-beta-bound}.

If $r^0=0$, the criterion separates across units and Assumption \ref{assum:local-identification}(i) gives $u_i=\tau^{-1}\left(T^{-1} W_i^\top W_i\right)^{-1}T^{-1} W_i^\top J_i$. This proves \eqref{eq:global-fixed-beta-bound}.  Squaring and averaging also proves \eqref{eq:global-average-beta-bound}. This completes the proof.
\Halmos
\endproof

\subsection{Proof of Theorem \ref{theo:Gaussian}}
\proof{Proof of Theorem \ref{theo:Gaussian}.}
For brevity, let $\ell_{N,T,p}=p+1+\log(2NT)$. Fix $i\in[N]$. For $u\in\mathbb S^p$, define
\begin{align*}
S_{i,T}(u)&=\frac{T^{-1/2}\sum_{t=1}^Tu^{\top}\psi_{it}^0}{\sqrt{u^{\top}\Omega_{i,\tau,T}u}},\\
D_{i,T}(u)&=\frac{\tau\sqrt T\,u^{\top}(\widehat\beta_i-\beta_i^0)-T^{-1/2}\sum_{t=1}^Tu^{\top}\psi_{it}^0}
{\sqrt{u^{\top}\Omega_{i,\tau,T}u}}.
\end{align*}
The variance lower bound in Assumption \ref{assum:Gaussian}(iii) and Lemma \ref{lem:Gaussian-Bahadur} imply
$$
P\left\{\sup_{u\in\mathbb S^p}|D_{i,T}(u)|>C\frac{\mathfrak b_{N,T,p}}{\sqrt{\varpi}}\right\}\leq C\varpi.
$$
For every $\varepsilon>0$,
$$
P\{S_{i,T}(u)\leq z-\varepsilon\}-P\{|D_{i,T}(u)|>\varepsilon\}\leq P\{S_{i,T}(u)+D_{i,T}(u)\leq z\}
$$
and
$$
P\{S_{i,T}(u)+D_{i,T}(u)\leq z\}\leq P\{S_{i,T}(u)\leq z+\varepsilon\}+P\{|D_{i,T}(u)|>\varepsilon\}.
$$
Lemma \ref{lem:Gaussian-oracle-BE} and $\sup_z\{\Phi(z+\varepsilon)-\Phi(z)\}\leq \varepsilon/\sqrt{2\pi}$ therefore give
$$
\sup_{\substack{u\in\mathbb S^p\\z\in\mathbb R}}\left|P\{S_{i,T}(u)+D_{i,T}(u)\leq z\}-\Phi(z)\right|\leq C\left\{\frac{\log^2(2T)}{\sqrt T}+\varpi+\frac{\mathfrak b_{N,T,p}}{\sqrt{\varpi}}\right\}.
$$
Because $\mathfrak b_{N,T,p}\geq T^{-1/2}$, the choice $\varpi=c_1\mathfrak b_{N,T,p}^{2/3}$, with $c_1$ fixed and the theorem's constant $c$ chosen sufficiently small, satisfies the restrictions in Lemma \ref{lem:Gaussian-Bahadur} and $\log(1/\varpi)\leq C\log(2NT)$. Substitution gives
$$
\varpi+\frac{\mathfrak b_{N,T,p}}{\sqrt{\varpi}}\leq C\mathfrak b_{N,T,p}^{2/3},
$$
which proves \eqref{eq:GA-main}.
\Halmos
\endproof
\begin{lemma}
\label{lem:Gaussian-sample-stability}
Define $Q_{jk,T}=T^{-1} X_j^{\top}M_{F^0}X_k$, $Q_{jk}^0=E(X_{jt}^{\circ}X_{kt}^{\circ\top})$, and let $\mathcal J_T$ have $(j,k)$ block $[\mathcal J_T]_{jk}=\mathbbm{1}\{j=k\}Q_{jj,T}-N^{-1}{a_{jk}^0}Q_{jk,T}$. Whenever it exists, write $\mathcal G_T=\mathcal J_T^{-1}$ and let $\mathcal G_{ij,T}$ denote its $(i,j)$ block. For $s(\varpi)=p+1+\log(2N/\varpi)$, $\varpi\in(0,1/4]$, if $s(\varpi)^2\log^4(2T)/T\leq c$, then, with probability at least
$1-C\varpi$,
\begin{align}
\max_{j,k\leq N}
\|Q_{jk,T}-Q_{jk}^0\|_{\mathrm{op}}&\leq C\sqrt{\frac{s(\varpi)}{T}},
\label{eq:Q-sample-concentration}\\
\max_{j\leq N}\sum_{k=1}^N\|[\mathcal J_T]_{jk}-[\mathcal J^0]_{jk}\|_{\mathrm{op}}
&\leq C\sqrt{\frac{s(\varpi)}{T}},
\label{eq:J-sample-concentration}\\
\max_{i\leq N}\sum_{j=1}^N\|\mathcal G_{ij,T}-\mathcal G_{ij}^0\|_{\mathrm{op}}&\leq C\sqrt{\frac{s(\varpi)}{T}}.
\label{eq:G-sample-concentration}
\end{align}
On the same event, $\mathcal G_T$ exists and $\max_{i\leq N}\sum_{j=1}^N\|\mathcal G_{ij,T}\|_{\mathrm{op}}\leq C$.
\end{lemma}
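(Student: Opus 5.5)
The plan is to prove the three displays in order, each building on the previous one.

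\textbf{Step 1: the bound \eqref{eq:Q-sample-concentration}.} Decompose
$$
Q_{jk,T}-Q_{jk}^0=\Bigl(\tfrac1T X_j^\top X_k-E X_{jt}X_{kt}^\top\Bigr)-\Bigl(\tfrac1T X_j^\top F^0\Bigr)\Bigl(\tfrac1T F^{0\top}X_k\Bigr)+E(X_{jt}f_t^{0\top})E(f_t^0X_{kt}^\top).
$$
The population identity $Q^0_{jk}=EX_{jt}X_{kt}^\top-E(X_{jt}f_t^{0\top})E(f_t^0X_{kt}^\top)$ uses $E f_t^0f_t^{0\top}=I$. This is consistent with the normalization $T^{-1}F^{0\top}F^0=I$ only up to an $O(\sqrt{s/T})$ discrepancy, and I would bound that discrepancy as well. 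Each centered sample moment involves a product of sub-Gaussian coordinates, by Assumption \ref{assum:ES-score}(ii), and is geometrically mixing, by Assumption \ref{assum:Gaussian}(ii). I would therefore apply the Merlevède--Peligrad--Rio Bernstein inequality exactly as in Lemma \ref{lem:mixing-empirical-process}, now to sub-exponential summands. This gives a tail of order $\sqrt{u/T}+(u+\log eT)^{5/2}/T$. Next comes a $1/4$-net on $\mathbb S^p\times\mathbb S^p$ of cardinality $9^{2(p+1)}$, followed by a union bound over the $N^2$ pairs $(j,k)$ with $u\asymp s(\varpi)$. The condition $s(\varpi)^2\log^4(2T)/T\le c$ is what absorbs the higher-order Bernstein term into $\sqrt{s(\varpi)/T}$. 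The cross term involving the $F^0$ products is then handled by writing $AB-A_0B_0=(A-A_0)B+A_0(B-B_0)$, using the bounded population factors.

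\textbf{Step 2: the bound \eqref{eq:J-sample-concentration}.} The diagonal block contributes $\|Q_{jj,T}-Q^0_{jj}\|_{\rm op}$. The off-diagonal blocks contribute
$$
N^{-1}\sum_k |a_{jk}^0|\,\|Q_{jk,T}-Q_{jk}^0\|_{\rm op}.
$$
By Assumption \ref{assum:factor-spectral}(i), $|a_{jk}^0|\le N\|\lambda_j^0\|\|\lambda_k^0\|/(Nc_\lambda)\le K_\lambda^2/c_\lambda$. The sum over $k$ is therefore bounded by a constant times $\max_{j,k}\|Q_{jk,T}-Q^0_{jk}\|_{\rm op}$, and Step 1 finishes this display.

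\textbf{Step 3: the bound \eqref{eq:G-sample-concentration}, the main obstacle.} The difficulty is that the usual operator-norm perturbation of the inverse is too weak. We need control in the block norm
$$
\|A\|_{\infty,b}=\max_i\sum_j\|A_{ij}\|_{\rm op},
$$
which is submultiplicative. Assumption \ref{assum:Gaussian}(i) gives $\|\mathcal G^0\|_{\infty,b}\le K_{\mathcal G}$, and Step 2 gives $\|\Delta\|_{\infty,b}\le C\sqrt{s/T}$ for $\Delta=\mathcal J_T-\mathcal J^0$. I would write $\mathcal J_T=\mathcal J^0(I+\mathcal G^0\Delta)$. Then $\|\mathcal G^0\Delta\|_{\infty,b}\le K_{\mathcal G}C\sqrt{s/T}\le1/2$ once $c$ is small. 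A Neumann series converging in this norm shows that $\mathcal G_T$ exists, with $\|\mathcal G_T\|_{\infty,b}\le 2K_{\mathcal G}$. Finally, the resolvent identity
$$
\mathcal G_T-\mathcal G^0=-\mathcal G_T\Delta\mathcal G^0
$$
gives $\|\mathcal G_T-\mathcal G^0\|_{\infty,b}\le 2K_{\mathcal G}^2C\sqrt{s/T}$.

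The points that need care are checking that $\|\cdot\|_{\infty,b}$ is genuinely submultiplicative for block products, and controlling the normalization discrepancy in $F^0$ noted in Step 1. The failure probabilities from Step 1 are allocated so that the final event has probability at least $1-C\varpi$.
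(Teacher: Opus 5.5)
Steps 2 and 3 coincide with the paper's argument, and Step 1 follows the paper's structure: the same decomposition of $Q_{jk,T}-Q^0_{jk}$, then the mixing Bernstein inequality, nets, and a union bound over the $N^2$ pairs. In Step 2 you use $|a^0_{jk}|\le K_\lambda^2/c_\lambda$. In Step 3 you use a Neumann series plus the resolvent identity $\mathcal G_T-\mathcal G^0=\mathcal G_T(\mathcal J^0-\mathcal J_T)\mathcal G^0$. You also make explicit that $\max_i\sum_j\|A_{ij}\|_{\mathrm{op}}$ is submultiplicative, which the paper leaves implicit.

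The normalization discrepancy you plan to control is in fact zero. Taking expectations in $T^{-1}F^{0\top}F^0=I_{r^0}$ under strict stationarity gives $E(f_t^0f_t^{0\top})=I_{r^0}$ exactly, which is how the paper obtains $Q^0_{jk}=E(X_{jt}X_{kt}^\top)-E(X_{jt}f_t^{0\top})E(f_t^0X_{kt}^\top)$.

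The step that does not go through as written is the absorption in Step 1. The summands $(u^\top X_{jt})(v^\top X_{kt})$ are sub-exponential, so in the Merlev\`ede--Peligrad--Rio inequality $\gamma=(1/\gamma_1+1/\gamma_2)^{-1}=1/2$. The higher-order term is therefore $(u+\log eT)^{2}/T$; your exponent $5/2$ is a valid but weaker upper bound. The union bound over the nets and the $N^2$ pairs forces $u\asymp s(\varpi)$. Absorbing the term into $\sqrt{s(\varpi)/T}$ is then equivalent to $(s(\varpi)+\log eT)^4\lesssim s(\varpi)T$, which requires $s(\varpi)^3\lesssim T$ (and $s(\varpi)^4\lesssim T$ with your exponent).

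The hypothesis $s(\varpi)^2\log^4(2T)/T\le c$ gives only $s(\varpi)^3\lesssim T\,s(\varpi)/\log^4(2T)$. So the requirement follows only when $s(\varpi)\lesssim\log^4(2T)$; for instance, $s(\varpi)\asymp T^{0.45}$ satisfies the hypothesis but not the requirement. This differs from Lemma~\ref{lem:mixing-empirical-process}, where the increments are sub-Gaussian, $1/\gamma=3/2$, and $T\gtrsim s^2$ suffices.

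The paper's own sketch also asserts sub-exponential tails and the $\sqrt{s(\varpi)/T}$ rate without addressing this, so the repair is needed there too. One option is to strengthen the hypothesis to something like $s(\varpi)^3\le cT$, which holds automatically for fixed $p$ and $N$ polynomial in $T$. The other is to carry the rate $C\{\sqrt{s(\varpi)/T}+(s(\varpi)+\log eT)^2/T\}$ through \eqref{eq:Q-sample-concentration}--\eqref{eq:G-sample-concentration}. Your Steps 2 and 3 go through unchanged in either case, since this quantity remains small under the stated hypothesis.
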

\proof{Proof of Lemma \ref{lem:Gaussian-sample-stability}.}
Strict stationarity and the normalization $T^{-1}F^{0\top}F^0=I_{r^0}$ imply $E(f_t^0f_t^{0\top})=I_{r^0}$. Therefore, $Q_{jk}^0=E(X_{jt}X_{kt}^\top)-E(X_{jt}f_t^{0\top})E(f_t^0X_{kt}^\top)$. On the other hand,
\begin{equation}\nonumber
Q_{jk,T}=\frac1T\sum_{t=1}^TX_{jt}X_{kt}^{\top}-\left(\frac1T X_j^{\top}F^0\right)\left(\frac1T F^{0\top}X_k\right).
\end{equation}
For fixed unit vectors $u,v\in\mathbb S^p$, each centered scalar sample moment obtained by premultiplying and postmultiplying the preceding display by $u^\top$ and $v$ is geometrically strong mixing and has sub-exponential tails. The strong-mixing Bernstein inequality, followed by $1/8$-net arguments for $u$ and $v$ and a union bound over the $N^2$ unit pairs, yields
$$
\max_{j,k\leq N}\left\|\frac1T\sum_{t=1}^TX_{jt}X_{kt}^\top-E(X_{jt}X_{kt}^\top)\right\|_{\mathrm{op}}\leq C\sqrt{\frac{s_\varpi}{T}}.
$$
The same argument, with the fixed dimension $r^0$, gives
$$
\max_{j\leq N}\left\|\frac1T X_j^\top F^0-E(X_{jt}f_t^{0\top})\right\|_{\mathrm{op}}\leq C\sqrt{\frac{s_\varpi}{T}}.
$$
Expanding the product of the two cross moments proves \eqref{eq:Q-sample-concentration}; the product of their two deviations is of smaller order under the stated sample-size condition.

Assumption \ref{assum:factor-spectral}(i) implies that $\max_{j,k\leq N}|a_{jk}^0|\leq C$. Hence,
\begin{align*}
\max_{j\leq N}\sum_{k=1}^N\|[\mathcal J_T]_{jk}-[\mathcal J^0]_{jk}\|_{\mathrm{op}}&\leq\max_{j\leq N}\|Q_{jj,T}-Q_{jj}^0\|_{\mathrm{op}}+\frac{1}{N}\max_{j\leq N}\sum_{k=1}^N|a_{jk}^0|\|Q_{jk,T}-Q_{jk}^0\|_{\mathrm{op}}\\
&\leq C\sqrt{\frac{s(\varpi)}{T}},
\end{align*}
which proves \eqref{eq:J-sample-concentration}. The population block-row bound in Assumption \ref{assum:Gaussian}(i) and a Neumann-series argument imply that $\mathcal J_T$ is invertible and that $\max_{i\leq N}\sum_{j=1}^N\|\mathcal G_{ij,T}\|_{\mathrm{op}}\leq C$. Finally, the resolvent identity
\begin{equation}\nonumber
\mathcal G_T-\mathcal G^0=\mathcal G_T(\mathcal J^0-\mathcal J_T)\mathcal G^0
\end{equation}
gives \eqref{eq:G-sample-concentration}.
\Halmos
\endproof

\begin{lemma}
\label{lem:Gaussian-Bahadur}
Let $\varpi\in(0,1/4]$ satisfy
$\log(1/\varpi)\leq C_0\log(2NT)$, and ${\mathfrak b_{N,T,p}}/{\sqrt{\varpi}}\leq c$. Under the assumptions of Theorem \ref{theo:Gaussian}, if $\ell_{N,T,p}^{\,2}\log^4(2T)/T\leq c$, then, for every fixed $i\in[N]$,
\begin{equation}
\label{eq:Gaussian-Bahadur}
P\left\{\sup_{u\in\mathbb S^p}\left|\tau\sqrt T\,u^{\top}(\widehat\beta_{i}-\beta_{i}^0)-\frac1{\sqrt T}\sum_{t=1}^T
u^{\top}\psi_{it}^0\right|>C\frac{\mathfrak b_{N,T,p}}{\sqrt{\varpi}}\right\}\leq C\varpi.
\end{equation}
\end{lemma}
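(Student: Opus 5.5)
The plan is to derive a Bahadur representation for $\widehat\beta_i$ by linearizing the first-order conditions of the joint global minimizer around the truth, with Lemma~\ref{lem:joint-global-perturbation} and Lemma~\ref{lem:ES-score-control} supplying the localization. We work on the intersection of four events, each of probability at least $1-C\varpi$:
\begin{itemize}
\item the first-stage event of Proposition~\ref{pro:quantile} at level $\varpi$;
\item the score-control event of Lemma~\ref{lem:ES-score-control};
\item the design event \eqref{eq:ES-design-upper};
\item the sample-stability event of Lemma~\ref{lem:Gaussian-sample-stability}.
\end{itemize}
On this intersection, Lemma~\ref{lem:joint-global-perturbation} gives $d_\tau(\widehat F)\le C\mathfrak s_N(J)$ and $N^{-1}\sum_j\|u_j\|^2\le C\mathfrak s_N(J)^2$, where $J=\mathcal E(\widehat\Gamma)$ and $u_j=\Sigma_j^{1/2}(\widehat\beta_j-\beta_j^0)$. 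Both quantities are of order $\varpi^{-1/4}\{\sqrt{(p+1)/T}+N^{-1/2}+T^{-1/2}\}+\varrho^2$. The required $\varpi^{-1/2}$ will then come only from Markov inequalities on second moments of the remainder terms.

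\emph{Step 1: expand the factor estimate.} The PCA first-order condition at the minimizer gives
\[
\widehat F V=\tfrac{1}{NT}\widehat W^\top\widehat W\widehat F,
\]
where $\widehat W$ has columns $F^0\lambda_j^0-W_ju_j+\tau^{-1}J_j$. Expanding $P_{\widehat F}-P_{F^0}$ to first order in the style of Bai (2009) shows that $M_{\widehat F}F^0\lambda_i^0$ equals
\[
-\frac1N\sum_j a_{ij}^0\bigl(-M_{F^0}W_ju_j+\tau^{-1}M_{F^0}J_j\bigr)
\]
plus quadratic remainders. These remainders are bounded by $C\{d_\tau(\widehat F)^2+\|J\|_{\mathrm{op}}^2/(NT)\}$ together with cross terms such as $T^{-1}\|F^{0\top}J\|/N$. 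The $\sqrt T/N$ term in $\mathfrak b_{N,T,p}$ comes from these cross terms.

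\emph{Step 2: assemble and invert the coupled system.} Substitute the expansion into the profiled normal equation \eqref{eq:joint-global-normal-equation}, simultaneously for all units $j$. The $u_j$ then solve $\mathcal J_T U=\tau^{-1}(\text{score})+\text{rem}$, and the score has block $j$ equal to
\[
T^{-1}X_j^\top M_{F^0}\Bigl(\xi_j-N^{-1}\textstyle\sum_\ell a_{j\ell}^0\xi_\ell\Bigr)=T^{-1}\textstyle\sum_t X_{jt}^\circ\overline\xi_{jt}+\text{rem}.
\]
The identification of the two forms uses $M_{F^0}X_j\approx X_j^\circ$ via Lemma~\ref{lem:Gaussian-sample-stability}. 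Invert using the block-row-summable $\mathcal G_T$ from Lemma~\ref{lem:Gaussian-sample-stability}, then replace $\mathcal G_T$ by $\mathcal G^0$ at cost $\sqrt{s(\varpi)/T}$ times the score size. The row-summability of $\mathcal G^0$ ensures that the $\ell_\infty$-type block bounds propagate with a constant factor. This yields
\[
\tau u_i=T^{-1}\textstyle\sum_t\psi_{it}^0+\text{rem}_i .
\]

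\emph{Step 3: bound the generated-response remainder.} The response $J$ contains $\mathcal R^c(\widehat\Gamma)+\mathcal M(\widehat\Gamma)$. Its row score is $O(\varrho\sqrt{(p+1)/T}+\varrho^2)$ by \eqref{eq:generated-row-score-L4} and \eqref{eq:conditional-mean-row}, where $\varrho^2\asymp\ell_{N,T,p}/T$. After multiplying by $\sqrt T$, this gives the first term $\ell_{N,T,p}/\sqrt T$ of $\mathfrak b_{N,T,p}$. Neyman orthogonality, which makes $\mu$ quadratic, is what keeps this remainder second order.

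\emph{Main obstacle.} The hardest step is controlling, uniformly in $u\in\mathbb S^p$, the cross-unit averages $N^{-1}\sum_j a_{ij}^0(\cdot)$ of products of estimation errors, so that they contribute $\sqrt{\ell/N}$ rather than $O(1)$. I would first try conditioning on $\mathscr X$ and using the bounded-cluster independence of Assumption~\ref{assum:Gaussian}(ii). This gives second moments of order $1/N$ for $N^{-1}\sum_j a_{ij}^0 T^{-1/2}\sum_t X_{jt}^\circ\xi_{jt}$-type terms. A $(p+1)$-dimensional net plus a union bound, paid for by the factor $\log(2NT)$, then handles the supremum over $u$. Finally, collecting all remainders through Chebyshev-type bounds at level $\varpi$ gives the bound $C\mathfrak b_{N,T,p}/\sqrt\varpi$ with probability at least $1-C\varpi$.
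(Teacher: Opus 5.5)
Your proposal follows essentially the same route as the paper's proof: localization by Lemma~\ref{lem:joint-global-perturbation}, a first-order expansion of $P_{\widehat F}-P_{F^0}$ with a quadratic remainder, and substitution into every unit's exact profiled normal equation to obtain the feedback system in $\mathcal J_T$; this is followed by inversion with $\mathcal G_T$, replacement by $\mathcal G^0$ and $X^\circ$, and remainder control through the quadratic conditional mean, Markov bounds at level $\varpi$, and bounded-cluster independence (your Bai-style eigen-equation versus the paper's singular-subspace graph expansion is an immaterial difference). Two corrections are needed: first, drop the net over $u$, since $\sup_{u\in\mathbb S^p}|u^\top w|=\|w\|$ and a union bound of Chebyshev-level-$\varpi$ events over a net would cost a factor exponential in $p$; second, the projector expansion must also be applied to $M_{\widehat F}$ acting on unit $i$'s own residual $\tau^{-1}J_i-X_i(\widehat\beta_i-\beta_i^0)$, not only to $F^0\lambda_i^0$, and it is the non-centered own-unit term there (together with the quadratic projector term) that produces the $\sqrt T/N$ part of $\mathfrak b_{N,T,p}$, so that term must be bounded directly rather than by the cluster-independence second-moment argument.
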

\proof{Proof of Lemma \ref{lem:Gaussian-Bahadur}.}
Write $\widehat\gamma_j=\Sigma_j^{1/2}(\widehat\alpha_{j}-\alpha_{j}^0)$. The exact generated-response identity in Lemma \ref{lem:generated-response-identity} gives $Z_{jt}(\widehat\alpha_{j})-Z_{jt}(\alpha_{j}^0)=R_{jt}(\widehat\gamma_j)$; thus, the proof never differentiates an indicator function. Proposition \ref{pro:quantile}, applied at confidence level $\varpi$, gives
\begin{equation}
\max_{j\leq N}\|\widehat\gamma_j\|\leq C\sqrt{\frac{s(\varpi)}{T}}
\label{eq:Gaussian-first-stage-radius}
\end{equation}
with probability at least $1-C\varpi$. Put $r_{\varpi}=C\sqrt{s(\varpi)/T}$. The restriction on $\varpi$ implies $s(\varpi)\leq C\ell_{N,T,p}$. Lemma \ref{lem:ES-score-control}, Markov's inequality, and the exact response identity give
\begin{align}
\mathfrak s_N\{\mathcal E(\widehat\Gamma)\}&\leq C a_\varpi,\label{eq:Gaussian-panel-local-rate}\\
a_\varpi&=\varpi^{-1/4}\left\{\sqrt{\frac{s_\varpi}{T}}+\iota_\tau\left(\frac1{\sqrt N}+\frac1{\sqrt T}\right)\right\}+\frac{s_\varpi}{T}.\label{eq:Gaussian-a-rate}
\end{align}
The imposed restrictions make $a_\varpi$ sufficiently small. Lemma \ref{lem:joint-global-perturbation}, applied to the joint global minimizer, therefore yields
\begin{align}
\|P_{\widehat F}-P_{F^0}\|_{\mathrm{op}}+\left\{\frac1N\sum_{j=1}^N\|\widehat\beta_j-\beta_j^0\|_{\Sigma_j}^2\right\}^{1/2}\leq Ca_\varpi
\label{eq:Gaussian-global-localization}
\end{align}
when $r^0>0$; the factor term is absent when $r^0=0$.

We next linearize the second-stage least-squares problem jointly in the regression coefficients and the factor space. If $r^0=0$, all factor-space terms below are absent and $a_{jk}^0=0$. Suppose $r^0>0$ for this step. After dividing the second-stage response by $\tau$, define $C^0=F^0\Lambda^{0\top}$, and $D=\tau^{-1}\mathcal E(\widehat\Gamma)-\mathcal X(\widehat B-B^0)$. 
Conditional on $\widehat B$, joint global minimization implies that $P_{\widehat F}$ is a leading rank-$r^0$ left singular projector of $C^0+D$. Assumption \ref{assum:factor-spectral}(i) gives $\sigma_{r^0}(C^0)\geq c\sqrt{NT}$. Moreover, \eqref{eq:Gaussian-global-localization}, the design bound, and Assumption \ref{assum:factor-spectral}(ii) imply
\begin{equation}
\frac{\|D\|_{\mathrm{op}}}{\sqrt{NT}}\leq Ca_\varpi.
\label{eq:Gaussian-perturbation-size}
\end{equation}
To verify this expansion, set $U_0=F^0/\sqrt T$ and $V_0=\Lambda^0 (\Lambda^{0\top}\Lambda^0)^{-1/2}$. Solving the singular subspace graph equation around $U_0$ and expanding the corresponding orthogonal projector gives
\begin{align}
P_{\widehat F}-P_{F^0}&=\mathcal L(D)+\mathcal Q(D),\label{eq:projector-expansion}\\
\mathcal L(D)&=M_{F^0}D\Lambda^0(\Lambda^{0\top}\Lambda^0)^{-1}\frac{F^{0\top}}{T}+\frac{F^0}{T}(\Lambda^{0\top}\Lambda^0)^{-1}\Lambda^{0\top}D^\top M_{F^0},
\label{eq:projector-linear-part}\\
\|\mathcal Q(D)\|_{\mathrm{op}}&\leq C\frac{\|D\|_{\mathrm{op}}^2}{NT}.
\label{eq:projector-quadratic-part}
\end{align}
For completeness, the graph has the form $U(D)=\{U_0+K(D)\}\{I+K(D)^\top K(D)\}^{-1/2}$ with $U_0^\top K(D)=0$, and its linear term is
$$
K(D)=M_{F^0}DV_0\{\sqrt T(\Lambda^{0\top}\Lambda^0)^{1/2}\}^{-1}+O_{\mathrm{op}}\left(\frac{\|D\|_{\mathrm{op}}^2}{NT}\right).
$$
Substitution into $U(D)U(D)^\top-U_0U_0^\top$ proves \eqref{eq:projector-expansion}--\eqref{eq:projector-quadratic-part}.

Put $v_j=\tau\sqrt T(\widehat\beta_j-\beta_j^0)$ and
$$
U_{j,T}=\frac1{\sqrt T}X_j^\top M_{F^0}\overline\xi_j,\ \overline\xi_j=(\overline\xi_{j1},\ldots,\overline\xi_{jT})^\top.
$$
At a joint global minimizer, profiling over $(\beta_j,\lambda_j)$ for a fixed $\widehat F$ gives the exact normal equation
\begin{equation}
X_j^\top M_{\widehat F}\left\{F^0\lambda_j^0+\tau^{-1}\mathcal E_j(\widehat\gamma_j)-X_j(\widehat\beta_j-\beta_j^0)\right\}=0.
\label{eq:Gaussian-exact-normal-equation}
\end{equation}
To make the remainder in the feedback equation explicit, write
$$
h_j=R_j^c(\widehat\gamma_j)+\mu_j(\widehat\gamma_j),\ d_j=\tau^{-1}\mathcal E_j(\widehat\gamma_j)-X_j(\widehat\beta_j-\beta_j^0),\ w_{jk}=\frac{a_{jk}^0}{N}.
$$
Thus, $D=(d_1,\ldots,d_N)$. Since
$$
\Lambda^0(\Lambda^{0\top}\Lambda^0)^{-1}\lambda_j^0=(w_{j1},\ldots,w_{jN})^\top,
$$
the first term in \eqref{eq:projector-linear-part} satisfies
$$
\mathcal L(D)F^0\lambda_j^0=M_{F^0}\sum_{k=1}^Nw_{jk}d_k.
$$
The other term in \eqref{eq:projector-linear-part} vanishes when applied to $F^0\lambda_j^0$, because $M_{F^0}F^0=0$. Consequently, substituting \eqref{eq:projector-expansion} into \eqref{eq:Gaussian-exact-normal-equation} and collecting the coefficient terms gives
\begin{equation}
\sum_{k=1}^N[\mathcal J_T]_{jk}v_k=U_{j,T}+r_{j,T},\ j\in[N].
\label{eq:Gaussian-sample-feedback}
\end{equation}
Here the leading oracle-error part is
$$
\frac1{\sqrt T}X_j^\top M_{F^0}\left(\xi_j-\frac1N\sum_{k=1}^Na_{jk}^0\xi_k\right)=U_{j,T},
$$
whereas the coefficient part is $N^{-1}\sum_k a_{jk}^0Q_{jk,T}v_k$. The latter moves to the left-hand side and produces $\mathcal J_T$. The remainder is exactly
\begin{align}
r_{j,T}={}&\frac1{\sqrt T}X_j^\top M_{F^0}\left(h_j-\sum_{k=1}^Nw_{jk}h_k\right)-\frac{\tau}{\sqrt T}X_j^\top\left[\mathcal L(D)d_j+\mathcal Q(D)\{F^0\lambda_j^0+d_j\}\right].
\label{eq:Gaussian-remainder-decomposition}
\end{align}

We now control the three terms in \eqref{eq:Gaussian-remainder-decomposition}. First, the chaining bounds used in Lemma \ref{lem:ES-score-control}, with $r_\varpi^2=O(s_\varpi/T)$, imply
\begin{align}
\max_{j\leq N}\left\|\frac1{\sqrt T}X_j^\top M_{F^0}R_j^c(\widehat\gamma_j)\right\|&\leq C\frac{s_\varpi}{\sqrt{T\varpi}},
\label{eq:Gaussian-generated-centered}\\
\max_{j\leq N}\left\|\frac1{\sqrt T}X_j^\top M_{F^0}\mu_j(\widehat\gamma_j)\right\|&\leq C\frac{s_\varpi}{\sqrt{T\varpi}}
\label{eq:Gaussian-generated-mean}
\end{align}
with probability at least $1-C\varpi$. The factor adjustment in the first line of \eqref{eq:Gaussian-remainder-decomposition} is also controlled by the spectral assumptions. Indeed, $W_\Lambda=(w_{jk})_{j,k}=\Lambda^0(\Lambda^{0\top}\Lambda^0)^{-1}\Lambda^{0\top}$ is an orthogonal projector, and therefore
$$
\sum_{k=1}^Nw_{jk}^2=w_{jj}=\frac{a_{jj}^0}{N}\leq\frac{C}{N}.
$$
On the sample-design event, this identity and $\|X_j\|_{\mathrm{op}}/\sqrt T\leq C$ give
$$
\left\|\frac1{\sqrt T}X_j^\top M_{F^0}\sum_{k=1}^Nw_{jk}h_k\right\|\leq \frac{C}{\sqrt N}\left\|\mathcal R^c(\widehat\Gamma)+\mathcal M(\widehat\Gamma)\right\|_{\mathrm{op}}.
$$
Assumption \ref{assum:factor-spectral}(ii), \eqref{eq:R-quadratic-mean}, and \eqref{eq:Gaussian-first-stage-radius} consequently yield, after multiplication by a block row of $\mathcal G_T$,
\begin{equation}
\left\|\sum_{j=1}^N\mathcal G_{ij,T}\frac1{\sqrt T}X_j^\top M_{F^0}\left(h_j-\sum_{k=1}^Nw_{jk}h_k\right)\right\|
\leq\frac{C}{\sqrt\varpi}\left\{\frac{s_\varpi}{\sqrt T}+\iota_\tau\sqrt{\frac{s_\varpi}{N}}\right\}.
\label{eq:Gaussian-generated-remainder}
\end{equation}

Second, the quadratic projector term is bounded directly from \eqref{eq:projector-quadratic-part}. The sample-design bound,
$\|F^0\lambda_j^0\|/\sqrt T\leq C$, the uniform moment bound for $d_j$, and \eqref{eq:Gaussian-perturbation-size} give
\begin{align}
\left\|\sum_{j=1}^N\mathcal G_{ij,T}\frac{\tau}{\sqrt T}X_j^\top\mathcal Q(D)\{F^0\lambda_j^0+d_j\}\right\|&\leq C\sqrt T\,a_\varpi^2\leq\frac{C}{\sqrt\varpi}\left[\frac{s_\varpi}{\sqrt T}+\iota_\tau\left\{\sqrt{\frac{s_\varpi}{N}}+\frac{\sqrt T}{N}\right\}\right].
\label{eq:Gaussian-quadratic-remainder}
\end{align}
The displayed statement is a probability bound with failure probability at most $C\varpi$; as throughout this proof, constants absorb the fixed value of $\tau$.

Third, substitute the two terms in \eqref{eq:projector-linear-part} into $T^{-1/2}X_j^\top\mathcal L(D)d_j$. Explicitly,
\begin{align*}
\frac1{\sqrt T}X_j^\top\mathcal L(D)d_j={}&\frac1{\sqrt T}X_j^\top M_{F^0}D\Lambda^0(\Lambda^{0\top}\Lambda^0)^{-1}
\left(\frac1T F^{0\top}d_j\right)+\frac1{T\sqrt T}X_j^\top F^0(\Lambda^{0\top}\Lambda^0)^{-1}\Lambda^{0\top}D^\top M_{F^0}d_j.
\end{align*}
After inserting $D=\tau^{-1}\Xi+\tau^{-1}\mathcal R^c(\widehat\Gamma)+\tau^{-1}\mathcal M(\widehat\Gamma)
-\mathcal X(\widehat B-B^0)$, this produces only products of (a) a centered sample moment involving $X_j$, $f^0$, and
$\overline\xi_k$, (b) one loading average $\Lambda^0(\Lambda^{0\top}\Lambda^0)^{-1}$, and (c) one of the localization errors in \eqref{eq:Gaussian-first-stage-radius} or \eqref{eq:Gaussian-global-localization}. The mixing moment inequality,
H\"older's inequality, the bounded-cluster condition in Assumption \ref{assum:Gaussian}(ii), and $\|\Lambda^0(\Lambda^{0\top}\Lambda^0)^{-1}\|_{\mathrm{op}}\vee\|(\Lambda^{0\top}\Lambda^0)^{-1}\Lambda^{0\top}\|_{\mathrm{op}}\leq C/\sqrt N$ therefore give
\begin{equation}
\left\|\sum_{j=1}^N\mathcal G_{ij,T}\frac{\tau}{\sqrt T}X_j^\top\mathcal L(D)d_j\right\|\leq\frac{C}{\sqrt\varpi}\left[
\frac{s_\varpi}{\sqrt T}+\iota_\tau\left\{\sqrt{\frac{s_\varpi}{N}}+\frac{\sqrt T}{N}\right\}\right]
\label{eq:Gaussian-linear-interaction}
\end{equation}
outside an event of probability at most $C\varpi$. To see the origin of the two factor terms, a loading-weighted average of centered contributions has order $\sqrt{s_\varpi/N}$, whereas the contribution in which unit $j$ appears in the factor estimate and again in its own normal equation has order $\sqrt T/N$. All other products are bounded by $s_\varpi/\sqrt T$.

Combining \eqref{eq:Gaussian-generated-remainder}, \eqref{eq:Gaussian-quadratic-remainder}, and \eqref{eq:Gaussian-linear-interaction}, and using $s_\varpi\leq C\ell_{N,T,p}$, gives, for each fixed $i$,
\begin{equation}
P\left\{\left\|\sum_{j=1}^N\mathcal G_{ij,T}r_{j,T}\right\|>C\frac{\mathfrak b_{N,T,p}}{\sqrt\varpi}\right\}\leq C\varpi.
\label{eq:Gaussian-system-remainder}
\end{equation}

Define
$$
\widetilde U_{j,T}=\frac1{\sqrt T}\sum_{t=1}^TX_{jt}^{\circ}\overline\xi_{jt}.
$$
Since $T^{-1}F^{0\top}F^0=I_{r^0}$,
\begin{align}
U_{j,T}-\widetilde U_{j,T}=\sqrt T\left\{E(X_{jt}f_t^{0\top})-\frac1T X_j^\top F^0\right\}\left(\frac1T F^{0\top}\overline\xi_j\right).
\label{eq:Gaussian-projection-replacement}
\end{align}
The mixing moment bounds and a union bound over $j$ imply that the maximum of the first factor on the right is $O_p\{\sqrt{s_\varpi/T}\}$ and the maximum of the second is of the same order. Thus the right-hand side is $O_p(s_\varpi/\sqrt T)$. Combining this fact with \eqref{eq:G-sample-concentration}, the corresponding maximal bound for $\widetilde U_{j,T}$, and the bounded block-row sums yields
\begin{equation}
P\left\{\left\|\sum_{j=1}^N\mathcal G_{ij,T}U_{j,T}-\frac1{\sqrt T}\sum_{t=1}^T\psi_{it}^0\right\|>C\frac{\ell_{N,T,p}}{\sqrt{T\varpi}}\right\}\leq C\varpi.
\label{eq:Gaussian-leading-replacement}
\end{equation}
Stacking \eqref{eq:Gaussian-sample-feedback} gives $v=\mathcal G_T(U_T+r_T)$. Equations \eqref{eq:Gaussian-system-remainder} and \eqref{eq:Gaussian-leading-replacement}, together with $\ell_{N,T,p}/\sqrt T\leq\mathfrak b_{N,T,p}$, prove
\eqref{eq:Gaussian-Bahadur}.
\Halmos
\endproof
\begin{lemma}
\label{lem:Gaussian-oracle-BE}
Under Assumption \ref{assum:Gaussian}(ii)--(iii), for every fixed $i\in[N]$,
\begin{equation}
\label{eq:Gaussian-oracle-BE}
\sup_{\substack{u\in\mathbb S^p\\z\in\mathbb R}}\left|P\left\{\frac{T^{-1/2}\sum_{t=1}^T u^{\top}\psi_{it}^0}{\sqrt{u^{\top}\Omega_{i,\tau,T}u}}\leq z\right\}-\Phi(z)\right|\leq C\frac{\log^2(2T)}{\sqrt T}.
\end{equation}
\end{lemma}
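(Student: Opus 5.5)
We prove the bound by reducing it to a Berry--Esseen theorem for a one-dimensional, geometrically strong-mixing, $(4+\eta)$-moment sequence, made uniform over directions $u\in\mathbb S^p$. Fix $i$ and $u\in\mathbb S^p$. Put $V_t(u)=u^\top\psi_{it}^0/\sqrt{u^\top\Omega_{i,\tau,T}u}$.

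\emph{Centering.} Each summand in $\psi_{it}^0=\sum_j\mathcal G_{ij}^0X_{jt}^\circ\overline\xi_{jt}$ is centered. This follows from $E(\xi_{jt}\mid\mathscr X)=0$ and the $\mathscr X$-measurability of $X_{jt}^\circ$ and of the deterministic weights $a_{j\ell}^0$. Hence $E V_t(u)=0$.

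\emph{Moments.} We bound $\sup_u\{E|V_t(u)|^{4+\eta'}\}$ for some $0<\eta'\le\eta$. The plan is to combine Minkowski's inequality over $j$ with the block-row bound $\sum_j\|\mathcal G_{ij}^0\|_{\mathrm{op}}\le K_{\mathcal G}$ from Assumption \ref{assum:Gaussian}(i). For each summand $u^\top\mathcal G_{ij}^0X_{jt}^\circ\overline\xi_{jt}$ we apply H\"older. The covariate factor has all moments by sub-Gaussianity (Assumption \ref{assum:ES-score}(ii)), and $X_{jt}^\circ$ is a bounded linear transform of $(X_{jt},f_t^0)$. The error factor $\overline\xi_{jt}$ has $(4+\eta)$ moments, because $\sum_\ell|a_{j\ell}^0|/N\le C$ and the conditional moment bound on $\xi$ holds. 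Part (iii) keeps the normalization denominator bounded below, so the standardized variable inherits these bounds.

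\emph{Dependence.} Assumption \ref{assum:Gaussian}(ii) states directly that $\{\psi_{it}^0\}_t$ is strictly stationary and geometrically strong mixing. Each projection $V_t(u)$ is therefore stationary and mixing with coefficients $c_{\mathcal V}e^{-a_{\mathcal V}h}$, uniformly in $u$. By construction $\operatorname{Var}(T^{-1/2}\sum_t V_t(u))=1$.

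\emph{Berry--Esseen step.} We apply a Berry--Esseen bound for strongly mixing sequences. We would use a Tikhomirov/Sunklodas-type bound, or the Bentkus--G\"otze--Tikhomirov form, which gives rate $C\log^2 T/\sqrt T$ under geometric $\alpha$-mixing and moments above the third, with normalization by the exact finite-$T$ variance. Its constant depends only on the mixing constants and on the moment bound. Both are uniform in $u$, so the bound holds uniformly over $u\in\mathbb S^p$ and $z\in\mathbb R$. This gives \eqref{eq:Gaussian-oracle-BE}.

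\emph{Alternative route.} To make the argument self-contained, we would run Bernstein blocking. Use big blocks of length $\asymp\log T\cdot\sqrt T$-type scaling and small blocks of length $\asymp\log T$. Decouple the big blocks using the coupling lemma with error $Te^{-a\log T}$, and bound the small-block contribution through its variance. Then apply the classical Berry--Esseen theorem for independent blocks. Each block's third absolute moment is bounded via Rosenthal-type mixing inequalities. Finally, the replaced variance is compared to $1$, and the discrepancy is controlled by $\sum_h h\alpha(h)^{\eta/(4+\eta)}<\infty$.

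\emph{Main obstacle.} The hardest part is bookkeeping this blocking argument so that all losses, namely the small-block variance, the coupling error, and the variance mismatch, total at most $\log^2(2T)/\sqrt T$. The other point needing care is that the moment bound for $\psi_{it}^0$ must not grow with $N$ or $p$. That uniformity is exactly what $K_{\mathcal G}$ and the bounded-cluster structure provide.
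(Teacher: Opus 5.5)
Your main argument is essentially the paper's proof: you center via $E(\xi_{jt}\mid\mathscr X)=0$, bound the moments uniformly by Minkowski over $j$ using $\max_i\sum_j\|\mathcal G_{ij}^0\|_{\mathrm{op}}\le K_{\mathcal G}$, the bound $|a_{jk}^0|\le C$ and the (conditional) moment bound on $\overline\xi_{jt}$, normalize with Assumption~\ref{assum:Gaussian}(iii), and then invoke a Tikhomirov-type Berry--Esseen theorem for stationary geometrically $\alpha$-mixing sequences whose constants depend only on the uniform mixing, moment and variance bounds (the bounded-cluster condition is not needed for this step). Do not rely on your Bernstein-blocking fallback, though, because it cannot deliver $\log^2(2T)/\sqrt T$: with $k\approx T/b$ independent big blocks, the classical Berry--Esseen step alone costs $k^{-1/2}=\sqrt{b/T}$, which is about $T^{-1/4}$ for $b\asymp\sqrt T\log T$, and controlling the small blocks only through their variance adds a further $\sqrt{s/b}$-type loss, so that route tops out near $T^{-1/4}$ and the stated rate genuinely requires the characteristic-function (or Stein-type) argument behind the cited theorem.
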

\proof{Proof of Lemma \ref{lem:Gaussian-oracle-BE}.}
Fix $u\in\mathbb S^p$ and set 
$$
\zeta_t(u)=\frac{u^{\top}\psi_{it}^0}{\sqrt{u^{\top}\Omega_{i,\tau,T}u}}.
$$
The conditional mean restriction in Assumption \ref{assum:Gaussian}(ii) implies $E(X_{jt}^{\circ}\overline\xi_{jt})=0$ and hence $E\{\zeta_t(u)\}=0$. Also, $\max_{j,k}|a_{jk}^0|\leq C$. Conditional Minkowski's inequality therefore gives a uniform $(4+\eta)$-th moment bound for $\overline\xi_{jt}$, and the sub-Gaussian design condition together with $\max_i\sum_j\|\mathcal G_{ij}^0\|_{\mathrm{op}}\leq K_{\mathcal G}$ gives 
$$
\sup_{u\in\mathbb S^p}E|u^{\top}\psi_{it}^0|^{4+\eta}\leq C.
$$
Thus, $\zeta_t(u)$ has a uniformly bounded third absolute moment, while $\operatorname{Var}\left(T^{-1/2}\sum_{t=1}^T\zeta_t(u)\right)=1$. The Berry--Esseen inequality for stationary strongly mixing sequences with exponentially decaying coefficients gives $C\log^2(2T)/\sqrt T$. Its constants depend only on the uniform mixing, moment, and variance bounds, so the conclusion is uniform over $u\in\mathbb S^p$.
\Halmos
\endproof

\subsection{Proof of Theorem \ref{thm:selection}}
\proof{Proof of Theorem \ref{thm:selection}.}
First consider $0\leq r<r^0$. By Assumption \ref{assum:factor-selection}(i), Lemma \ref{lem:selection-criterion-gap}, and
\eqref{eq:true-rank-loss-limit}, with probability approaching one,
$$
\widehat V(r)-\widehat V(r^0)\geq\frac{\Delta_\tau}{2}
$$
simultaneously over the finitely many underfitted ranks. Since $\widehat V(r^0)\to_p\sigma^2\in(0,\infty)$, there is a constant $c_\Delta>0$ such that
$$
\log\widehat V(r)-\log\widehat V(r^0)\geq c_\Delta
$$
simultaneously over those ranks with probability approaching one. Therefore,
$$
\mathrm{IC}(r)-\mathrm{IC}(r^0)\geq c_\Delta-R_{\max}q(N,T)>0
$$
with probability approaching one, because $q(N,T)\to0$.

Now consider $r>r^0$. By \eqref{eq:true-rank-loss-limit}, $\widehat V(r^0)$ is bounded away from zero with probability approaching one. Equation \eqref{eq:overfit-gap} and $\omega_{N,T,p}\to0$ imply that $\widehat V(r)$ is also bounded away from zero. The mean-value theorem and \eqref{eq:overfit-gap} therefore give
$$
\log\widehat V(r)-\log\widehat V(r^0)\geq -O_p(\omega_{N,T,p}).
$$
It follows that 
$$
\mathrm{IC}(r)-\mathrm{IC}(r^0)\geq(r-r^0)q(N,T)-O_p(\omega_{N,T,p})\geq q(N,T)\{1-o_p(1)\}>0
$$
with probability approaching one, because $\omega_{N,T,p}/q(N,T)\to0$. Since $R_{\max}$ is fixed, a union bound over all $r\neq r^0$ proves $P(\widehat r=r^0)\to1$.
\Halmos
\endproof

\begin{lemma}
\label{lem:selection-effective-error}
Let $\widehat\gamma_i=\Sigma_i^{1/2}(\widehat\alpha_i-\alpha_i^0)$, $\widehat\Gamma=(\widehat\gamma_1,\ldots,\widehat\gamma_N)$, and define $\mathcal R(\Gamma)=\left(R_{it}(\gamma_i)\right)_{t\leq T,\,i\leq N}$, $\mathcal M(\Gamma)=\left(\mu_{it}(\gamma_i)\right)_{t\leq T,\,i\leq N}$. Then the effective error matrix $\mathcal U=\widehat{\mathcal Z}^{\ast}-\mathcal S^0$ has the exact representation
\begin{equation}
\label{eq:selection-exact-error}
\mathcal U=\tau^{-1}\left\{\Xi+\mathcal R(\widehat\Gamma)\right\}=\tau^{-1}\left\{\Xi+\mathcal R^c(\widehat\Gamma)+\mathcal M(\widehat\Gamma)\right\}.
\end{equation}
Under the assumptions of Theorem \ref{thm:selection},
\begin{equation}
\label{eq:selection-effective-orders}
\frac{\|\mathcal U\|_{\mathrm{op}}^2}{NT}=O_p(\omega_{N,T,p}),\ \frac{\|\mathcal U\|^2}{NT}=\sigma^2+o_p(1),
\end{equation}
where $\sigma^2$ is defined in Assumption \ref{assum:factor-selection}(ii).
\end{lemma}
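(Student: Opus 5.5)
The exact representation \eqref{eq:selection-exact-error} comes straight from Lemma \ref{lem:generated-response-identity}. By definition of $R_{it}$ we have $Z_{it}(\widehat\alpha_i)=Z_{it}(\alpha_i^0)+R_{it}(\widehat\gamma_i)$. The oracle ES equation gives $Z_{it}(\alpha_i^0)=\tau X_{it}^\top\beta_i^0+\tau\lambda_i^{0\top}f_t^0+\xi_{it}$. Dividing by $\tau$ and subtracting $\mathcal S^0=\mathcal X(B^0)+L^0$ yields $\mathcal U=\tau^{-1}\{\Xi+\mathcal R(\widehat\Gamma)\}$. The second equality is then just the centering split $\mathcal R=\mathcal R^c+\mathcal M$.

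For the operator-norm bound I would use the triangle inequality on the three pieces. First I place $\widehat\Gamma$ in $\mathcal G(\rho)$ with $\rho\asymp\varrho_{N,T,p}$. For every fixed $\delta$, Proposition \ref{pro:quantile} gives $\max_i\|\widehat\gamma_i\|\le C_\delta\varrho_{N,T,p}$ with probability at least $1-\delta$. Since $\varrho_{N,T,p}\to0$, this radius eventually lies below $\overline\rho$. Assumption \ref{assum:factor-spectral}(ii) and a fourth-moment Markov bound then give the following:
\begin{itemize}
\item $\|\Xi\|_{\rm op}^2/(NT)=O_p(N^{-1}+T^{-1})$;
\item $\sup_{\Gamma\in\mathcal G(\rho)}\|\mathcal R^c(\Gamma)\|_{\rm op}^2/(NT)=O_p\bigl(\varrho_{N,T,p}^2\{(p+1)/T+N^{-1}\}\bigr)$.
\end{itemize}
For the mean part, I use $\|\mathcal M\|_{\rm op}\le\|\mathcal M\|$ and the quadratic bound \eqref{eq:R-quadratic-mean}, which give
\[
\frac{\|\mathcal M(\widehat\Gamma)\|^2}{NT}\le \frac{\overline f^2}{4NT}\sum_{i,t}|W_{it}^\top\widehat\gamma_i|^4\le C\max_i\|\widehat\gamma_i\|^4\cdot\frac1{NT}\sum_{i,t}\|W_{it}\|_\ast^4 .
\]
Here the last factor should be read as uniformly bounded empirical fourth moments of sub-Gaussian projections. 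More carefully, I would bound $\max_{i}\sup_{\|u\|=1}T^{-1}\sum_t(u^\top W_{it})^4$ by a constant via a net argument and the mixing Bernstein inequality, as in \eqref{eq:ES-design-upper}. That gives $O_p(\varrho_{N,T,p}^4)$. Summing the three pieces gives $\|\mathcal U\|_{\rm op}^2/(NT)=O_p(\omega_{N,T,p})$, with $(p+1)/T\ge T^{-1}$ absorbing $T^{-1}$.

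For the Frobenius statement I expand
\[
\frac{\|\mathcal U\|^2}{NT}=\frac{\|\tau^{-1}\Xi\|^2}{NT}+\frac{2\langle\tau^{-1}\Xi,\tau^{-1}\mathcal R(\widehat\Gamma)\rangle}{NT}+\frac{\|\tau^{-1}\mathcal R(\widehat\Gamma)\|^2}{NT}.
\]
The first term tends to $\sigma^2$ by Assumption \ref{assum:factor-selection}(ii). For the last term, the Lipschitz bound \eqref{eq:R-Lipschitz} gives $|R_{it}(\widehat\gamma_i)|\le|W_{it}^\top\widehat\gamma_i|$. Hence it is at most $C\max_i\|\widehat\gamma_i\|^2\max_i\lambda_{\max}(T^{-1}W_i^\top W_i)=O_p(\varrho_{N,T,p}^2)=o_p(1)$, using the design event \eqref{eq:ES-design-upper}. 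The cross term is handled by Cauchy--Schwarz: it is at most twice the square root of the product of the first and last terms, hence $o_p(1)$.

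I expect the main obstacle to be the uniformity in the first-stage perturbation. Because $\widehat\Gamma$ depends on the data, the bounds have to hold over all $\Gamma\in\mathcal G(\rho)$. For $\mathcal R^c$ this is supplied directly by the supremum in Assumption \ref{assum:factor-spectral}(ii), and for $\mathcal M$ and $\mathcal R$ by deterministic Lipschitz or quadratic bounds on the design event. So the only genuine work is to verify the uniform empirical fourth-moment bound across $i$ and to combine the probability statements for arbitrary fixed $\delta$ into $O_p$ statements, as in the proof of Corollary \ref{cor:ES-beta-rate}.
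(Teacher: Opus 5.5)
Your proposal is correct and follows the paper's proof essentially step for step: the exact identity from Lemma~\ref{lem:generated-response-identity}, then a triangle-inequality bound on $\Xi$, $\mathcal R^c(\widehat\Gamma)$ and $\mathcal M(\widehat\Gamma)$. That bound uses Assumption~\ref{assum:factor-spectral}(ii) at a deterministic radius of order $\varrho_{N,T,p}$ (from Proposition~\ref{pro:quantile} at each fixed confidence level), the quadratic bound \eqref{eq:R-quadratic-mean} with a uniform empirical fourth-moment bound, and then the Frobenius limit via \eqref{eq:R-Lipschitz}, the design bound \eqref{eq:ES-design-upper}, and Cauchy--Schwarz on the cross term. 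The only difference is that you spell out slightly more explicitly how the uniform fourth-moment bound and the passage from fixed-$\delta$ events to $O_p$ statements would be verified, which the paper asserts tersely.
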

\proof{Proof of Lemma \ref{lem:selection-effective-error}.}
For any fixed $\delta\in(0,1)$, Proposition \ref{pro:quantile} gives
\begin{equation}
\label{eq:selection-QR-rate}
\max_{i\leq N}\|\widehat\gamma_i\|=O_p(\varrho_{N,T,p}),
\end{equation}
because $p+2+\log(2N/\delta)\asymp p+1+\log(2N)$. Since $\varrho_{N,T,p}\to0$,
\begin{equation}
\label{eq:selection-local-event}
P\left\{\widehat\Gamma\in\mathcal G(\overline\rho)\right\}\rightarrow1.
\end{equation}
More precisely, for every $\varepsilon>0$ there is a fixed $M_\varepsilon<\infty$ such that $P\{\widehat\Gamma\in\mathcal G(M_\varepsilon\varrho_{N,T,p})\}\geq1-\varepsilon$ for all sufficiently large $N,T$, and $M_\varepsilon\varrho_{N,T,p}\leq\overline\rho$ eventually. Applying Assumption \ref{assum:factor-spectral}(ii) with the deterministic radius $M_\varepsilon\varrho_{N,T,p}$ and then letting $\varepsilon$ decrease gives
\begin{align}
\frac{\|\Xi\|_{\mathrm{op}}^2}{NT}&=O_p\left(\frac1N+\frac1T\right),\label{eq:selection-Xi-operator}\\
\frac{\|\mathcal R^c(\widehat\Gamma)\|_{\mathrm{op}}^2}{NT}&=O_p\left[\varrho_{N,T,p}^2\left(\frac{p+1}{T}+\frac1N\right)\right].
\label{eq:selection-Rc-operator}
\end{align}
Moreover, Lemma \ref{lem:generated-response-identity} gives $|\mu_{it}(\gamma)|\leq\frac{\overline f}{2}|W_{it}^{\top}\gamma|^2$. The uniform empirical fourth-moment bound for $W_{it}$ therefore yields
\begin{equation}
\label{eq:selection-M-operator}
\frac{\|\mathcal M(\widehat\Gamma)\|_{\mathrm{op}}^2}{NT}\leq\frac{\|\mathcal M(\widehat\Gamma)\|^2}{NT}=O_p(\varrho_{N,T,p}^4).
\end{equation}
The first conclusion in \eqref{eq:selection-effective-orders} follows from \eqref{eq:selection-Xi-operator}--\eqref{eq:selection-M-operator} and the definition of $\omega_{N,T,p}$.

For the Frobenius norm, the inequality $|R_{it}(\gamma)|\leq|W_{it}^{\top}\gamma|$ gives
$$
\frac{\|\mathcal R(\widehat\Gamma)\|^2}{NT}\leq\left(\max_{i\leq N}\|\widehat\gamma_i\|^2\right)\left\{\max_{i\leq N}\lambda_{\max}\left(\frac1T W_i^\top W_i\right)\right\}=O_p(\varrho_{N,T,p}^2)=o_p(1).
$$
Assumption \ref{assum:factor-selection}(ii) implies $\|\tau^{-1}\Xi\|/\sqrt{NT}=O_p(1)$. Hence the Cauchy--Schwarz
inequality gives
$$
\frac{2|\langle\tau^{-1}\Xi,\tau^{-1}\mathcal R(\widehat\Gamma)\rangle|}{NT}=o_p(1),
$$
and therefore
$$
\frac{\|\mathcal U\|^2}{NT}=\frac{\|\tau^{-1}\Xi\|^2}{NT}+o_p(1)=\sigma^2+o_p(1).
$$
This proves the lemma.
\Halmos
\endproof
\begin{lemma}
\label{lem:selection-cross-term}
Let $\mathcal U_{i}$ denote the $i$th column of $\mathcal U$ and define
\begin{equation}
\label{eq:selection-score-size}
\mathfrak s_{NT}=\left\{\frac1N\sum_{i=1}^N\left\|\frac1T X_i^{\top}\mathcal U_{i}\right\|_{\Sigma_i^{-1}}^2
\right\}^{1/2}+\frac{\|\mathcal U\|_{\mathrm{op}}}{\sqrt{NT}}.
\end{equation}
Then $\mathfrak s_{NT}=O_p(\sqrt{\omega_{N,T,p}})$. Moreover, on events whose probabilities approach one,
\begin{equation}
\label{eq:selection-cross-bound}
\frac{|\langle\mathcal U,D\rangle|}{NT}\leq C\mathfrak s_{NT}\frac{\|D\|}{\sqrt{NT}}
\end{equation}
for every matrix of the form $D=\mathcal X(C)+L$, $L\in\mathbb L_{2R_{\max}}$.
\end{lemma}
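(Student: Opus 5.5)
The proof has two parts: bounding $\mathfrak s_{NT}$, then deriving the cross-term inequality \eqref{eq:selection-cross-bound} by a stable-decomposition argument. For the first part I would write $\mathcal U=\tau^{-1}\{\Xi+\mathcal R^c(\widehat\Gamma)+\mathcal M(\widehat\Gamma)\}$ using \eqref{eq:selection-exact-error}. The operator-norm piece of $\mathfrak s_{NT}$ is already handled by Lemma \ref{lem:selection-effective-error}: $\|\mathcal U\|_{\mathrm{op}}/\sqrt{NT}=O_p(\sqrt{\omega_{N,T,p}})$.

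For the row-score piece, note that $\|T^{-1}X_i^\top\mathcal U_i\|_{\Sigma_i^{-1}}=\|T^{-1}W_i^\top\mathcal U_i\|$, and treat the three components of $\mathcal U_i$ separately.
\begin{enumerate}
\item For the oracle part $\Xi$, use the unprojected analogue of \eqref{eq:oracle-row-score-L4}, which follows from the same mixing moment inequality \eqref{eq:mixing-fourth-moment}. This gives $\sup_j[E\|T^{-1}W_j^\top\xi_j\|^4]^{1/4}\le C\sqrt{(p+1)/T}$.
\item For the centered generated part, use the chaining bound \eqref{eq:generated-row-score-L4} without the projection. This gives order $\rho\sqrt{(p+1)/T}$ on $\mathcal G(\rho)$.
\item For the mean part, use \eqref{eq:conditional-mean-row}, which gives order $\overline f\rho^2$.
\end{enumerate}
To get these bounds at the random radius, work on the event $\widehat\Gamma\in\mathcal G(M_\varepsilon\varrho_{N,T,p})$, as in the proof of Lemma \ref{lem:selection-effective-error}. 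Then average over $i$ with the Minkowski device from Lemma \ref{lem:ES-score-control}, which needs no cross-sectional independence, and apply Markov's inequality. The result is
\[
\Bigl\{N^{-1}\sum_i\|T^{-1}W_i^\top\mathcal U_i\|^2\Bigr\}^{1/2}=O_p\Bigl(\sqrt{(p+1)/T}+\varrho_{N,T,p}\sqrt{(p+1)/T}+\varrho_{N,T,p}^2\Bigr),
\]
and the right-hand side is $O_p(\sqrt{\omega_{N,T,p}})$ by the definition of $\omega_{N,T,p}$.

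For the cross-term inequality, take $D=\mathcal X(C)+L$ with $L\in\mathbb L_{2R_{\max}}$ and apply Assumption \ref{assum:factor-selection}(iii). On an event of probability tending to one this yields $\widetilde C$ and $\widetilde L\in\mathbb L_{2R_{\max}}$ with $D=\mathcal X(\widetilde C)+\widetilde L$ and $N^{-1}\sum_i\|\widetilde c_i\|_{\Sigma_i}^2+(NT)^{-1}\|\widetilde L\|^2\le K_{\mathrm{dec}}\|D\|^2/(NT)$. Then split the inner product as
\[
\langle\mathcal U,D\rangle=\sum_i\widetilde c_i^\top X_i^\top\mathcal U_i+\langle\mathcal U,\widetilde L\rangle .
\]
In the first sum, write $\widetilde c_i^\top X_i^\top\mathcal U_i=(\Sigma_i^{1/2}\widetilde c_i)^\top(\Sigma_i^{-1/2}X_i^\top\mathcal U_i)$ and apply Cauchy--Schwarz over $i$. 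After dividing by $NT$, this is bounded by $\{N^{-1}\sum_i\|\widetilde c_i\|_{\Sigma_i}^2\}^{1/2}\{N^{-1}\sum_i\|T^{-1}X_i^\top\mathcal U_i\|_{\Sigma_i^{-1}}^2\}^{1/2}$. For the second term, use trace duality and the rank bound: $|\langle\mathcal U,\widetilde L\rangle|\le\|\mathcal U\|_{\mathrm{op}}\|\widetilde L\|_\ast\le\sqrt{2R_{\max}}\|\mathcal U\|_{\mathrm{op}}\|\widetilde L\|$. Each resulting product is at most $\mathfrak s_{NT}\sqrt{K_{\mathrm{dec}}}\|D\|/\sqrt{NT}$, which gives \eqref{eq:selection-cross-bound} with $C=(1+\sqrt{2R_{\max}})\sqrt{K_{\mathrm{dec}}}$. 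This mirrors \eqref{eq:joint-global-inner-product}.

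The main obstacle is the rate bound in the first part. It requires uniform-in-$\gamma$ control of the row scores at a data-dependent radius that shrinks only like $\varrho_{N,T,p}$. I would handle this with the deterministic-radius reduction already used in Lemma \ref{lem:selection-effective-error}: apply the fixed-$\rho$ moment bounds on $\mathcal G(M_\varepsilon\varrho_{N,T,p})$ and then let $\varepsilon\downarrow0$. The event from Assumption \ref{assum:factor-selection}(iii) is independent of $D$, so \eqref{eq:selection-cross-bound} holds for all $D$ simultaneously on that single event.
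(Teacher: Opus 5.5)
Your proposal is correct and follows essentially the same route as the paper's proof. The row-score part of $\mathfrak s_{NT}$ comes from the unprojected versions of the oracle, centered generated-response, and conditional-mean bounds of Lemma~\ref{lem:ES-score-control}, and the operator-norm part from Lemma~\ref{lem:selection-effective-error}; the cross-term bound then uses the stable decomposition of Assumption~\ref{assum:factor-selection}(iii), Cauchy--Schwarz for $\mathcal X(\widetilde C)$, and operator--nuclear duality with $\|\widetilde L\|_\ast\le\sqrt{2R_{\max}}\|\widetilde L\|$ (your handling of the random radius via $\mathcal G(M_\varepsilon\varrho_{N,T,p})$ and of the $D$-independent event just spells out what the paper leaves implicit).
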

\proof{Proof of Lemma \ref{lem:selection-cross-term}.}
The argument used in the proof of Lemma \ref{lem:ES-score-control}, applied before projection by $M_{F^0}$, gives
\begin{align}
\left\{\frac1N\sum_{i=1}^N\left\|\frac1T X_i^\top\xi_i\right\|_{\Sigma_i^{-1}}^2\right\}^{1/2}&=O_p\left(\sqrt{\frac{p+1}{T}}\right),
\label{eq:selection-oracle-row}\\
\left\{\frac1N\sum_{i=1}^N\left\|\frac1T X_i^\top\mathcal R_i^c(\widehat\gamma_i)\right\|_{\Sigma_i^{-1}}^2\right\}^{1/2}
&=O_p\left(\varrho_{N,T,p}\sqrt{\frac{p+1}{T}}\right),
\label{eq:selection-generated-row}\\
\left\{\frac1N\sum_{i=1}^N\left\|\frac1T X_i^\top\mathcal M_i(\widehat\gamma_i)\right\|_{\Sigma_i^{-1}}^2\right\}^{1/2}
&=O_p(\varrho_{N,T,p}^2).
\label{eq:selection-mean-row}
\end{align}
Equations \eqref{eq:selection-oracle-row}--\eqref{eq:selection-mean-row} and Lemma \ref{lem:selection-effective-error} imply
$$
\mathfrak s_{NT}=O_p\left[\sqrt{\frac{p+1}{T}}+\frac1{\sqrt N}+\varrho_{N,T,p}\left\{\sqrt{\frac{p+1}{T}}+\frac1{\sqrt N}
\right\}+\varrho_{N,T,p}^2\right]=O_p(\sqrt{\omega_{N,T,p}}).
$$

Now fix $D=\mathcal X(C)+L$, $L\in\mathbb L_{2R_{\max}}$. By Assumption \ref{assum:factor-selection}(iii), choose a representation $D=\mathcal X(\widetilde C)+\widetilde L$ such that $\widetilde L\in\mathbb L_{2R_{\max}}$ and
\begin{equation}
\frac1N\sum_{i=1}^N\|\widetilde c_i\|_{\Sigma_i}^2+\frac{\|\widetilde L\|^2}{NT}\leq K_{\mathrm{dec}}\frac{\|D\|^2}{NT}.
\label{eq:selection-stable-representation}
\end{equation}
For the regression component, the Cauchy--Schwarz inequality gives
\begin{align*}
\frac{|\langle\mathcal U,\mathcal X(\widetilde C)\rangle|}{NT}&\leq\left\{\frac1N\sum_{i=1}^N\|\widetilde c_i\|_{\Sigma_i}^2\right\}^{1/2}\left\{\frac1N\sum_{i=1}^N\left\|\frac1T X_i^\top\mathcal U_i\right\|_{\Sigma_i^{-1}}^2\right\}^{1/2}.
\end{align*}
For the low-rank component,
\begin{align*}
\frac{|\langle\mathcal U,\widetilde L\rangle|}{NT}&\leq\frac{\|\mathcal U\|_{\mathrm{op}}\|\widetilde L\|_\ast}{NT}\leq\sqrt{2R_{\max}}\frac{\|\mathcal U\|_{\mathrm{op}}}{\sqrt{NT}}\frac{\|\widetilde L\|}{\sqrt{NT}}.
\end{align*}
Combining these two inequalities with \eqref{eq:selection-score-size} and \eqref{eq:selection-stable-representation} proves
\eqref{eq:selection-cross-bound}.
\Halmos
\endproof
\begin{lemma}\label{lem:selection-criterion-gap}
Under the assumptions of Theorem \ref{thm:selection},
\begin{align}
\widehat V(r)-\widehat V(r^0)&=\Delta_{NT}(r)+o_p(1),&&0\leq r<r^0,\label{eq:underfit-gap}\\
0\leq\widehat V(r^0)-\widehat V(r)&=O_p(\omega_{N,T,p}),&&r^0<r\leq R_{\max},\label{eq:overfit-gap}\\
\widehat V(r^0)&\xrightarrow{p}\sigma^2.\label{eq:true-rank-loss-limit}
\end{align}
\end{lemma}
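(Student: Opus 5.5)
The three statements share a single ingredient: the exact decomposition $\widehat{\mathcal Z}^\ast=\mathcal S^0+\mathcal U$ from Lemma \ref{lem:selection-effective-error}. Every rank-$r$ residual criterion is then a least-squares projection of $\mathcal S^0+\mathcal U$ onto the closed set $\mathscr S_r=\{\mathcal X(B)+L:L\in\mathbb L_r\}$; closedness follows as in the proof of Lemma \ref{lem:joint-global-perturbation}, using Assumption \ref{assum:factor-selection}(iii). Throughout I would work on the intersection of the events of Lemmas \ref{lem:selection-effective-error} and \ref{lem:selection-cross-term}, whose probability tends to one.

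The first step is \eqref{eq:true-rank-loss-limit}. Since $\mathcal S^0\in\mathscr S_{r^0}$, feasibility gives $\widehat V(r^0)\le\|\mathcal U\|^2/(NT)$. Let $\widehat D=\mathcal S^0-\widehat S$, where $\widehat S$ is the rank-$r^0$ fit. Then
\[
\widehat V(r^0)=\frac{\|\mathcal U+\widehat D\|^2}{NT}\ge\frac{\|\mathcal U\|^2}{NT}+\frac{\|\widehat D\|^2}{NT}-\frac{2|\langle\mathcal U,\widehat D\rangle|}{NT}.
\]
The matrix $\widehat D$ has the form $\mathcal X(C)+L$ with $L\in\mathbb L_{2r^0}$, so \eqref{eq:selection-cross-bound} bounds the cross term by $C\mathfrak s_{NT}\|\widehat D\|/\sqrt{NT}$. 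Minimizing $x^2-2C\mathfrak s_{NT}x$ over $x\ge 0$ gives $\widehat V(r^0)\ge\|\mathcal U\|^2/(NT)-C\mathfrak s_{NT}^2$. Hence $\widehat V(r^0)=\|\mathcal U\|^2/(NT)+O_p(\omega_{N,T,p})$, and the limit $\sigma^2$ follows from \eqref{eq:selection-effective-orders}.

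The overfitting bound \eqref{eq:overfit-gap} uses the same argument with rank $r\le R_{\max}$. Nonnegativity holds because $\mathscr S_{r^0}\subseteq\mathscr S_r$. The difference $\widehat D_r=\mathcal S^0-\widehat S_r$ again has the form $\mathcal X(C)+L$ with $L\in\mathbb L_{r+r^0}\subseteq\mathbb L_{2R_{\max}}$. This is exactly where the $2R_{\max}$ in Assumption \ref{assum:factor-selection}(iii) is needed. The same completion of the square gives $\widehat V(r)\ge\|\mathcal U\|^2/(NT)-C\mathfrak s_{NT}^2$. Combining this with the upper bound $\widehat V(r^0)\le\|\mathcal U\|^2/(NT)$ yields $\widehat V(r^0)-\widehat V(r)\le C\mathfrak s_{NT}^2=O_p(\omega_{N,T,p})$ by Lemma \ref{lem:selection-cross-term}.

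The underfitting statement \eqref{eq:underfit-gap} is the main obstacle. For $r<r^0$, the signal is no longer in the fitted class, so the minimizer need not lie near $\mathcal S^0$. For the lower bound, write any candidate as $\mathcal S^0-D$ with $D=\mathcal S^0-\mathcal X(B)-L$. Then $D\in\{\mathcal X(C)+L':L'\in\mathbb L_{r+r^0}\}$, and by definition $\|D\|^2/(NT)\ge\Delta_{NT}(r)$. Expanding,
\[
\frac{\|\mathcal U+D\|^2}{NT}\ge\frac{\|\mathcal U\|^2}{NT}+x^2-2C\mathfrak s_{NT}x,\qquad x=\frac{\|D\|}{\sqrt{NT}}\ge\sqrt{\Delta_{NT}(r)}.
\]
The right-hand side is at least $\|\mathcal U\|^2/(NT)+\Delta_{NT}(r)-2C\mathfrak s_{NT}\sqrt{\Delta_{NT}(r)}-C\mathfrak s_{NT}^2$. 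This step uses the fact that $\Delta_{NT}(r)$ is bounded, since it is at most $\|L^0\|^2/(NT)\le C$, together with $\mathfrak s_{NT}=o_p(1)$.

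For the upper bound, I would take $(B^\dagger,L^\dagger)$ attaining $\Delta_{NT}(r)$, which exists by closedness, and evaluate the criterion there. This gives $\widehat V(r)\le\|\mathcal U+D^\dagger\|^2/(NT)$, where $\|D^\dagger\|^2/(NT)=\Delta_{NT}(r)$. The cross term is $o_p(1)$ by Lemma \ref{lem:selection-cross-term}. A subtle point is whether $D^\dagger$ has the required rank structure; it does, because $\operatorname{rank}(L^0-L^\dagger)\le 2R_{\max}$. Subtracting the expansion for $\widehat V(r^0)$ then gives \eqref{eq:underfit-gap}.
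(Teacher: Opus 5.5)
Your argument is correct and is essentially the paper's: it uses the same decomposition $\widehat{\mathcal Z}^\ast=\mathcal S^0+\mathcal U$, the same feasibility of $\mathcal S^0$ for $r\ge r^0$, and the same uniform cross-term bound of Lemma~\ref{lem:selection-cross-term} on $\{\mathcal X(C)+L:L\in\mathbb L_{2R_{\max}}\}$; your completion of the square replaces the paper's restriction of the infimum to $\nu(B,L)\le M$ and its $\varphi_{NT}$-minimizer bookkeeping, and it even gives the slightly sharper $\widehat V(r^0)=\|\mathcal U\|^2/(NT)+O_p(\omega_{N,T,p})$. The one claim to drop is that $\mathscr S_r$ is closed, and with it the existence of the exact minimizers $\widehat S$, $\widehat S_r$ and $(B^\dagger,L^\dagger)$. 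Assumption~\ref{assum:factor-selection}(iii) only re-represents $\mathcal X(C)+L$ with $\widetilde L\in\mathbb L_{2R_{\max}}$, so limits land in $\mathscr S_{2R_{\max}}$ rather than $\mathscr S_r$, and the rank-preserving Assumption~\ref{assum:local-identification} is not a hypothesis of Theorem~\ref{thm:selection}. You never need a minimizer, however: your lower bounds hold uniformly over all candidates, so you can take the infimum directly, and the underfitting upper bound works verbatim with an $\epsilon$-minimizer of $\Delta_{NT}(r)$, as in the paper.
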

\proof{Proof of Lemma \ref{lem:selection-criterion-gap}.}
For any $B$ and $L\in\mathbb L_r$, define the fitted-value difference $D(B,L)=\mathcal S^0-\mathcal X(B)-L$. Since $D(B,L)=\mathcal X(B^0-B)+(L^0-L)$ and $\operatorname{rank}(L^0-L)\leq r^0+r\leq2R_{\max}$, Lemma \ref{lem:selection-cross-term} applies. Using $\widehat{\mathcal Z}^{\ast}=\mathcal S^0+\mathcal U$, expand
\begin{equation}
\label{eq:selection-loss-expansion}
\frac1{NT}\|\mathcal U+D(B,L)\|^2=\frac1{NT}\|\mathcal U\|^2+\frac1{NT}\|D(B,L)\|^2+\frac{2}{NT}\langle\mathcal U,D(B,L)\rangle.
\end{equation}

Let $\nu(B,L)=(NT)^{-1/2}{\|D(B,L)\|}$. By \eqref{eq:selection-cross-bound}, uniformly over all candidate pairs,
\begin{align}
\frac1{NT}\|\mathcal U+D(B,L)\|^2&\geq\frac1{NT}\|\mathcal U\|^2+\nu(B,L)^2-2C\mathfrak s_{NT}\nu(B,L).
\label{eq:selection-loss-lower-bound}
\end{align}
Because $\mathfrak s_{NT}=o_p(1)$ and a candidate with bounded $\nu(B,L)$ is available for each $0\leq r\leq r^0$, the infimum may be restricted, with probability approaching one, to $\nu(B,L)\leq M$ for a sufficiently large fixed $M$. Indeed, the choice $B=B^0$ and $L=0$ is feasible for every such $r$, and Assumption \ref{assum:factor-spectral}(i) gives
$$
\frac{\|D(B^0,0)\|^2}{NT}=\frac{\|F^0\Lambda^{0\top}\|^2}{NT}=\frac{\operatorname{tr}(\Lambda^{0\top}\Lambda^0)}{N}\leq r^0C_\lambda.
$$
Thus its loss is bounded in probability by Lemma \ref{lem:selection-effective-error}. In contrast, the right-hand side of \eqref{eq:selection-loss-lower-bound} diverges quadratically in $\nu(B,L)$, whereas the value of the fixed candidate is
bounded in probability. On this bounded set, \eqref{eq:selection-cross-bound} gives a uniform $o_p(1)$ cross term.
Using arbitrarily accurate approximate minimizers in the definition of $\Delta_{NT}(r)$ for the upper bound, and the same uniform estimate for the lower bound, consequently gives, for every $0\leq r\leq r^0$,
\begin{equation}
\widehat V(r)=\frac1{NT}\|\mathcal U\|^2+\Delta_{NT}(r)+o_p(1).
\label{eq:selection-underfit-profile}
\end{equation}
Lemma \ref{lem:selection-effective-error} and $\Delta_{NT}(r^0)=0$ therefore give $\widehat V(r^0)\xrightarrow{p}\sigma^2$ and, for every $r<r^0$, $\widehat V(r)-\widehat V(r^0)=\Delta_{NT}(r)+o_p(1)$.

It remains to consider $r>r^0$.  Let $\varphi_{NT}=\omega_{N,T,p}^2$ and choose an $\varphi_{NT}$-minimizer $(B_{\varphi}(r),L_{\varphi}(r))$ of the rank-$r$ criterion. Set $H_{\varphi}(r)=\mathcal X\{B_{\varphi}(r)-B^0\}+\{L_{\varphi}(r)-L^0\}$. Because $\mathcal S^0$ is feasible whenever $r\geq r^0$,
\begin{equation}
\frac1{NT}\|\mathcal U-H_{\varphi}(r)\|^2\leq\frac1{NT}\|\mathcal U\|^2+\varphi_{NT}.
\label{eq:selection-approx-basic}
\end{equation}
Moreover, $\operatorname{rank}\{L_{\varphi}(r)-L^0\}\leq r+r^0\leq2R_{\max}$, so Lemma \ref{lem:selection-cross-term} and
\eqref{eq:selection-approx-basic} imply, with $h_{\varphi}=\|H_{\varphi}(r)\|/\sqrt{NT}$, $h_{\varphi}^2\leq2C\mathfrak s_{NT}h_{\varphi}+\varphi_{NT}$. Hence $h_{\varphi}\leq2C\mathfrak s_{NT}+\sqrt{\varphi_{NT}}=O_p(\sqrt{\omega_{N,T,p}})$.  By the definition of an $\varphi_{NT}$-minimizer,
\begin{align*}
0&\leq\widehat V(r^0)-\widehat V(r)\leq\frac1{NT}\left\{\|\mathcal U\|^2-\|\mathcal U-H_{\varphi}(r)\|^2\right\}+\varphi_{NT}\leq 2C\mathfrak s_{NT}h_{\varphi}+\varphi_{NT}=O_p(\omega_{N,T,p}).
\end{align*}
This proves the lemma.
\Halmos
\endproof

\section{Asymptotic Results}\label{sec:Additional Results}
In this section, we investigate the asymptotic properties of the proposed estimators by establishing their consistency, convergence rates, and asymptotic normality under a set of regularity conditions.

\subsection{Limiting Theory}\label{subsec:Limiting Theory}
To analyze the asymptotic properties of the proposed estimators, we introduce the following concentrated objective function:
\begin{equation}\nonumber
    \begin{aligned}
        S_{NT}(B, F) =& \frac{1}{NT} \sum_{i=1}^{N} \left[Z_i^\ast(\widehat{\alpha}_{i})- X_i \beta_{i}\right]^\top M_{F}\left[Z_i^\ast(\widehat{\alpha}_{i})- X_i \beta_{i}\right] \\
        &-\frac{1}{NT} \sum_{i=1}^{N}\left[Z_i^\ast(\widehat{\alpha}_{i})- Z_i^\ast(\alpha^0_{i})\right]^\top M_{F^0}\left[Z_i^\ast(\widehat{\alpha}_{i})- Z_i^\ast(\alpha^0_{i})\right]\\
        &-\frac{2}{NT} \sum_{i=1}^{N}\left[Z_i^\ast(\widehat{\alpha}_{i})- Z_i^\ast(\alpha^0_{i})\right]^\top M_{F^0}\varepsilon_{i}- \frac{1}{NT} \sum_{i=1}^{N} \varepsilon_{i}^\top M_{F^0} \varepsilon_{i}.
    \end{aligned}
\end{equation}
The first term of function $S_{NT}(B, F)$,
$$
\frac{1}{NT} \sum_{i=1}^{N} \left[Z_i^\ast(\widehat{\alpha}_{i})- X_i \beta_{i}\right]^\top M_{F}\left[Z_i^\ast(\widehat{\alpha}_{i})- X_i \beta_{i}\right],
$$
represents the sample analogue of the expected squared error after concentrating out the factor loadings $\Lambda$, where $M_{F} = I_T - P_{F}$ is the projection matrix that removes the factor space. If the true quantile coefficients $\alpha_{i}^0$ were known, this term alone would serve as the natural criterion for the joint estimation of $(B, F)$.

However, since $\alpha_{i}^0$ is unknown and replaced by the first-stage estimator $\widehat{\alpha}_{i}$, the remaining three terms are introduced as centering corrections. Importantly, these terms do not depend on $B$ or $F$ and therefore do not affect the minimization. Their role is to ensure that $S_{NT}(B, F)$ is asymptotically equivalent to an infeasible objective function constructed with the true $\alpha_{i}^0$, which facilitates the derivation of the asymptotic properties of the estimators. The estimators of $B^0$ and $ F^0 $ are then defined as
$$
(\widehat{B}, \widehat{F}) =\operatorname*{arg\,min}_{B, F} S_{NT}(B, F).
$$

The following proposition establishes the average consistency of $\widehat{\gamma}_{i}=(\widehat{\beta}_{i}^\top,\widehat{\lambda}_{i}^\top)^\top$ and $\widehat{f}_{t}$. In factor models, the parameters $(f_{t}^0, \lambda_{i}^0)$ are only identifiable up to an orthogonal rotation. Consequently, $\widehat{f}_{t}$ and $\widehat{\lambda}_{i}$ estimate some rotated version of the true factors and loadings. For notational convenience, we omit the rotation matrix throughout. Since $E[Z_{it}^\ast(\alpha_{i}^0) \mid X_{it}] = X_{it}^\top \beta_{i}^0 + \lambda_{i}^{0\top} f_{t}^0$, we define the regression error as $\varepsilon_{it} = Z_{it}^\ast(\alpha_{i}^0) - X_{it}^\top \beta_{i}^0 - \lambda_{i}^{0\top} f_{t}^0$. We then impose the following conditions on the error process $\{\varepsilon_{it}\}$.
\begin{assumption}\label{assum:error}
For a positive constant $M$,
\begin{enumerate}[label=(\roman*)]
\item $ E(\varepsilon_{it}) = 0 $ and $ E |\varepsilon_{it}|^4 \leq M $.
    \item $ E(\varepsilon_{it} \varepsilon_{js}) = \sigma_{ij,ts}$, $|\sigma_{ij,ts}| \leq \delta_{ij} $ for all $ (t, s) $ and $ |\sigma_{ij,ts}| \leq \eta_{ts} $ for all $ (i, j) $ such that
$$
\frac{1}{N} \sum_{i=1}^N\sum_{j=1}^{N} \delta_{ij} \leq M, \quad \frac{1}{T} \sum_{t=1}^T\sum_{s=1}^{T} \eta_{ts} \leq M, \quad \frac{1}{NT} \sum_{i=1}^N\sum_{j=1}^{N}\sum_{t=1}^T\sum_{s=1}^T |\sigma_{ij,ts}| \leq M.
$$
The largest eigenvalue of $  E(\varepsilon_{i}\varepsilon_{i}^\top) $ is uniformly  bounded over $ i $ and $ T $, where $\varepsilon_{i}=(\varepsilon_{i1},\ldots,\varepsilon_{iT})^\top$.
\item For every $ (t, s) $, $ E\left|N^{-1/2} \sum_{i=1}^{N} [\varepsilon_{is} \varepsilon_{it} - E(\varepsilon_{is} \varepsilon_{it})]\right|^4 \leq M $.
\item Moreover,
$$
T^{-2} N^{-1} \sum_{t=1}^T\sum_{s=1}^T\sum_{u=1}^T\sum_{v=1}^T \sum_{i=1}^N\sum_{j=1}^N \left|\operatorname{Cov}(\varepsilon_{it} \varepsilon_{is}, \varepsilon_{ju} \varepsilon_{jv})\right| \leq M,
$$
$$
T^{-1} N^{-2} \sum_{t=1}^T\sum_{s=1}^T \sum_{i=1}^N\sum_{j=1}^N\sum_{k=1}^N\sum_{\ell=1}^N \left|\operatorname{Cov}(\varepsilon_{it} \varepsilon_{jt}, \varepsilon_{ks} \varepsilon_{\ell s})\right| \leq M.
$$
\end{enumerate}
\end{assumption}
Assumption \ref{assum:error} specifies conditions on the error process $\{\varepsilon_{it}\}$. Parts (i)-(ii) are standard in high-dimensional panel literature, allowing for weak serial and cross-sectional correlation while ensuring that the covariance structure remains well-behaved. Parts (iii) and (iv) impose higher-order moment restrictions that are consistent with those in \citeEC{bai2009panel}, which are needed to control the complex dependence arising in our two-stage estimators. Specifically, part (iii) facilitates a central limit theorem for the interaction between errors, while part (iv) ensures the stability of the fourth-moment matrices, which appear in the asymptotic variance of the ES estimators.
\begin{assumption}\label{assum:weak dependence} For a positive constant $M$,
\begin{enumerate}[label=(\roman*)]
    \item $\displaystyle E\left\|\frac{1}{\sqrt{N}}\sum_{i=1}^N\lambda_{i}^{0}\varepsilon_{it}\right\|^2\leq M$, $t\in[T]$, $\displaystyle E\left\|\frac{1}{\sqrt{T}}\sum_{t=1}^{T}f_{t}^{0}\varepsilon_{it}\right\|^2\leq M$, $i\in[N]$.
    \item $\displaystyle E\left\|\frac{1}{\sqrt{NT}}\sum_{i=1}^N\sum_{t=1}^{T}\lambda_{i}^{0}f_{t}^{0\top}\varepsilon_{it}\right\|^2\leq M$.
\end{enumerate}
\end{assumption}
Assumption \ref{assum:weak dependence} controls the interaction between the latent factor structure and the error term, ensuring that factors, loadings, and errors are not excessively correlated in aggregate. 
These conditions prevent the factor structure from being contaminated by the idiosyncratic errors and guarantee that PCA can consistently recover the true factor space.
\begin{assumption}
\label{assum:identification}
\begin{enumerate}[label=(\roman*)]
\item There exist constants $0<C_1\leq C_2<\infty$, independent of $i$, $N$, and $T$, such that, for every $i\in[N]$,
$$
C_1\leq\lambda_{\min}\left[\frac{1}{T}(X_i,F^0)^\top(X_i,F^0)\right]\leq\lambda_{\max}\left[\frac{1}{T}(X_i,F^0)^\top(X_i,F^0)\right]\leq C_2
$$
almost surely for all sufficiently large $T$.

\item Let $
\mathcal{F} = \{ F : F^\top F / T = I_{r^0} \}
$. For each $F\in\mathcal{F}$, define $ A_{i}(F) = T^{-1}X_i^\top M_{F}X_i $, $ B_{i}(F) = (\lambda_{i}^0\lambda_{i}^{0\top}) \otimes I_T $ and $ C_{i}(F) = T^{-1/2}\lambda_{i}^{0\top}\otimes(X_i^\top M_{F}^\prime) $, where $M_{F}^\prime=-M_F$ and $M_{F}$ denotes the projection matrix onto the orthogonal complement of the column space of $F$. Furthermore, define $E_{i}(F)=B_{i}-C_{i}(F)^\top A_{i}(F)^{-1}C_{i}(F)$. For the fixed quantile level $\tau$, assume that
$$
\inf_{F\in\mathcal{F}}\lambda_{\min}\left[\frac{1}{N}\sum_{i=1}^N E_{i}(F)\right]>0.
$$
\end{enumerate}
\end{assumption}
Assumption \ref{assum:identification} collects the identification conditions required for the asymptotic analysis. Part (i) requires the joint design matrix formed by the regressors and latent factors to be uniformly well conditioned, thereby ruling out perfect multicollinearity and degeneracy. Part (ii) imposes a generalized identification condition in the presence of interactive fixed effects. In particular, the uniform positive definiteness of the corresponding Schur complement ensures that sufficient variation remains to identify the parameters of interest after accounting for the latent factor structure. This condition extends the identification framework of \citeEC{bai2009panel} to the expected shortfall setting and is analogous to conditions employed by \citeEC{song2013asymptotic}, \citeEC{ando2015asset}, and \citeEC{ando2020quantile}.

\begin{proposition}\label{pro:average consistency}
Under Assumptions \ref{assum:Regularity}--\ref{assum:local-identification}, \ref{assum:error}--\ref{assum:identification} with the number of factors given and fixed, as $ N, T \rightarrow \infty $, the following statements hold:
\begin{enumerate}[label=(\roman*)]
    \item Let ${\gamma}_{i}^0=({\beta}_{i}^{0\top},{\lambda}_{i}^{0\top})^\top$. We have 
  $$
    \frac{1}{N}\sum_{i=1}^N\left\|\widehat{\gamma}_{i}-{\gamma}_{i}^0\right\|^2=o_p(1).
    $$
    
    \item The matrix $ F^{0\top}\widehat{F}/T $ is invertible and 
  $$
    \left\|P_{\widehat{F}} - P_{F^0}\right\|\overset{p}{\longrightarrow} 0,\quad\frac{1}{T}\left\|\widehat{F}-F^0\right\|^2=\frac{1}{T}\sum_{t=1}^T\left\|\widehat{f}_{t}-f_{t}^0\right\|^2=o_p(1).
    $$
    
\end{enumerate}
\end{proposition}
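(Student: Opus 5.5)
The plan is to follow the Bai (2009) consistency argument applied to the concentrated objective $S_{NT}(B,F)$, using Proposition \ref{pro:quantile} and Lemma \ref{lem:generated-response-identity} to control the generated response.

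\textbf{Step 1: decompose the centered objective.} Write
\[
Z_i^\ast(\widehat\alpha_i)=X_i\beta_i^0+F^0\lambda_i^0+\varepsilon_i+\tau^{-1}R_i(\widehat\gamma_i),
\]
where the last term is the exact generated-response perturbation from Lemma \ref{lem:generated-response-identity}. Substituting into $S_{NT}$, the three centering terms cancel the pieces that do not involve $(B,F)$. Writing $\delta_i=\beta_i-\beta_i^0$, this leaves
\[
S_{NT}(B,F)=\widetilde S_{NT}(B,F)+\frac{2}{NT}\sum_i (F^0\lambda_i^0-X_i\delta_i)^\top M_F\,\tau^{-1}R_i(\widehat\gamma_i)+\text{(terms in }M_F-M_{F^0}\text{ acting on }\varepsilon_i,R_i).
\]
Here $\widetilde S_{NT}$ is the Bai-type decomposition:
\[
\widetilde S_{NT}(B,F)=\frac1{NT}\sum_i (F^0\lambda_i^0-X_i\delta_i)^\top M_F(F^0\lambda_i^0-X_i\delta_i)+\text{cross terms with }\varepsilon_i,
\]
plus $\frac1{NT}\sum_i\varepsilon_i^\top(P_{F^0}-P_F)\varepsilon_i$.

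\textbf{Step 2: bound the remainders uniformly.} I would show $\sup_{\|B\|\le M\sqrt N,\,F\in\mathcal F}|S_{NT}(B,F)-\widetilde S^{\ast}_{NT}(B,F)|=o_p(1)$, where $\widetilde S^\ast_{NT}$ is the leading quadratic form $\frac1{NT}\sum_i (F^0\lambda_i^0-X_i\delta_i)^\top M_F(F^0\lambda_i^0-X_i\delta_i)$.

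\emph{Pure-noise terms.} The terms involving only $\varepsilon$ follow as in Bai (2009) from Assumptions \ref{assum:error}--\ref{assum:weak dependence}. The key bounds are $\|\varepsilon\|_{\rm op}^2/(NT)=O_p(N^{-1}+T^{-1})$, from the covariance summability, and $\sup_F |N^{-1}T^{-1}\sum_i \varepsilon_i^\top P_F\varepsilon_i|\le r^0\|\varepsilon\|_{\rm op}^2/(NT)$.

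\emph{Generated-response terms.} For the terms containing $R_i(\widehat\gamma_i)$, Cauchy--Schwarz together with $|R_{it}(\gamma)|\le|W_{it}^\top\gamma|$ gives
\[
\frac1{NT}\sum_i\|R_i(\widehat\gamma_i)\|^2\le \max_i\|\widehat\gamma_i\|^2\,\max_i\lambda_{\max}(T^{-1}W_i^\top W_i)=O_p\bigl((p+\log N)/T\bigr)=o_p(1).
\]
This uses Proposition \ref{pro:quantile} and the design bound \eqref{eq:ES-design-upper}. Since $\|M_F\|_{\rm op}\le1$ and the signal part has norm $O(1+N^{-1}\sum\|\delta_i\|^2)^{1/2}$ by Assumption \ref{assum:identification}(i), the cross terms are uniformly $o_p(1)$ on bounded sets.

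\emph{Unbounded $B$.} I would handle this either by restricting to a compact set, as in Bai, or by noting that the quadratic part grows like $N^{-1}\sum\|\delta_i\|^2$ under Assumption \ref{assum:identification}. Either way the minimizer lies in a bounded set with probability approaching one.

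\textbf{Step 3: identification.} Since $S_{NT}(B^0,F^0)=o_p(1)$ and $\widetilde S^\ast_{NT}\ge0$, we get $\widetilde S^\ast_{NT}(\widehat B,\widehat F)=o_p(1)$. Following Bai's Proposition A.1 argument, expand
\[
\widetilde S^\ast_{NT}=\frac1N\sum_i\Bigl[\delta_i^\top A_i(F)\delta_i+2\,\mathrm{vec}\text{-form cross term}+\lambda_i^{0\top}\tfrac{F^{0\top}M_FF^0}{T}\lambda_i^0\Bigr].
\]
Complete the square in $\delta_i$. By Assumption \ref{assum:identification}(ii) (the Schur complement $E_i(F)$), this yields
\[
\widetilde S^\ast_{NT}\ge c\Bigl\{\tfrac1N\sum_i\|\delta_i\|^2\Bigr\}\quad\text{and}\quad \widetilde S^\ast_{NT}\ge c\,\mathrm{tr}\bigl(F^{0\top}M_FF^0/T\bigr),
\]
where the second inequality uses $\Lambda^{0\top}\Lambda^0/N\succeq c_\lambda I$. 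Hence $N^{-1}\sum\|\widehat\beta_i-\beta_i^0\|^2=o_p(1)$ and $\mathrm{tr}(F^{0\top}M_{\widehat F}F^0/T)=o_p(1)$. The latter is equivalent to $\|P_{\widehat F}-P_{F^0}\|\to_p0$, and it implies that $F^{0\top}\widehat F/T$ is asymptotically nonsingular, since its singular values are the cosines of the principal angles.

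\textbf{Step 4: loadings and factors.} Using $\widehat\Lambda=\widehat W\widehat F/T$ with the rotation absorbed, expand $\widehat\lambda_i-\lambda_i^0$ into three parts:

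\emph{Factor-space error:} $(\widehat F^\top F^0/T-I)\lambda_i^0$, which is controlled by Step 3.

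\emph{Coefficient error:} $-\widehat F^\top X_i\delta_i/T$, which is controlled by Step 3 and the design bound.

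\emph{Noise:} $\widehat F^\top(\varepsilon_i+\tau^{-1}R_i)/T$, whose average square is at most $\|F^{0\top}\varepsilon\|^2/(NT^2)+o_p(1)=o_p(1)$.

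This gives part (i). For part (ii), $T^{-1}\|\widehat F-F^0\|^2=o_p(1)$ follows from the projection convergence and the normalizations, again with the rotation absorbed.

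The main obstacle is Step 3, where the Schur complement condition must be turned into a uniform lower bound over all $F\in\mathcal F$ jointly with $B$. I would follow Bai (2009, proof of Proposition 1) essentially verbatim, since the generated-response perturbations have already been reduced to $o_p(1)$ in Step 2.
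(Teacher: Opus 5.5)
Your proposal is correct and follows essentially the same route as the paper's proof. The shared steps are: the Bai (2009)-type decomposition of the centered objective $S_{NT}$ into $\widetilde S_{NT}$ plus remainders that are uniformly $o_p(1)$; completion of the square with the Schur complement $E_i(F)$, giving $\mathrm{tr}(F^{0\top}M_{\widehat F}F^0)/T=o_p(1)$, hence $\|P_{\widehat F}-P_{F^0}\|\xrightarrow{p}0$ and invertibility of $F^{0\top}\widehat F/T$; then coefficient consistency; and finally factors and loadings up to rotation.

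The differences are technical:
\begin{itemize}
\item You control the generated-response terms through the uniform first-stage rate of Proposition~\ref{pro:quantile} and the bound $|R_{it}(\gamma)|\le|W_{it}^\top\gamma|$. This justifies the averaging over $i$ more cleanly than the paper's unit-by-unit Lemma~\ref{lemmaA.1}, but it tacitly adds the growth condition $T\gtrsim(p+\log N)^2$, which the asymptotic statement does not list.
\item You expand $\widehat\lambda_i$ explicitly, whereas the paper simply cites Ando and Bai (2015).
\item Your lower bound on $N^{-1}\sum_i\|\delta_i\|^2$ cannot hold uniformly over $\mathcal F$. For example, an $F$ containing the constant vector annihilates the intercept column of $X_i$, so $\lambda_{\min}(A_i(F))$ has no uniform lower bound. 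Apply the bound only at $\widehat F$, after factor-space consistency, using $A_i(\widehat F)\approx A_i(F^0)$ and Assumption~\ref{assum:identification}(i). This is how the paper sequences it.
\end{itemize}
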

Proposition \ref{pro:average consistency} establishes the fundamental consistency of the proposed estimators. Part (i) shows that the estimated parameters $\widehat{\gamma}_{i}$, which combine the ES regression coefficients and factor loadings, converge to their true values in a mean-squared sense averaged across all cross-sectional units. Part (ii) shows that the estimated common factors $\widehat{F}$ consistently recover the true factor space spanned by $F^0$, even though the factors themselves are only identified up to a rotation. These results align with those in \citeEC{ando2020quantile} for quantile factor models and \citeEC{bai2009panel} for mean factor models, indicating that the proposed two-stage procedure preserves the desirable large-sample properties despite the initial quantile regression step.

Proposition \ref{pro:average consistency} ensures that the estimation errors diminish as the sample size grows. To further characterize the speed of convergence, the following theorem establishes the convergence rate of the ES coefficient estimators.
\begin{theorem}\label{theo:consistency}
Under Assumptions \ref{assum:Regularity}--\ref{assum:local-identification}, \ref{assum:error}--\ref{assum:identification} with the number of factors given and fixed, as $N,T \rightarrow \infty$ with $T/N \rightarrow \rho > 0$, we have
 $ \sqrt{T}(\widehat{\beta}_{i} - \beta^0_{i}) = O_p(1) $, $i\in[N]$.
\end{theorem}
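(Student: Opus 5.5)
The plan is to follow the Bai (2009)-style argument for interactive-effects panels. The starting point is the average consistency in Proposition~\ref{pro:average consistency}. The main modification is that the response is generated: $Z_i^\ast(\widehat\alpha_i)=X_i\beta_i^0+F^0\lambda_i^0+\varepsilon_i+\tau^{-1}R_i(\widehat\gamma_i)$. Two earlier results control this extra term. By Lemma~\ref{lem:generated-response-identity}, $R_i=R_i^c+\mu_i$ with $|\mu_{it}|\le \overline f|W_{it}^\top\widehat\gamma_i|^2/2$. By Proposition~\ref{pro:quantile}, $\max_i\|\widehat\gamma_i\|=O_p(\sqrt{\log N/T})$.

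First, I would write the first-order condition for $\widehat\beta_i$ from the concentrated objective:
\[
\widehat\beta_i-\beta_i^0=(X_i^\top M_{\widehat F}X_i)^{-1}X_i^\top M_{\widehat F}\{F^0\lambda_i^0+\varepsilon_i+\tau^{-1}R_i(\widehat\gamma_i)\}.
\]
Assumption~\ref{assum:identification}(i) together with $\|P_{\widehat F}-P_{F^0}\|=o_p(1)$ makes $T^{-1}X_i^\top M_{\widehat F}X_i$ invertible with bounded inverse with probability approaching one. So it suffices to show that each of the three pieces of $T^{-1}X_i^\top M_{\widehat F}(\cdot)$ is $O_p(T^{-1/2})$.

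Second, I would sharpen the factor-space rate. Expanding the PCA eigen-equation for $\widehat F$, exactly as in Bai (2009, Prop.~A.1), gives a bound of the form
\[
T^{-1/2}\|\widehat F-F^0H\|\le C\Big\{N^{-1}\sum_j\|\widehat\beta_j-\beta_j^0\|+N^{-1/2}+T^{-1/2}+\text{(generated terms)}\Big\}.
\]
Here Assumptions~\ref{assum:error}--\ref{assum:weak dependence} control the $\varepsilon$-cross products. The generated terms are bounded through Assumption~\ref{assum:factor-spectral}(ii) and the quadratic mean bound, by $\rho^Q(\sqrt{(p+1)/T}+N^{-1/2})+(\rho^Q)^2=o(T^{-1/2})$. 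Alternatively, and more directly, I would invoke Lemma~\ref{lem:joint-global-perturbation} with $J=\mathcal E(\widehat\Gamma)$. Combined with Lemma~\ref{lem:ES-score-control} at a fixed $\delta$, it gives
\[
\|P_{\widehat F}-P_{F^0}\|_{\mathrm{op}}+\Big\{N^{-1}\sum_j\|\widehat\beta_j-\beta_j^0\|^2\Big\}^{1/2}=O_p(N^{-1/2}+T^{-1/2})=O_p(T^{-1/2}),
\]
because $T/N\to\rho$.

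Third, I would bound the three pieces:

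\emph{Leakage term.} We have $T^{-1}X_i^\top M_{\widehat F}F^0\lambda_i^0=T^{-1}X_i^\top(M_{\widehat F}-M_{F^0})F^0\lambda_i^0$, whose norm is at most $C\|P_{\widehat F}-P_{F^0}\|=O_p(T^{-1/2})$.

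\emph{Noise term.} Write $T^{-1}X_i^\top M_{\widehat F}\varepsilon_i=T^{-1}X_i^\top M_{F^0}\varepsilon_i+T^{-1}X_i^\top(M_{F^0}-M_{\widehat F})\varepsilon_i$. The first part is $O_p(T^{-1/2})$ by the mixing moment bound \eqref{eq:mixing-fourth-moment}. The second part is at most $C\,d(\widehat F)\,\|\varepsilon_i\|/\sqrt T=O_p(T^{-1/2})$.

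\emph{Generated-response term.} I would split it as
\[
T^{-1}X_i^\top M_{F^0}R_i^c=O_p(\rho^Q T^{-1/2}),\qquad T^{-1}X_i^\top M_{F^0}\mu_i=O_p\big((\rho^Q)^2\big)=O_p(\log T/T),
\]
using \eqref{eq:generated-row-score-L4} and \eqref{eq:conditional-mean-row}. The projection-change part is at most $d(\widehat F)\,\rho^Q$. All of these are $o_p(T^{-1/2})$; this is where the Neyman orthogonality enters.

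Combining the three bounds gives $\sqrt T(\widehat\beta_i-\beta_i^0)=O_p(1)$.

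The main obstacle is the second step: obtaining the $T^{-1/2}$ rate for the factor space rather than mere consistency. That step needs the joint feedback between the $\widehat\beta_j$'s and $\widehat F$. I expect the quickest route is the nonasymptotic Lemma~\ref{lem:joint-global-perturbation}, checking that its smallness condition $\mathfrak s_N(J)\le c_1$ holds with probability approaching one. If that route is unavailable because only the asymptotic assumptions are in force, I would fall back on Bai's eigen-expansion, with the extra generated-response terms bounded as above.
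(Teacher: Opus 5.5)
Your argument is correct, but for this theorem it is not the route the paper takes.

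\textbf{The paper's proof.} The appendix proof follows Bai (2009) directly.
\begin{enumerate}
\item It expands the eigen-equation for $\widehat F$ into the fifteen terms of Proposition~\ref{pro:B.1}.
\item It substitutes $M_{\widehat F}F^0=M_{\widehat F}(F^0-\widehat FH^{-1})$ into the leakage term, which produces $J1,\ldots,J15$.
\item It isolates the two leading feedback terms, $(NT)^{-1}\sum_k a_{ik}X_i^\top M_{\widehat F}X_k(\widehat\beta_k-\beta_k^0)$ and $-(NT)^{-1}\sum_k a_{ik}X_i^\top M_{\widehat F}\varepsilon_k$.
\item It handles the generated-response pieces through Lemma~\ref{lemma:B.5}.
\item It closes the cross-unit system by squaring and averaging over $i$, giving $N^{-1}\sum_i\|\widehat\beta_i-\beta_i^0\|^2=O_p(T^{-1})$, and only then returns to the fixed unit.
\end{enumerate}

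\textbf{Your route.} You get the factor-space rate and the average-coefficient rate in one step. The ingredients are the basic inequality, the stable decomposition and local identification, packaged in Lemma~\ref{lem:joint-global-perturbation} and fed by Lemma~\ref{lem:ES-score-control}. After that, the crude leakage bound $C\,d(\widehat F)$ in \eqref{eq:true-factor-leakage} is already $O_p(T^{-1/2})$ when $N\asymp T$.

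This is exactly the argument behind Theorem~\ref{theo:nonasymES} and Corollary~\ref{cor:ES-beta-rate}(ii). The hypotheses of Theorem~\ref{theo:consistency} include Assumptions~\ref{assum:Regularity}--\ref{assum:local-identification}, and $p$ is fixed with $\rho<\infty$. So your route is available, and the theorem is effectively a special case of that corollary. You can drop the hedge about the route being unavailable. That also matters because your fallback (Bai's eigen-expansion alone) is circular as written: its bound on $\widehat F$ involves $N^{-1}\sum_j\|\widehat\beta_j-\beta_j^0\|$. It closes only after the averaging step that the paper performs.

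\textbf{What each approach buys.}
\begin{itemize}
\item Your route is shorter and never has to invert the feedback system.
\item It controls the first-stage error uniformly in $i$ through the $\sqrt{s_{N,\delta}/T}$ radius of Proposition~\ref{pro:quantile}. By contrast, the paper's Lemma~\ref{lemmaA.1} plugs a unit-wise $O_p(T^{-1/2})$ rate into cross-sectional averages.
\item The paper's expansion, in turn, exposes the first-order linear structure: the $a_{ik}$-weighted feedback through $\widehat\beta_k-\beta_k^0$ and $\varepsilon_k$. That structure is what later produces $\mathcal J_\tau^0$, $\overline\xi_{jt,\tau}$ and the influence score used for asymptotic normality.
\item The crude leakage bound delivers the rate and nothing more.
\end{itemize}
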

Theorem \ref{theo:consistency} establishes that $\widehat{\beta}_{i}$ converges to $\beta^0_{i}$ at a $\sqrt{T}$ rate. Such a rate is standard in panel settings, where the time dimension $T$ determines the precision of unit-specific coefficient estimates. The assumption $T/N \rightarrow \rho > 0$ ensures that the time and cross-sectional dimensions grow at the same pace, a condition commonly met in macroeconomic and financial panel data. A noteworthy feature is that the first-stage quantile regression does not impair this convergence rate, thanks to the Neyman orthogonality property embedded in the second-stage loss function.

\subsection{Asymptotic Normality}
\label{subsec:Asymptotic Normality}

The finite-sample Gaussian approximation in Theorem \ref{theo:Gaussian} yields asymptotic normality once the finite-sample
long-run variance converges to a fixed limit. We retain the notation introduced in Subsection \ref{sec:3.2}. In particular, $\psi_{it,\tau}^0=\sum_{j=1}^N\mathcal G_{ij,\tau}^0X_{jt}^{\circ}\overline\xi_{jt,\tau}$ is the factor-adjusted oracle influence score, and
$$
\Omega_{i,\tau,T}=\frac1T\sum_{t=1}^T\sum_{s=1}^T\operatorname{Cov}\left(\psi_{it,\tau}^0,\psi_{is,\tau}^0\right)
$$
is its finite-sample long-run variance. To obtain a limiting distribution with a fixed covariance matrix, we impose the following variance-stabilization condition.

\begin{assumption}
\label{assum:asymptotic-variance}
For every fixed target unit $i$, there exists a deterministic positive-definite matrix
$\Omega_{i,\tau}\in\mathbb R^{(p+1)\times(p+1)}$ such that $\left\|\Omega_{i,\tau,T}-\Omega_{i,\tau}\right\|_{\mathrm{op}}\rightarrow0$ as $N,T\rightarrow\infty$.
\end{assumption}
Assumption \ref{assum:asymptotic-variance} requires the finite-sample long-run variance to converge to a fixed limit. Together with the eigenvalue bounds in Assumption \ref{assum:Gaussian}(iii), it ensures that the covariance matrix in the limiting distribution is well defined and nondegenerate.
\begin{theorem}
\label{theo:asymptotic}
Suppose that $p$ is fixed. Under Assumptions \ref{assum:Regularity}--\ref{assum:Gaussian} and \ref{assum:asymptotic-variance}, let $N,T\rightarrow\infty$ along a sequence satisfying $[{\{p+1+\log(2NT)\}^{2}\log^4(2T)}]/{T}\rightarrow0$, $\mathfrak b_{N,T,p}\rightarrow0$, where $\mathfrak b_{N,T,p}$ is defined in Theorem
\ref{theo:Gaussian}. Then, for every fixed target unit $i$ and every fixed $u\in\mathbb S^p$,
$$
\frac{\tau\sqrt T\,u^\top(\widehat\beta_{i,\tau}-\beta_{i,\tau}^0)}{\sqrt{u^\top\Omega_{i,\tau}u}}\overset{d}{\longrightarrow}
\mathcal N(0,1).
$$
Equivalently,
$$
\tau\sqrt T(\widehat\beta_{i,\tau}-\beta_{i,\tau}^0)\overset{d}{\longrightarrow}\mathcal N(0,\Omega_{i,\tau}),
$$
or, without the scaling by $\tau$,
$$
\sqrt T(\widehat\beta_{i,\tau}-\beta_{i,\tau}^0)\overset{d}{\longrightarrow}\mathcal N(0,\tau^{-2}\Omega_{i,\tau}).
$$
\end{theorem}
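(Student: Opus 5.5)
The result follows from the finite-sample Gaussian approximation in Theorem~\ref{theo:Gaussian}, combined with a continuity argument that replaces the finite-sample long-run variance $\Omega_{i,\tau,T}$ by its limit $\Omega_{i,\tau}$. First, the growth conditions assumed here make the hypotheses of Theorem~\ref{theo:Gaussian} hold eventually: $\{p+1+\log(2NT)\}^2\log^4(2T)/T\to0$ and $\mathfrak b_{N,T,p}\to0$, so both fall below the constant $c$ for all large $N,T$. Theorem~\ref{theo:Gaussian} then gives, for the fixed unit $i$ and uniformly over $u\in\mathbb S^p$ and $z\in\mathbb R$,
\[
\left|P\left\{\frac{\tau\sqrt T\,u^{\top}(\widehat\beta_{i,\tau}-\beta_{i,\tau}^0)}{\sqrt{u^{\top}\Omega_{i,\tau,T}u}}\leq z\right\}-\Phi(z)\right|\leq C\left\{\frac{\log^2(2T)}{\sqrt T}+\mathfrak b_{N,T,p}^{2/3}\right\}\rightarrow0 .
\]
Hence the statistic standardized by $\Omega_{i,\tau,T}$ converges in distribution to $\mathcal N(0,1)$.

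Second, I change the normalization. For fixed $u$, Assumption~\ref{assum:asymptotic-variance} gives $u^\top\Omega_{i,\tau,T}u\to u^\top\Omega_{i,\tau}u$. The lower eigenvalue bound $c_\Omega$ in Assumption~\ref{assum:Gaussian}(iii), together with positive definiteness of $\Omega_{i,\tau}$, keeps both quantities bounded away from zero. Therefore the ratio $\sqrt{u^\top\Omega_{i,\tau,T}u}/\sqrt{u^\top\Omega_{i,\tau}u}$ is deterministic and tends to one. Multiplying the statistic from the first step by this ratio and applying Slutsky's lemma gives the first displayed limit. One could also argue directly: write the event as $\{S\le z r_T\}$ with $r_T\to1$, then use the uniform-in-$z$ bound together with continuity of $\Phi$.

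Third, the vector statement follows from the Cram\'er--Wold device. For every $u\in\mathbb S^p$, $\tau\sqrt T\,u^\top(\widehat\beta_{i,\tau}-\beta_{i,\tau}^0)\to_d\mathcal N(0,u^\top\Omega_{i,\tau}u)$, which is the law of $u^\top G$ with $G\sim\mathcal N(0,\Omega_{i,\tau})$. The case $u=0$ is trivial, and general nonzero $u$ reduces to the unit sphere by scaling. This yields $\tau\sqrt T(\widehat\beta_{i,\tau}-\beta_{i,\tau}^0)\to_d\mathcal N(0,\Omega_{i,\tau})$. Dividing by the fixed $\tau>0$ (continuous mapping) gives the covariance $\tau^{-2}\Omega_{i,\tau}$.

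No step here is genuinely hard, because all the analytic work is contained in Theorem~\ref{theo:Gaussian}. The only points that need care are verifying that the theorem's conditions hold eventually along the sequence, and ensuring the variance ratio is well defined. The latter is guaranteed by the uniform lower bound $c_\Omega$.
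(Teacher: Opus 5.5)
Your proposal is correct and follows essentially the same route as the paper. Both arguments invoke Theorem~\ref{theo:Gaussian} once its conditions hold along the sequence, replace $\Omega_{i,\tau,T}$ by $\Omega_{i,\tau}$ through the deterministic variance ratio tending to one and Slutsky (the paper additionally uses Weyl's inequality to carry the bound $c_\Omega$ over to $\Omega_{i,\tau}$), and then obtain the vector limits by Cram\'er--Wold and division by $\tau$.
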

Theorem \ref{theo:asymptotic} shows that the first-stage quantile estimation error has no first-order effect on the limiting
distribution because of the orthogonality of the generated ES response. The higher-order factor-estimation remainders also vanish. The effect of estimating the factor space is nevertheless reflected in the leading influence score through $\mathcal G_{ij,\tau}^0$ and $\overline\xi_{jt,\tau}$. Consequently, $\Omega_{i,\tau}$ incorporates both temporal dependence and the cross-unit propagation induced by the factor structure.

For completeness, the limiting covariance can be written explicitly in terms of the cross-unit score covariances. For $h\in\mathbb Z$, define 
$\mathcal K_{jk,\tau,N}(h)=\operatorname{Cov}\left(X_{j0}^{\circ}\overline\xi_{j0,\tau},X_{kh}^{\circ}\overline\xi_{kh,\tau}\right)$, by the strict stationarity imposed in Assumption \ref{assum:Gaussian}(ii),
$$
\Omega_{i,\tau,T}=\sum_{h=-(T-1)}^{T-1}\left(1-\frac{|h|}{T}\right)\sum_{j=1}^N\sum_{k=1}^N\mathcal G_{ij,\tau}^0\mathcal K_{jk,\tau,N}(h)\mathcal G_{ik,\tau}^{0\top}.
$$
Therefore, Assumption \ref{assum:asymptotic-variance} is equivalently the requirement that
$$
\begin{aligned}
\Omega_{i,\tau}=\lim_{N,T\rightarrow\infty}\sum_{h=-(T-1)}^{T-1}\left(1-\frac{|h|}{T}\right)\sum_{j=1}^N\sum_{k=1}^N
\mathcal G_{ij,\tau}^0\mathcal K_{jk,\tau,N}(h)\mathcal G_{ik,\tau}^{0\top}.
\end{aligned}
$$
\newpage
\section{Proofs of Asymptotic Results}\label{sec:proof Additional Simulation Results}
We use the following facts throughout: $T^{-1} \|X_i\|^2 = T^{-1} \sum_{t=1}^T \|X_{it}\|^2 = O_p(1)$ or $T^{-1/2} \|X_i\| = O_p(1)$. Averaging over $i$, $(TN)^{-1} \sum_{i=1}^N \|X_i\|^2 = O_p(1)$. Similarly, $T^{-1/2} \|F^0\| = O_p(1)$, $T^{-1} \|\widehat{F}\| = r$, $T^{-1/2} \|\widehat{F}\| = \sqrt{r}$, $T^{-1} \|X_i^\top F^0\| = O_p(1)$, and so forth. Throughout, we define $\delta_{NT} = \min[\sqrt{N}, \sqrt{T}]$ so that $\delta_{NT}^2 = \min[N, T]$.
\subsection{Proof of Proposition \ref{pro:average consistency}}
\proof{Proof of Proposition \ref{pro:average consistency}.}
Without loss of generality, assume $ \beta_i^0 = 0 $ for $i\in[N]$ (purely for notational simplicity). From $Z_i^\ast(\alpha_i^0)= X_i \beta_i^0 + F^0 \lambda_i^0 + \varepsilon_i = F^0 \lambda_i^0 + \varepsilon_i $, expanding $ S_{NT}(B, F) $, we obtain
\begin{equation}\nonumber
    \begin{aligned}
        S_{NT}(B, F) =& \frac{1}{NT} \sum_{i=1}^{N} \left\{Z_i^\ast(\widehat{\alpha}_{i})- X_i \beta_{i}\right\}^\top M_{F}\left\{Z_i^\ast(\widehat{\alpha}_{i})- X_i \beta_{i}\right\}\\
        &-\frac{1}{NT} \sum_{i=1}^{N}\left\{Z_i^\ast(\widehat{\alpha}_{i})- Z_i^\ast(\alpha^0_{i})\right\}^\top M_{F^0}\left\{Z_i^\ast(\widehat{\alpha}_{i})- Z_i^\ast(\alpha^0_{i})\right\}\\
        &-\frac{2}{NT} \sum_{i=1}^{N}\left\{Z_i^\ast(\widehat{\alpha}_{i})- Z_i^\ast(\alpha^0_{i})\right\}^\top M_{F^0}\varepsilon_{i}- \frac{1}{NT} \sum_{i=1}^{N} \varepsilon_{i}^\top M_{F^0} \varepsilon_{i}\\
        =&\frac{1}{NT} \sum_{i=1}^{N} \left[\left\{Z_i^\ast(\widehat{\alpha}_{i})-Z_i^\ast(\alpha^0_{i})\right\}+Z_i^\ast(\alpha^0_{i})- X_i \beta_{i}\right]^\top M_{F}\left[\left\{Z_i^\ast(\widehat{\alpha}_{i})- Z_i^\ast(\alpha^0_{i})\right\}+Z_i^\ast(\alpha^0_{i})-X_i \beta_{i}\right]\\
        &-\frac{1}{NT} \sum_{i=1}^{N}\left\{Z_i^\ast(\widehat{\alpha}_{i})- Z_i^\ast(\alpha^0_{i})\right\}^\top M_{F^0}\left\{Z_i^\ast(\widehat{\alpha}_{i})- Z_i^\ast(\alpha^0_{i})\right\}\\
        &-\frac{2}{NT} \sum_{i=1}^{N}\left\{Z_i^\ast(\widehat{\alpha}_{i})- Z_i^\ast(\alpha^0_{i})\right\}^\top M_{F^0}\varepsilon_{i}- \frac{1}{NT} \sum_{i=1}^{N} \varepsilon_{i}^\top M_{F^0} \varepsilon_{i}\\
        =&\widetilde{S}_{NT}(B, F)+  \frac{2}{NT} \sum_{i=1}^N \beta_i^\top X_i^\top  M_F^\prime \varepsilon_i +  \frac{2}{NT} \sum_{i=1}^N \lambda_i^{0\top}  F^{0\top} M_F \varepsilon_i \\
        &+ \frac{1}{NT} \sum_{i=1}^N \varepsilon_i^\top (P_{F^0} - P_{F}) \varepsilon_i+\frac{2}{NT} \sum_{i=1}^N\beta_i^\top X_i^\top M_F\left\{Z_i^\ast({\alpha}^0_{i})-Z_i^\ast(\widehat{\alpha}_{i})\right\}\\
        &+\frac{2}{NT} \sum_{i=1}^N\lambda_i^{0\top}F^{0\top}M_F\left\{Z_i^\ast(\widehat{\alpha}_{i})-Z_i^\ast(\alpha^0_{i})\right\}\\
        &+\frac{2}{NT} \sum_{i=1}^N\varepsilon_i^\top(P_{F^0}-P_F)\left\{Z_i^\ast(\widehat{\alpha}_{i})-Z_i^\ast(\alpha^0_{i})\right\}\\
        &+\frac{1}{NT} \sum_{i=1}^N\left\{Z_i^\ast(\widehat{\alpha}_{i})-Z_i^\ast(\alpha^0_{i})\right\}^\top(P_{F^0}-P_F)\left\{Z_i^\ast(\widehat{\alpha}_{i})-Z_i^\ast(\alpha^0_{i})\right\},
    \end{aligned}
\end{equation}
where
\begin{equation}\label{eq:A.1}
\widetilde{S}_{NT}(B, F) =  \frac{1}{NT} \sum_{i=1}^N \beta_i^\top X_i^\top  M_F X_i \beta_i+\frac{1}{NT}\sum_{i=1}^N\lambda_i^{0\top} F^{0\top}M_F F^0\lambda_i^0 +  \frac{2}{NT} \sum_{i=1}^N \beta_i^\top X_i^\top  M_F^\prime F^0 \lambda_i^0
\end{equation}
and $M_F^\prime=-M_F=FF^\top/T-I_T$. By Lemma \ref{lemmaA.2},
\begin{equation}\label{eq:A.2}
S_{NT}(\beta, F) = \widetilde{S}_{NT}(\beta, F) + o_p(1)
\end{equation}
uniformly over bounded $ \beta_i $ and over $ F $ such that $ F^\top F/T = I $. Bounded $ \beta_i $ is in fact not necessary because the objective function is quadratic in $ \beta_i $ (that is, it is easy to argue that the objective function cannot achieve its minimum for very large $ \beta_i $).

Clearly, $ \widetilde{S}_{NT}(B^0, F^0 H) = 0 $ for $B^0=0$ and any $ r \times r $ invertible $ H $, because $ M_{F^0 H} = M_{F^0} $ and $ M_{F^0} F^0 = 0 $. The identification restrictions implicitly fix an $ H $. Define
$$
A_i = \frac{1}{T} X_i^\top  M_{\widehat{F}} X_i, \quad B_i = (\lambda_i^0\lambda_i^{0\top})\otimes I_T,\quad C_i=\frac{1}{\sqrt T}\lambda_i^{0\top}  \otimes (X_i^\top M_{\widehat{F}}^\prime),\quad\eta = \frac{1}{\sqrt{T}}\text{Vec}(M_{\widehat{F}} F^0)
$$
Then
$$
\widetilde{S}_{NT}(\widehat{B}, \widehat{F}) = \frac{1}{N}\sum_{i=1}^N\widehat{\beta}_i^\top A_i \widehat{\beta}_i + \eta^\top \left(\frac{1}{N}\sum_{i=1}^NB_i\right) \eta + \frac{2}{N}\sum_{i=1}^N\widehat{\beta}_i^\top C_i \eta.
$$
Completing the square, we have
$$
\widetilde{S}_{NT}(\widehat{B}, \widehat{F}) = \eta^\top \left(\frac{1}{N}\sum_{i=1}^NE_i\right) \eta+\frac{1}{N}\sum_{i=1}^N\left(\widehat{\beta}_i+A_i^{-1}C_i\eta\right)^\top A_i\left(\widehat{\beta}_i+A_i^{-1}C_i\eta\right),
$$
where $ E_i=B_i-C_i^\top A_i^{-1}C_i$. Because each of the two terms is non-negative and the centered objective function satisfies $ S_{NT}(B^0, F^0) = 0 $ and, by definition, we have $ S_{NT}(\widehat{B}, \widehat{F}) \leq 0 $. Therefore, in view of \eqref{eq:A.2},
$$
0 \geq S_{NT}(\widehat{B}, \widehat{F}) = \widetilde{S}_{NT}(\widehat{B}, \widehat{F}) + o_p(1).
$$
Combined with $ \widetilde{S}_{NT}(\widehat{B}, \widehat{F}) \geq 0 $, it must be true that
$$
\widetilde{S}_{NT}(\widehat{B}, \widehat{F}) = o_p(1).
$$
Then,
\begin{equation}\label{eq:A.3}
\eta^\top \left(\frac{1}{N}\sum_{i=1}^NE_i\right) \eta=o_p(1),
\end{equation}
\begin{equation}\label{eq:A.4}
\frac{1}{N}\sum_{i=1}^N\left(\widehat{\beta}_i+A_i^{-1}C_i\eta\right)^\top A_i\left(\widehat{\beta}_i+A_i^{-1}C_i\eta\right)=o_p(1).
\end{equation}
From Assumption \ref{assum:identification}(ii), the matrix $N^{-1}\sum_{i=1}^N E_i$ is positive definite, and thus equation \eqref{eq:A.3} implies that $\|\eta\|^2=o_p(1)$. That is,
\begin{equation}\label{eq:A.5}
\frac{F^{0\top}M_{\widehat{F}}F^0}{T} = \frac{F^{0\top}F^0}{T} - \frac{F^{0\top}\widehat{F}}{T} \frac{\widehat{F}^\top F^0}{T} = o_p(1).
\end{equation}
By Assumption \ref{assum:factor-spectral}, $ F^{0\top}F^0/T $ is invertible, so it follows that $ F^{0\top}\widehat{F}/T $ is invertible. Next,
$$
\|P_{\widehat{F}} - P_{F^0}\|^2 = \tr[(P_{\widehat{F}} - P_{F^0})^2] = 2\tr(I_r - {F}^{0\top} P_{\widehat{F}}{F}^0/T).
$$
But \eqref{eq:A.5} implies $ {F}^{0\top} P_{\widehat{F}}{F}^0/T \overset{p}{\longrightarrow} I_r $, which is equivalent to $ \|P_{\widehat{F}} - P_{F^0}\| \overset{p}{\longrightarrow} 0 $. That is, the space spanned by $ F^0 $ and the space spanned by the estimated factors $ \widehat{F} $ are asymptotically the same. We then have
\begin{equation}\nonumber
\begin{aligned}
\frac{1}{\sqrt{T}}\left\|M_{F^0}\widehat{F}\right\| &= \frac{1}{\sqrt{T}}\left\|(M_{F^0} - M_{\widehat{F}})\widehat{F}\right\| \\
&= \frac{1}{\sqrt{T}}\left\|(P_{F^0} - P_{\widehat{F}})\widehat{F}\right\| \\
&\leq \left\|P_{F^0} - P_{\widehat{F}}\right\| \times \left(\frac{1}{\sqrt{T}}\left\|\widehat{F}\right\|\right) \\
&= o_p(1) \times O_p(1),
\end{aligned}
\end{equation}
where we used $\|\widehat{F}\|/T^{1/2} = O_p(1)$. This implies that
$$
\frac{1}{\sqrt{T}}\left\|\widehat{F} - F^0(F^{0\top}F^0)^{-1}F^{0\top}\widehat{F}\right\| = \frac{1}{\sqrt{T}}\left\|\widehat{F} - F^0G\right\| = o_p(1),
$$
where $G=(F^{0\top}F^0)^{-1}F^{0\top}\widehat{F}$ is the rotation matrix. Because we omit this rotation matrix $G$ in Proposition \ref{pro:average consistency}, as explained in the main text, we obtain
$$
\frac{1}{\sqrt{T}}\left\|\widehat{F} - F^0\right\| = o_p(1).
$$

From $\|\eta\|^2=o_p(1)$, equation \eqref{eq:A.4} implies that
$$
o_p(1) = \frac{1}{N} \sum_{i=1}^N \widehat{\beta}_i^\top A_{i}^0 \widehat{\beta}_i + \frac{1}{N} \sum_{i=1}^N \widehat{\beta}_i^\top(A_i - A_{i}^0) \widehat{\beta}_i \geq (\rho_A + o_p(1)) \frac{1}{N} \sum_{i=1}^N \widehat{\beta}_i^\top \widehat{\beta}_i,
$$
where $0 < \rho_A$ is the lower bound of the eigenvalues of $A_{i}^0 = \frac{1}{T} X_i^\top M_{F^0} X_i$, $i\in[N]$. Because of Assumption \ref{assum:identification}(ii), $\rho_A > 0$ exists. We also used the fact that 
$$
\|A_i - A_{i,0}\|=\frac{1}{T}\left\|X_i^\top(P_{F^0}-P_{\widehat{F}})X_i\right\|\leq\frac{1}{T}\|X_i\|^2\left\|P_{F^0}-P_{\widehat{F}}\right\| = o_p(1).
$$
The average consistency of $\widehat{\beta}_i$ follows from $\frac{1}{N} \sum_{i=1}^N  \widehat{\beta}_i^\top\widehat{\beta}_i = o_p(1)$ (recall we normalize ${\beta}_i^0 = 0$). The average consistency of $\widehat{f}_t$ and the average consistency of $\widehat{{\beta}}_i$ further imply the average consistency of $\widehat{{\lambda}}_i$ (see \citeEC{ando2015asset}). That is, $N^{-1} \sum_{i=1}^N \|\widehat{{\gamma}}_i - {\gamma}_{i}^0\|^2 = o_p(1)$. This completes the proof of Proposition \ref{pro:average consistency}.

\Halmos
\endproof
\begin{lemma}\label{lemmaA.1}
Under Assumptions \ref{assum:Regularity}, as $ N, T \rightarrow \infty $, for $i\in[N]$, we have 
$$
\left\|Z_i^\ast(\widehat{\alpha}_{i}) - Z_i^\ast(\alpha_{i}^0)\right\| = O_p\left(1\right),
$$
\end{lemma}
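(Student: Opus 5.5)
The plan is to reduce the vector difference to the first-stage error through the pointwise Lipschitz bound in Lemma \ref{lem:generated-response-identity}. Put $\widehat\gamma_i=\Sigma_i^{1/2}(\widehat\alpha_i-\alpha_i^0)$. By definition, $Z_{it}(\widehat\alpha_i)-Z_{it}(\alpha_i^0)=R_{it}(\widehat\gamma_i)$, and \eqref{eq:R-Lipschitz} gives $|R_{it}(\widehat\gamma_i)|\le|W_{it}^\top\widehat\gamma_i|$ for every $t$. Since $Z^\ast=\tau^{-1}Z$, we obtain the deterministic inequality
\[
\left\|Z_i^\ast(\widehat\alpha_i)-Z_i^\ast(\alpha_i^0)\right\|\le\tau^{-1}\|W_i\widehat\gamma_i\|\le\tau^{-1}\|W_i\|_{\mathrm{op}}\|\widehat\gamma_i\|.
\]
Nothing in this step requires differentiating the indicator function, and it holds without any conditions.

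The proof then rests on two probabilistic bounds. First, Proposition \ref{pro:quantile}, applied at a fixed $\delta$, gives $\|\widehat\gamma_i\|=O_p(\sqrt{(p+\log N)/T})$. When $p$ is fixed, this is $o_p(1)$ provided $\log N=o(T)$, as in the regime considered. Second, the design factor satisfies $\|W_i\|_{\mathrm{op}}/\sqrt T=O_p(1)$. This follows either from the design event \eqref{eq:ES-design-upper} in Lemma \ref{lem:ES-score-control}, or more simply from stationarity and $E\|W_{it}\|^2=p+1$: by Markov's inequality, $T^{-1}\|W_i\|^2=T^{-1}\sum_t\|W_{it}\|^2=O_p(1)$. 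Together these give $\|Z_i^\ast(\widehat\alpha_i)-Z_i^\ast(\alpha_i^0)\|=O_p(\sqrt T)\cdot O_p(\sqrt{(p+\log N)/T})=O_p(\sqrt{p+\log N})$.

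The main obstacle is that this crude bound carries the factor $\sqrt{\log N}$, whereas the statement asks for $O_p(1)$. To remove it, I would not use the simultaneous bound over all $N$ units; for a fixed $i$ it is wasteful. I would instead apply the single-unit version of the argument behind Proposition \ref{pro:quantile}, namely Lemma \ref{lem:mixing-empirical-process} with $x$ fixed and no union bound. That gives $\|\widehat\gamma_i\|=O_p(\sqrt{(p+1)/T})$ for each fixed $i$. With $p$ fixed, the product is then $O_p(1)$, which proves the lemma. If a version uniform in $i$ is wanted, the same argument gives $\max_i\|\cdot\|=O_p(\sqrt{\log N})$, and that cruder statement suffices for the averaged remainders in Lemma \ref{lemmaA.2}.
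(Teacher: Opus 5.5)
Your proof is correct and follows the same route as the paper. Both arguments use a pointwise Lipschitz bound $|Z_{it}^\ast(\widehat\alpha_i)-Z_{it}^\ast(\alpha_i^0)|\lesssim|X_{it}^\top(\widehat\alpha_i-\alpha_i^0)|$, then $\|X_i\|_{\mathrm{op}}^2=O_p(T)$ and a $T^{-1/2}$ first-stage rate; the paper gets the Lipschitz constant $(1+\tau)/\tau$ from a four-case analysis, whereas you get $\tau^{-1}$ from the check-loss identity. Your single-unit argument, which avoids the $\sqrt{\log N}$ in the simultaneous bound of Proposition \ref{pro:quantile}, actually justifies the fixed-$i$ rate that the paper only attributes to ``quantile regression theory''; like the paper, it needs $p$ fixed.
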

\proof{Proof of Lemma \ref{lemmaA.1}.}
By definition,  
$$
Z^\ast_i(\alpha_i)=\frac{1}{\tau}(Z_{i1}(\alpha_i),\ldots,Z_{iT}(\alpha_i))^\top,\quad Z_{it}(\alpha_i) = (Y_{it} - X_{it}^\top \alpha_i) \mathbbm{1}(Y_{it} \le X_{it}^\top \alpha_i) + \tau X_{it}^\top \alpha_i.
$$
Let $q_t^0 = X_{it}^\top \alpha_{i}^0$, $\widehat{q}_t = X_{it}^\top \widehat{\alpha}_{i}$ and $\Delta q_t = \widehat{q}_t - q_t^0$. Consider the difference  
$$
\Delta Z_t := Z_{it}^\ast(\widehat{\alpha}_{i}) - Z_{it}^\ast(\alpha_{i}^0) 
= \frac{1}{\tau} \left[ (Y_{it} - \widehat{q}_t) \mathbbm{1}(Y_{it} \le \widehat{q}_t) - (Y_{it} - q_t^0) \mathbbm{1}(Y_{it} \le q_t^0) + \tau \Delta q_t \right].
$$
Let $D_t = (Y_{it} - \widehat{q}_t) \mathbbm{1}(Y_{it} \le \widehat{q}_t) - (Y_{it} - q_t^0) \mathbbm{1}(Y_{it} \le q_t^0)$.  
By analyzing the relative position of $Y_{it}$ with respect to $q_t^0$ and $\widehat{q}_t$, we consider four cases:
\begin{enumerate}
    \item $Y_{it} \le \min(q_t^0, \widehat{q}_t)$: $D_t = -\Delta q_t$, so $|D_t| = |\Delta q_t|$.
    \item $Y_{it} > \max(q_t^0, \widehat{q}_t)$: $D_t = 0$, so $|D_t| = 0$.
    \item $q_t^0 < Y_{it} \le \widehat{q}_t$ ($\widehat{q}_t > q_t^0$): $|D_t| = |Y_{it} - \widehat{q}_t| \le \widehat{q}_t - q_t^0 = \Delta q_t = |\Delta q_t|$.
    \item $\widehat{q}_t < Y_{it} \le q_t^0$ ($q_t^0 > \widehat{q}_t$): $|D_t| = q_t^0 - Y_{it} \le q_t^0 - \widehat{q}_t = -\Delta q_t = |\Delta q_t|$.
\end{enumerate}
In all cases, $|D_t| \le |\Delta q_t|$. Therefore,  
$$
|\Delta Z_t| \le \frac{1}{\tau} \left( |D_t| + \tau |\Delta q_t| \right) \le \frac{1+\tau}{\tau} |\Delta q_t| = \frac{1+\tau}{\tau} \left| X_{it}^\top (\widehat{\alpha}_{i} - \alpha_{i}^0) \right|.
$$
From quantile regression theory, $\widehat{\alpha}_{i} - \alpha_{i}^0 = O_p(T^{-1/2})$, and by Assumption \ref{assum:identification}, $E\|X_{it}\|^4 \le M$, so  
$$
|\Delta Z_t| = O_p(T^{-1/2}).
$$
Now consider the squared Euclidean norm:  
$$
\left\| Z_i^\ast(\widehat{\alpha}_{i}) - Z_i^\ast(\alpha_{i}^0) \right\|^2 
= \sum_{t=1}^T (\Delta Z_t)^2=O_p(T \cdot \|\widehat{\alpha}_{i} - \alpha_{i}^0\|^2) = O_p(T \cdot T^{-1}) = O_p(1).
$$
Hence,  
$$
\left\| Z_i^\ast(\widehat{\alpha}_{i}) - Z_i^\ast(\alpha_{i}^0) \right\| = O_p(1).
$$

\Halmos
\endproof
\begin{lemma}\label{lemmaA.2}
Under Assumptions \ref{assum:Regularity}--\ref{assum:local-identification}, \ref{assum:error}--\ref{assum:identification}, as $ N, T \rightarrow \infty $, we have
\begin{enumerate}[label=(\roman*)]
\item $\displaystyle\sup_{F\in\mathcal{F}}\left\|\frac{1}{NT} \sum_{i=1}^N \beta_i^\top X_i^\top  M_F^\prime \varepsilon_i\right\|=o_p(1)$,
\item $\displaystyle\sup_{F\in\mathcal{F}}\left\|\frac{1}{NT} \sum_{i=1}^N \lambda_i^{0\top}  F^{0\top} M_F \varepsilon_i\right\|=o_p(1)$,
\item $\displaystyle\sup_{F\in\mathcal{F}}\left\|\frac{1}{NT} \sum_{i=1}^N \varepsilon_i^\top (P_{F^0} - P_{F}) \varepsilon_i\right\|=o_p(1)$,
\item $\displaystyle\sup_{F\in\mathcal{F}}\left\|\frac{1}{NT} \sum_{i=1}^N\beta_i^\top X_i^\top M_F\left\{Z_i^\ast({\alpha}^0_{i})-Z_i^\ast(\widehat{\alpha}_{i})\right\}\right\|=o_p(1)$,
\item $\displaystyle\sup_{F\in\mathcal{F}}\left\|\frac{1}{NT} \sum_{i=1}^N\lambda_i^{0\top}F^{0\top}M_F\left\{Z_i^\ast(\widehat{\alpha}_{i})-Z_i^\ast(\alpha^0_{i})\right\}\right\|=o_p(1)$,
\item $\displaystyle\sup_{F\in\mathcal{F}}\left\|\frac{1}{NT} \sum_{i=1}^N\varepsilon_i^\top(P_{F^0}-P_F)\left\{Z_i^\ast(\widehat{\alpha}_{i})-Z_i^\ast(\alpha^0_{i})\right\}\right\|=o_p(1)$,
\item $\displaystyle\sup_{F\in\mathcal{F}}\left\|\frac{1}{NT} \sum_{i=1}^N\left\{Z_i^\ast(\widehat{\alpha}_{i})-Z_i^\ast(\alpha^0_{i})\right\}^\top(P_{F^0}-P_F)\left\{Z_i^\ast(\widehat{\alpha}_{i})-Z_i^\ast(\alpha^0_{i})\right\}\right\|=o_p(1)$,
\end{enumerate}
where $\mathcal{F}=\{F:F^\top F/T=I_r\}$.
\end{lemma}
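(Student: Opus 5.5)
The plan is to prove each of the seven uniform bounds by reducing it to a product of an $F$-free stochastic quantity and a deterministic bound that is uniform over $F\in\mathcal F$. The reduction uses two facts: $\|P_F\|_{\mathrm{op}}=\|M_F\|_{\mathrm{op}}=1$, and $P_F=FF^\top/T$ with $\|F\|=\sqrt{Tr}$ for every $F\in\mathcal F$. The terms split into two groups. Items (i)--(iii) involve only the oracle error $\varepsilon_i$ and follow \citeEC{bai2009panel}, Proposition A.1. Items (iv)--(vii) involve the generated-response perturbation $\Delta_i=Z_i^\ast(\widehat\alpha_i)-Z_i^\ast(\alpha_i^0)$. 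For these I will use Lemma~\ref{lem:generated-response-identity}, which gives the pointwise bound $|\Delta_{it}|\le\tau^{-1}|W_{it}^\top\widehat\gamma_i|$. I will also use Proposition~\ref{pro:quantile}, which gives $\max_i\|\widehat\gamma_i\|=O_p(\varrho_{N,T,p})=o_p(1)$. Together these yield
\[
\frac1{NT}\sum_{i=1}^N\|\Delta_i\|^2\le \tau^{-2}\max_i\|\widehat\gamma_i\|^2\,\max_i\lambda_{\max}\Bigl(\tfrac1TW_i^\top W_i\Bigr)=o_p(1),
\]
where the design bound is \eqref{eq:ES-design-upper} of Lemma~\ref{lem:ES-score-control}.

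For (i), I will write $M_F'=P_F-I_T$. The $I_T$ part gives $(NT)^{-1}\sum_i\beta_i^\top X_i^\top\varepsilon_i$, which is $O_p(T^{-1/2})$ uniformly over bounded $\beta_i$ by Assumption~\ref{assum:error}(i)--(ii) and $E\|X_{it}\|^4\le M$. For the $P_F$ part I will take the trace form $T^{-2}\sum_i\beta_i^\top X_i^\top F F^\top\varepsilon_i$ and bound it by
\[
\frac{1}{NT}\sum_i\|\beta_i\|\,\frac{\|X_i\|\,\|F\|}{\sqrt T}\,\frac{\|F^\top\varepsilon_i\|}{\sqrt T}
\le C\Bigl(\frac1N\sum_i\|\beta_i\|^2\frac{\|X_i\|^2}{T}\Bigr)^{1/2}\frac{\|\varepsilon\|_{\mathrm{op}}\sqrt r}{\sqrt{NT}}.
\]
I will then use $\|\varepsilon\|_{\mathrm{op}}^2\le\|\varepsilon\varepsilon^\top\|$. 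Assumption~\ref{assum:error}(ii)--(iv) gives $\|\varepsilon\|_{\mathrm{op}}^2=O_p(\max(N,T))$, so the bound is $O_p(\delta_{NT}^{-1})$. For (ii), the $P_F$ part is handled the same way with $F^0\lambda_i^0$ in place of $X_i\beta_i$, and the $I_T$ part is controlled by Assumption~\ref{assum:weak dependence}(ii). For (iii), I will bound $|\operatorname{tr}(P_F\varepsilon\varepsilon^\top)|/(NT)\le r\|\varepsilon\|_{\mathrm{op}}^2/(NT)=O_p(\delta_{NT}^{-2})$, and do the same for $P_{F^0}$.

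For (iv)--(vii), every term is a bilinear form $(NT)^{-1}\sum_i a_i^\top Q_F b_i$ with $\|Q_F\|_{\mathrm{op}}\le2$. Cauchy--Schwarz bounds each such form by $2\{(NT)^{-1}\sum_i\|a_i\|^2\}^{1/2}\{(NT)^{-1}\sum_i\|b_i\|^2\}^{1/2}$, uniformly in $F$. One factor is always $\{(NT)^{-1}\sum_i\|\Delta_i\|^2\}^{1/2}=o_p(1)$. The other factor is $O_p(1)$:
\begin{itemize}
\item in (iv) it is $X_i\beta_i$ with bounded $\beta_i$;
\item in (v) it is $F^0\lambda_i^0$, using the loading bound in Assumption~\ref{assum:factor-spectral}(i);
\item in (vi) it is $\varepsilon_i$, using Assumption~\ref{assum:error}(i);
\item in (vii) it is $\Delta_i$ again, so this term is $o_p(1)$ squared.
\end{itemize}

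The main obstacle is that Lemma~\ref{lemmaA.1} only gives $\|\Delta_i\|=O_p(1)$ for each fixed $i$, and that is not enough to control averages over $i$. The proof therefore has to go through the uniform first-stage bound of Proposition~\ref{pro:quantile} and the uniform design event. This requires $\varrho_{N,T,p}\to0$, which holds when $p$ is fixed and $N\asymp T$. A second delicate point is uniformity over $\beta_i$ in (i) and (iv). I will restrict to the set $N^{-1}\sum_i\|\beta_i\|^2\le K$, which suffices because the objective is quadratic, as the paper notes in the proof of Proposition~\ref{pro:average consistency}.
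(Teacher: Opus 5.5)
Your argument is correct after two small repairs in (i)--(iii). For (iv)--(vii) it takes a genuinely different, and more careful, route than the paper.

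\textbf{Items (i)--(iii).} The paper follows Bai's argument. It reduces everything to the $F$-free average $N^{-1}\sum_i\|T^{-1}\sum_t f_t\varepsilon_{it}\|^2$, splits this into centered and mean parts, and uses $T^{-2}\sum_{t,s}\|f_t\|^2\|f_s\|^2=r^2$ together with Assumption~\ref{assum:error}(ii)--(iii). Your spectral-norm version is shorter, but two steps need fixing.
\begin{enumerate}
\item The step from $\|\varepsilon\|_{\mathrm{op}}^2\le\|\varepsilon\varepsilon^\top\|$ and Assumption~\ref{assum:error} to $O_p(\max(N,T))$ does not work. The squared Frobenius norm of $\varepsilon\varepsilon^\top$ is of order $N^2T+NT^2$ even for i.i.d.\ errors, so this route only gives $\|\varepsilon\|_{\mathrm{op}}^2=O_p(N\sqrt T+T\sqrt N)$. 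That still yields $o_p(1)$, at the same order as the paper's argument. The order you state follows directly from Assumption~\ref{assum:factor-spectral}(ii), since $\varepsilon=\tau^{-1}\Xi$.
\item The $I_T$ part of (i) involves cross-moments of $X_{it}$ and $\varepsilon_{is}$, which Assumption~\ref{assum:error} does not control. You need $E(\varepsilon_{it}\mid\mathcal C_{it})=0$ together with the mixing and moment conditions, as in \eqref{eq:selection-oracle-row}. The paper itself only asserts that this term is $o_p(1)$.
\end{enumerate}

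\textbf{Items (iv)--(vii).} The paper bounds terms such as $(NT)^{-1}\sum_i\|\beta_i\|\|X_i\|\|Z_i^\ast(\widehat\alpha_i)-Z_i^\ast(\alpha_i^0)\|$ using Lemma~\ref{lemmaA.1}, and reports rates $O_p(T^{-1/2})$ and $O_p(T^{-1})$. As you note, that lemma is a fixed-$i$ statement. Averaging it over $i$ requires a uniformity that the paper does not supply.

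Your combination does supply it: the pointwise bound of Lemma~\ref{lem:generated-response-identity}, the simultaneous first-stage bound of Proposition~\ref{pro:quantile}, and Cauchy--Schwarz with $\|Q_F\|_{\mathrm{op}}\le2$, which is uniform in $F$ by construction. The price is a $\sqrt{\log N}$ loss in the rate. You also need a growth condition that is not in the lemma's bare ``$N,T\to\infty$'': $\varrho_{N,T,p}\to0$, plus the sample-size requirements of Proposition~\ref{pro:quantile} and of \eqref{eq:ES-design-upper}. These are harmless for fixed $p$ and $N\asymp T$.

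In the $\Delta_i$ bound you can avoid the maximal design event altogether. Replace $\max_i\lambda_{\max}(T^{-1}W_i^\top W_i)$ by $N^{-1}\sum_i\operatorname{tr}(T^{-1}W_i^\top W_i)$, which has expectation $p+1$ and is therefore $O_p(p+1)$ by Markov's inequality.
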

\proof{Proof of Lemma \ref{lemmaA.2}.}
(i) According to $M_F^\prime=-M_F=P_F-I_T$, we have
$$
\frac{1}{NT} \sum_{i=1}^N \beta_i^\top X_i^\top  M_F^\prime \varepsilon_i=\frac{1}{NT} \sum_{i=1}^N \beta_i^\top X_i^\top  P_F \varepsilon_i-\frac{1}{NT} \sum_{i=1}^N \beta_i^\top X_i^\top \varepsilon_i.
$$
From $\frac{1}{NT}\sum_{i=1}^{N}X_{i}^{\top}\varepsilon_{i}=o_{p}(1)$, it is sufficient to show $\sup_{F\in\mathcal{F}}\frac{1}{NT}\sum_{i=1}^{N}X_{i}^{\top}P_{F}\varepsilon_{i}=o_{p}(1)$. Using $P_{F}=FF^{\top}/T$,
\begin{equation}\nonumber
\begin{aligned}
\frac{1}{NT}\left\|\sum_{i=1}^{N}X_{i}^\top P_{F}\varepsilon_{i}\right\| &=\left\|\frac{1}{N}\sum_{i=1}^{N}\left(\frac{X_{i}^{\top}F}{T} \right)\frac{1}{T}\sum_{t=1}^{T}f_{t}\varepsilon_{it}\right\|\\
&\leq\frac{1}{N}\sum_{i=1}^{N}\left\|\frac{X_{i}^{\top}F}{T}\right\|\times\left\|\frac{1}{T}\sum_{t=1}^{T}f_{t}\varepsilon_{it}\right\|.
\end{aligned}
\end{equation}
Note that $T^{-1}\|X_{i}^{\top}F\|\leq T^{-1}\|X_{i}\|\|F\|=\sqrt{r}T^{-1/2}\|X_{i}\| \leq\sqrt{r}(\frac{1}{T}\sum_{t=1}^{T}\|X_{it}\|^{2})^{1/2}$ because $T^{-1/2}\|F\|=\sqrt{r}$. Thus, using the Cauchy-Schwarz inequality, the above is bounded by
$$
\sqrt{r}\left(\frac{1}{N}\sum_{i=1}^{N}\frac{1}{T}\sum_{t=1}^{T}\|X_{it}\|^{2 }\right)^{1/2}\left(\frac{1}{N}\sum_{i=1}^{N}\left\|\frac{1}{T}\sum_{t=1}^{T }f_{t}\varepsilon_{it}\right\|^{2}\right)^{1/2}.
$$
The first expression is $O_{p}(1)$. It suffices to show that the second term is $o_{p}(1)$ uniformly in $F$. Now
\begin{equation}\nonumber
\begin{aligned}
\frac{1}{N}\sum_{i=1}^{N}\left\|\frac{1}{T}\sum_{t=1}^{T}f_{t} \varepsilon_{it}\right\|^{2} =&\tr\left(\frac{1}{N}\sum_{i=1}^{N}\frac{1}{T^{2}} \sum_{t=1}^{T}\sum_{s=1}^{T}f_{t}f_{s}^{\top}\varepsilon_{it}\varepsilon_{is}\right)\\
=&\tr\left(\frac{1}{T^{2}}\sum_{t=1}^{T}\sum_{s=1}^{T}f_{t}f_{s}^{\top}\frac{1}{N}\sum_{i=1}^{N}[\varepsilon_{it}\varepsilon_{is}- E(\varepsilon_{it}\varepsilon_{is})]\right)\\
&+\tr\left(\frac{1}{T^{2}}\sum_{t=1}^{T}\sum_{s=1}^{T}f_{t}f_{s}^{\top}\frac{1}{N}\sum_{i=1}^{N}\sigma_{ii,ts}\right),
\end{aligned}
\end{equation}
where $\sigma_{ii,ts}=E(\varepsilon_{it}\varepsilon_{is})$. The first expression is bounded by the Cauchy-Schwarz inequality:
$$
\frac{1}{\sqrt{N}}\left(\frac{1}{T^{2}}\sum_{t=1}^{T}\sum_{s=1}^{T}\|f_{t}\|^{2}\| f_{s}\|^{2}\right)^{1/2}\left(\frac{1}{T^{2}}\sum_{t=1}^{T}\sum_{s=1}^{T}\left[\frac{1}{\sqrt{N}}\sum_{i=1}^{N}[\varepsilon_{it}\varepsilon_{is}-E(\varepsilon_{it}\varepsilon_{is})]\right]^{2}\right)^{1/2}.
$$
But $T^{-1}\sum_{t=1}^{T}\|f_{t}\|^{2}=\|F^{\top}F/T\|=r$. Thus the above expression is equal to $rN^{-1/2}O_{p}(1)$. Next, $|\frac{1}{N}\sum_{i=1}^{N}\sigma_{ii,ts}|\leq \eta_{ts}$ by Assumption \ref{assum:error}.(ii). Again by the Cauchy-Schwarz inequality,
\begin{equation}\nonumber
\begin{aligned}
\left\|\frac{1}{T^{2}}\sum_{t=1}^{T}\sum_{s=1}^{T}f_{t}f_{s}^{\top}\frac{1}{N}\sum_{i=1}^{N}\sigma_{ii,ts}\right\| &\leq\left(\frac{1}{T^{2}}\sum_{t=1}^{T}\sum_{s=1}^{T}\|f_{t}\|^{2} \|f_{s}\|^{2}\right)^{1/2}\left(\frac{1}{T^{2}}\sum_{t=1}^{T}\sum_{s=1}^{T}\tau_{ts}^{ 2}\right)^{1/2}\\
&=rT^{-1/2}\left(\frac{1}{T}\sum_{t=1}^{T}\sum_{s=1}^{T}\tau_{ts}^{2 }\right)^{1/2}\\
&=rO_p\left(T^{-1/2}\right).
\end{aligned}
\end{equation}
Hence, we have
$$
\sup_{F\in\mathcal{F}}\left\|\frac{1}{NT} \sum_{i=1}^N \beta_i^\top X_i^\top  M_F^\prime \varepsilon_i\right\|=o_p(1).
$$

(ii) According to $M_F=I_T-P_F$, we have
$$
\frac{1}{NT} \sum_{i=1}^N \lambda_i^{0\top}  F^{0\top} M_F \varepsilon_i=\frac{1}{NT} \sum_{i=1}^N \lambda_i^{0\top}  F^{0\top} \varepsilon_i-\frac{1}{NT} \sum_{i=1}^N \lambda_i^{0\top}  F^{0\top} P_F \varepsilon_i.
$$
We first prove $\sup_{F\in\mathcal{F}}\frac{1}{NT} \sum_{i=1}^N \lambda_i^{0\top}  F^{0\top}  \varepsilon_i=o_{p}(1)$.
\begin{equation}\nonumber
\begin{aligned}
\frac{1}{NT}\left\|\sum_{i=1}^N \lambda_i^{0\top}  F^{0\top}  \varepsilon_i\right\|&\leq\frac{1}{NT}\sum_{i=1}^N\left\| \lambda_i^{0\top}  F^{0\top}  \varepsilon_i\right\|\\
&\leq\frac{1}{N}\sum_{i=1}^N\|\lambda_i^0\|\left\|\frac{1}{T}\sum_{t=1}^Tf_t^0\varepsilon_{it}\right\|\\
&\leq\left(\frac{1}{N}\sum_{i=1}^N\|\lambda_i\|^2\right)^{1/2}\left(\frac{1}{N}\sum_{i=1}^N\left\|\frac{1}{T}\sum_{t=1}^Tf_t^0\varepsilon_{it}\right\|^2\right)^{1/2}.
\end{aligned}
\end{equation}
Same as the proof in (i), we have $\sup_{F\in\mathcal{F}}\frac{1}{NT} \sum_{i=1}^N \lambda_i^{0\top}  F^{0\top}  \varepsilon_i=o_{p}(1)$.

For the second term, using $P_{F}=FF^{\top}/T$, we have
\begin{equation}\nonumber
\begin{aligned}
\frac{1}{NT}\left\|\sum_{i=1}^N \lambda_i^{0\top}  F^{0\top} P_F \varepsilon_i\right\|&=\frac{1}{NT}\left\|\sum_{i=1}^N \lambda_i^{0\top}  F^{0\top} F\left(\frac{F^\top\varepsilon_i}{T}\right) \right\|\\
&\leq\frac{1}{NT}\sum_{i=1}^N\left\|F^{0\top}F\right\|\left\|\frac{1}{T}\sum_{t=1}^Tf_t\varepsilon_{it}\right\|\\
&\leq\left(\frac{1}{N}\sum_{i=1}^N\frac{1}{T^2}\|F^0\|^2\|F\|^2\right)^{1/2}\left(\frac{1}{N}\sum_{i=1}^N\left\|\frac{1}{T}\sum_{t=1}^Tf_t\varepsilon_{it}\right\|^2\right)^{1/2}.
\end{aligned}
\end{equation}
Note that $T^{-1/2}\|F\|=\sqrt{r}$, $T^{-1/2}\|F^0\|=\sqrt{r}$ and according to the proof of (i), we have $\sup_{F\in\mathcal{F}}\frac{1}{NT} \sum_{i=1}^N \lambda_i^{0\top}  F^{0\top} P_F \varepsilon_i=o_{p}(1)$.

Hence, we have
$$
\sup_{F\in\mathcal{F}}\left\|\frac{1}{NT} \sum_{i=1}^N \lambda_i^{0\top}  F^{0\top} M_F \varepsilon_i\right\|=o_p(1).
$$

(iii) Using $P_{F}=FF^{\top}/T$, we have
$$
\frac{1}{NT}\left\|\sum_{i=1}^N \varepsilon_i^\top P_{F} \varepsilon_i\right\|=\frac{1}{N}\left\|\sum_{i=1}^N \left(\frac{\varepsilon_i^\top F}{T}\right)\left(\frac{F^\top\varepsilon_i}{T}\right)\right\|\leq\frac{1}{N}\sum_{i=1}^N \left\|\frac{1}{T}\sum_{t=1}^Tf_t\varepsilon_{it}\right\|^2=o_p(1).
$$
Similarly,
$$
\frac{1}{NT}\left\|\sum_{i=1}^N \varepsilon_i^\top P_{F^0} \varepsilon_i\right\|=\frac{1}{N}\left\|\sum_{i=1}^N \left(\frac{\varepsilon_i^\top F^0}{T}\right)\left(\frac{F^{0\top}\varepsilon_i}{T}\right)\right\|\leq\frac{1}{N}\sum_{i=1}^N \left\|\frac{1}{T}\sum_{t=1}^Tf_t^0\varepsilon_{it}\right\|^2=o_p(1).
$$
Hence, we have
$$
\sup_{F\in\mathcal{F}}\left\|\frac{1}{NT} \sum_{i=1}^N \varepsilon_i^\top (P_{F^0} - P_{F}) \varepsilon_i\right\|=o_p(1).
$$

(iv) According to $M_F=I_T-P_F$, we have
$$
\frac{1}{NT} \sum_{i=1}^N\beta_i^\top X_i^\top M_F\left\{Z_i^\ast({\alpha}^0_{i})-Z_i^\ast(\widehat{\alpha}_{i})\right\}=\frac{1}{NT} \sum_{i=1}^N\beta_i^\top X_i^\top \left\{Z_i^\ast({\alpha}^0_{i})-Z_i^\ast(\widehat{\alpha}_{i})\right\}-\frac{1}{NT} \sum_{i=1}^N\beta_i^\top X_i^\top P_F\left\{Z_i^\ast({\alpha}^0_{i})-Z_i^\ast(\widehat{\alpha}_{i})\right\}.
$$
For the first term, by Lemma \ref{lemmaA.1},
\begin{equation}\nonumber
\frac{1}{NT}\left\|\sum_{i=1}^N\beta_i^\top X_i^\top \left\{Z_i^\ast({\alpha}^0_{i})-Z_i^\ast(\widehat{\alpha}_{i})\right\}\right\|\leq\frac{1}{NT}\sum_{i=1}^N\left\|\beta_i\right\| \left\|X_i\right\| \left\|Z_i^\ast({\alpha}^0_{i})-Z_i^\ast(\widehat{\alpha}_{i})\right\|
=O_p\left(\frac{1}{\sqrt{T}}\right).
\end{equation}
Similarly, for the second term,
\begin{equation}\nonumber
\frac{1}{NT}\left\|\sum_{i=1}^N\beta_i^\top X_i^\top P_F \left\{Z_i^\ast({\alpha}^0_{i})-Z_i^\ast(\widehat{\alpha}_{i})\right\}\right\|\leq\frac{1}{NT}\sum_{i=1}^N\left\|\beta_i\right\| \left\|X_i\right\| \left\|P_F\right\|\left\|Z_i^\ast({\alpha}^0_{i})-Z_i^\ast(\widehat{\alpha}_{i})\right\|=O_p\left(\frac{1}{\sqrt{T}}\right).
\end{equation}
Hence, we have
$$
\sup_{F\in\mathcal{F}}\left\|\frac{1}{NT} \sum_{i=1}^N\beta_i^\top X_i^\top M_F\left\{Z_i^\ast({\alpha}^0_{i})-Z_i^\ast(\widehat{\alpha}_{i})\right\}\right\|=o_p(1).
$$

(v) According to $M_F=I_T-P_F$, we have
$$
\frac{1}{NT} \sum_{i=1}^N\lambda_i^{0\top}F^{0\top}M_F\left\{Z_i^\ast(\widehat{\alpha}_{i})-Z_i^\ast(\alpha^0_{i})\right\}=\frac{1}{NT} \sum_{i=1}^N\lambda_i^{0\top}F^{0\top}\left\{Z_i^\ast(\widehat{\alpha}_{i})-Z_i^\ast(\alpha^0_{i})\right\}-\frac{1}{NT} \sum_{i=1}^N\lambda_i^{0\top}F^{0\top}P_F\left\{Z_i^\ast(\widehat{\alpha}_{i})-Z_i^\ast(\alpha^0_{i})\right\}.
$$
For the first term, by Lemma \ref{lemmaA.1},
$$
\frac{1}{NT}\left\|\sum_{i=1}^N\lambda_i^{0\top}F^{0\top}\left\{Z_i^\ast(\widehat{\alpha}_{i})-Z_i^\ast(\alpha^0_{i})\right\}\right\|\leq\frac{1}{NT}\sum_{i=1}^N\left\|\lambda_i^{0}\right\|\left\|F^{0}\right\|\left\|Z_i^\ast(\widehat{\alpha}_{i})-Z_i^\ast(\alpha^0_{i})\right\|=O_p\left(\frac{1}{\sqrt{T}}\right).
$$
Similarly, for the second term,
$$
\frac{1}{NT}\left\|\sum_{i=1}^N\lambda_i^{0\top}F^{0\top}P_F\left\{Z_i^\ast(\widehat{\alpha}_{i})-Z_i^\ast(\alpha^0_{i})\right\}\right\|\leq\frac{1}{NT}\sum_{i=1}^N\left\|\lambda_i^{0}\right\|\left\|F^{0}\right\|\left\|P_F\right\|\left\|Z_i^\ast(\widehat{\alpha}_{i})-Z_i^\ast(\alpha^0_{i})\right\|=O_p\left(\frac{1}{\sqrt{T}}\right).
$$
Hence, we have
$$
\sup_{F\in\mathcal{F}}\left\|\frac{1}{NT} \sum_{i=1}^N\lambda_i^{0\top}F^{0\top}M_F\left\{Z_i^\ast(\widehat{\alpha}_{i})-Z_i^\ast(\alpha^0_{i})\right\}\right\|=o_p(1).
$$

(vi) Using $P_{F^0}=F^0F^{0\top}/T$, by Lemma \ref{lemmaA.1},
\begin{equation}\nonumber
\begin{aligned}
\frac{1}{NT}\left\|\sum_{i=1}^N\varepsilon_i^\top P_{F^0}\left\{Z_i^\ast(\widehat{\alpha}_{i})-Z_i^\ast(\alpha^0_{i})\right\}\right\|&=\frac{1}{NT}\left\|\sum_{i=1}^N\varepsilon_i^\top \left(\frac{F^0F^{0\top}}{T}\right)\left\{Z_i^\ast(\widehat{\alpha}_{i})-Z_i^\ast(\alpha^0_{i})\right\}\right\|\\
&\leq\frac{1}{NT}\sum_{i=1}^N\left\|\frac{1}{T}\sum_{t=1}^Tf_t^0\varepsilon_{it}\right\|\left\|F^0\right\|\left\|Z_i^\ast(\widehat{\alpha}_{i})-Z_i^\ast(\alpha^0_{i})\right\|\\
&\leq\left(\frac{1}{N}\sum_{i=1}^N\left\|\frac{1}{T}\sum_{t=1}^Tf_t^0\varepsilon_{it}\right\|^2\right)^{1/2}\left(\frac{1}{N}\sum_{i=1}^N\frac{1}{T^2}\left\|F^0\right\|^2\left\|Z_i^\ast(\widehat{\alpha}_{i})-Z_i^\ast(\alpha^0_{i})\right\|^2\right)^{1/2}.
\end{aligned}
\end{equation}
Hence, we have
$$
\sup_{F\in\mathcal{F}}\left\|\frac{1}{NT} \sum_{i=1}^N\varepsilon_i^\top(P_{F^0}-P_F)\left\{Z_i^\ast(\widehat{\alpha}_{i})-Z_i^\ast(\alpha^0_{i})\right\}\right\|=o_p(1).
$$

(vii) We first prove $\sup_{F\in\mathcal{F}}\frac{1}{NT} \sum_{i=1}^N\left\{Z_i^\ast(\widehat{\alpha}_{i})-Z_i^\ast(\alpha^0_{i})\right\}^\top P_{F^0}\left\{Z_i^\ast(\widehat{\alpha}_{i})-Z_i^\ast(\alpha^0_{i})\right\}=o_p(1)$.
$$
\frac{1}{NT}\left\|\sum_{i=1}^N\left\{Z_i^\ast(\widehat{\alpha}_{i})-Z_i^\ast(\alpha^0_{i})\right\}^\top P_{F^0}\left\{Z_i^\ast(\widehat{\alpha}_{i})-Z_i^\ast(\alpha^0_{i})\right\}\right\|\leq\frac{1}{NT}\sum_{i=1}^N\left\|Z_i^\ast(\widehat{\alpha}_{i})-Z_i^\ast(\alpha^0_{i})\right\|^2\left\|P_{F^0}\right\|=O_p\left(\frac{1}{T}\right).
$$
Similarly,
$$
\frac{1}{NT}\left\|\sum_{i=1}^N\left\{Z_i^\ast(\widehat{\alpha}_{i})-Z_i^\ast(\alpha^0_{i})\right\}^\top P_F\left\{Z_i^\ast(\widehat{\alpha}_{i})-Z_i^\ast(\alpha^0_{i})\right\}\right\|\leq\frac{1}{NT}\sum_{i=1}^N\left\|Z_i^\ast(\widehat{\alpha}_{i})-Z_i^\ast(\alpha^0_{i})\right\|^2\left\|P_F\right\|=O_p\left(\frac{1}{T}\right).
$$
Hence, we have
$$
\sup_{F\in\mathcal{F}}\left\|\frac{1}{NT} \sum_{i=1}^N\left\{Z_i^\ast(\widehat{\alpha}_{i})-Z_i^\ast(\alpha^0_{i})\right\}^\top(P_{F^0}-P_F)\left\{Z_i^\ast(\widehat{\alpha}_{i})-Z_i^\ast(\alpha^0_{i})\right\}\right\|=o_p(1).
$$

\Halmos
\endproof

\subsection{Proof of Theorem \ref{theo:consistency}}
\proof{Proof of Theorem \ref{theo:consistency}.}
From $ Z_i^\ast(\widehat{\alpha}_{i})  =Z_i^\ast(\widehat{\alpha}_{i})-Z_i^\ast({\alpha}_{i}^0)+ X_i \beta_i^0  + F^0 \lambda_i^0 + \varepsilon_i $, we have
\begin{equation}\nonumber
\begin{aligned}
\widehat{\beta}_{i} =& \left( X_i^\top M_{\widehat{F}} X_i \right)^{-1}X_i^\top M_{\widehat{F}} Z_i^\ast(\widehat{\alpha}_{i})\\
=&\left( X_i^\top M_{\widehat{F}} X_i \right)^{-1}X_i^\top M_{\widehat{F}}\left\{Z_i^\ast(\widehat{\alpha}_{i})-Z_i^\ast({\alpha}_{i}^0)+ X_i \beta_i^0  + F^0 \lambda_i^0 + \varepsilon_i\right\}\\
=&\left( X_i^\top M_{\widehat{F}} X_i \right)^{-1}X_i^\top M_{\widehat{F}}\left\{Z_i^\ast(\widehat{\alpha}_{i})-Z_i^\ast({\alpha}_{i}^0)\right\}+\beta_i^0\\
&+\left( X_i^\top M_{\widehat{F}} X_i \right)^{-1}X_i^\top M_{\widehat{F}}F^0 \lambda_i^0+\left( X_i^\top M_{\widehat{F}} X_i \right)^{-1}X_i^\top M_{\widehat{F}}\varepsilon_i.
\end{aligned}
\end{equation}
Thus,
\begin{equation}\nonumber
\begin{aligned}
\widehat{\beta}_{i}-\beta_i^0=&\left( X_i^\top M_{\widehat{F}} X_i \right)^{-1}X_i^\top M_{\widehat{F}}F^0 \lambda_i^0+\left( X_i^\top M_{\widehat{F}} X_i \right)^{-1}X_i^\top M_{\widehat{F}}\varepsilon_i\\
&+\left( X_i^\top M_{\widehat{F}} X_i \right)^{-1}X_i^\top M_{\widehat{F}}\left\{Z_i^\ast(\widehat{\alpha}_{i})-Z_i^\ast({\alpha}_{i}^0)\right\}
\end{aligned}
\end{equation}
or
\begin{equation}\label{eq:beta-beta0}
\left( \frac{1}{T}X_i^\top M_{\widehat{F}} X_i \right)\left(\widehat{\beta}_{i}-\beta_i^0\right)=\frac{1}{T}X_i^\top M_{\widehat{F}}F^0 \lambda_i^0+\frac{1}{T}X_i^\top M_{\widehat{F}}\varepsilon_i+\frac{1}{T}X_i^\top M_{\widehat{F}}\left\{Z_i^\ast(\widehat{\alpha}_{i})-Z_i^\ast({\alpha}_{i}^0)\right\}.
\end{equation}
In view of $ M_{\widehat{F}} \widehat{F} = 0 $, we have $ M_{\widehat{F}} F^0 = M_{\widehat{F}}(F^0 - \widehat{F} H^{-1}) $. From \eqref{eq:B.2}, we get
$$
F^0 - \widehat{F} H^{-1} = -\left[I1 + \cdots + I15\right]\left(\frac{F^{0\top} \widehat{F}}{T}\right)^{-1}\left(\frac{\Lambda^{0\top} \Lambda^0}{N}\right)^{-1}.
$$
It follows that
\begin{equation}\nonumber
\begin{aligned}
\frac{1}{T}X_i^\top M_{\widehat{F}}F^0 \lambda_i^0=&\frac{1}{T}X_i^\top M_{\widehat{F}}\left(F^0 - \widehat{F} H^{-1}\right) \lambda_i^0\\
=&-\frac{1}{T}X_i^\top M_{\widehat{F}}\left[I1 + \cdots + I15\right]\left(\frac{F^{0\top} \widehat{F}}{T}\right)^{-1}\left(\frac{\Lambda^{0\top} \Lambda^0}{N}\right)^{-1}\lambda_i^0\\
:=&J1+\cdots+J15.
\end{aligned}
\end{equation}
For the first term,
\begin{equation}\nonumber
\begin{aligned}
J1=-\frac{1}{T}X_i^\top M_{\widehat{F}}\left(I1\right)\left(\frac{F^{0\top} \widehat{F}}{T}\right)^{-1}\left(\frac{\Lambda^{0\top} \Lambda^0}{N}\right)^{-1}\lambda_i^0,
\end{aligned}
\end{equation}
By the proof of Proposition \ref{pro:B.1},
\begin{equation}\nonumber
\begin{aligned}
\left\|J1\right\|=&\left\|-\frac{1}{T}X_i^\top M_{\widehat{F}}\left(I1\right)\left(\frac{F^{0\top} \widehat{F}}{T}\right)^{-1}\left(\frac{\Lambda^{0\top} \Lambda^0}{N}\right)^{-1}\lambda_i^0\right\|\\
\leq&\frac{1}{\sqrt{T}}\left\|I1\right\|\\
=&O_p\left(\frac{1}{N}\sum_{i=1}^N\left\|\widehat{\beta}_i-\beta_i^0\right\|^2\right).
\end{aligned}
\end{equation}
Consider
\begin{equation}\nonumber
\begin{aligned}
J2=&-\frac{1}{T}X_i^\top M_{\widehat{F}}\left(I2\right)\left(\frac{F^{0\top} \widehat{F}}{T}\right)^{-1}\left(\frac{\Lambda^{0\top} \Lambda^0}{N}\right)^{-1}\lambda_i^0\\
=&\frac{1}{T}X_i^\top M_{\widehat{F}}\left[\frac{1}{N} \sum_{k=1}^N X_k \left(\widehat{\beta}_k-\beta_k^0\right) \lambda_k^{0\top}  \right]\left(\frac{\Lambda^{0\top} \Lambda^0}{N}\right)^{-1}\lambda_i^0\\
=&\frac{1}{TN}\sum_{k=1}^N\left(X_i^\top M_{\widehat{F}}X_k\right)\left[\lambda_k^{0\top}\left(\frac{\Lambda^{0\top} \Lambda^0}{N}\right)^{-1}\lambda_i^0\right]\left(\widehat{\beta}_k-\beta_k^0\right)\\
=&\frac{1}{TN}\sum_{k=1}^N\left(X_i^\top M_{\widehat{F}}X_ka_{ik}\right)\left(\widehat{\beta}_k-\beta_k^0\right),
\end{aligned}
\end{equation}
where $ a_{ik} = \lambda_i^{0\top}  (\Lambda^{0\top} \Lambda^0/N)^{-1} \lambda_k^0 $ is a scalar and thus commutable with $ \widehat{\beta}_k - \beta_k^0 $. Now consider
\begin{equation}\nonumber
\begin{aligned}
J3=&-\frac{1}{T}X_i^\top M_{\widehat{F}}\left(I3\right)\left(\frac{F^{0\top} \widehat{F}}{T}\right)^{-1}\left(\frac{\Lambda^{0\top} \Lambda^0}{N}\right)^{-1}\lambda_i^0\\
=&\frac{1}{T}X_i^\top M_{\widehat{F}}\left[\frac{1}{NT} \sum_{k=1}^N X_k (\widehat{\beta}_k-\beta_k^0) \varepsilon_k^\top \widehat{F}\right]\left(\frac{F^{0\top} \widehat{F}}{T}\right)^{-1}\left(\frac{\Lambda^{0\top} \Lambda^0}{N}\right)^{-1}\lambda_i^0\\
=&\frac{1}{NT}\sum_{k=1}^NX_i^\top M_{\widehat{F}}  X_k (\widehat{\beta}_k-\beta_k^0) \left(\frac{\varepsilon_k^\top \widehat{F}}{T}\right)\left(\frac{F^{0\top} \widehat{F}}{T}\right)^{-1}\left(\frac{\Lambda^{0\top} \Lambda^0}{N}\right)^{-1}\lambda_i^0
\end{aligned}
\end{equation}
By Assumption \ref{assum:weak dependence} and Proposition \ref{pro:B.1}, writing 
\begin{equation}\nonumber
\begin{aligned}
\frac{1}{T}\varepsilon_k^\top \widehat{F}=&\frac{1}{T}\varepsilon_k^\top F^0 H+\frac{1}{T}\varepsilon_k^\top (\widehat{F} - F^0 H)\\
=&O_p\left(\frac{1}{\sqrt{T}}\right)+O_p\left(\left(\frac{1}{N}\sum_{i=1}^N\left\|\widehat{\beta}_i-\beta_i^0\right\|^2\right)^{1/2}\right) + O_p\left(\frac{1}{\min[\sqrt{N}, \sqrt{T}]}\right),
\end{aligned}
\end{equation}
it is easy to see that $ \left\|J3\right\| = o_p(1)\cdot O_p((\frac{1}{N}\sum_{i=1}^N\|\widehat{\beta}_i-\beta_i^0\|^2)^{1/2})$. Next
\begin{equation}\nonumber
\begin{aligned}
J4&=-\frac{1}{T}X_i^\top M_{\widehat{F}}\left(I4\right)\left(\frac{F^{0\top} \widehat{F}}{T}\right)^{-1}\left(\frac{\Lambda^{0\top} \Lambda^0}{N}\right)^{-1}\lambda_i^0\\
&=\frac{1}{T}X_i^\top M_{\widehat{F}}\left[\frac{1}{NT} \sum_{k=1}^N F^0 \lambda_k^0 (\widehat{\beta}_k-\beta_k^0)^\top  X_k^\top  \widehat{F}\right]\left(\frac{F^{0\top} \widehat{F}}{T}\right)^{-1}\left(\frac{\Lambda^{0\top} \Lambda^0}{N}\right)^{-1}\lambda_i^0\\
&=\frac{1}{NT}\sum_{k=1}^NX_i^\top M_{\widehat{F}}F^0 \lambda_k^0 (\widehat{\beta}_k-\beta_k^0)^\top\left(\frac{X_k^\top  \widehat{F}}{T}\right)\left(\frac{F^{0\top} \widehat{F}}{T}\right)^{-1}\left(\frac{\Lambda^{0\top} \Lambda^0}{N}\right)^{-1}\lambda_i^0.
\end{aligned}
\end{equation}
Writing $ M_{\widehat{F}} F^0 = M_{\widehat{F}} (F^0 - \widehat{F} H^{-1}) $ and using that $ T^{-1/2} \| F^0 - \widehat{F} H^{-1} \| $ is small, then $ \left\|J4\right\| = o_p(1)\cdot O_p((\frac{1}{N}\sum_{i=1}^N\|\widehat{\beta}_i-\beta_i^0\|^2)^{1/2})$. It is easy to show $ \left\|J5\right\| = o_p(1)\cdot O_p((\frac{1}{N}\sum_{i=1}^N\|\widehat{\beta}_i-\beta_i^0\|^2)^{1/2})$ and thus it is omitted.

The terms $ J6-J8 $ do not explicitly depend on $ \widehat{\beta}_i - \beta_i^0 $. Consider
\begin{equation}\nonumber
\begin{aligned}
J6&=-\frac{1}{T}X_i^\top M_{\widehat{F}}\left(I6\right)\left(\frac{F^{0\top} \widehat{F}}{T}\right)^{-1}\left(\frac{\Lambda^{0\top} \Lambda^0}{N}\right)^{-1}\lambda_i^0\\
&=-\frac{1}{T}X_i^\top M_{\widehat{F}}\left(\frac{1}{NT} \sum_{k=1}^N F^0 \lambda_k^0 \varepsilon_k^\top \widehat{F}\right)\left(\frac{F^{0\top} \widehat{F}}{T}\right)^{-1}\left(\frac{\Lambda^{0\top} \Lambda^0}{N}\right)^{-1}\lambda_i^0\\
&=-\frac{1}{NT}\sum_{k=1}^N X_i^\top M_{\widehat{F}}F^0 \lambda_k^0\left(\frac{\varepsilon_k^\top \widehat{F}}{T}\right)\left(\frac{F^{0\top} \widehat{F}}{T}\right)^{-1}\left(\frac{\Lambda^{0\top} \Lambda^0}{N}\right)^{-1}\lambda_i^0.
\end{aligned}
\end{equation}
Denote $ G = (F^{0\top} \widehat{F} / T)^{-1} (\Lambda^{0\top} \Lambda^0 / N)^{-1} $ for the moment: it is a matrix of fixed dimension and does not vary with $ i $ and $k$. Using $ M_{\widehat{F}} F^0 = M_{\widehat{F}} (F^0 - \widehat{F} H^{-1}) $, we can write
$$
J6 = -\frac{1}{T} X_i^\top  M_{\widehat{F}} \left(F^0 - \widehat{F} H^{-1}\right) \left[ \frac{1}{N} \sum_{k=1}^{N} \lambda_k^0 \left( \frac{\varepsilon_k^\top \widehat{F}}{T} \right) \right] G \lambda_i^0.
$$
Now, by Assumption \ref{assum:weak dependence} and Proposition \ref{pro:B.1}, 
\begin{equation}\nonumber
\begin{aligned}
\left\|\frac{1}{NT} \sum_{k=1}^{N} \lambda_k^0 \varepsilon_k^\top \widehat{F}\right\| &=\left\|\frac{1}{NT} \sum_{k=1}^{N} \lambda_k^0 \varepsilon_k^\top F^0 H + \frac{1}{NT} \sum_{k=1}^{N} \lambda_k^0 \varepsilon_k^\top (\widehat{F} - F^0 H)\right\| \\
&\leq \left\|\frac{1}{NT} \sum_{k=1}^{N} \lambda_k^0 \varepsilon_k^\top F^0 H\right\| + \left\|\frac{1}{NT} \sum_{k=1}^{N} \lambda_k^0 \varepsilon_k^\top (\widehat{F} - F^0 H)\right\|\\
&= O_p \left( \frac{1}{\sqrt{NT}} \right) + O_p\left(\left(\frac{1}{N}\sum_{i=1}^N\left\|\widehat{\beta}_i-\beta_i^0\right\|^2\right)^{1/2}\right) + O_p\left(\frac{1}{\min[\sqrt{N}, \sqrt{T}]}\right)\\
&= O_p\left(\left(\frac{1}{N}\sum_{i=1}^N\left\|\widehat{\beta}_i-\beta_i^0\right\|^2\right)^{1/2}\right) + O_p\left(\frac{1}{\min[\sqrt{N}, \sqrt{T}]}\right).
\end{aligned}
\end{equation}

Furthermore,
$$
\left\|\frac{1}{T}  X_i^\top  M_{\widehat{F}} (\widehat{F} - F^0 H)\right\|
  = O_p\left(\left(\frac{1}{N}\sum_{i=1}^N\left\|\widehat{\beta}_i-\beta_i^0\right\|^2\right)^{1/2}\right) + O_p\left(\frac{1}{\min[\sqrt{N}, \sqrt{T}]}\right)
$$
and noting $ G $ does not depend on $ i $ and $k$, and $ \|G\| = O_p(1) $, we have
\begin{equation}\nonumber
\begin{aligned}
\left\|J6\right\|&=\left[O_p\left(\left(\frac{1}{N}\sum_{i=1}^N\left\|\widehat{\beta}_i-\beta_i^0\right\|^2\right)^{1/2}\right) + O_p\left(\frac{1}{\min[\sqrt{N}, \sqrt{T}]}\right)\right]^2\\
&=o_p(1)\cdot O_p\left(\left(\frac{1}{N}\sum_{i=1}^N\left\|\widehat{\beta}_i-\beta_i^0\right\|^2\right)^{1/2}\right)+O_p\left(\frac{1}{\min[{N}, {T}]}\right).
\end{aligned}
\end{equation}
The term $ J7 $ is simply
\begin{equation}\nonumber
\begin{aligned}
J7&=-\frac{1}{T}X_i^\top M_{\widehat{F}}\left(I7\right)\left(\frac{F^{0\top} \widehat{F}}{T}\right)^{-1}\left(\frac{\Lambda^{0\top} \Lambda^0}{N}\right)^{-1}\lambda_i^0\\
&=-\frac{1}{T}X_i^\top M_{\widehat{F}}\left(\frac{1}{NT} \sum_{k=1}^N \varepsilon_k \lambda_k^{0\top}  F^{0\top} \widehat{F}\right)\left(\frac{F^{0\top} \widehat{F}}{T}\right)^{-1}\left(\frac{\Lambda^{0\top} \Lambda^0}{N}\right)^{-1}\lambda_i^0\\
&=-\frac{1}{NT}\sum_{k=1}^NX_i^\top M_{\widehat{F}}\varepsilon_k \lambda_k^{0\top}\left(\frac{\Lambda^{0\top} \Lambda^0}{N}\right)^{-1}\lambda_i^0\\
&=-\frac{1}{NT}\sum_{k=1}^Na_{ik}X_i^\top M_{\widehat{F}}\varepsilon_k
\end{aligned}
\end{equation}
Next consider $ J8 $, which has the expression
\begin{equation}\nonumber
\begin{aligned}
J8&=-\frac{1}{T}X_i^\top M_{\widehat{F}}\left(I8\right)\left(\frac{F^{0\top} \widehat{F}}{T}\right)^{-1}\left(\frac{\Lambda^{0\top} \Lambda^0}{N}\right)^{-1}\lambda_i^0\\
&=-\frac{1}{T}X_i^\top M_{\widehat{F}}\left(\frac{1}{NT} \sum_{k=1}^N \varepsilon_k \varepsilon_k^\top \widehat{F}\right)\left(\frac{F^{0\top} \widehat{F}}{T}\right)^{-1}\left(\frac{\Lambda^{0\top} \Lambda^0}{N}\right)^{-1}\lambda_i^0\\
&=-\frac{1}{NT^2}\sum_{k=1}^NX_i^\top M_{\widehat{F}}\varepsilon_k \varepsilon_k^\top \widehat{F}\left(\frac{F^{0\top} \widehat{F}}{T}\right)^{-1}\left(\frac{\Lambda^{0\top} \Lambda^0}{N}\right)^{-1}\lambda_i^0.
\end{aligned}
\end{equation}
Let $ E(\varepsilon_k \varepsilon_k^\top) = \Omega_k$. Note that $ G = (F^{0\top} \widehat{F}/T)^{-1}(\Lambda^{0\top} \Lambda^{0}/N)^{-1} $ and $ \|G\| = O_p(1) $, and rewriting gives
\begin{equation}\label{eq:J8}
\begin{aligned}
J8=&-\frac{1}{NT^2} \sum_{k=1}^N X_i^\top M_{\widehat{F}} \Omega_k \widehat{F} G \lambda_i^0\\
&-\frac{1}{NT^2} \sum_{k=1}^N X_i^\top M_{\widehat{F}} (\varepsilon_k \varepsilon_k^\top - \Omega_k) \widehat{F} G \lambda_i^0.
\end{aligned}
\end{equation}
Denote the first term on the right by $ A_{NT} $. Let $\Omega=\frac{1}{N}\sum_{k=1}^N\Omega_k$, then
\begin{equation}\nonumber
\begin{aligned}
\left\|A_{NT}\right\|=&\left\|-\frac{1}{NT^2} \sum_{k=1}^N X_i^\top M_{\widehat{F}} \Omega_k \widehat{F} G \lambda_i^0\right\|\\
=&\frac{1}{T^2}\left\|X_i^\top M_{\widehat{F}} \Omega\widehat{F} G \lambda_i^0\right\|\\
\leq&\frac{1}{T^2}\left\|X_i^\top M_{\widehat{F}}\right\|\left\|\Omega\widehat{F}\right\|\\
=&O_p\left(\frac{1}{T}\right)
\end{aligned}
\end{equation}
because $\|X_{i}^{\top}M_{\widehat{F}}\|\leq\|X_{i}\|$, and $\|\Omega\widehat{F}\|\leq\lambda_{\max}(\Omega)\times\|\widehat{F}\|=\lambda_{\max}( \Omega)\sqrt{rT}$, where $\lambda_{\max}(\Omega)$ is the largest eigenvalue of $\Omega$ and is bounded by assumption.
By Lemma \ref{lemma:B.1}, we have
\begin{equation}\nonumber
\begin{aligned}
J8 &= O_p\left(\frac{1}{T}\right)+O_p \left( \frac{1}{T\sqrt{N}} \right) +o_p(1)\cdot O_p\left(\left(\frac{1}{N}\sum_{i=1}^N\left\|\widehat{\beta}_i-\beta_i^0\right\|^2\right)^{1/2}\right)+\frac{1}{\sqrt{NT}} O_p(\delta_{NT}^{-1})+\frac{1}{\sqrt{N}} O_p (\delta_{NT}^{-2})\\
&=O_p\left(\frac{1}{T}\right)+o_p(1)\cdot O_p\left(\left(\frac{1}{N}\sum_{i=1}^N\left\|\widehat{\beta}_i-\beta_i^0\right\|^2\right)^{1/2}\right).
\end{aligned}
\end{equation}
The remaining terms $J9$-$J15$ are related to $\left\{Z_k^\ast(\widehat{\alpha}_{k})-Z_k^\ast({\alpha}_{k}^0)\right\}$, and we can use Lemma \ref{lemma:B.5} to analyze. For terms $J9$ and $J15$,
\begin{equation}\nonumber
\begin{aligned}
\left\|J9\right\|&=\left\|-\frac{1}{T}X_i^\top M_{\widehat{F}}\left(I9\right)\left(\frac{F^{0\top} \widehat{F}}{T}\right)^{-1}\left(\frac{\Lambda^{0\top} \Lambda^0}{N}\right)^{-1}\lambda_i^0\right\|\\
&=\left\|-\frac{1}{T}X_i^\top M_{\widehat{F}}\left(\frac{1}{NT} \sum_{k=1}^N\left\{Z_k^\ast(\widehat{\alpha}_{k})-Z_k^\ast({\alpha}_{k}^0)\right\}\left\{Z_k^\ast(\widehat{\alpha}_{k})-Z_k^\ast({\alpha}_{k}^0)\right\}^\top\widehat{F}\right)\left(\frac{F^{0\top} \widehat{F}}{T}\right)^{-1}\left(\frac{\Lambda^{0\top} \Lambda^0}{N}\right)^{-1}\lambda_i^0\right\|\\
&=O_p\left(\frac{1}{T}\right),
\end{aligned}
\end{equation}
\begin{equation}\nonumber
\begin{aligned}
\left\|J10\right\|&=\left\|-\frac{1}{T}X_i^\top M_{\widehat{F}}\left(I10\right)\left(\frac{F^{0\top} \widehat{F}}{T}\right)^{-1}\left(\frac{\Lambda^{0\top} \Lambda^0}{N}\right)^{-1}\lambda_i^0\right\|\\
&=\left\|-\frac{1}{T}X_i^\top M_{\widehat{F}}\left(\frac{1}{NT} \sum_{k=1}^N\left\{Z_k^\ast(\widehat{\alpha}_{k})-Z_k^\ast({\alpha}_{k}^0)\right\}(\beta_k^0 - \widehat{\beta}_k)^\top X_k^\top\widehat{F}\right)\left(\frac{F^{0\top} \widehat{F}}{T}\right)^{-1}\left(\frac{\Lambda^{0\top} \Lambda^0}{N}\right)^{-1}\lambda_i^0\right\|\\
&\leq\frac{1}{\sqrt{T}}\frac{1}{N}\sum_{k=1}^N\left\|\frac{1}{\sqrt{T}}X_i^\top M_{\widehat{F}}\right\|\left\|\widehat{\beta}_k-\beta_k^0\right\|\left\|\frac{1}{T}X_k^\top\widehat{F}\right\|\\
&=\frac{1}{\sqrt{T}}O_p\left(\left(\frac{1}{N}\sum_{i=1}^N\left\|\widehat{\beta}_i-\beta_i^0\right\|^2\right)^{1/2}\right).
\end{aligned}
\end{equation}
For terms $J11$ and $J12$, by Lemma \ref{lemma:B.5}, we have
\begin{equation}\nonumber
\begin{aligned}
\left\|J11\right\|&=\left\|-\frac{1}{T}X_i^\top M_{\widehat{F}}\left(I11\right)\left(\frac{F^{0\top} \widehat{F}}{T}\right)^{-1}\left(\frac{\Lambda^{0\top} \Lambda^0}{N}\right)^{-1}\lambda_i^0\right\|\\
&=\left\|-\frac{1}{T}X_i^\top M_{\widehat{F}}\left(\frac{1}{NT} \sum_{k=1}^N\left\{Z_k^\ast(\widehat{\alpha}_{k})-Z_k^\ast({\alpha}_{k}^0)\right\}\lambda_k^{0\top}  F^{0\top}\widehat{F}\right)\left(\frac{F^{0\top} \widehat{F}}{T}\right)^{-1}\left(\frac{\Lambda^{0\top} \Lambda^0}{N}\right)^{-1}\lambda_i^0\right\|\\
&\leq\frac{1}{\sqrt{T}}\frac{1}{N}\sum_{k=1}^N\left\|\frac{1}{\sqrt{T}}X_i^\top M_{\widehat{F}}\left\{Z_k^\ast(\widehat{\alpha}_{k})-Z_k^\ast({\alpha}_{k}^0)\right\}\right\|\left\|\frac{1}{T}F^{0\top}\widehat{F}\right\|\\
&=o_p\left(\frac{1}{\sqrt{T}}\right),
\end{aligned}
\end{equation}
\begin{equation}\nonumber
\begin{aligned}
\left\|J12\right\|&=\left\|-\frac{1}{T}X_i^\top M_{\widehat{F}}\left(I12\right)\left(\frac{F^{0\top} \widehat{F}}{T}\right)^{-1}\left(\frac{\Lambda^{0\top} \Lambda^0}{N}\right)^{-1}\lambda_i^0\right\|\\
&=\left\|-\frac{1}{T}X_i^\top M_{\widehat{F}}\left(\frac{1}{NT} \sum_{k=1}^N\left\{Z_k^\ast(\widehat{\alpha}_{k})-Z_k^\ast({\alpha}_{k}^0)\right\}\varepsilon_k^\top\widehat{F}\right)\left(\frac{F^{0\top} \widehat{F}}{T}\right)^{-1}\left(\frac{\Lambda^{0\top} \Lambda^0}{N}\right)^{-1}\lambda_i^0\right\|\\
&\leq\frac{1}{\sqrt{T}}\frac{1}{N}\sum_{k=1}^N\left\|\frac{1}{\sqrt{T}}X_i^\top M_{\widehat{F}}\left\{Z_k^\ast(\widehat{\alpha}_{k})-Z_k^\ast({\alpha}_{k}^0)\right\}\right\|\left\|\frac{1}{T}\varepsilon_k^\top\widehat{F}\right\|\\
&=o_p\left(\frac{1}{\sqrt{T}}\right).
\end{aligned}
\end{equation}
Consider $J13$, we have
\begin{equation}\nonumber
\begin{aligned}
\left\|J13\right\|&=\left\|-\frac{1}{T}X_i^\top M_{\widehat{F}}\left(I13\right)\left(\frac{F^{0\top} \widehat{F}}{T}\right)^{-1}\left(\frac{\Lambda^{0\top} \Lambda^0}{N}\right)^{-1}\lambda_i^0\right\|\\
&=\left\|-\frac{1}{T}X_i^\top M_{\widehat{F}}\left(\frac{1}{NT} \sum_{k=1}^NX_k (\beta_k^0 - \widehat{\beta}_k)\left\{Z_k^\ast(\widehat{\alpha}_{k})-Z_k^\ast({\alpha}_{k}^0)\right\}^\top\widehat{F}\right)\left(\frac{F^{0\top} \widehat{F}}{T}\right)^{-1}\left(\frac{\Lambda^{0\top} \Lambda^0}{N}\right)^{-1}\lambda_i^0\right\|\\
&\leq\frac{1}{\sqrt{T}}\frac{1}{N}\sum_{k=1}^N\left\|\frac{1}{T}X_i^\top M_{\widehat{F}}X_k\right\|\left\|\widehat{\beta}_k-\beta_k^0\right\|\left\|\frac{1}{\sqrt{T}}\widehat{F}\right\|\\
&=\frac{1}{\sqrt{T}}O_p\left(\left(\frac{1}{N}\sum_{i=1}^N\left\|\widehat{\beta}_i-\beta_i^0\right\|^2\right)^{1/2}\right).
\end{aligned}
\end{equation}
For term $J14$, by $M_{\widehat{F}}-M_{F^0}=o_p(1)$ and $M_{F^0}F^0=0$,
\begin{equation}\nonumber
\begin{aligned}
\left\|J14\right\|&=\left\|-\frac{1}{T}X_i^\top M_{\widehat{F}}\left(I14\right)\left(\frac{F^{0\top} \widehat{F}}{T}\right)^{-1}\left(\frac{\Lambda^{0\top} \Lambda^0}{N}\right)^{-1}\lambda_i^0\right\|\\
&=\left\|-\frac{1}{T}X_i^\top M_{\widehat{F}}\left(\frac{1}{NT} \sum_{k=1}^NF^0 \lambda_k^0\left\{Z_k^\ast(\widehat{\alpha}_{k})-Z_k^\ast({\alpha}_{k}^0)\right\}^\top\widehat{F}\right)\left(\frac{F^{0\top} \widehat{F}}{T}\right)^{-1}\left(\frac{\Lambda^{0\top} \Lambda^0}{N}\right)^{-1}\lambda_i^0\right\|\\
&\leq\frac{1}{\sqrt{T}}\frac{1}{N}\sum_{k=1}^N\left\|\frac{1}{T}X_i^\top M_{\widehat{F}}F^0 \lambda_k^0\left\{Z_k^\ast(\widehat{\alpha}_{k})-Z_k^\ast({\alpha}_{k}^0)\right\}^\top\right\|\left\|\frac{1}{\sqrt{T}}\widehat{F}\right\|\\
&\leq\frac{1}{\sqrt{T}}\frac{1}{N}\sum_{k=1}^N\left\|\frac{1}{T}X_i^\top (M_{\widehat{F}}-M_{F^0})F^0 \lambda_k^0\left\{Z_k^\ast(\widehat{\alpha}_{k})-Z_k^\ast({\alpha}_{k}^0)\right\}^\top\right\|\left\|\frac{1}{\sqrt{T}}\widehat{F}\right\|\\
&=o_p\left(\frac{1}{\sqrt{T}}\right),
\end{aligned}
\end{equation}
For term $J15$, by Lemma \ref{lemma:B.5}, we have
\begin{equation}\nonumber
\begin{aligned}
\left\|J15\right\|&=\left\|-\frac{1}{T}X_i^\top M_{\widehat{F}}\left(I15\right)\left(\frac{F^{0\top} \widehat{F}}{T}\right)^{-1}\left(\frac{\Lambda^{0\top} \Lambda^0}{N}\right)^{-1}\lambda_i^0\right\|\\
&=\left\|-\frac{1}{T}X_i^\top M_{\widehat{F}}\left(\frac{1}{NT} \sum_{k=1}^N\varepsilon_k\left\{Z_k^\ast(\widehat{\alpha}_{k})-Z_k^\ast({\alpha}_{k}^0)\right\}^\top\widehat{F}\right)\left(\frac{F^{0\top} \widehat{F}}{T}\right)^{-1}\left(\frac{\Lambda^{0\top} \Lambda^0}{N}\right)^{-1}\lambda_i^0\right\|\\
&\leq\frac{1}{\sqrt{T}}\frac{1}{N}\sum_{k=1}^N\left\|\frac{1}{T}X_i^\top M_{\widehat{F}}\varepsilon_k\right\|\left\|\frac{1}{\sqrt{T}}\left\{Z_k^\ast(\widehat{\alpha}_{k})-Z_k^\ast({\alpha}_{k}^0)\right\}^\top\widehat{F}\right\|\\
&=o_p\left(\frac{1}{\sqrt{T}}\right).
\end{aligned}
\end{equation}
Collecting terms from $J1$ to $J15$ with dominated terms ignored gives
\begin{equation}\nonumber
\begin{aligned}
\frac{1}{T}X_i^\top M_{\widehat{F}}F^0 \lambda_i^0=&J2+J7+O_p\left(\frac{1}{\min[{N}, {T}]}\right)+o_p\left(\frac{1}{\sqrt{T}}\right)\\
&+\frac{1}{\sqrt{T}}O_p\left(\left(\frac{1}{N}\sum_{i=1}^N\left\|\widehat{\beta}_i-\beta_i^0\right\|^2\right)^{1/2}\right)+o_p(1)\cdot O_p\left(\left(\frac{1}{N}\sum_{i=1}^N\left\|\widehat{\beta}_i-\beta_i^0\right\|^2\right)^{1/2}\right).
\end{aligned}
\end{equation}
By Lemma \ref{lemma:B.5}, we have
$$
\frac{1}{T}X_i^\top M_{\widehat{F}}\left\{Z_i^\ast(\widehat{\alpha}_{i})-Z_i^\ast({\alpha}_{i}^0)\right\}=o_p\left(\frac{1}{\sqrt{T}}\right).
$$
According to $M_{\widehat{F}}-M_{F^0}=o_p(1)$, then,
$$
\frac{1}{T}X_i^\top M_{\widehat{F}} X_i=\frac{1}{T}X_i^\top M_{F^0} X_i+o_p(1)
$$
and
$$
\frac{1}{T}X_i^\top M_{\widehat{F}}X_ka_{ik}=\frac{1}{T}X_i^\top M_{F^0}X_ka_{ik}+o_p(1).
$$
Thus, by Lemma \ref{lemma:B.3}, we have
\begin{equation}\nonumber
\begin{aligned}
\left( \frac{1}{T}X_i^\top M_{F^0} X_i \right)\left(\widehat{\beta}_{i}-\beta_i^0\right)=&J2+\frac{1}{T}X_i^\top M_{\widehat{F}}\varepsilon_i+J7+O_p\left(\frac{1}{\min[{N}, {T}]}\right)+o_p\left(\frac{1}{\sqrt{T}}\right)\\
&+\frac{1}{\sqrt{T}}O_p\left(\left(\frac{1}{N}\sum_{i=1}^N\left\|\widehat{\beta}_i-\beta_i^0\right\|^2\right)^{1/2}\right)+o_p(1)\cdot O_p\left(\left(\frac{1}{N}\sum_{i=1}^N\left\|\widehat{\beta}_i-\beta_i^0\right\|^2\right)^{1/2}\right)\\
=&\frac{1}{TN}\sum_{k=1}^N\left(X_i^\top M_{F^0}X_ka_{ik}\right)\left(\widehat{\beta}_k-\beta_k^0\right)+\frac{1}{T} X_i^\top M_{{F}^0}\varepsilon_i - \frac{1}{NT} \sum_{k=1}^{N} a_{ik} X_i^\top M_{{F}^0} \varepsilon_k\\
&+\left(\frac{1}{N}+\frac{1}{\sqrt{T}}\right)\left[O_p\left(\left(\frac{1}{N}\sum_{i=1}^N\left\|\widehat{\beta}_i-\beta_i^0\right\|^2\right)^{1/2}\right) + O_p\left(\frac{1}{\min[\sqrt{N}, \sqrt{T}]}\right)\right]\\
&+O_p\left(\frac{1}{\min[{N}, {T}]}\right)+o_p(1)\cdot O_p\left(\left(\frac{1}{N}\sum_{i=1}^N\left\|\widehat{\beta}_i-\beta_i^0\right\|^2\right)^{1/2}\right)+o_p\left(\frac{1}{\sqrt{T}}\right).
\end{aligned}
\end{equation}
By left-multiplying both sides of the equation by $(T^{-1}X_i^\top M_{F^0} X_i)^{-1}$, taking the square of the Frobenius norm, and then averaging the sum over $i$ from 1 to $N$, we can obtain
$$
\frac{1}{N}\sum_{i=1}^N\left\|\widehat{\beta}_i-\beta_i^0\right\|^2=O_p\left(\frac{1}{T}\right).
$$
Thus, we have
$$
\widehat{\beta}_i-\beta_i^0=O_p\left(\frac{1}{\sqrt{T}}\right).
$$

\Halmos
\endproof
\begin{proposition}\label{pro:B.1}
Under Assumptions \ref{assum:Regularity}--\ref{assum:local-identification}, \ref{assum:error}--\ref{assum:identification}, we can make the following statements:
\begin{enumerate}[label=(\roman*)]
    \item $ V_{NT} $ is invertible and $ V_{NT} \xrightarrow{p} V $, where $V_{NT} $ is a diagonal matrix that consists of the first $ r $ largest eigenvalues of the matrix $\frac{1}{NT} \sum_{i=1}^{N} \{Z_i^\ast(\widehat{\alpha}_{i}) - X_i \widehat{\beta}_{i}\}\{Z_i^\ast(\widehat{\alpha}_{i}) - X_i \widehat{\beta}_{i}\}^\top$ and $ V  $ is a diagonal matrix consisting of the eigenvalues of $ \Sigma_{\Lambda}$ and $\Sigma_{\Lambda}:=\lim_{N\rightarrow\infty}N^{-1}\Lambda^{0\top}\Lambda^0$.
    \item Let $ H = (\Lambda^\top  \Lambda / N)(F^{0\top} \widehat{F} / T)V^{-1}_{NT} $. Then $ H $ is an $ r \times r $ invertible matrix and
$$
\frac{1}{T} \left\| \widehat{F} - F^0 H \right\|^2 = \frac{1}{T} \sum_{t=1}^T \left\| \widehat{f}_t - H^\top f^0_t \right\|^2
= O_p\left(\frac{1}{N}\sum_{i=1}^N\left\|\widehat{\beta}_i-\beta_i^0\right\|^2\right) + O_p\left(\frac{1}{\min[N, T]}\right).
$$
\end{enumerate}
\end{proposition}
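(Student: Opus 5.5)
The plan follows Bai (2009, Proposition A.1) for the mean interactive-effects model, applied to the generated response $Z_i^\ast(\widehat\alpha_i)=X_i\beta_i^0+F^0\lambda_i^0+\varepsilon_i+\Delta_i$, where $\Delta_i=Z_i^\ast(\widehat\alpha_i)-Z_i^\ast(\alpha_i^0)$. Write $\widehat W_i=Z_i^\ast(\widehat\alpha_i)-X_i\widehat\beta_i$. By construction of Algorithm \ref{alg:ES_factor} at the global minimizer, the factor estimate satisfies the eigen-equation
\[
\frac{1}{NT}\sum_{i=1}^N\widehat W_i\widehat W_i^\top\widehat F=\widehat F V_{NT}.
\]
Substituting $\widehat W_i=X_i(\beta_i^0-\widehat\beta_i)+F^0\lambda_i^0+\varepsilon_i+\Delta_i$ and expanding the outer product produces the leading term $F^0(\Lambda^{0\top}\Lambda^0/N)(F^{0\top}\widehat F/T)$ plus fifteen cross terms $I1,\dots,I15$. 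These are the terms later referred to as \eqref{eq:B.2}. Rearranging with the definition of $H$ then gives
\[
\widehat F V_{NT}-F^0(\Lambda^{0\top}\Lambda^0/N)(F^{0\top}\widehat F/T)=I1+\cdots+I15,
\qquad\text{equivalently}\qquad
\widehat F-F^0H=(I1+\cdots+I15)V_{NT}^{-1}.
\]

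\emph{Part (i).} Left-multiply the eigen-equation by $F^{0\top}/T$ and by $\widehat F^\top/T$. Provided each $T^{-1/2}\|Ik\|=o_p(1)$, this yields
\[
V_{NT}=(\widehat F^\top F^0/T)(\Lambda^{0\top}\Lambda^0/N)(F^{0\top}\widehat F/T)+o_p(1).
\]
Proposition \ref{pro:average consistency}(ii) gives $(F^{0\top}\widehat F/T)$ invertible with $(F^{0\top}\widehat F/T)(\widehat F^\top F^0/T)\to_p I$. It follows that $V_{NT}$ has the same limiting eigenvalues as $\Sigma_\Lambda$, so $V_{NT}\to_p V$ and $V_{NT}$ is invertible. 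Invertibility of $H$ then follows from invertibility of each of its three factors.

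\emph{Part (ii).} Bound $T^{-1/2}\|Ik\|$ term by term.
\begin{itemize}
\item Terms quadratic or linear in $\widehat\beta-\beta^0$ (e.g.\ $N^{-1}\sum_k X_k(\beta_k^0-\widehat\beta_k)\lambda_k^{0\top}F^{0\top}\widehat F/T$) are controlled by Cauchy--Schwarz together with $T^{-1}\|X_k\|^2=O_p(1)$, giving $O_p\bigl((N^{-1}\sum\|\widehat\beta_k-\beta_k^0\|^2)^{1/2}\bigr)$.
\item Pure-error terms use Assumptions \ref{assum:error}--\ref{assum:weak dependence} exactly as in Bai's Lemma A.3. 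The term $(NT)^{-1}\sum_k\varepsilon_k\varepsilon_k^\top\widehat F$ splits into $T^{-1}\Omega\widehat F$, which is $O_p(T^{-1/2})$ after scaling by the bounded $\lambda_{\max}(\Omega)$, plus a centered part of order $N^{-1/2}$. The terms $F^0\Lambda^{0\top}\varepsilon$ are $O_p(N^{-1/2})$. Together these give the $\delta_{NT}^{-1}$ contribution.
\item The new generated-response terms involving $\Delta_k$ use Lemma \ref{lemmaA.1}, which gives $\|\Delta_k\|=O_p(1)$, hence $T^{-1/2}\|\Delta_k\|=O_p(T^{-1/2})$ uniformly via Proposition \ref{pro:quantile}. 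The Lipschitz bound $|R_{it}|\le|W_{it}^\top\gamma|$ handles the rest. For instance, $(NT)^{-1}\sum_k\Delta_k\Delta_k^\top\widehat F$ is $O_p(\max_k\|\widehat\gamma_k\|^2)=O_p(T^{-1}\log N)$, and the cross terms with $\varepsilon$ and $F^0$ are $O_p(T^{-1/2})$ up to logarithms.
\end{itemize}
Squaring and collecting, with $\|V_{NT}^{-1}\|=O_p(1)$ from part (i), gives the stated rate $O_p(N^{-1}\sum\|\widehat\beta_i-\beta_i^0\|^2)+O_p(\delta_{NT}^{-2})$.

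The main obstacle is bounding the $\Delta$-terms at the $\delta_{NT}^{-1}$ order rather than merely $o_p(1)$, and doing so uniformly over $k$, since Lemma \ref{lemmaA.1} is stated only pointwise. I would first try the uniform first-stage bound of Proposition \ref{pro:quantile}, combined with the design event \eqref{eq:ES-design-upper}, to get
\[
\max_k T^{-1/2}\|\Delta_k\|\le C\varrho_{N,T,p},
\]
and then absorb the logarithmic factor under $N\asymp T$. If that falls short, the fallback is the conditional-mean/centered decomposition of Lemma \ref{lem:ES-score-control}.
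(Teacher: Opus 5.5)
Your plan follows the paper's proof in structure. It uses the same expansion of the eigen-equation into $I1,\dots,I16$. Part (i) proceeds by left-multiplying \eqref{eq:B.1} by $\widehat F^\top/T$ and $F^{0\top}/T$ and using $(F^{0\top}\widehat F/T)(\widehat F^\top F^0/T)\xrightarrow{p}I_r$ from \eqref{eq:A.5}. Part (ii) proceeds by term-by-term bounds. You write $\widehat F-F^0H=(I1+\cdots+I15)V_{NT}^{-1}$ as in Bai (2009), whereas the paper right-multiplies by $(F^{0\top}\widehat F/T)^{-1}(\Lambda^{0\top}\Lambda^0/N)^{-1}$ and then uses $\|H\|=O_p(1)$; the two are equivalent.

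You differ from the paper on the generated-response terms. The paper simply averages the pointwise bound $\|\Delta_k\|=O_p(1)$ of Lemma \ref{lemmaA.1}, for example $T^{-1/2}\|I11\|\le (NT)^{-1}\sum_k\|\Delta_k\|\|F^0\|\sqrt r=O_p(T^{-1/2})$. That step silently requires $N^{-1}\sum_k\|\Delta_k\|=O_p(1)$, and you are right that pointwise $O_p$ bounds do not give this.

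Your first-choice repair, however, does fall short. Proposition \ref{pro:quantile} with \eqref{eq:ES-design-upper} only gives $\max_k T^{-1/2}\|\Delta_k\|\le C\varrho_{N,T,p}$. Your column-by-column Cauchy--Schwarz bounds for $I11,I12,I14,I15$ are therefore $O_p(\sqrt{(p+1+\log 2N)/T})$. The $\sqrt{\log N}$ cannot be absorbed under $N\asymp T$, since $\log T/T$ is not $O(1/T)$; moreover, the proposition does not assume $N\asymp T$. Part (ii) would then carry an extra $O_p(\log N/T)$, which is not $O_p(1/\min[N,T])$.

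Your fallback is what is needed, and it works. Bound these four terms through operator norms of the whole $T\times N$ matrix $(\Delta_1,\ldots,\Delta_N)=\tau^{-1}\{\mathcal R^c(\widehat\Gamma)+\mathcal M(\widehat\Gamma)\}$ rather than column by column. Using $\|\Lambda^0\|\le C\sqrt N$, $\|F^0\|_{\mathrm{op}}=\sqrt T$ and $\|\widehat F\|=\sqrt{rT}$, one gets
\[
\frac{\|I11\|+\|I14\|}{\sqrt T}\le C\,\frac{\|\mathcal R^c(\widehat\Gamma)+\mathcal M(\widehat\Gamma)\|_{\mathrm{op}}}{\sqrt{NT}},\qquad
\frac{\|I12\|+\|I15\|}{\sqrt T}\le C\,\frac{\|\mathcal R^c(\widehat\Gamma)+\mathcal M(\widehat\Gamma)\|_{\mathrm{op}}}{\sqrt{NT}}\,\frac{\|\Xi\|_{\mathrm{op}}}{\sqrt{NT}}.
\]
Assumption \ref{assum:factor-spectral}(ii), \eqref{eq:R-quadratic-mean} and the uniform first-stage radius give
$\|\mathcal R^c(\widehat\Gamma)+\mathcal M(\widehat\Gamma)\|_{\mathrm{op}}/\sqrt{NT}=O_p\bigl(\varrho_{N,T,p}\{\sqrt{(p+1)/T}+N^{-1/2}\}+\varrho_{N,T,p}^2\bigr)$. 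This is the same computation as in Lemma \ref{lem:selection-effective-error}. The bound is $O_p(\delta_{NT}^{-1})$, because $\varrho_{N,T,p}^2\lesssim T^{-1/2}\le\delta_{NT}^{-1}$ under the sample-size condition of Proposition \ref{pro:quantile}. The terms $I12$ and $I15$ are even $o_p(\delta_{NT}^{-1})$, since $\|\Xi\|_{\mathrm{op}}/\sqrt{NT}=O_p(\delta_{NT}^{-1})$.

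Your crude bounds remain adequate for $I9$, $I10$ and $I13$, which carry $\varrho_{N,T,p}^2$ or $\varrho_{N,T,p}$ times the $\widehat\beta$-term.
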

\proof{Proof of Proposition \ref{pro:B.1}.}
From
$$
\left[ \frac{1}{NT} \sum_{i=1}^{N} \left\{Z_i^\ast(\widehat{\alpha}_{i}) - X_i \widehat{\beta}_{i}\right\}\left\{Z_i^\ast(\widehat{\alpha}_{i}) - X_i \widehat{\beta}_{i}\right\}^\top \right] \widehat{F} = \widehat{F} V_{NT},
$$
and $ Z_i^\ast(\widehat{\alpha}_{i}) - X_i \widehat{\beta}_{i} =Z_i^\ast(\widehat{\alpha}_{i})-Z_i^\ast({\alpha}_{i}^0)+ X_i (\beta_i^0 - \widehat{\beta}_i) + F^0 \lambda_i^0 + \varepsilon_i $, by expanding terms, we obtain
\begin{equation}\nonumber
\begin{aligned}
\widehat{F} V_{NT}=&\frac{1}{NT} \sum_{i=1}^N X_i (\beta_i^0 - \widehat{\beta}_i) (\beta_i^0 - \widehat{\beta}_i)^\top  X_i^\top  \widehat{F}+\frac{1}{NT} \sum_{i=1}^N X_i (\beta_i^0 - \widehat{\beta}_i) \lambda_i^{0\top}  F^{0\top} \widehat{F}\\
&+\frac{1}{NT} \sum_{i=1}^N X_i (\beta_i^0 - \widehat{\beta}_i) \varepsilon_i^\top \widehat{F}+\frac{1}{NT} \sum_{i=1}^N F^0 \lambda_i^0 (\beta_i^0 - \widehat{\beta}_i)^\top  X_i^\top  \widehat{F}\\
&+\frac{1}{NT} \sum_{i=1}^N \varepsilon_i (\beta_i^0 - \widehat{\beta}_i)^\top  X_i^\top  \widehat{F}\\
&+\frac{1}{NT} \sum_{i=1}^N F^0 \lambda_i^0 \varepsilon_i^\top \widehat{F} + \frac{1}{NT} \sum_{i=1}^N \varepsilon_i \lambda_i^{0\top}  F^{0\top} \widehat{F}+\frac{1}{NT} \sum_{i=1}^N \varepsilon_i \varepsilon_i^\top \widehat{F}\\
&+\frac{1}{NT} \sum_{i=1}^N\left\{Z_i^\ast(\widehat{\alpha}_{i})-Z_i^\ast({\alpha}_{i}^0)\right\}\left\{Z_i^\ast(\widehat{\alpha}_{i})-Z_i^\ast({\alpha}_{i}^0)\right\}^\top\widehat{F}\\
&+\frac{1}{NT} \sum_{i=1}^N\left\{Z_i^\ast(\widehat{\alpha}_{i})-Z_i^\ast({\alpha}_{i}^0)\right\}(\beta_i^0 - \widehat{\beta}_i)^\top X_i^\top\widehat{F}+\frac{1}{NT} \sum_{i=1}^N\left\{Z_i^\ast(\widehat{\alpha}_{i})-Z_i^\ast({\alpha}_{i}^0)\right\}\lambda_i^{0\top}  F^{0\top}\widehat{F}\\
&+\frac{1}{NT} \sum_{i=1}^N\left\{Z_i^\ast(\widehat{\alpha}_{i})-Z_i^\ast({\alpha}_{i}^0)\right\}\varepsilon_i^\top\widehat{F}+\frac{1}{NT} \sum_{i=1}^NX_i (\beta_i^0 - \widehat{\beta}_i)\left\{Z_i^\ast(\widehat{\alpha}_{i})-Z_i^\ast({\alpha}_{i}^0)\right\}^\top\widehat{F}\\
&+\frac{1}{NT} \sum_{i=1}^NF^0 \lambda_i^0\left\{Z_i^\ast(\widehat{\alpha}_{i})-Z_i^\ast({\alpha}_{i}^0)\right\}^\top\widehat{F}+\frac{1}{NT} \sum_{i=1}^N\varepsilon_i\left\{Z_i^\ast(\widehat{\alpha}_{i})-Z_i^\ast({\alpha}_{i}^0)\right\}^\top\widehat{F}\\
&+\frac{1}{NT} \sum_{i=1}^N F^0 \lambda_i^0 \lambda_i^{0\top}  F^{0\top} \widehat{F}\\
:=&I1+\cdots+I16.
\end{aligned}
\end{equation}
The last term $I16$ on the right is equal to $ F^0 (\Lambda^{0\top}  \Lambda^0 / N)(F^{0\top} \widehat{F} / T) $. Then, the above can be rewritten as
\begin{equation}\label{eq:B.1}
\widehat{F} V_{NT} - F^0 (\Lambda^{0\top}  \Lambda^0 / N)(F^{0\top} \widehat{F} / T) = I 1 + \cdots + I 15.
\end{equation}
Multiplying $ (F^{0\top} \widehat{F}/T)^{-1} (\Lambda^{0\top}  \Lambda^0 /N)^{-1} $ on each side of \eqref{eq:B.1}, we obtain
\begin{equation}\label{eq:B.2}
\widehat{F}[V_{NT}(F^{0\top} \widehat{F}/T)^{-1} (\Lambda^{0\top}  \Lambda^0 /N)^{-1}] - F^0 = (I1 + \cdots + I15)(F^{0\top} \widehat{F}/T)^{-1} (\Lambda^{0\top}  \Lambda^0 /N)^{-1}.
\end{equation}
Note that the matrix $ V_{NT}(F^{0\top} \widehat{F}/T)^{-1} (\Lambda^{0\top}  \Lambda^0 /N)^{-1} $ is equal to $ H^{-1} $, but the invertibility of $ V_{NT} $ is not proved yet. We have
$$
\frac{1}{\sqrt{T}}\left\| \widehat{F}[V_{NT}(F^{0\top} \widehat{F}/T)^{-1} (\Lambda^{0\top}  \Lambda^0 /N)^{-1}] - F^0 \right\| \leq \frac{1}{\sqrt{T}}\left(\|I1\| + \cdots + \|I15\|\right) \cdot \left\| (F^{0\top} \widehat{F}/T)^{-1} (\Lambda^{0\top}  \Lambda^0 /N)^{-1} \right\|.
$$
Consider each term on the right. For the first term, note that $ T^{-1/2}\|\widehat{F}\| = \sqrt{r} $, we have
\begin{equation}\nonumber
\begin{aligned}
\frac{1}{\sqrt{T}}\left\|I1\right\|
&\leq\frac{1}{NT}\sum_{i=1}^N\left(\frac{\left\|X_i\right\|^2}{T}\right)\left\|\beta_i^0 - \widehat{\beta}_i\right\|^2\sqrt{r}=O_p\left(\frac{1}{N}\sum_{i=1}^N\left\|\widehat{\beta}_i-\beta_i^0\right\|^2\right).
\end{aligned}
\end{equation}
For the second term, note that $T^{-1/2}\|F\|=\sqrt{r}$ and $ T^{-1/2}\|\widehat{F}\| = \sqrt{r} $, we have
\begin{equation}\nonumber
\begin{aligned}
\frac{1}{\sqrt{T}}\left\|I2\right\|
&\leq\frac{1}{\sqrt{T}N}\sum_{i=1}^N\left\|X_i\right\|\left\| \beta_i^0 - \widehat{\beta}_i\right\|r\\
&\leq r\left(\frac{1}{N}\sum_{i=1}^N\frac{\left\|X_i\right\|^2}{T}\right)^{1/2}\left(\frac{1}{N}\sum_{i=1}^N\left\|\widehat{\beta}_i-\beta_i^0\right\|^2\right)^{1/2}\\
&=O_p\left(\left(\frac{1}{N}\sum_{i=1}^N\left\|\widehat{\beta}_i-\beta_i^0\right\|^2\right)^{1/2}\right).
\end{aligned}
\end{equation}
Using the same argument, it is easy to prove that next three terms (I3-I5) are each $O_p((N^{-1}\sum_{i=1}^N\|\widehat{\beta}_i-\beta_i^0\|^2)^{1/2})$. The terms (I6-I8) do not explicitly depend on $ \widehat{\beta}_i - \beta_i $ and they have the same expressions as those in \citeEC{bai2002determining}. Each of these terms is $ O_p(1/\min[\sqrt{N}, \sqrt{T}]) $, which was proved in Theorem 1 of \citeEC{bai2002determining}. The proof there only uses the property that $ \widehat{F}^\top  \widehat{F}/T = I_r $ and the assumptions on $ \varepsilon_i $; thus the proof needs no modification.
For terms $I9-I15$, we have
\begin{equation}\nonumber
\begin{aligned}
\frac{1}{\sqrt{T}}\left\|I9\right\|
&\leq\frac{1}{NT}\sum_{i=1}^N\left\|Z_i^\ast(\widehat{\alpha}_{i})-Z_i^\ast({\alpha}_{i}^0)\right\|^2\sqrt{r}=O_p\left(\frac{1}{T}\right),
\end{aligned}
\end{equation}
\begin{equation}\nonumber
\begin{aligned}
\frac{1}{\sqrt{T}}\left\|I10\right\|
&\leq\frac{1}{NT} \sum_{i=1}^N\left\|\beta_i^0 - \widehat{\beta}_i\right\|\left\|X_i\right\|\sqrt{r}\\
&\leq\frac{1}{\sqrt{T}}\left(\frac{1}{N} \sum_{i=1}^N\left\|\widehat{\beta}_i-\beta_i^0\right\|^2\right)^{1/2}\left(\frac{1}{N} \sum_{i=1}^N\frac{\left\|X_i\right\|^2}{T}\right)^{1/2}\sqrt{r}\\
&=\frac{1}{\sqrt{T}}O_p\left(\left(\frac{1}{N}\sum_{i=1}^N\left\|\widehat{\beta}_i-\beta_i^0\right\|^2\right)^{1/2}\right),
\end{aligned}
\end{equation}
\begin{equation}\nonumber
\begin{aligned}
\frac{1}{\sqrt{T}}\left\|I11\right\|
&\leq\frac{1}{NT} \sum_{i=1}^N\left\|Z_i^\ast(\widehat{\alpha}_{i})-Z_i^\ast({\alpha}_{i}^0)\right\|\left\|F^{0}\right\|\sqrt{r}=O_p\left(\frac{1}{\sqrt{T}}\right),
\end{aligned}
\end{equation}
\begin{equation}\nonumber
\begin{aligned}
\frac{1}{\sqrt{T}}\left\|I12\right\|
&\leq\frac{1}{NT} \sum_{i=1}^N\left\|Z_i^\ast(\widehat{\alpha}_{i})-Z_i^\ast({\alpha}_{i}^0)\right\|\left\|\varepsilon_i\right\|\sqrt{r}=O_p\left(\frac{1}{\sqrt{T}}\right),
\end{aligned}
\end{equation}
\begin{equation}\nonumber
\begin{aligned}
\frac{1}{\sqrt{T}}\left\|I13\right\|
&\leq\frac{1}{NT} \sum_{i=1}^N\left\|X_i\right\|\left\|\beta_i^0 - \widehat{\beta}_i\right\|\sqrt{r}\\
&\leq\frac{1}{\sqrt{T}}\left(\frac{1}{N} \sum_{i=1}^N\frac{\left\|X_i\right\|^2}{T}\right)^{1/2}\left(\frac{1}{N} \sum_{i=1}^N\left\|\widehat{\beta}_i-\beta_i^0\right\|^2\right)^{1/2}\sqrt{r}\\
&=\frac{1}{\sqrt{T}}O_p\left(\left(\frac{1}{N}\sum_{i=1}^N\left\|\widehat{\beta}_i-\beta_i^0\right\|^2\right)^{1/2}\right),
\end{aligned}
\end{equation}
\begin{equation}\nonumber
\begin{aligned}
\frac{1}{\sqrt{T}}\left\|I14\right\|
&\leq\frac{1}{NT} \sum_{i=1}^N\left\|F^0\right\|\left\|Z_i^\ast(\widehat{\alpha}_{i})-Z_i^\ast({\alpha}_{i}^0)\right\|\sqrt{r}=O_p\left(\frac{1}{\sqrt{T}}\right),
\end{aligned}
\end{equation}
\begin{equation}\nonumber
\begin{aligned}
\frac{1}{\sqrt{T}}\left\|I15\right\|
&\leq\frac{1}{NT} \sum_{i=1}^N\left\|\varepsilon_i\right\|\left\|Z_i^\ast(\widehat{\alpha}_{i})-Z_i^\ast({\alpha}_{i}^0)\right\|\sqrt{r}=O_p\left(\frac{1}{\sqrt{T}}\right).
\end{aligned}
\end{equation}
In summary, we have
\begin{equation}\label{eq:B.3}
\frac{1}{\sqrt{T}}\left\|\widehat{F}\left[V_{NT}(F^{0\top} \widehat{F}/T)^{-1} (\Lambda^{0\top}  \Lambda^0 /N)^{-1}\right] - F^0 \right\| = O_p\left(\left(\frac{1}{N}\sum_{i=1}^N\left\|\widehat{\beta}_i-\beta_i^0\right\|^2\right)^{1/2}\right) + O_p\left(\frac{1}{\min[\sqrt{N}, \sqrt{T}]}\right).
\end{equation}
(i) Left multiplying \eqref{eq:B.1} by $ \widehat{F}^\top  $ and using $ T^{-1}\widehat{F}^\top  \widehat{F} = I_{r^0} $, we have
$$
V_{NT} - (\widehat{F}^\top F^0/T)(\Lambda^{0\top}  \Lambda^0 /N)(F^{0\top} \widehat{F}/T) = T^{-1} \widehat{F}^\top  (I1 + \cdots + I15) = o_p(1)
$$
because $ T^{-1/2}\|\widehat{F}\| = \sqrt{r} $ and $ T^{-1/2}\| (I1 + \cdots + I15)\| = o_p(1) $. Thus
$$
V_{NT} = (\widehat{F}^\top F^0/T)(\Lambda^{0\top}  \Lambda^0 /N)(F^{0\top} \widehat{F}/T) + o_p(1).
$$
Proposition \ref{pro:average consistency} shows that $ \widehat{F}^\top F^0/T $ is invertible; thus $ V_{NT} $ is invertible. To obtain the limit of $ V_{NT} $, left multiply \eqref{eq:B.1} by $ F^{0\top} $ and then divide by $ T $ to yield
$$
(F^{0\top} F^0/T)(\Lambda^{0\top}  \Lambda^0 /N)(F^{0\top} \widehat{F}/T) + o_p(1) = (F^{0\top} \widehat{F}/T)V_{NT}
$$
because $ T^{-1}F^{0\top}(I1 + \cdots + I15) = o_p(1) $. The above equality shows that the columns of $ F^{0\top}\widehat{F}/T $ are the (nonnormalized) eigenvectors of the matrix $ (F^{0\top}F^0/T)(\Lambda^{0\top} \Lambda^0/N) $, and $ V_{NT} $ consists of the eigenvalues of the same matrix (in the limit). Thus $ V_{NT} \xrightarrow{p} V $, where $ V $ is $ r \times r $, consisting of the $ r $ eigenvalues of the matrix $\Sigma_{\Lambda} $ and $\Sigma_{\Lambda}:=\lim_{N\rightarrow\infty}N^{-1}\Lambda^{0\top}\Lambda^0$.

(ii) Since $ V_{NT} $ is invertible, the left-hand side of \eqref{eq:B.3} can be written as $ T^{-1/2}\|\widehat{F}H^{-1}-F^0\| $; thus \eqref{eq:B.3} is equivalent to
$$
T^{-1/2}\|\widehat{F}-F^0H\|=O_p\left(\left(\frac{1}{N}\sum_{i=1}^N\left\|\widehat{\beta}_i-\beta_i^0\right\|^2\right)^{1/2}\right) + O_p\left(\frac{1}{\min[\sqrt{N}, \sqrt{T}]}\right).
$$
Taking squares on each side gives part (ii). Note that the cross-product term from expanding the square has the same bound.
\Halmos
\endproof
\begin{lemma}\label{lemma:B.1}
Let $ G = (F^{0\top} \widehat{F}/T)^{-1}(\Lambda^{0\top} \Lambda^{0}/N)^{-1} $. Under Assumptions \ref{assum:factor-spectral}--\ref{assum:identification}, we have
\begin{equation}\nonumber
\begin{aligned}
&\frac{1}{NT^2} \sum_{k=1}^N X_i^\top M_{\widehat{F}} (\varepsilon_k \varepsilon_k^\top - \Omega_k) \widehat{F} G \lambda_i^0\\
&\quad=O_p \left( \frac{1}{T\sqrt{N}} \right) + (NT)^{-1/2}\left[O_p\left(\left(\frac{1}{N}\sum_{i=1}^N\left\|\widehat{\beta}_i-\beta_i^0\right\|^2\right)^{1/2}\right) + O_p(\delta_{NT}^{-1})\right]\\
&\quad\quad+\frac{1}{\sqrt{N}} O_p\left(\frac{1}{N}\sum_{i=1}^N\left\|\widehat{\beta}_i-\beta_i^0\right\|^2\right) + \frac{1}{\sqrt{N}} O_p(\delta_{NT}^{-2}).
\end{aligned}
\end{equation}
\end{lemma}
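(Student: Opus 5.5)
The plan follows the corresponding lemma in \citeEC{bai2009panel} (Lemma A.7 there). The idea is to replace $\widehat F$ by $F^0H$ plus the estimation error controlled in Proposition~\ref{pro:B.1}(ii). Since $G$ and $H$ are of fixed dimension with $\|G\|,\|H\|=O_p(1)$, and $\|\lambda_i^0\|\le K_\lambda$, it suffices to bound
$$
\frac{1}{NT^2}\sum_{k=1}^N X_i^\top M_{\widehat F}(\varepsilon_k\varepsilon_k^\top-\Omega_k)\widehat F
$$
in norm. We write $M_{\widehat F}=I_T-P_{\widehat F}$ and $\widehat F=F^0H+(\widehat F-F^0H)$. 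This gives four pieces, $a_1,\ldots,a_4$:
\begin{itemize}
\item $a_1$ is the $(I_T,F^0H)$ piece;
\item $a_2$ is the $(I_T,\widehat F-F^0H)$ piece;
\item $a_3$ is the $(P_{\widehat F},F^0H)$ piece;
\item $a_4$ is the $(P_{\widehat F},\widehat F-F^0H)$ piece.
\end{itemize}

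\emph{Leading piece $a_1$.} We have
$$
a_1=\frac{1}{T\sqrt N}\left[\frac{1}{T}\sum_{t,s}X_{it}f_s^{0\top}\frac{1}{\sqrt N}\sum_k\{\varepsilon_{kt}\varepsilon_{ks}-E(\varepsilon_{kt}\varepsilon_{ks})\}\right]H.
$$
The bracket has second moment bounded by Assumption~\ref{assum:error}(iii)--(iv). Indeed, expanding the square gives the double sums of $|\operatorname{Cov}(\varepsilon_{kt}\varepsilon_{ks},\varepsilon_{\ell u}\varepsilon_{\ell v})|$, normalized as in (iv), together with the fourth moments of $X_{it}$ and $f^0_s$. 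This gives $a_1=O_p(T^{-1}N^{-1/2})$.

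\emph{Pieces involving $P_{\widehat F}$.} Write $P_{\widehat F}=\widehat F\widehat F^\top/T$, so that
$$
a_3=\frac{X_i^\top\widehat F}{T}\,\frac{1}{NT}\sum_k\widehat F^\top(\varepsilon_k\varepsilon_k^\top-\Omega_k)F^0H,
$$
where $\|X_i^\top\widehat F/T\|=O_p(1)$. The inner term splits again by substituting $\widehat F=F^0H+(\widehat F-F^0H)$:
\begin{itemize}
\item the $F^0$--$F^0$ part is $O_p((NT)^{-1/2}\cdot T^{-1/2})$ by the same covariance bound;
\item the cross part is bounded by Cauchy--Schwarz, $T^{-1/2}\|\widehat F-F^0H\|\cdot\|(NT)^{-1}\sum_k(\varepsilon_k\varepsilon_k^\top-\Omega_k)F^0\|_{T^{-1/2}}$, where the second factor is $O_p(N^{-1/2})$ by Assumption~\ref{assum:error}(iii).
\end{itemize}
By Proposition~\ref{pro:B.1}(ii), $T^{-1/2}\|\widehat F-F^0H\|=O_p(\bar b)+O_p(\delta_{NT}^{-1})$, where $\bar b=(N^{-1}\sum\|\widehat\beta_j-\beta_j^0\|^2)^{1/2}$. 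The cross part is therefore $(NT)^{-1/2}[O_p(\bar b)+O_p(\delta_{NT}^{-1})]$ after tracking the extra $T^{-1/2}$ coming from the $F^0$ sum. This produces the second term of the claim.

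\emph{Pieces $a_2$ and $a_4$.} These are bilinear in the error $\widehat F-F^0H$. They are handled by Cauchy--Schwarz against $\|N^{-1/2}\sum_k(\varepsilon_k\varepsilon_k^\top-\Omega_k)\|/T$, which is $O_p(1)$ in Frobenius norm by Assumption~\ref{assum:error}(iii). The square of the factor error then gives $N^{-1/2}[O_p(\bar b^2)+O_p(\delta_{NT}^{-2})]$, which is the last two terms of the claim.

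\textbf{Main obstacle.} The delicate step is getting this last $N^{-1/2}\delta_{NT}^{-2}$ order rather than the cruder $N^{-1/2}\delta_{NT}^{-1}$. This needs two factors of $\widehat F-F^0H$, not one. The plan is to insert the representation \eqref{eq:B.2}, $\widehat F-F^0H=-(I1+\cdots+I15)GH$ (up to the invertible factor). Then each $I$-term contributes either a $\bar b$-type factor or a Bai--Ng type term ($I6$--$I8$), and those are paired with $\varepsilon\varepsilon^\top-\Omega$ exactly as in \citeEC{bai2009panel}. The new terms $I9$--$I15$ involve $Z^\ast(\widehat\alpha)-Z^\ast(\alpha^0)$, which is $O_p(1)$ per unit in norm by Lemma~\ref{lemmaA.1}. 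These are absorbed into the $O_p(T^{-1/2})\subset O_p(\delta_{NT}^{-1})$ bounds and so do not change the order.
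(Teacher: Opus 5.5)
\textbf{There is a genuine gap in your treatment of $a_2$ and $a_4$.}

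The piece $a_2=\frac{1}{NT^2}\sum_k X_i^\top(\varepsilon_k\varepsilon_k^\top-\Omega_k)(\widehat F-F^0H)G\lambda_i^0$ is only \emph{linear} in the factor error, because its left side is $X_i$. The same holds for the part of $a_4$ in which the left $\widehat F^\top$ is replaced by $(F^0H)^\top$. For such terms, Cauchy--Schwarz against the Frobenius norm of $\sum_k(\varepsilon_k\varepsilon_k^\top-\Omega_k)$, which is $O_p(\sqrt N\,T)$ by Assumption~\ref{assum:error}(iii), gives only
\[
\frac{1}{NT^2}\,\|X_i\|\,\Bigl\|\sum_{k}(\varepsilon_k\varepsilon_k^\top-\Omega_k)\Bigr\|\,\|\widehat F-F^0H\|
=N^{-1/2}\bigl[O_p(\bar b)+O_p(\delta_{NT}^{-1})\bigr].
\]
This is not covered by the claimed bound. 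Your sentence ``the square of the factor error then gives $N^{-1/2}[O_p(\bar b^2)+O_p(\delta_{NT}^{-2})]$'' is therefore false for these pieces: there is no square.

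Your ``main obstacle'' paragraph notices the $N^{-1/2}\delta_{NT}^{-1}$ problem but aims at the wrong target. A single-error piece has to be shown to be $(NT)^{-1/2}[O_p(\bar b)+O_p(\delta_{NT}^{-1})]$, not $N^{-1/2}O_p(\delta_{NT}^{-2})$. Substituting \eqref{eq:B.2} does not deliver this. For example, the $I8$ contribution bounded the same crude way is again $N^{-1/2}O_p(\delta_{NT}^{-1})$. Sharpening it would require control of products of two centered pairs $(\varepsilon_k\varepsilon_k^\top-\Omega_k)(\varepsilon_j\varepsilon_j^\top-\Omega_j)$, which Assumption~\ref{assum:error} does not provide.

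\textbf{The missing idea is to sum over $t$ before applying Cauchy--Schwarz over $s$; this is what the paper does.} Write
\[
a_2=(NT)^{-1/2}\,T^{-1}\sum_s a_s(\widehat f_s-H^\top f_s^0)^\top G\lambda_i^0,
\qquad
a_s=(NT)^{-1/2}\sum_k\sum_t X_{it}[\varepsilon_{kt}\varepsilon_{ks}-E(\varepsilon_{kt}\varepsilon_{ks})].
\]
The linear part of $a_4$ has the same form with $A_s=(NT)^{-1/2}\sum_k\sum_t f_t^0[\varepsilon_{kt}\varepsilon_{ks}-E(\varepsilon_{kt}\varepsilon_{ks})]$ in place of $a_s$. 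The fourth-order covariance summability in Assumption~\ref{assum:error}(iv), the same input used for Lemma~\ref{lemma:B.2}, gives $T^{-1}\sum_s\|a_s\|^2=O_p(1)$ and $T^{-1}\sum_s\|A_s\|^2=O_p(1)$. Proposition~\ref{pro:B.1}(ii) then yields the $(NT)^{-1/2}$ term of the claim.

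Only the genuinely bilinear term $N^{-1}T^{-2}\sum_k(\widehat F-F^0H)^\top(\varepsilon_k\varepsilon_k^\top-\Omega_k)(\widehat F-F^0H)$ needs your Frobenius Cauchy--Schwarz. It gives $N^{-1/2}[O_p(\bar b^2)+O_p(\delta_{NT}^{-2})]$ directly from $T^{-1}\|\widehat F-F^0H\|^2$, with no expansion of the factor error.

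Two smaller points on $a_3$. The extra $T^{-1/2}$ you invoke for its cross part comes from this same $t$-summation, so it rests on (iv), not (iii). Also, the prefactor in your display for $a_3$ should be $1/(NT^2)$, not $1/(NT)$.
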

\proof{Proof of Lemma \ref{lemma:B.1}.}
According to $M_{\widehat{F}}=I_T-P_{\widehat{F}}$ and $P_{\widehat{F}}=\widehat{F}\widehat{F}^{\top}/T$, rewrite the left-hand side as
\begin{equation}\nonumber
\begin{aligned}
\frac{1}{NT^2} \sum_{k=1}^N X_i^\top M_{\widehat{F}} (\varepsilon_k \varepsilon_k^\top - \Omega_k) \widehat{F} G \lambda_i^0=&\frac{1}{N T^2} \sum_{k=1}^{N} X^\top_i (\varepsilon_k \varepsilon^\top_k - \Omega_k) \widehat{F} G \lambda_i^0\\
&-\left( \frac{X^\top_i \widehat{F}}{T} \right) \frac{1}{NT^2} \sum_{k=1}^{N} \widehat{F}^\top  (\varepsilon_k \varepsilon^\top_k - \Omega_k) \widehat{F} G \lambda_i^0\\
:=&I + II.
\end{aligned}
\end{equation}
Adding and subtracting terms yields
\begin{equation}\nonumber
\begin{aligned}
I=&\frac{1}{N T^2} \sum_{k=1}^{N} X_i^\top (\varepsilon_k \varepsilon_k^\top - \Omega_k) F^0 H G \lambda_i^0\\
&+\frac{1}{N T^2} \sum_{k=1}^{N} X_i^\top (\varepsilon_k \varepsilon_k^\top - \Omega_k) (\widehat{F} - F^0 H) G \lambda_i^0.
\end{aligned}
\end{equation}
The first term on the right is equal to
\begin{equation}\nonumber
\begin{aligned}
&\left( \frac{1}{N T^2} \right)  \sum_{k=1}^{N} \left\{ \sum_{t=1}^{T} \sum_{s=1}^{T} X_{it} [\varepsilon_{kt} \varepsilon_{ks} - E(\varepsilon_{kt} \varepsilon_{ks})] f_s^{0\top} H G \lambda_i^0 \right\}\\
&\quad=\frac{1}{T \sqrt{N}} \left[ N^{-1/2} \sum_{k=1}^{N} \frac{1}{T} \sum_{t=1}^{T} \sum_{s=1}^{T} X_{it} [\varepsilon_{kt} \varepsilon_{ks} - E(\varepsilon_{kt} \varepsilon_{ks})] f_s^{0\top} \right] H G \lambda_i^0\\
&\quad =O_p \left( \frac{1}{T \sqrt{N}} \right)
\end{aligned}
\end{equation}
by Lemma \ref{lemma:B.2} (ii). Denote
$$
a_s = \left( \frac{1}{\sqrt{NT}} \sum_{k=1}^{N} \sum_{t=1}^{T} X_{it} [\varepsilon_{kt} \varepsilon_{ks} - E(\varepsilon_{kt} \varepsilon_{ks})] \right) = O_p (1).
$$
Then the second term of $ I $ is
$$
\frac{1}{\sqrt{NT}}  \frac{1}{T} \sum_{s=1}^{T} a_s (\widehat{f}_s - f_s^0 H)^\top  G \lambda_i^0.
$$
Notice
\begin{equation}\nonumber
\begin{aligned}
\left\| \frac{1}{T} \sum_{s=1}^{T} a_s (\widehat{f}_s - f_s^0 H) \right\| &\leq \left( \frac{1}{T} \sum_{s=1}^{T} \| a_s \|^2 \right)^{1/2} \left( \frac{1}{T} \sum_{s=1}^{T} \| \widehat{f}_s - f_s^0 H \|^2 \right)^{1/2}\\
&=O_p\left(\left(\frac{1}{N}\sum_{i=1}^N\left\|\widehat{\beta}_i-\beta_i^0\right\|^2\right)^{1/2}\right) + O_p(\delta_{NT}^{-1}).
\end{aligned}
\end{equation}
Thus the second term of $ I $ is $(NT)^{-1/2} [O_p((\frac{1}{N}\sum_{i=1}^N\|\widehat{\beta}_i-\beta_i^0\|^2)^{1/2}) + O_p (\delta_{NT}^{-1})]$. 

Consider $ II $:
\begin{equation}\nonumber
\begin{aligned}
\left\| II \right\|&=\left\| \left( \frac{X^\top_i \widehat{F}}{T} \right) \frac{1}{NT^2} \sum_{k=1}^{N} \widehat{F}^\top  (\varepsilon_k \varepsilon^\top_k - \Omega_k) \widehat{F} G \lambda_i^0 \right\|\\
&\leq\left\| \frac{X_i^\top \widehat{F}}{T} \right\| \| G \lambda_i^0 \| \cdot \left\| \frac{1}{NT^2} \sum_{k=1}^{N} \widehat{F}^\top  (\varepsilon_k \varepsilon_k^\top - \Omega_k) \widehat{F} \right\|\\
&= O_{p}(1)\left\|\frac{1}{NT^{2}}\sum_{k=1}^{N}\widehat{F}^{\top}(\varepsilon _{k}\varepsilon^{\top}_{k}-\Omega_{k})\widehat{F}\right\|.
\end{aligned}
\end{equation}
But
\begin{equation}\nonumber
\begin{aligned}
\frac{1}{NT^{2}}\sum_{k=1}^{N}\widehat{F}^{\top}(\varepsilon_{k} \varepsilon^{\top}_{k}-\Omega_{k})\widehat{F}=&H\frac{1}{NT^{2}}\sum_{k=1}^{N}F^{0\top}(\varepsilon_{k} \varepsilon^{\top}_{k}-\Omega_{k})F^{0}H\\
&+ H\frac{1}{NT^{2}}\sum_{k=1}^{N}F^{0\top}(\varepsilon_{k} \varepsilon^{\top}_{k}-\Omega_{k})(\widehat{F}-F^{0}H)\\
&+\frac{1}{NT^{2}}\sum_{k=1}^{N}(\widehat{F}-F^{0}H)^{\top}(\varepsilon _{k}\varepsilon^{\top}_{k}-\Omega_{k})F^{0}H\\
&+\frac{1}{NT^{2}}\sum_{k=1}^{N}(\widehat{F}-F^{0}H)^{\top}(\varepsilon _{k}\varepsilon^{\top}_{k}-\Omega_{k})(\widehat{F}-F^{0}H)\\
:=&b1+b2+b3+b4.
\end{aligned}
\end{equation}
Now
\begin{equation}\nonumber
\begin{aligned}
b1&=H\frac{1}{NT^{2}}\sum_{k=1}^{N}F^{0\top}(\varepsilon_{k} \varepsilon^{\top}_{k}-\Omega_{k})F^{0}H\\
&=H\left(\frac{1}{T^{2}N}\right)\sum_{k=1}^{N}\sum_{t=1}^{T}\sum_ {s=1}^{T}f_{s}^0f_{t}^{0\top}[\varepsilon_{kt}\varepsilon_{ks}-E(\varepsilon_{kt}\varepsilon_{ks})]H\\
&=O_{p}\left(\frac{1}{T\sqrt{N}}\right)
\end{aligned}
\end{equation}
by Lemma \ref{lemma:B.2} (i). Next
\begin{equation}\nonumber
\begin{aligned}
b2&=H\frac{1}{NT^{2}}\sum_{k=1}^{N}F^{0\top}(\varepsilon_{k} \varepsilon^{\top}_{k}-\Omega_{k})(\widehat{F}-F^{0}H)\\
&=H\frac{1}{\sqrt{NT}}\frac{1}{T}\sum_{s=1}^{T}\left[\frac{1}{ \sqrt{NT}}\sum_{t=1}^{T}\sum_{k=1}^{N}f_{t}^{0}[\varepsilon_{kt}\varepsilon_{ks }-E(\varepsilon_{kt}\varepsilon_{ks})]\right](\widehat{f}_{s}-H^\top f_{s}^{0}).
\end{aligned}
\end{equation}
Thus if we let $A_{s}=\frac{1}{\sqrt{NT}}\sum_{t=1}^{T}\sum_{k=1}^{N}f_{t}^{0}[\varepsilon_{kt}\varepsilon_{ks}-E(\varepsilon_{kt}\varepsilon_{ks})]$, then
\begin{equation}\nonumber
\begin{aligned}
\|b2\| &\leq \|H\|\frac{1}{\sqrt{NT}}\left(\frac{1}{T}\sum_{s=1}^{T}\|A_{s}\| ^{2}\right)^{1/2}\left(\frac{1}{T}\sum_{s=1}^{T}\|\widehat{f}_{s}-H^\top f_{s}^{0}\|^ {2}\right)^{1/2}\\
&= \frac{1}{\sqrt{NT}}\left[O_p\left(\left(\frac{1}{N}\sum_{i=1}^N\left\|\widehat{\beta}_i-\beta_i^0\right\|^2\right)^{1/2}\right) + O_p(\delta_{NT}^{-1})\right].
\end{aligned}
\end{equation}
The term $b3$ has the same upper bound because it is the transpose of $b2$. The last term is
\begin{equation}\nonumber
\begin{aligned}
b4&=\frac{1}{NT^{2}}\sum_{k=1}^{N}(\widehat{F}-F^{0}H)^{\top}(\varepsilon _{k}\varepsilon^{\top}_{k}-\Omega_{k})(\widehat{F}-F^{0}H)\\
&=\frac{1}{\sqrt{N}}\frac{1}{T^{2}}\sum_{t=1}^{T}\sum_{s=1}^{T}( \widehat{f}_{t}-H^{\top}f_{t}^{0})(\widehat{f}_{s}-H^{\top}f_{s}^{0})^{\top}\left[\frac{1}{\sqrt{N}}\sum_{k=1}^{N}\left[\varepsilon _{kt}\varepsilon_{ks}-E(\varepsilon_{kt}\varepsilon_{ks})\right]\right].
\end{aligned}
\end{equation}
Thus by the Cauchy-Schwarz inequality,
\begin{equation}\nonumber
\begin{aligned}
\|b4\| &\leq\frac{1}{\sqrt{N}}\left(\frac{1}{T}\sum_{t=1}^{T}\|\widehat {f}_{t}-H^{ \top}f_{t}^{0}\|^{2}\right)\left(\frac{1}{T^{2}}\sum_{t=1}^{T}\sum_{s=1}^{T}\left[ \frac{1}{\sqrt{N}}\sum_{k=1}^{N}\left[\varepsilon_{kt}\varepsilon_{ks}-E( \varepsilon_{kt}\varepsilon_{ks})\right]\right]^{2}\right)^{1/2}\\
&=\frac{1}{\sqrt{N}}O_p\left(\frac{1}{N}\sum_{i=1}^N\left\|\widehat{\beta}_i-\beta_i^0\right\|^2\right)+\frac{1}{ \sqrt{N}}O_{p}(\delta_{NT}^{-2}).
\end{aligned}
\end{equation}
Now collecting terms yields the lemma. 
\Halmos
\endproof
\begin{lemma}\label{lemma:B.2}
Under Assumptions \ref{assum:Regularity}--\ref{assum:local-identification}, \ref{assum:error}--\ref{assum:identification}, there exists an $ M < \infty$ such that statements (i) and (ii) hold:
\begin{enumerate}[label=(\roman*)]
\item We have
$$
E\left\| N^{-1/2} \sum_{k=1}^{N} \frac{1}{T} \sum_{t=1}^{T} \sum_{s=1}^{T} f_s f_t^\top [\varepsilon_{kt} \varepsilon_{ks} - E(\varepsilon_{kt} \varepsilon_{ks})] \right\|^2 \leq M.
$$
\item For all $ i \in[N] $ and $ h \in[r] $, we have
$$
E\left\| N^{-1/2} \sum_{k=1}^{N} \frac{1}{T} \left\{ \sum_{t=1}^{T} \sum_{s=1}^{T} X_{it} [\varepsilon_{kt} \varepsilon_{ks} - E(\varepsilon_{kt} \varepsilon_{ks})] f_{hs} \right\} \right\|^2 \leq M.
$$
\end{enumerate}
\end{lemma}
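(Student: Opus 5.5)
The plan is to bound both second moments by expanding the squared norm into a fourfold sum and controlling it with the fourth-order covariance summability in Assumption \ref{assum:error}(iv). The factors (and, for part (ii), the regressors $X_{it}$) are treated as independent of the error process, or we condition on them; this is the same convention used in \citeEC{bai2002determining} for the analogous Lemma 1 there. Write $\zeta_{k,ts}=\varepsilon_{kt}\varepsilon_{ks}-E(\varepsilon_{kt}\varepsilon_{ks})$, so that every term is centered.

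For part (i), the matrix inside the norm is $T^{-1}N^{-1/2}\sum_{k}\sum_{t,s}f_sf_t^\top\zeta_{k,ts}$. Taking the trace of its square and expectations gives
\[
T^{-2}N^{-1}\sum_{t,s,u,v}\sum_{k,\ell}\operatorname{tr}\bigl(f_sf_t^\top f_vf_u^\top\bigr)\operatorname{Cov}\bigl(\varepsilon_{kt}\varepsilon_{ks},\varepsilon_{\ell u}\varepsilon_{\ell v}\bigr).
\]
Because $T^{-1}\sum_t\|f_t\|^2=r$ and the factors have uniformly bounded fourth moments (Assumption \ref{assum:ES-score}(ii) gives sub-Gaussian factors), each trace is bounded by $\|f_s\|\|f_t\|\|f_u\|\|f_v\|$. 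We condition on $F$ and bound $\max_t\|f_t\|$ in expectation through moment bounds rather than a maximum. More precisely, under independence the expectation factorizes into $E(\|f_s\|\|f_t\|\|f_u\|\|f_v\|)\le C$ multiplied by the covariance term. The remaining sum $T^{-2}N^{-1}\sum_{t,s,u,v}\sum_{k,\ell}|\operatorname{Cov}(\cdot)|$ is exactly the first quantity in Assumption \ref{assum:error}(iv), which is at most $M$.

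Part (ii) is the same computation with $f_t^\top$ replaced by the scalar $X_{it}$ and with the single coordinate $f_{hs}$. The second moment becomes $T^{-2}N^{-1}\sum_{t,s,u,v}\sum_{k,\ell}E(X_{it}X_{iu}f_{hs}f_{hv})\operatorname{Cov}(\varepsilon_{kt}\varepsilon_{ks},\varepsilon_{\ell u}\varepsilon_{\ell v})$. The fourth moments of $X_{it}$ and $f_t$ are uniformly bounded by the sub-Gaussian conditions, so the mixed moment is bounded by Hölder's inequality, and the same summability condition again yields a bound $M$ that is uniform in $i$ and $h$.

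The main obstacle is justifying the factorization of the expectation when the regressors and factors are not exogenous to $\varepsilon$. If exact independence is not available, the fallback is to work conditionally on $\mathscr X$ and to assume that the covariance summability in Assumption \ref{assum:error}(iv) holds conditionally with a uniform constant. This is the standard route in \citeEC{bai2009panel}, Lemma A.2, and it is where the argument relies most on an interpretation of the assumptions rather than on computation.
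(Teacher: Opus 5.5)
Your route is the paper's: expand the squared norm as a trace into the fourfold sum weighted by $\operatorname{Cov}(\varepsilon_{kt}\varepsilon_{ks},\varepsilon_{\ell u}\varepsilon_{\ell v})$, bound the factor or regressor products by norms, and apply the first summability condition of Assumption~\ref{assum:error}(iv) in both parts; for part (ii) the paper cites the second condition, but the expansion needs the first, which is the one you use. The paper treats $f_t$ and $X_{it}$ as nonrandom with uniformly bounded norms without saying so, so your independence-plus-H\"older step spells out an assumption the paper's argument also needs. Of your two options, independence is the one that closes the argument; the conditional fallback as you state it would still need either a bound on $\max_t\|f_t\|$ or a termwise deterministic bound on the conditional moments, and care because $\varepsilon_{kt}\varepsilon_{ks}$ is centered at its unconditional mean.
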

\proof{Proof of Lemma \ref{lemma:B.2}.}
(i) Let 
$$
A = N^{-1/2} \sum_{k=1}^{N} \frac{1}{T} \sum_{t=1}^{T} \sum_{s=1}^{T} f_s f_t^\top [\varepsilon_{kt} \varepsilon_{ks} - {E}(\varepsilon_{kt} \varepsilon_{ks})].
$$
We need to show ${E}\|A\|^2 \leq M$. Note that for any matrix $A$, $\|A\|^2 = \tr(AA^\top)$. Therefore,
\begin{equation}\nonumber
\begin{aligned}
{E}\|A\|^2 &= {E}[\tr(AA^\top)] \\
&= {E}\left[ \tr\left( \frac{1}{N} \sum_{k=1}^{N} \sum_{\ell=1}^{N} \frac{1}{T^2} \sum_{t,s,u,v} f_s f_t^\top [\varepsilon_{kt} \varepsilon_{ks} - {E}(\varepsilon_{kt} \varepsilon_{ks})]   [\varepsilon_{\ell u} \varepsilon_{\ell v} - {E}(\varepsilon_{\ell u} \varepsilon_{\ell v})] f_u f_v^\top \right) \right].
\end{aligned}
\end{equation}
Using the cyclic property of the trace, we have
\begin{equation}\nonumber
\begin{aligned}
\tr(AA^\top) &= \frac{1}{N} \sum_{k,\ell} \frac{1}{T^2} \sum_{t,s,u,v} \tr\left(f_s f_t^\top [\varepsilon_{kt} \varepsilon_{ks} - {E}(\varepsilon_{kt} \varepsilon_{ks})] [\varepsilon_{\ell u} \varepsilon_{\ell v} - {E}(\varepsilon_{\ell u} \varepsilon_{\ell v})] f_u f_v^\top\right) \\
&= \frac{1}{N} \sum_{k,\ell} \frac{1}{T^2} \sum_{t,s,u,v} f_v^\top f_s \cdot f_t^\top f_u \cdot [\varepsilon_{kt} \varepsilon_{ks} - {E}(\varepsilon_{kt} \varepsilon_{ks})] \cdot [\varepsilon_{\ell u} \varepsilon_{\ell v} - {E}(\varepsilon_{\ell u} \varepsilon_{\ell v})].
\end{aligned}
\end{equation}
Taking expectations, we have
$$
{E}[\tr(AA^\top)]= \frac{1}{N} \sum_{k,\ell} \frac{1}{T^2} \sum_{t,s,u,v} f_v^\top f_s \cdot f_t^\top f_u \cdot \mathrm{cov}(\varepsilon_{kt} \varepsilon_{ks}, \varepsilon_{\ell u} \varepsilon_{\ell v}).
$$
By Assumption \ref{assum:error}, specifically,
$$
T^{-2} N^{-1} \sum_{t,s,u,v} \sum_{k,\ell} |\mathrm{cov}(\varepsilon_{kt} \varepsilon_{ks}, \varepsilon_{\ell u} \varepsilon_{\ell v})| \leq M,
$$
Combined with Assumption \ref{assum:factor-spectral}, we have
\begin{equation}\nonumber
\begin{aligned}
{E}\|A\|^2 &= {E}[\tr(AA^\top)]\\
&\leq \frac{1}{N} \sum_{k,\ell} \frac{1}{T^2} \sum_{t,s,u,v} \|f_v\| \|f_s\| \|f_t\| \|f_u\| \cdot |\mathrm{cov}(\varepsilon_{kt} \varepsilon_{ks}, \varepsilon_{\ell u} \varepsilon_{\ell v})|\\
&\leq M.
\end{aligned}
\end{equation}

(ii) For fixed $i$ and $h$, let
$$
B = N^{-1/2} \sum_{k=1}^{N} \frac{1}{T} \left\{ \sum_{t=1}^{T} \sum_{s=1}^{T} X_{it} [\varepsilon_{kt} \varepsilon_{ks} - {E}(\varepsilon_{kt} \varepsilon_{ks})] f_{hs} \right\}.
$$
We need to show ${E}\|B\|^2 \leq M$. Again, using $\|B\|^2 = \tr(BB^\top)$, we have
$$
\tr(BB^\top)= \frac{1}{N} \sum_{k,\ell} \frac{1}{T^2} \sum_{t,s,u,v} X_{it}^\top X_{iu} f_{hs} f_{hv} [\varepsilon_{kt} \varepsilon_{ks} - {E}(\varepsilon_{kt} \varepsilon_{ks})] [\varepsilon_{\ell u} \varepsilon_{\ell v} - {E}(\varepsilon_{\ell u} \varepsilon_{\ell v})].
$$
Taking expectations,
$$
{E}[\tr(BB^\top)] = \frac{1}{N} \sum_{k,\ell} \frac{1}{T^2} \sum_{t,s,u,v} X_{it}^\top X_{iu} f_{hs} f_{hv} \cdot \mathrm{cov}(\varepsilon_{kt} \varepsilon_{ks}, \varepsilon_{\ell u} \varepsilon_{\ell v}).
$$
By Assumption \ref{assum:error}, specifically,
$$
T^{-1} N^{-2} \sum_{t,s} \sum_{i,j,k,\ell} |\mathrm{cov}(\varepsilon_{it} \varepsilon_{jt}, \varepsilon_{ks} \varepsilon_{\ell s})| \leq M.
$$
Combined with Assumptions \ref{assum:factor-spectral} and \ref{assum:identification}, we obtain
\begin{equation}\nonumber
\begin{aligned}
{E}\|B\|^2 &= {E}[\tr(BB^\top)]\\
&\leq \frac{1}{N} \sum_{k,\ell} \frac{1}{T^2} \sum_{t,s,u,v} |X_{it}| |X_{iu}| |f_{hs}| |f_{hv}| \cdot |\mathrm{cov}(\varepsilon_{kt} \varepsilon_{ks}, \varepsilon_{\ell u} \varepsilon_{\ell v})|\\
&\leq M.
\end{aligned}
\end{equation}
This completes the proof.

\Halmos
\endproof
\begin{lemma}\label{lemma:B.3}
Under Assumptions \ref{assum:Regularity}--\ref{assum:local-identification}, \ref{assum:error}--\ref{assum:identification},
\begin{equation}\nonumber
\begin{aligned}
\frac{1}{\sqrt{T}} X_i^\top M_{\widehat{F}}\varepsilon_i - \frac{1}{\sqrt{T}}\frac{1}{N} \sum_{k=1}^{N} a_{ik} X_i^\top M_{\widehat{F}} \varepsilon_k=&\frac{1}{\sqrt{T}} X_i^\top M_{{F}^0}\varepsilon_i - \frac{1}{\sqrt{T}}\frac{1}{N} \sum_{k=1}^{N} a_{ik} X_i^\top M_{{F}^0} \varepsilon_k\\
&+\left(\frac{\sqrt{T}}{N}+1\right)\left[O_p\left(\left(\frac{1}{N}\sum_{i=1}^N\left\|\widehat{\beta}_i-\beta_i^0\right\|^2\right)^{1/2}\right) + O_p\left(\delta_{NT}^{-1}\right)\right]\\
&+\sqrt{T}\left[O_p\left(\frac{1}{N}\sum_{i=1}^N\left\|\widehat{\beta}_i-\beta_i^0\right\|^2\right)+O_p(\delta_{NT}^{-2})\right]+o_p(1).
\end{aligned}
\end{equation}
\end{lemma}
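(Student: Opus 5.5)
The plan is to write $M_{\widehat F}=M_{F^0}-(P_{\widehat F}-P_{F^0})$ and bound the difference
\[
\Delta_i:=\frac{1}{\sqrt T}X_i^\top(P_{\widehat F}-P_{F^0})\Big(\varepsilon_i-\frac1N\sum_{k=1}^N a_{ik}\varepsilon_k\Big).
\]
We expand $P_{\widehat F}-P_{F^0}$ around the rotated factor $F^0H$ from Proposition~\ref{pro:B.1}(ii). Using $P_{\widehat F}=\widehat F\widehat F^\top/T$ and $P_{F^0}=F^0H(H^\top F^{0\top}F^0H)^{-1}H^\top F^{0\top}$, we get the standard decomposition
\[
P_{\widehat F}-P_{F^0}=\frac{1}{T}\Big[(\widehat F-F^0H)H^\top F^{0\top}+F^0H(\widehat F-F^0H)^\top+(\widehat F-F^0H)(\widehat F-F^0H)^\top\Big]+F^0\Big[\frac{HH^\top}{T}-(F^{0\top}F^0)^{-1}\Big]F^{0\top}.
\]
For the last matrix, $\|HH^\top-(F^{0\top}F^0/T)^{-1}\|$ equals $O_p$ of $T^{-1/2}\|\widehat F-F^0H\|$ plus its square, because $\widehat F^\top\widehat F/T=I$.

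Each piece then gives a product of a score and a factor-estimation error. We write $e_i=\varepsilon_i-N^{-1}\sum_k a_{ik}\varepsilon_k$ and $d_{NT}=(N^{-1}\sum_j\|\widehat\beta_j-\beta_j^0\|^2)^{1/2}+\delta_{NT}^{-1}$, which bounds $T^{-1/2}\|\widehat F-F^0H\|$ by Proposition~\ref{pro:B.1}(ii).

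\emph{Terms with $F^{0\top}e_i$.} Assumption~\ref{assum:weak dependence}(i) gives $T^{-1/2}\|F^{0\top}\varepsilon_i\|=O_p(1)$, and $\max_k|a_{ik}|\le C$ together with the same assumption gives $\|T^{-1/2}N^{-1}\sum_k a_{ik}F^{0\top}\varepsilon_k\|=O_p(1)$. So $T^{-1/2}\|F^{0\top}e_i\|=O_p(1)$. Combined with $T^{-1}\|X_i^\top(\widehat F-F^0H)\|\le T^{-1/2}\|X_i\|\,d_{NT}$, these terms are $O_p(d_{NT})$, which is the ``$1\cdot[\cdots]$'' part of the claim.

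\emph{Terms with $(\widehat F-F^0H)^\top e_i$.} We substitute the expansion \eqref{eq:B.2} of $\widehat F-F^0H$ into $I1,\ldots,I15$ rather than using Cauchy--Schwarz crudely.
\begin{itemize}
\item The pieces $I7$ and $I8$ produce $T^{-1}\varepsilon_k^\top\varepsilon_i$-type cross moments averaged over $k$. Their means are controlled by $\sigma_{ik,ts}$ through Assumption~\ref{assum:error}(ii), and their fluctuations by Lemma~\ref{lemma:B.2} and Assumption~\ref{assum:error}(iii)--(iv). Multiplying by $\sqrt T$ gives contributions of order $(\sqrt T/N)\,d_{NT}+\sqrt T\,\delta_{NT}^{-2}$.
\item The pieces $I1$--$I5$ involving $\widehat\beta-\beta^0$ yield $\sqrt T\,O_p(N^{-1}\sum_j\|\widehat\beta_j-\beta_j^0\|^2)$ and $(\sqrt T/N)\,O_p(\cdot)$ terms.
\item The pieces $I9$--$I15$ involving $Z^\ast(\widehat\alpha)-Z^\ast(\alpha^0)$ are $o_p(1)$ by Lemma~\ref{lemmaA.1} and the orthogonality bounds used for $J9$--$J15$.
\end{itemize}

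\emph{Quadratic term.} The term in $(\widehat F-F^0H)(\widehat F-F^0H)^\top$ is bounded by $\sqrt T\,d_{NT}^2$ after Cauchy--Schwarz, using $\|e_i\|/\sqrt T=O_p(1)$.

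The main obstacle is the terms where $\widehat F-F^0H$ meets $\varepsilon_i$ itself, i.e. the own-unit feedback. Here $\widehat F$ depends on $\varepsilon_i$ through $I7$ and $I8$, so a naive bound loses a factor $\sqrt T$. We plan to handle this by isolating the $k=i$ summand, which has weight $1/N$ and produces the $\sqrt T/N$ term, and treating the $k\neq i$ summands with the covariance summability in Assumption~\ref{assum:error}(ii),(iv). Collecting all the pieces gives the stated expansion.
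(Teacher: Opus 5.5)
Your proposal is correct and follows essentially the paper's route. The paper expands $P_{\widehat F}-P_{F^0}$ around $F^0H$ into the same four pieces (its terms $a$, $b$, $c$, $d$) and bounds the two $F^{0\top}\varepsilon_i$ pieces and the quadratic piece as you do; it handles the cross piece $T^{-1/2}(\widehat F-F^0H)^\top\varepsilon_i$ through Lemma~\ref{lemma:B.4}, which substitutes the $I1$--$I15$ expansion and gets the $\sqrt T/N$ factor from the own-unit $\varepsilon_k^\top\varepsilon_i$ products via $\sum_i\varepsilon_k^\top\varepsilon_i=O_p(T+N\sqrt T)$, while the remaining differences are cosmetic (the paper reruns the argument with $a_{ik}X_i^\top$ for the weighted average instead of folding it into $e_i$, and splits $HH^\top=(F^{0\top}F^0/T)^{-1}+Q$ inside the cross term, which you do not need).
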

\proof{Proof of Lemma \ref{lemma:B.3}.}
First consider $\frac{1}{\sqrt{T}}X_{i}^\top(M_{F^{0}}-M_{\widehat{F}})\varepsilon_{i}$. Note that $M_{F^{0}}-M_{\widehat{F}}=P_{\widehat{F}}-P_{F^{0}}$ and $P_{\widehat{F}}=\widehat{F}\widehat{F}^{\top}/T$. By adding and subtracting terms,
\begin{equation}\nonumber
\begin{aligned}
\frac{1}{\sqrt{T}}\frac{X^{\top}_{i}\widehat{F}}{T}\,\widehat{F}^{ \top}\varepsilon_{i}-\frac{1}{\sqrt{T}}X^{\top}_{i}P_{F^{0}} \varepsilon_{i}=&\frac{1}{\sqrt{T}}\frac{X^{\top}_{i}(\widehat{F}-F^{0}H)}{T}H^{ \top}F^{0\top}\varepsilon_{i}\\
&+\frac{1}{\sqrt{T}}\frac{X^{\top}_{i}(\widehat{F}-F^{0}H)}{T}( \widehat{F}-F^{0}H)^{\top}\varepsilon_{i}\\
&+\frac{1}{\sqrt{T}}\frac{X^{\top}_{i}F^{0}H}{T}\,(\widehat{F}-F^{0 }H)^{\top}\varepsilon_{i}\\
&+\frac{1}{\sqrt{T}}\frac{X^{\top}_{i}F^{0}}{T}\left[HH^{ \top}-\left(\frac{F^{0\top}F^{0}}{T}\right)^{-1}\right]F^{0\top}\varepsilon _{i}\\
:=&a+b+c+d.
\end{aligned}
\end{equation}

Consider $a$. Note that $(\widehat{f}_{s}-H^{\top}f_{s}^{0})^{\top}H^{\top}f_{t}^{0}$ is scalar and thus commutable with $X_{it}$:
\begin{equation}\nonumber
\begin{aligned}
a=&\frac{1}{\sqrt{T}}\frac{X^{\top}_{i}(\widehat{F}-F^{0}H)}{T}H^{ \top}F^{0\top}\varepsilon_{i}\\
=&\frac{1}{T}\sum_{s=1}^{T}(\widehat{f}_{s}-H^{\top}f_{s}^{0})^{\top}H^{\top }\left(\frac{1}{\sqrt{T}}\sum_{t=1}^{T}f_{t}^{0}X_{is} \varepsilon_{it}\right).
\end{aligned}
\end{equation}
Thus
\begin{equation}\nonumber
\begin{aligned}
\left\|a\right\|=&\left\|\frac{1}{T}\sum_{s=1}^{T}(\widehat{f}_{s}-H^{\top}f_{s}^{0})^{\top}H^{\top }\left(\frac{1}{\sqrt{T}}\sum_{t=1}^{T}f_{t}^{0}X_{is} \varepsilon_{it}\right)\right\|\\
\leq&\left[\frac{1}{T}\sum_{s=1}^{T}\left\|\widehat{f}_{s}-H^{\top}f_{s}^{0}\right\| ^{2}\right]^{1/2}\left\|H\right\|\left[\frac{1}{T}\sum_{s=1}^{T}\left\|\left(\frac{1}{\sqrt{T} }\sum_{t=1}^{T}f_{t}^{0}X_{is}\varepsilon_{it}\right)\right\|^2\right]^{1/2}\\
=&O_p\left(\left(\frac{1}{N}\sum_{i=1}^N\left\|\widehat{\beta}_i-\beta_i^0\right\|^2\right)^{1/2}\right) + O_p\left(\delta_{NT}^{-1 }\right).\\
\end{aligned}
\end{equation}
Similarly,
\begin{equation}\nonumber
\begin{aligned}
b=&\frac{1}{\sqrt{T}}\frac{X^{\top}_{i}(\widehat{F}-F^{0}H)}{T}( \widehat{F}-F^{0}H)^{\top}\varepsilon_{i}\\
=&\sqrt{T}\frac{1}{T^2}\sum_{s=1}^T\sum_{t=1}^T(\widehat{f}_s - H^\top f_s^0)^\top(\widehat{f}_t - H^\top f_t^0)\left( X_{is} \varepsilon_{it}\right)
\end{aligned}
\end{equation}
and
\begin{equation}\nonumber
\begin{aligned}
\left\|b\right\|=&\left\|\sqrt{T}\frac{1}{T^2}\sum_{s=1}^T\sum_{t=1}^T(\widehat{f}_s - H^\top f_s^0)^\top(\widehat{f}_t - H^\top f_t^0)\left( X_{is} \varepsilon_{it}\right)\right\|\\
\leq&\sqrt{T}\left(\frac{1}{T}\sum_{t=1}^T\|\widehat{f}_t - H^\top f_t^0\|^2\right)\left(\frac{1}{T^2}\sum_{s=1}^T\sum_{t=1}^T\left\|  X_{it}\varepsilon_{it}\right\|^2\right)^{1/2}\\
=&\sqrt{T}\left[O_p\left(\frac{1}{N}\sum_{i=1}^N\left\|\widehat{\beta}_i-\beta_i^0\right\|^2\right)+O_p(\delta_{NT}^{-2})\right].
\end{aligned}
\end{equation}
Consider $c$:
\begin{equation}\nonumber
\begin{aligned}
c=&\frac{1}{\sqrt{T}}\frac{X^{\top}_{i}F^{0}H}{T}\,(\widehat{F}-F^{0 }H)^{\top}\varepsilon_{i}\\
=&\frac{1}{\sqrt{T}}\frac{X_i^\top F^0}{T}HH^\top (\widehat{F} H^{-1} - F^{0})^\top \varepsilon_i\\
=&\frac{1}{\sqrt{T}}\frac{X_i^\top F^0}{T}\left(\frac{F^{0\top}F^0}{T}\right)^{-1}(\widehat{F} H^{-1} - F^{0})^\top \varepsilon_i\\
&+\frac{1}{\sqrt{T}}\frac{X_i^\top F^0}{T}\left[HH^\top  - \left(\frac{F^{0\top}F^0}{T}\right)^{-1}\right](\widehat{F} H^{-1} - F^{0})^\top \varepsilon_i\\
:=&c1+c2.
\end{aligned}
\end{equation}
Denote $Q = HH^\top  - (F^{0\top}F^0/T)^{-1}$ for the moment. For the term $c2$,
\begin{equation}\nonumber
\begin{aligned}
c2=&\frac{1}{\sqrt{T}}\frac{X_i^\top F^0}{T}\left[HH^\top  - \left(\frac{F^{0\top}F^0}{T}\right)^{-1}\right](\widehat{F} H^{-1} - F^{0})^\top \varepsilon_i\\
=&\sqrt{T}\left(\frac{1}{T}\left[\varepsilon_i^\top (\widehat{F} H^{-1} - F_0)\otimes\left(\frac{X_i^\top F^0}{T}\right)\right]\right)\text{Vec}(Q)
\end{aligned}
\end{equation}
We first analyze $\text{Vec}(Q)$. From the Proposition \ref{pro:B.1}, we have
$$
T^{-1/2}\|\widehat{F}-F^0H\|=O_p\left(\left(\frac{1}{N}\sum_{i=1}^N\left\|\widehat{\beta}_i-\beta_i^0\right\|^2\right)^{1/2}\right) + O_p\left(\delta_{NT}^{-1}\right).
$$
Then, 
\begin{equation}\nonumber
\begin{aligned}
\frac{1}{T}\left\|F^{0\top}(\widehat{F}-F^0H)\right\|\leq&\frac{1}{\sqrt{T}}\left\|F^{0}\right\|\cdot\frac{1}{\sqrt{T}}\left\|\widehat{F}-F^0H\right\|\\
=&O_p\left(\left(\frac{1}{N}\sum_{i=1}^N\left\|\widehat{\beta}_i-\beta_i^0\right\|^2\right)^{1/2}\right) + O_p\left(\delta_{NT}^{-1}\right)
\end{aligned}
\end{equation}
and
\begin{equation}\nonumber
\begin{aligned}
\frac{1}{T}\left\|\widehat{F}^{\top}(\widehat{F}-F^0H)\right\|\leq&\frac{1}{\sqrt{T}}\left\|\widehat{F}\right\|\cdot\frac{1}{\sqrt{T}}\left\|\widehat{F}-F^0H\right\|\\
=&O_p\left(\left(\frac{1}{N}\sum_{i=1}^N\left\|\widehat{\beta}_i-\beta_i^0\right\|^2\right)^{1/2}\right) + O_p\left(\delta_{NT}^{-1}\right).
\end{aligned}
\end{equation}
The two results can be rewritten as
$$
\left\|\frac{F^{0\top}\widehat{F}}{T}-\frac{F^{0\top}F^{0}}{T}H\right\|=O_p\left(\left(\frac{1}{N}\sum_{i=1}^N\left\|\widehat{\beta}_i-\beta_i^0\right\|^2\right)^{1/2}\right) + O_p\left(\delta_{NT}^{-1}\right)
$$
and
$$
\left\|I_{r^0}-\frac{\widehat{F}^{\top}F^{0}}{T}H\right\|=O_p\left(\left(\frac{1}{N}\sum_{i=1}^N\left\|\widehat{\beta}_i-\beta_i^0\right\|^2\right)^{1/2}\right) + O_p\left(\delta_{NT}^{-1}\right).
$$
Left multiply the first equation by $H^{\top}$ and use the transpose of the second equation to obtain
\begin{equation}\nonumber
\begin{aligned}
\left\|H^{\top}\left(\frac{F^{0\top}\widehat{F}}{T}-\frac{F^{0\top}F^{0}}{T}H\right)+\left(I_{r^0}-\frac{\widehat{F}^{\top}F^{0}}{T}H\right)^\top\right\|\leq&\left\|\frac{F^{0\top}\widehat{F}}{T}-\frac{F^{0\top}F^{0}}{T}H\right\|+\left\|I_{r^0}-\frac{\widehat{F}^{\top}F^{0}}{T}H\right\|\\
=&O_p\left(\left(\frac{1}{N}\sum_{i=1}^N\left\|\widehat{\beta}_i-\beta_i^0\right\|^2\right)^{1/2}\right) + O_p\left(\delta_{NT}^{-1}\right).
\end{aligned}
\end{equation}
That is,
$$
\left\|I_{r^0}-H^{\top}\frac{F^{0\top}F^{0}}{T}H\right\|=O_p\left(\left(\frac{1}{N}\sum_{i=1}^N\left\|\widehat{\beta}_i-\beta_i^0\right\|^2\right)^{1/2}\right) + O_p\left(\delta_{NT}^{-1}\right).
$$
Right multiplying by $H^{\top}$ and left multiplying by $H^{\top-1}$, we obtain
$$
\left\|I_{r^0}-\frac{F^{0\top}F^{0}}{T}HH^\top\right\|=O_p\left(\left(\frac{1}{N}\sum_{i=1}^N\left\|\widehat{\beta}_i-\beta_i^0\right\|^2\right)^{1/2}\right) + O_p\left(\delta_{NT}^{-1}\right).
$$
Then,
$$
\left\|\text{Vec}(Q)\right\|=O_p\left(\left(\frac{1}{N}\sum_{i=1}^N\left\|\widehat{\beta}_i-\beta_i^0\right\|^2\right)^{1/2}\right) + O_p\left(\delta_{NT}^{-1}\right).
$$
Moreover,
\begin{equation}\nonumber
\begin{aligned}
\left\|\frac{1}{T}\varepsilon_i^\top (\widehat{F} H^{-1} - F_0)\right\|\leq&\frac{1}{T}\left\|\varepsilon_i\right\|\cdot\frac{1}{T}\left\|\widehat{F} H^{-1} - F_0\right\|\\
=&O_p\left(\left(\frac{1}{N}\sum_{i=1}^N\left\|\widehat{\beta}_i-\beta_i^0\right\|^2\right)^{1/2}\right) + O_p\left(\delta_{NT}^{-1}\right).
\end{aligned}
\end{equation}
Therefore,
\begin{equation}\nonumber
\begin{aligned}
\left\|c2\right\|&=\sqrt{T}\left[O_p\left(\left(\frac{1}{N}\sum_{i=1}^N\left\|\widehat{\beta}_i-\beta_i^0\right\|^2\right)^{1/2}\right) + O_p\left(\delta_{NT}^{-1}\right)\right]^2\\
&=\sqrt{T} O_p\left(\frac{1}{N}\sum_{i=1}^N\left\|\widehat{\beta}_i-\beta_i^0\right\|^2\right)+ \sqrt{T} O_p (\delta_{NT}^{-2}).
\end{aligned}
\end{equation}
For the term $c1$, by Lemma \ref{lemma:B.4}, 
\begin{equation}\nonumber
\begin{aligned}
\left\|c1\right\|=&\left\|\frac{1}{\sqrt{T}}\frac{X_i^\top F^0}{T}\left(\frac{F^{0\top}F^0}{T}\right)^{-1}(\widehat{F} H^{-1} - F^{0})^\top \varepsilon_i\right\|\\
\leq&\left\|\frac{1}{\sqrt{T}}\varepsilon_i^\top(\widehat{F}-F^0H)\right\|\\
=&O_p\left(\left(\frac{1}{N}\sum_{i=1}^N\left\|\widehat{\beta}_i-\beta_i^0\right\|^2\right)^{1/2}\right) + O_p\left(\delta_{NT}^{-1}\right)\\
&+\frac{\sqrt{T}}{N}\left[O_p\left(\left(\frac{1}{N}\sum_{i=1}^N\left\|\widehat{\beta}_i-\beta_i^0\right\|^2\right)^{1/2}\right) + O_p\left(\delta_{NT}^{-1}\right)\right]+o_p(1).
\end{aligned}
\end{equation}
For $d$, again let $Q = HH^\top  - (F^{0\top} F^0 / T)^{-1}$. Then
\begin{equation}\nonumber
\begin{aligned}
d=&\frac{1}{\sqrt{T}}\frac{X^{\top}_{i}F^{0}}{T}\left[HH^{ \top}-\left(\frac{F^{0\top}F^{0}}{T}\right)^{-1}\right]F^{0\top}\varepsilon _{i}\\
=&\frac{1}{\sqrt{T}}\left[\varepsilon_i^\top F^0\otimes\left( \frac{X^\top_i F^0}{T} \right)\right]\text{Vec}(Q)\\
=&O_p(1) \text{Vec}(Q)\\
=&O_p\left(\left(\frac{1}{N}\sum_{i=1}^N\left\|\widehat{\beta}_i-\beta_i^0\right\|^2\right)^{1/2}\right) + O_p\left(\delta_{NT}^{-1}\right).
\end{aligned}
\end{equation}
In summary,
\begin{equation}\label{eq:B.6}
\begin{aligned}
\frac{1}{\sqrt{T}}X_{i}^\top(M_{F^{0}}-M_{\widehat{F}})\varepsilon_{i}=&\left(\frac{\sqrt{T}}{N}+1\right)\left[O_p\left(\left(\frac{1}{N}\sum_{i=1}^N\left\|\widehat{\beta}_i-\beta_i^0\right\|^2\right)^{1/2}\right) + O_p\left(\delta_{NT}^{-1}\right)\right]\\
&+\sqrt{T}\left[O_p\left(\frac{1}{N}\sum_{i=1}^N\left\|\widehat{\beta}_i-\beta_i^0\right\|^2\right)+O_p(\delta_{NT}^{-2})\right]+o_p(1).
\end{aligned}
\end{equation}
Let $V_{ik} =  a_{ik} X_i^\top$. Then replacing $X_i^\top$ with $V_{ik}$, the same argument leads to
\begin{equation}\label{eq:B.7}
\begin{aligned}
\frac{1}{\sqrt{T}}\frac{1}{N} \sum_{k=1}^{N}  V_{ik}(M_{F^0}-M_{\widehat{F}}) \varepsilon_k=&\left(\frac{\sqrt{T}}{N}+1\right)\left[O_p\left(\left(\frac{1}{N}\sum_{i=1}^N\left\|\widehat{\beta}_i-\beta_i^0\right\|^2\right)^{1/2}\right) + O_p\left(\delta_{NT}^{-1}\right)\right]\\
&+\sqrt{T}\left[O_p\left(\frac{1}{N}\sum_{i=1}^N\left\|\widehat{\beta}_i-\beta_i^0\right\|^2\right)+O_p(\delta_{NT}^{-2})\right]+o_p(1).
\end{aligned}
\end{equation}
Combining \eqref{eq:B.6} and \eqref{eq:B.7}, we obtain the lemma:

\begin{equation}\nonumber
\begin{aligned}
\frac{1}{\sqrt{NT}} \sum_{i=1}^N \left[ X^\top_i M_{\widehat{F}} - \frac{1}{N} \sum_{k=1}^N a_{ik} X^\top_k M_{\widehat{F}} \right] \varepsilon_i&=\frac{1}{\sqrt{NT}} \sum_{i=1}^N \left[ X^\top_i M_{F^0} - \frac{1}{N} \sum_{k=1}^N a_{ik} X^\top_k M_{F^0} \right] \varepsilon_i\\
&\quad-\left(\frac{\sqrt{NT}}{N}\right)(\psi_{NT}-\psi^{\ast}_{NT})+\sqrt{T}O_p(\| \widehat{\beta}-\beta^{0}\|^{2})\\
&\quad+O_p(\|\widehat{\beta}-\beta^{0}\|)+\sqrt{T}O_p(\delta^{-2}_{NT}).
\end{aligned}
\end{equation}

\Halmos
\endproof
\begin{lemma}\label{lemma:B.4}
Under Assumptions \ref{assum:Regularity}--\ref{assum:local-identification}, \ref{assum:error}--\ref{assum:identification}, we have
\begin{equation}\nonumber
\begin{aligned}
\frac{1}{\sqrt{T}}\varepsilon_i^\top(\widehat{F}H^{-1}-F^0)=&O_p\left(\left(\frac{1}{N}\sum_{i=1}^N\left\|\widehat{\beta}_i-\beta_i^0\right\|^2\right)^{1/2}\right) + O_p\left(\delta_{NT}^{-1}\right)\\
&+\frac{\sqrt{T}}{N}\left[O_p\left(\left(\frac{1}{N}\sum_{i=1}^N\left\|\widehat{\beta}_i-\beta_i^0\right\|^2\right)^{1/2}\right) + O_p\left(\delta_{NT}^{-1}\right)\right]+o_p(1).
\end{aligned}
\end{equation}
\end{lemma}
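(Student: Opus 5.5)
The plan is to use the factor expansion from the proof of Proposition \ref{pro:B.1}. By \eqref{eq:B.2} and the invertibility of $V_{NT}$,
$$
\widehat F H^{-1}-F^0=(I1+\cdots+I15)\,G,\qquad G=\left(\frac{F^{0\top}\widehat F}{T}\right)^{-1}\left(\frac{\Lambda^{0\top}\Lambda^0}{N}\right)^{-1},\qquad \|G\|=O_p(1).
$$
Hence $T^{-1/2}\varepsilon_i^\top(\widehat FH^{-1}-F^0)=\sum_{m=1}^{15}T^{-1/2}\varepsilon_i^\top I_m G$, and it suffices to bound the fifteen terms one at a time. I would follow the scheme of Lemma A.7 in \citeEC{bai2009panel} and add the generated-response terms $I9$--$I15$. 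Throughout, write $b_{NT}=(N^{-1}\sum_k\|\widehat\beta_k-\beta_k^0\|^2)^{1/2}$. I will also use $T^{-1/2}\|\varepsilon_i\|=O_p(1)$ (Assumption \ref{assum:error}(i)), $T^{-1/2}\|\widehat F\|=\sqrt r$, and Proposition \ref{pro:B.1}(ii).

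\emph{Coefficient terms $I1$--$I5$.} Each of these carries a factor $X_k(\widehat\beta_k-\beta_k^0)$. After pulling out $\varepsilon_i^\top X_k/T$ or $\varepsilon_i^\top F^0/\sqrt T$, which are $O_p(1)$ uniformly on average, Cauchy--Schwarz over $k$ gives $O_p(b_{NT})$. For $I3$, which contains $\varepsilon_k^\top\widehat F/T$, the bound is $O_p(b_{NT})\cdot o_p(1)$.

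\emph{Pure-error terms $I6$--$I8$.} These drive the result.
\begin{itemize}
\item For $I6$, factor as $(\varepsilon_i^\top F^0/\sqrt T)\,(NT)^{-1}\sum_k\lambda_k^0\varepsilon_k^\top\widehat F$. Split $\widehat F=F^0H+(\widehat F-F^0H)$ and use Assumption \ref{assum:weak dependence}(ii) together with Proposition \ref{pro:B.1}(ii). This gives $O_p(b_{NT}+\delta_{NT}^{-1})$.
\item For $I7$, the term equals $T^{-1/2}N^{-1}\sum_k(\varepsilon_i^\top\varepsilon_k/T)\lambda_k^{0\top}(F^{0\top}\widehat F/T)G$. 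Write $\varepsilon_i^\top\varepsilon_k=\sum_t\{\varepsilon_{it}\varepsilon_{kt}-E(\varepsilon_{it}\varepsilon_{kt})\}+\sum_t\sigma_{ik,tt}$. The centered part is $\sqrt{T}\cdot O_p\{(NT)^{-1/2}\}$ by Assumption \ref{assum:error}(iii)--(iv), hence $o_p(1)$. The mean part is bounded by $T^{-1/2}N^{-1}\sum_k T\delta_{ik}\|\lambda_k^0\|=O(\sqrt T/N)$, using the $\delta_{ij}$ summability in Assumption \ref{assum:error}(ii).
\item For $I8$, the term is $(N T^{3/2})^{-1}\sum_k\varepsilon_i^\top\varepsilon_k\varepsilon_k^\top\widehat F\,G$. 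Again split $\widehat F=F^0H+(\widehat F-F^0H)$ and centre $\varepsilon_i^\top\varepsilon_k$. The $F^0H$ part is $o_p(1)+O_p(\sqrt T/N)\cdot O_p(T^{-1/2})$ after using $\varepsilon_k^\top F^0/\sqrt T=O_p(1)$. For the remaining part, apply Cauchy--Schwarz with the bounded largest eigenvalue of $E(\varepsilon_k\varepsilon_k^\top)$. This gives $(\sqrt T/N+1)\,O_p(b_{NT}+\delta_{NT}^{-1})$, which is the source of the $\sqrt T/N$ bracket in the statement.
\end{itemize}

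\emph{Generated-response terms $I9$--$I15$.} By Lemma \ref{lemmaA.1}, $\|Z_k^\ast(\widehat\alpha_k)-Z_k^\ast(\alpha_k^0)\|=O_p(1)$, and this holds uniformly on average in $k$ via the Lipschitz bound $|R_{kt}|\le|W_{kt}^\top\widehat\gamma_k|$ and Proposition \ref{pro:quantile}. Each such term therefore carries an extra factor $T^{-1/2}$ compared with the corresponding error term, so crude Cauchy--Schwarz bounds of the type already used for $I9$--$I15$ in Proposition \ref{pro:B.1} give $O_p(T^{-1/2})\cdot O_p(1)$ or $O_p(T^{-1/2})\,O_p(b_{NT})$. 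All of these are absorbed into $o_p(1)+O_p(b_{NT})$.

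Collecting the fifteen bounds gives the stated expansion.

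The main obstacle is $I8$ and the diagonal part of $I7$, where $\varepsilon_i$ appears twice. Crude bounds lose a factor of $\sqrt T$ there, so the fourth-moment covariance summability in Assumption \ref{assum:error}(iv) has to be used carefully to isolate the $\sqrt T/N$ bias from a genuinely $o_p(1)$ remainder.
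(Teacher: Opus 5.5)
Your architecture is the same as the paper's. You substitute \eqref{eq:B.2} and bound the fifteen pieces $T^{-1/2}\varepsilon_i^\top I_mG$ separately. Your handling of $I6$ and $I8$ uses the same splittings as the paper's $a6$, $b8$, $c8$, including placing the $\sqrt T/N$ bracket in the Cauchy--Schwarz part of $I8$.

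\textbf{The genuine gap is the generated-response block.} Transferring the bounds of Proposition~\ref{pro:B.1} costs a factor $\|\varepsilon_i\|=O_p(\sqrt T)$, because $\|T^{-1/2}\varepsilon_i^\top I_mG\|\le\|\varepsilon_i\|\,(T^{-1/2}\|I_m\|)\,\|G\|$. The $O_p(T^{-1/2})$ bounds there therefore give only $O_p(1)$ for $I11$, $I12$, $I14$, $I15$, not an extra $T^{-1/2}$. For $I12$, $I14$, $I15$ this is repaired by the factorizations you already use for $I3$, $I6$, $I8$.

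For $I11$ it is not. Since $F^{0\top}\widehat FG=T(\Lambda^{0\top}\Lambda^0/N)^{-1}$, the term equals $(\sqrt T N)^{-1}\sum_k\varepsilon_i^\top\{Z_k^\ast(\widehat\alpha_k)-Z_k^\ast(\alpha_k^0)\}\lambda_k^{0\top}(\Lambda^{0\top}\Lambda^0/N)^{-1}$. The norm bound of Lemma~\ref{lemmaA.1} with Cauchy--Schwarz yields only $O_p(1)$. You need the structure of the perturbation, not just its size:
\begin{itemize}
\item In the notation of the proof of Lemma~\ref{lemma:B.5}, write $Z_k^\ast(\widehat\alpha_k)-Z_k^\ast(\alpha_k^0)=D_kX_k\delta_k+r_k(\delta_k)$, with $E(m_{kt}\mid\mathcal C_{kt})=0$ and $\|r_k(\delta_k)\|=O_p(T^{-1/4})$.
\item For $k\neq i$, $T^{-1/2}\varepsilon_i^\top D_kX_k$ is a centred cross-unit sum of order $O_p(1)$, so these terms are $O_p(T^{-1/2})$. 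This is where weak cross-sectional dependence between $\varepsilon_{i\cdot}$ and $m_{k\cdot}$ is needed.
\item The own-unit term $k=i$ is $O_p(1)$ but carries weight $1/N$.
\item The remainder is at most $T^{-1/2}\|\varepsilon_i\|\,\|r_k(\delta_k)\|=o_p(1)$.
\end{itemize}
This orthogonality argument is what the paper invokes, through Lemma~\ref{lemma:B.5}, to get $a11=o_p(1)$.

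\textbf{Your $I7$ computation.} It has the right scaling. Your display has a stray $/T$, but your bounds use $\varepsilon_i^\top\varepsilon_k$. The paper's second line for $a7$ carries an extra $1/T$ and so reports $O_p((NT)^{-1/2})$. However, your mean part leaves a standalone $O(\sqrt T/N)$ that does not appear on the right-hand side of the lemma, so collecting the bounds does not literally give the statement. You must add one of two observations:
\begin{itemize}
\item $\sqrt T/N=O(\delta_{NT}^{-1})$ when $T=O(N)$, in particular under $T/N\to\rho$, where the lemma is applied;
\item or $\sqrt T/N=o(1)$ when $\sqrt T=o(N)$.
\end{itemize}
In addition, bounding $N^{-1}\sum_k\delta_{ik}\|\lambda_k^0\|$ by $O(N^{-1})$ for a fixed $i$ requires $\sum_k\delta_{ik}\le M$ unit by unit. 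Assumption~\ref{assum:error}(ii) controls only the average $N^{-1}\sum_{i,j}\delta_{ij}$. Together with Assumption~\ref{assum:weak dependence}(i) it gives only $O(\sqrt{T/N})$, which is not $o(1)$ when $T\asymp N$.

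\textbf{Normalization in $I1$--$I5$.} The factor you pull out must be the centred cross moment $T^{-1/2}\varepsilon_i^\top X_k=O_p(1)$, as in the paper. With only $\varepsilon_i^\top X_k/T=O_p(1)$, a factor $\sqrt T$ is left over.
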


\proof{Proof of Lemma \ref{lemma:B.4}.}
From \eqref{eq:B.1} and denoting $G=(F^{0\top}\widehat{F}/T)^{-1}(\Lambda^{0\top}\Lambda^0/N)^{-1}$ for the moment,
\begin{equation}\nonumber
\begin{aligned}
\frac{1}{\sqrt{T}}\varepsilon^{\top}_{i}(\widehat{F} H^{-1}-F^{0}) =&\frac{1}{\sqrt{T}}\varepsilon^{\top}_{k}(I1+\cdots+ I8)G\\
:=&a1+\cdots+a8.
\end{aligned}
\end{equation}
For the first four terms, we have
\begin{equation}\nonumber
\begin{aligned}
\left\|a1\right\|
=&\left\|\frac{1}{\sqrt{T}}\varepsilon^{\top}_{k}\left(\frac{1}{NT} \sum_{i=1}^N X_i (\beta_i^0 - \widehat{\beta}_i) (\beta_i^0 - \widehat{\beta}_i)^\top  X_i^\top  \widehat{F}\right)G\right\|\\
\leq&\frac{1}{N}\sum_{i=1}^N\left\|\frac{\varepsilon^{\top}_{k}X_i}{\sqrt{T}}\right\|\left\|\widehat{\beta}_i-\beta_i^0\right\|^2\\
=&\frac{1}{N}\sum_{i=1}^N\left\|\frac{1}{\sqrt{T}}\sum_{t=1}^TX_{it}\varepsilon_{kt}\right\|\left\|\widehat{\beta}_i-\beta_i^0\right\|^2\\
=&O_p\left(\frac{1}{N}\sum_{i=1}^N\left\|\widehat{\beta}_i-\beta_i^0\right\|^2\right),
\end{aligned}
\end{equation}
\begin{equation}\nonumber
\begin{aligned}
\left\|a2\right\|
=&\left\|\frac{1}{\sqrt{T}}\varepsilon^{\top}_{k}\left(\frac{1}{NT} \sum_{i=1}^N X_i (\beta_i^0 - \widehat{\beta}_i) \lambda_i^{0\top}  F^{0\top} \widehat{F}\right)G\right\|\\
=&\left\|\frac{1}{N\sqrt{T}}\sum_{i=1}^N\varepsilon^{\top}_{k}X_i (\beta_i^0 - \widehat{\beta}_i) \lambda_i^{0\top}\left(\frac{\Lambda ^{0\top}\Lambda^0}{N}\right)\right\|\\
\leq&\frac{1}{N}\sum_{i=1}^N\left\|\frac{\varepsilon^{\top}_{k}X_i}{\sqrt{T}}\right\|\left\|\widehat{\beta}_i-\beta_i^0\right\|\\
=&O_p\left(\left(\frac{1}{N}\sum_{i=1}^N\left\|\widehat{\beta}_i-\beta_i^0\right\|^2\right)^{1/2}\right),
\end{aligned}
\end{equation}
\begin{equation}\nonumber
\begin{aligned}
\left\|a3\right\|
=&\left\|\frac{1}{\sqrt{T}}\varepsilon^{\top}_{k}\left(\frac{1}{NT} \sum_{i=1}^N X_i (\beta_i^0 - \widehat{\beta}_i) \varepsilon_i^\top \widehat{F}\right)G\right\|\\
\leq&\frac{1}{N}\sum_{i=1}^N\left\|\frac{\varepsilon^{\top}_{k}X_i}{\sqrt{T}}\right\|\left\|\widehat{\beta}_i-\beta_i^0\right\|\left\|\frac{\varepsilon_{i}}{\sqrt{T}}\right\|\\
=&O_p\left(\left(\frac{1}{N}\sum_{i=1}^N\left\|\widehat{\beta}_i-\beta_i^0\right\|^2\right)^{1/2}\right),
\end{aligned}
\end{equation}
\begin{equation}\nonumber
\begin{aligned}
\left\|a4\right\|
=&\left\|\frac{1}{\sqrt{T}}\varepsilon^{\top}_{k}\left(\frac{1}{NT} \sum_{i=1}^N F^0 \lambda_i^0 (\beta_i^0 - \widehat{\beta}_i)^\top  X_i^\top  \widehat{F}\right)G\right\|\\
\leq&\frac{1}{N}\sum_{i=1}^N\left\|\frac{\varepsilon^{\top}_{k}F^0}{\sqrt{T}}\right\|\left\|\widehat{\beta}_i-\beta_i^0\right\|\left\|\frac{X_i}{\sqrt{T}}\right\|\\
=&\frac{1}{N}\sum_{i=1}^N\left\|\frac{1}{\sqrt{T}}\sum_{t=1}^Tf_t^0\varepsilon_{kt}\right\|\left\|\widehat{\beta}_i-\beta_i^0\right\|\left\|\frac{X_i}{\sqrt{T}}\right\|\\
=&O_p\left(\left(\frac{1}{N}\sum_{i=1}^N\left\|\widehat{\beta}_i-\beta_i^0\right\|^2\right)^{1/2}\right).
\end{aligned}
\end{equation}
For $ a5 $, let $ W_i = X_i^\top  \widehat{F}/T $ and note that $\|W_i\|^2 \leq \|X_i\|^2/T$:
\begin{equation}\nonumber
\begin{aligned}
\left\|a5\right\|
=&\left\|\frac{1}{\sqrt{T}}\varepsilon^{\top}_{k}\left(\frac{1}{NT} \sum_{i=1}^N \varepsilon_i (\beta_i^0 - \widehat{\beta}_i)^\top  X_i^\top  \widehat{F}\right)G\right\|\\
=&\left\|\frac{1}{N\sqrt{T}}\sum_{i=1}^N\varepsilon^{\top}_{k}\varepsilon_i (\beta_i^0 - \widehat{\beta}_i)^\top W_iG\right\|\\
\leq&\frac{1}{N}\sum_{i=1}^N\left\|\frac{1}{\sqrt{T}}\sum_{t=1}^T\varepsilon_{kt}\varepsilon_{it}\right\|\left\|\widehat{\beta}_i-\beta_i^0\right\|\\
=&O_p\left(\left(\frac{1}{N}\sum_{i=1}^N\left\|\widehat{\beta}_i-\beta_i^0\right\|^2\right)^{1/2}\right).
\end{aligned}
\end{equation}
For $ a6 $,
\begin{equation}\nonumber
\begin{aligned}
a6
=&\frac{1}{\sqrt{T}}\varepsilon^{\top}_{k}\left(\frac{1}{NT} \sum_{i=1}^N F^0 \lambda_i^0 \varepsilon_i^\top \widehat{F}\right)G\\
=&\left(\frac{1}{\sqrt{T}}\varepsilon^{\top}_{k}F^0\right)\left(\frac{1}{NT} \sum_{i=1}^N  \lambda_i^0 \varepsilon_i^\top F^0HG\right)+\left(\frac{1}{\sqrt{T}}\varepsilon^{\top}_{k}F^0\right)\frac{1}{NT} \sum_{i=1}^N  \lambda_i^0 \varepsilon_i^\top (\widehat{F}-F^0H)G\\
:=&a6.1 + a6.2.
\end{aligned}
\end{equation}
For terms $a6.1$ and $a6.2$, we have
\begin{equation}\nonumber
\begin{aligned}
\left\|a6.1\right\|
\leq&\frac{1}{\sqrt{NT}}\left\|\frac{1}{\sqrt{T}}\sum_{t=1}^Tf^0_t\varepsilon_{kt}\right\|\left\|\frac{1}{\sqrt{NT}}\sum_{i=1}^N\sum_{t=1}^T\lambda_i^0f_t^{0\top}\varepsilon_{it}\right\|=O_p\left(\frac{1}{\sqrt{NT}}\right)
\end{aligned}
\end{equation}
and
\begin{equation}\nonumber
\begin{aligned}
\left\|a6.2\right\|=&\left\|\left(\frac{1}{\sqrt{T}}\varepsilon^{\top}_{k}F^0\right)\frac{1}{NT} \sum_{i=1}^N  \lambda_i^0 \varepsilon_i^\top (\widehat{F}-F^0H)G\right\|\\
\leq&\frac{1}{\sqrt{N}}\left\|\frac{1}{\sqrt{T}}\sum_{t=1}^Tf^0_t\varepsilon_{kt}\right\|\left\|\frac{1}{\sqrt{NT}}\sum_{i=1}^N  \lambda_i^0 \varepsilon_i^\top\right\|\left\|\frac{\widehat{F}-F^0H}{\sqrt{T}}\right\|\\
=&\frac{1}{\sqrt{N}}\left[O_p\left(\left(\frac{1}{N}\sum_{i=1}^N\left\|\widehat{\beta}_i-\beta_i^0\right\|^2\right)^{1/2}\right) + O_p\left(\delta_{NT}^{-1}\right)\right].
\end{aligned}
\end{equation}
Next consider $a7$ and $a8$:
\begin{equation}\nonumber
\begin{aligned}
a7
=&\frac{1}{\sqrt{T}}\varepsilon^{\top}_{k}\left( \frac{1}{NT} \sum_{i=1}^N \varepsilon_i \lambda_i^{0\top}  F^{0\top} \widehat{F}\right)G\\
=&\frac{1}{\sqrt{T}}\frac{1}{NT} \sum_{i=1}^N \varepsilon^{\top}_{k}\varepsilon_i \lambda_i^{0\top}\left(\frac{\Lambda^{0\top}\Lambda^0}{N}\right)^{-1}\\
=&\frac{1}{\sqrt{NT}}\frac{1}{T}\sum_{t=1}^T \left[\varepsilon_{kt}\left(\frac{1}{\sqrt{N}}\sum_{i=1}^N \varepsilon_{it}\lambda_i^{0\top}\right)\right]\left(\frac{\Lambda^{0\top}\Lambda^0}{N}\right)^{-1}\\
=&O_p\left(\frac{1}{\sqrt{NT}}\right),
\end{aligned}
\end{equation}
\begin{equation}\nonumber
\begin{aligned}
a8
=&\frac{1}{\sqrt{T}}\varepsilon^{\top}_{k}\left(\frac{1}{NT} \sum_{i=1}^N \varepsilon_i \varepsilon_i^\top \widehat{F}\right)G\\
=&\frac{1}{\sqrt{T}}\frac{1}{NT}\sum_{i=1}^N\varepsilon^{\top}_{k}\varepsilon_i (\varepsilon_i^\top \widehat{F})G\\
=&\frac{1}{\sqrt{T}}\frac{1}{NT}\sum_{i=1}^N\varepsilon^{\top}_{k}\varepsilon_i (\varepsilon_i^\top F^0)HG+\frac{1}{\sqrt{T}}\frac{1}{NT}\sum_{i=1}^N\varepsilon^{\top}_{k}\varepsilon_i (\varepsilon_i^\top (\widehat{F}-F^0H))G\\
:=&b8+c8,
\end{aligned}
\end{equation}
\begin{equation}\nonumber
b8=\frac{1}{\sqrt{T}}\frac{1}{NT}\sum_{i=1}^N\varepsilon^{\top}_{k}\varepsilon_i (\varepsilon_i^\top F^0)HG=O_p\left(\frac{1}{N}+\frac{1}{\sqrt{T}}\right)
\end{equation}
because $\sum_{i=1}^N\varepsilon^{\top}_{k}\varepsilon_i=O_p(T+N\sqrt{T})$.
Moreover,
\begin{equation}\nonumber
\begin{aligned}
\left\|c8\right\|=&\left\|\frac{1}{\sqrt{T}}\frac{1}{NT}\sum_{i=1}^N\varepsilon^{\top}_{k}\varepsilon_i (\varepsilon_i^\top (\widehat{F}-F^0H))G\right\|\\
\leq&\frac{1}{N\sqrt{T}}\sum_{i=1}^N\left\|\varepsilon^{\top}_{k}\varepsilon_i\right\|\left\|\frac{\widehat{F}-F^0H}{\sqrt{T}}\right\|\\
=&\left(\frac{\sqrt{T}}{N}+1\right)\left[O_p\left(\left(\frac{1}{N}\sum_{i=1}^N\left\|\widehat{\beta}_i-\beta_i^0\right\|^2\right)^{1/2}\right) + O_p\left(\delta_{NT}^{-1}\right)\right].
\end{aligned}
\end{equation}
For $a9$ and $a10$,
\begin{equation}\nonumber
\begin{aligned}
a9
=&\frac{1}{\sqrt{T}}\varepsilon^{\top}_{k}\left(\frac{1}{NT} \sum_{i=1}^N\left\{Z_i^\ast(\widehat{\alpha}_{i})-Z_i^\ast({\alpha}_{i}^0)\right\}\left\{Z_i^\ast(\widehat{\alpha}_{i})-Z_i^\ast({\alpha}_{i}^0)\right\}^\top\widehat{F}\right)G=O_p\left(\frac{1}{\sqrt{T}}\right),
\end{aligned}
\end{equation}
\begin{equation}\nonumber
\begin{aligned}
\left\|a10\right\|
=&\left\|\frac{1}{\sqrt{T}}\varepsilon^{\top}_{k}\left(\frac{1}{NT} \sum_{i=1}^N\left\{Z_i^\ast(\widehat{\alpha}_{i})-Z_i^\ast({\alpha}_{i}^0)\right\}(\beta_i^0 - \widehat{\beta}_i)^\top X_i^\top\widehat{F}\right)G\right\|=O_p\left(\left(\frac{1}{N}\sum_{i=1}^N\left\|\widehat{\beta}_i-\beta_i^0\right\|^2\right)^{1/2}\right).
\end{aligned}
\end{equation}
By Lemma \ref{lemma:B.5},
\begin{equation}\nonumber
\begin{aligned}
a11
=&\frac{1}{\sqrt{T}}\varepsilon^{\top}_{k}\left(\frac{1}{NT} \sum_{i=1}^N\left\{Z_i^\ast(\widehat{\alpha}_{i})-Z_i^\ast({\alpha}_{i}^0)\right\}\lambda_i^{0\top}  F^{0\top}\widehat{F}\right)G\\
=&\frac{1}{N} \sum_{i=1}^N\left(\frac{1}{\sqrt{T}}\varepsilon^{\top}_{k}\right)\left(\frac{1}{\sqrt{T}}\left\{Z_i^\ast(\widehat{\alpha}_{i})-Z_i^\ast({\alpha}_{i}^0)\right\}\lambda_i^{0\top}  F^{0\top}\right)\left(\frac{1}{\sqrt{T}}\widehat{F}\right)G\\
=&o_p(1).
\end{aligned}
\end{equation}
For terms $a12$-$a15$,
\begin{equation}\nonumber
\begin{aligned}
\left\|a12\right\|
=&\left\|\frac{1}{\sqrt{T}}\varepsilon^{\top}_{k}\left(\frac{1}{NT} \sum_{i=1}^N\left\{Z_i^\ast(\widehat{\alpha}_{i})-Z_i^\ast({\alpha}_{i}^0)\right\}\varepsilon_i^\top\widehat{F}\right)G\right\|\\
\leq&\frac{1}{NT} \frac{1}{\sqrt{T}}\sum_{i=1}^N\left\|\varepsilon_{k}\varepsilon_i^\top\widehat{F}G\right\|\\
=&O_p\left(\frac{1}{N}+\frac{1}{\sqrt{T}}\right),
\end{aligned}
\end{equation}
\begin{equation}\nonumber
\begin{aligned}
\left\|a13\right\|
=&\left\|\frac{1}{\sqrt{T}}\varepsilon^{\top}_{k}\left(\frac{1}{NT} \sum_{i=1}^NX_i (\beta_i^0 - \widehat{\beta}_i)\left\{Z_i^\ast(\widehat{\alpha}_{i})-Z_i^\ast({\alpha}_{i}^0)\right\}^\top\widehat{F}\right)G\right\|=O_p\left(\left(\frac{1}{N}\sum_{i=1}^N\left\|\widehat{\beta}_i-\beta_i^0\right\|^2\right)^{1/2}\right),
\end{aligned}
\end{equation}
\begin{equation}\nonumber
\begin{aligned}
\left\|a14\right\|
=&\left\|\frac{1}{\sqrt{T}}\varepsilon^{\top}_{k}\left(\frac{1}{NT} \sum_{i=1}^NF^0 \lambda_i^0\left\{Z_i^\ast(\widehat{\alpha}_{i})-Z_i^\ast({\alpha}_{i}^0)\right\}^\top\widehat{F}\right)G\right\|=O_p\left(\frac{1}{\sqrt{T}}\right),
\end{aligned}
\end{equation}
\begin{equation}\nonumber
\begin{aligned}
\left\|a15\right\|
=&\left\|\frac{1}{\sqrt{T}}\varepsilon^{\top}_{k}\left(\frac{1}{NT} \sum_{i=1}^N\varepsilon_i\left\{Z_i^\ast(\widehat{\alpha}_{i})-Z_i^\ast({\alpha}_{i}^0)\right\}^\top\widehat{F}\right)G\right\|\\
\leq&\frac{1}{NT} \sum_{i=1}^N\left\|\varepsilon^{\top}_{k}\varepsilon_i\right\|\\
=&O_p\left(\frac{1}{N}+\frac{1}{\sqrt{T}}\right).
\end{aligned}
\end{equation}
Hence, we have
\begin{equation}\nonumber
\begin{aligned}
\frac{1}{\sqrt{T}}\varepsilon_i^\top(\widehat{F}H^{-1}-F^0)=&O_p\left(\left(\frac{1}{N}\sum_{i=1}^N\left\|\widehat{\beta}_i-\beta_i^0\right\|^2\right)^{1/2}\right) + O_p\left(\delta_{NT}^{-1}\right)\\
&+\frac{\sqrt{T}}{N}\left[O_p\left(\left(\frac{1}{N}\sum_{i=1}^N\left\|\widehat{\beta}_i-\beta_i^0\right\|^2\right)^{1/2}\right) + O_p\left(\delta_{NT}^{-1}\right)\right]+o_p(1).
\end{aligned}
\end{equation}
\Halmos
\endproof
\begin{lemma}\label{lemma:B.5}
Under Assumptions \ref{assum:Regularity}--\ref{assum:local-identification}, \ref{assum:error}--\ref{assum:identification}, we have
\begin{enumerate}[label=(\roman*)]
\item $\displaystyle\frac{1}{\sqrt{T}}X_i^\top M_{\widehat{F}}\left\{Z_k^\ast(\widehat{\alpha}_{k})-Z_k^\ast({\alpha}_{k}^0)\right\}=o_p(1)$, for $i,k\in[N]$.
\item $\displaystyle\frac{1}{\sqrt{T}}\left\{Z_k^\ast(\widehat{\alpha}_{k})-Z_k^\ast({\alpha}_{k}^0)\right\}^\top\widehat{F}=o_p(1)$, for $k\in[N]$.
\item $\displaystyle\frac1{\sqrt T}(F^0\lambda_i^0)^{\top}\{Z_i^\ast(\widehat\alpha_i)-Z_i^\ast(\alpha_i^0)\}=o_p(1)$, for $i\in[N]$.
\end{enumerate}
\end{lemma}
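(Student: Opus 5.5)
The plan is to reduce all three parts to the exact generated-response identity of Lemma \ref{lem:generated-response-identity}. Write $\Delta_k=Z_k^\ast(\widehat\alpha_k)-Z_k^\ast(\alpha_k^0)=\tau^{-1}R_k(\widehat\gamma_k)$ with $\widehat\gamma_k=\Sigma_k^{1/2}(\widehat\alpha_k-\alpha_k^0)$. Split $R_k=R_k^c+\mu_k$, where $R_k^c$ is the centered part and $\mu_k$ the conditional mean. By \eqref{eq:R-Lipschitz} and \eqref{eq:R-quadratic-mean}, $|R_{kt}|\le|W_{kt}^\top\widehat\gamma_k|$ and $|\mu_{kt}|\le\overline f|W_{kt}^\top\widehat\gamma_k|^2/2$. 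Proposition \ref{pro:quantile} gives $\|\widehat\gamma_k\|=O_p(\sqrt{(p+\log N)/T})$, and in the fixed-$p$ regime of this appendix this is $o_p(T^{-1/4})$. For any $T$-vector $v$ with $\|v\|/\sqrt T=O_p(1)$, I will show $T^{-1/2}v^\top\Delta_k=o_p(1)$ by treating the two pieces separately.

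\emph{Mean piece.} Cauchy--Schwarz gives
\[
|T^{-1/2}v^\top\mu_k|\le \sqrt T\,(\|v\|/\sqrt T)\Big(T^{-1}\sum_t\mu_{kt}^2\Big)^{1/2}\le C\sqrt T\,\|\widehat\gamma_k\|^2\Big(T^{-1}\sum_t\|W_{kt}\|^4\Big)^{1/2}.
\]
This is $O_p(\sqrt T\,\log N/T)=o_p(1)$, using sub-Gaussian fourth moments. This is exactly where Neyman orthogonality enters: without it a first-order term of size $\sqrt T\|\widehat\gamma_k\|=O_p(1)$ would survive.

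\emph{Centered piece.} I will use the chaining and mixing-Bernstein bound from the proof of Lemma \ref{lem:ES-score-control}, as in \eqref{eq:generated-row-score-L4}, but applied without the $M_{F^0}$ projection and with the multiplier $v_t$ in place of $W_{jt}$. For $v$ built from $X_i$, $F^0\lambda_i^0$ or $F^0$, which are sub-Gaussian and mixing by Assumptions \ref{assum:Regularity}(iv) and \ref{assum:ES-score}(ii), this gives
\[
\sup_{\|\gamma\|\le\rho}\big|T^{-1/2}\textstyle\sum_t v_tR^c_{kt}(\gamma)\big|=O_p(\rho\sqrt{p+1}),
\]
and plugging in $\rho=\|\widehat\gamma_k\|\to0$ makes this $o_p(1)$. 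Part (iii) follows directly with $v=F^0\lambda_i^0$, which satisfies $\|F^0\lambda_i^0\|/\sqrt T\le K_\lambda$. For part (i) I write $M_{\widehat F}=I_T-\widehat F\widehat F^\top/T$. The term $T^{-1/2}X_i^\top\Delta_k$ is the case $v=X_i$. The remaining term is $(X_i^\top\widehat F/T)\,T^{-1/2}\widehat F^\top\Delta_k$, where the first factor is $O_p(1)$, so part (i) reduces to part (ii).

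The main obstacle is part (ii), because $\widehat F$ is data-dependent and correlated with $\Delta_k$, so the fixed-multiplier bound does not apply directly. I will write $\widehat F=F^0H+(\widehat F-F^0H)$. The $F^0H$ term is handled by the fixed-$v$ argument, since $H=O_p(1)$ by Proposition \ref{pro:B.1}. For the remainder I will use the crude Cauchy--Schwarz bound
\[
\|T^{-1/2}(\widehat F-F^0H)^\top\Delta_k\|\le\frac{\|\widehat F-F^0H\|}{\sqrt T}\,\|\Delta_k\|,
\]
where $\|\Delta_k\|\le\tau^{-1}\|W_k\widehat\gamma_k\|=O_p(\sqrt{\log N})$ by \eqref{eq:R-Lipschitz} and \eqref{eq:ES-design-upper}. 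Proposition \ref{pro:B.1} gives $T^{-1/2}\|\widehat F-F^0H\|=O_p(\bar b+\delta_{NT}^{-1})$ with $\bar b^2=N^{-1}\sum_j\|\widehat\beta_j-\beta_j^0\|^2$. Lemma \ref{lem:joint-global-perturbation} bounds $\bar b$ by $O_p(\mathfrak s_N)$, which is of order $T^{-1/2}+N^{-1/2}$ up to logarithms. The product is therefore $O_p(\sqrt{\log N}\,\delta_{NT}^{-1})=o_p(1)$. If the logarithmic factors prove delicate, I would fall back on an $\varepsilon$-net over the Grassmannian neighbourhood $\{F:\|P_F-P_{F^0}\|_{\rm op}\le\epsilon\}$ combined with the uniform chaining bound. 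This would control $\sup_F T^{-1/2}(F-F^0H)^\top R_k^c$ by $O_p(\epsilon\rho\sqrt{p+r})$, and the mean piece would be handled as above.
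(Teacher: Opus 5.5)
Your argument is correct in outline, apart from one caveat that the paper shares (second paragraph), but it takes a different route from the paper.

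\textbf{Comparison with the paper's route.} The paper does not split $R_k$ into $R_k^c+\mu_k$. It uses Knight's identity to write exactly $Z_{kt}^\ast(\alpha_k^0+\delta)-Z_{kt}^\ast(\alpha_k^0)=m_{kt}X_{kt}^\top\delta+r_{kt}(\delta)$. The multiplier $m_{kt}=1-\tau^{-1}\mathbbm{1}\{\epsilon_{kt}\le0\}$ does not depend on $\delta$ and satisfies $E(m_{kt}\mid\mathcal C_{kt})=0$; this is where orthogonality enters.
\begin{itemize}
\item Because $\delta_k$ factors out of the linear term, the paper only needs the pointwise bounds $T^{-1/2}X_i^\top M_{F^0}D_kX_k=O_p(1)$ and $T^{-1/2}F^{0\top}D_kX_k=O_p(1)$ of Lemma \ref{lem:B5-score-bounds}, times $\|\delta_k\|=O_p(T^{-1/2})$.
\item The remainder is handled by the monotone envelope $|r_{kt}(\delta)|\le\tau^{-1}|X_{kt}^\top\delta|\mathbbm{1}\{|\epsilon_{kt}|\le|X_{kt}^\top\delta|\}$ together with the Lipschitz conditional distribution. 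This gives $\|r_k(\delta_k)\|=O_p(T^{-1/4})$ with no chaining.
\item Parts (i)--(ii) then use only $\|\Delta_k\|=O_p(1)$, $\|M_{\widehat F}-M_{F^0}\|_{\mathrm{op}}=o_p(1)$, and $T^{-1/2}\|\widehat F-F^0H\|=o_p(1)$.
\end{itemize}
Your centered/mean split leaves a centered process that is nonlinear in $\gamma$ and is evaluated at the data-dependent $\widehat\gamma_k$. You therefore genuinely need a uniform mixing-Bernstein/chaining bound over a shrinking ball. Apply it on the event $\{\|\widehat\gamma_k\|\le M_\varepsilon\rho_T\}$ with deterministic $\rho_T$, not at a random radius. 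This is your heaviest ingredient, and the paper only sketches its analogue \eqref{eq:generated-row-score-L4}.

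In return you get explicit, sharper rates for the fixed-multiplier part: $O_p(T^{-1/2})$ for fixed $p$ and $k$, versus the paper's $O_p(T^{-1/4})$ from its crude Cauchy--Schwarz bound on $r_k$. Your route also tolerates growing $p$.

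One simplification is available. Since $k$ is fixed, apply the first-stage bound to unit $k$ alone, so $\|\widehat\gamma_k\|=O_p(\sqrt{(p+1)/T})$ and $\|\Delta_k\|=O_p(1)$. The $\log N$ factors then disappear, and in (ii) consistency of $\widehat F$ suffices. You then need neither Lemma \ref{lem:joint-global-perturbation} nor the Grassmannian net.

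\textbf{Caveat for (i) with $i\neq k$.} Here $X_{it}$ is not $\mathcal C_{kt}$-measurable. So neither your centering $E\{X_{it}R^c_{kt}(\gamma)\}=0$ nor the paper's step $E(X_{it,a}X_{kt,b}m_{kt}\mid\mathcal C_{kt})=X_{it,a}X_{kt,b}E(m_{kt}\mid\mathcal C_{kt})$ in Lemma \ref{lem:B5-score-bounds} follows from Assumption \ref{assum:ES-score}(i). In addition, Assumption \ref{assum:Regularity}(iv) gives only within-unit mixing.

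If $E[\{\tau-\mathbbm{1}(\epsilon_{kt}\le0)\}X_{it}X_{kt}^\top]\neq0$, then $T^{-1/2}X_i^\top M_{\widehat F}\Delta_k$ behaves like $\tau^{-1}E[\{\tau-\mathbbm{1}(\epsilon_{kt}\le0)\}X_{it}X_{kt}^\top]\sqrt T(\widehat\alpha_k-\alpha_k^0)$. That term is $O_p(1)$ but not $o_p(1)$.

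So for $i\neq k$ both proofs need an extra condition, for example $P(\epsilon_{kt}\le0\mid\mathcal X_t)=\tau$ together with joint cross-unit mixing as in Assumption \ref{assum:Gaussian}(ii). For $i=k$, and for (ii)--(iii), your argument goes through under the stated assumptions.
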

\proof{Proof of Lemma \ref{lemma:B.5}.}
Write $\delta_k=\widehat\alpha_k-\alpha_k^0$, and $\epsilon_{kt}=Y_{kt}-X_{kt}^{\top}\alpha_k^0$. For $\delta\in\mathbb R^{p+1}$, set $v_{kt}(\delta)=X_{kt}^{\top}\delta$. Let $\rho_\tau(u)=u\{\tau-\mathbbm{1}(u\leq0)\}$, and $\psi(u)=\tau-\mathbbm{1}(u\leq0)$. The ES generated response and the quantile check loss are related by $Z_{kt}^\ast(\alpha_k^0+\delta)-Z_{kt}^\ast(\alpha_k^0)=\tau^{-1}\{\rho_\tau(\epsilon_{kt})-\rho_\tau(\epsilon_{kt}-v_{kt}(\delta))\}$. Knight's identity states that
$$
\rho_\tau(u-v)-\rho_\tau(u)=-v\psi(u)+\int_0^v\{\mathbbm{1}(u\leq s)-\mathbbm{1}(u\leq0)\}\,ds.
$$
Let $m_{kt}=1-\tau^{-1}\mathbbm{1}\{\epsilon_{kt}\leq0\}$, and $D_k=\operatorname{diag}(m_{k1},\ldots,m_{kT})$. We obtain the following exact expansion
\begin{equation}\label{eq:B5-exact-identity}
Z_{kt}^\ast(\alpha_k^0+\delta)-Z_{kt}^\ast(\alpha_k^0)=m_{kt}v_{kt}(\delta)+r_{kt}(\delta),
\end{equation}
where
\begin{equation}\nonumber
\begin{aligned}
r_{kt}(\delta)&=-\tau^{-1}\int_0^{v_{kt}(\delta)}\left[\mathbbm{1}\{\epsilon_{kt}\leq s\}-\mathbbm{1}\{\epsilon_{kt}\leq0\}\right]ds\\
&=\tau^{-1}\{\epsilon_{kt}-v_{kt}(\delta)\}\left[\mathbbm{1}\{\epsilon_{kt}\leq v_{kt}(\delta)\}-\mathbbm{1}\{\epsilon_{kt}\leq0\}\right].
\end{aligned}
\end{equation}
In particular, we have
\begin{equation}
\label{eq:B5-remainder-envelope}
|r_{kt}(\delta)|\leq\tau^{-1}|v_{kt}(\delta)|\mathbbm{1}\{|\epsilon_{kt}|\leq|v_{kt}(\delta)|\}.
\end{equation}
The conditional quantile restriction implies $E(m_{kt}\mid\mathcal C_{kt})=0$, where $\mathcal C_{kt}=\sigma(X_{kt},f_{t}^{0})$. Moreover, differentiating the conditional expectation, rather than the sample indicator, gives
\begin{equation}\nonumber
E\{Z_{kt}^\ast(\alpha_k^0+\delta)-Z_{kt}^\ast(\alpha_k^0)\mid\mathcal C_{kt}\}=\tau^{-1}\int_0^{v_{kt}(\delta)}\{\tau-F_{kt}(s\mid\mathcal C_{kt})\}\,ds.
\end{equation}
Consequently, by the Neyman-orthogonality property, the derivative of the population generated response with respect to the first-stage quantile parameter is zero at $\delta=0$. Therefore, we have
\begin{equation}\nonumber
\left|E\{Z_{kt}^\ast(\alpha_k^0+\delta)-Z_{kt}^\ast(\alpha_k^0)\mid\mathcal C_{kt}\}\right|\leq\frac{\overline f}{2\tau}|v_{kt}(\delta)|^2.
\end{equation}

Fix $M<\infty$ and let $a_T=M/\sqrt T$. From \eqref{eq:B5-remainder-envelope},
\begin{equation}\nonumber
\begin{aligned}
E\left[\sup_{\|\delta\|\leq a_T}\sum_{t=1}^T r_{kt}(\delta)^2\right]\nonumber&\leq\tau^{-2}a_T^2\sum_{t=1}^T E\left[\|X_{kt}\|^2\mathbbm{1}\{|\epsilon_{kt}|\leq a_T\|X_{kt}\|\}\right]\nonumber\\
&\leq\frac{2\overline f}{\tau^2}a_T^3\sum_{t=1}^TE\|X_{kt}\|^3\\
&=O(T^{-1/2}).
\end{aligned}
\end{equation}
Markov's inequality and $\|\delta_k\|=O_p(T^{-1/2})$ therefore yield
\begin{equation}
\label{eq:B5-r-norm}
\|r_k(\delta_k)\|=O_p(T^{-1/4})=o_p(1),
\end{equation}
where $r_k(\delta)=(r_{k1}(\delta),\ldots,r_{kT}(\delta))^{\top}$. Because $\|D_k\|_{\mathrm{op}}\leq1+\tau^{-1}$,
\eqref{eq:B5-exact-identity} also gives
\begin{equation}
\label{eq:B5-Z-norm}
\|Z_k^\ast(\widehat\alpha_k)-Z_k^\ast(\alpha_k^0)\|\leq\|D_k\|_{\mathrm{op}}\|X_k\|_{\mathrm{op}}\|\delta_k\|
+\|r_k(\delta_k)\|=O_p(1).
\end{equation}

For part (i), first replace $M_{\widehat F}$ by $M_{F^0}$. Equations \eqref{eq:B5-exact-identity} and Lemma \ref{lem:B5-score-bounds} imply

\begin{equation}\nonumber
\begin{aligned}
\frac1{\sqrt T}X_i^{\top}M_{F^0}\{Z_k^\ast(\widehat\alpha_k)-Z_k^\ast(\alpha_k^0)\}&=\left(\frac1{\sqrt T}X_i^{\top}M_{F^0}D_kX_k\right)\delta_k+\frac1{\sqrt T}X_i^{\top}M_{F^0}r_k(\delta_k)\nonumber\\
&=O_p(1)O_p(T^{-1/2})+O_p(1)o_p(1)\\
&=o_p(1).
\end{aligned}
\end{equation}
Furthermore, by \eqref{eq:B5-Z-norm},
\begin{equation}\nonumber
\left\|\frac1{\sqrt T}X_i^{\top}(M_{\widehat F}-M_{F^0})
\{Z_k^\ast(\widehat\alpha_k)-Z_k^\ast(\alpha_k^0)\}\right\|\leq\frac{\|X_i\|_{\mathrm{op}}}{\sqrt T}\|M_{\widehat F}-M_{F^0}\|_{\mathrm{op}}
\|Z_k^\ast(\widehat\alpha_k)-Z_k^\ast(\alpha_k^0)\|=o_p(1).
\end{equation}
This proves part (i).

For part (ii), use the factor alignment in Proposition \ref{pro:B.1} to write
\begin{equation}\nonumber
\begin{aligned}
\frac1{\sqrt T}\{Z_k^\ast(\widehat\alpha_k)-Z_k^\ast(\alpha_k^0)\}^{\top}\widehat F=&\delta_k^{\top}\left(\frac1{\sqrt T}X_k^{\top}D_kF^0\right)H+\frac1{\sqrt T}r_k(\delta_k)^{\top}F^0H\\
&+\frac1{\sqrt T}\{Z_k^\ast(\widehat\alpha_k)-Z_k^\ast(\alpha_k^0)\}^{\top}(\widehat F-F^0H).
\end{aligned}
\end{equation}
The first term is $O_p(T^{-1/2})$ by Lemma \ref{lem:B5-score-bounds}. The second is $o_p(1)$ because $\|F^0\|/\sqrt T=O_p(1)$ and \eqref{eq:B5-r-norm} holds. The last term is $o_p(1)$ by \eqref{eq:B5-Z-norm} and $T^{-1/2}\|\widehat F-F^0H\|=o_p(1)$. This proves part (ii).

For part (iii),
\begin{equation}\nonumber
\frac1{\sqrt T}(F^0\lambda_i^0)^{\top}\{Z_i^\ast(\widehat\alpha_i)-Z_i^\ast(\alpha_i^0)\}=\lambda_i^{0\top}\left(\frac1{\sqrt T}F^{0\top}D_iX_i\right)\delta_i+\frac1{\sqrt T}\lambda_i^{0\top}F^{0\top}r_i(\delta_i)=o_p(1),
\end{equation}
by Lemma \ref{lem:B5-score-bounds}, \eqref{eq:B5-r-norm}, and the boundedness of $\lambda_i^0$.

\Halmos
\endproof
\begin{lemma}\label{lem:B5-score-bounds}
Under Assumptions \ref{assum:Regularity}--\ref{assum:local-identification}, \ref{assum:error}--\ref{assum:identification}, for any fixed $i,k\in[N]$,
\begin{equation}\nonumber
\left\|\frac{1}{\sqrt T}X_i^\top M_{F^0}D_kX_k\right\|= O_p(1),\ \left\|\frac{1}{\sqrt T}F^{0\top}D_kX_k\right\|= O_p(1).
\end{equation}
\end{lemma}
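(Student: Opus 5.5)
Both bounds will follow from a second-moment (Chebyshev) argument: each vector has dimension $(p+1)\times(p+1)$ or $r^0\times(p+1)$, which is fixed in this asymptotic appendix. So it suffices to show that every entry has a bounded second moment, or splits into a bounded mean piece plus a centered piece with bounded variance.

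The key structural fact comes from the proof of Lemma \ref{lemma:B.5}: $m_{kt}=1-\tau^{-1}\mathbbm{1}\{\epsilon_{kt}\le0\}$ satisfies $E(m_{kt}\mid\mathcal C_{kt})=0$ (by Assumption \ref{assum:ES-score}(i)) and $|m_{kt}|\le 1+\tau^{-1}$.

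For the second bound, take a generic entry $T^{-1/2}\sum_t f_{ht}^0 m_{kt} X_{kt,\ell}$. The summands are centered because $f_t^0$ and $X_{kt}$ are $\mathcal C_{kt}$-measurable. They are strictly stationary and geometrically $\alpha$-mixing (Assumption \ref{assum:Regularity}(iv)), being measurable functions of $(Y_{kt},X_{kt},f_t^0)$. They have uniformly bounded fourth moments, by boundedness of $m_{kt}$ and the sub-Gaussian bound in Assumption \ref{assum:ES-score}(ii). Davydov's inequality, used exactly as in the proof of Lemma \ref{lem:mixing-empirical-process}, then makes the autocovariances summable. Hence the variance of the normalized sum is $O(1)$, and the entry is $O_p(1)$.

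For the first bound, write $M_{F^0}=I_T-T^{-1}F^0F^{0\top}$. This gives
\[
\frac{1}{\sqrt T}X_i^\top M_{F^0}D_kX_k=\frac{1}{\sqrt T}\sum_t m_{kt}X_{it}X_{kt}^\top-\Big(\frac1T X_i^\top F^0\Big)\Big(\frac1{\sqrt T}F^{0\top}D_kX_k\Big).
\]
In the second term, $T^{-1}X_i^\top F^0=O_p(1)$ by Cauchy--Schwarz and the moment bounds, and the other factor was just shown to be $O_p(1)$.

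The first term is the main obstacle. $X_{it}$ belongs to unit $i\neq k$ and need not be $\mathcal C_{kt}$-measurable, so $E(m_{kt}X_{it}X_{kt}^\top)$ is not automatically zero. I would first condition on the enlarged field $\mathcal X_t=\sigma(f_t^0,X_{1t},\dots,X_{Nt})$ and use the quantile restriction there. This is justified by the identical conditioning structure in Assumption \ref{assum:Gaussian}(ii), or equivalently by reading Assumption \ref{assum:ES-score}(i) as holding given all contemporaneous covariates. Under that restriction the summands are centered, and the same mixing-variance argument applies, with joint mixing of $(Y_{kt},X_{kt},X_{it},f_t^0)$ taken from Assumption \ref{assum:Gaussian}(ii).

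If only $\mathcal C_{kt}$-centering is available, I would instead split $m_{kt}X_{it}$ into $E(X_{it}\mid\mathcal C_{kt},\epsilon_{kt})$ plus a remainder. The remainder has bounded variance by mixing. The leading piece produces a nonzero mean of order $\sqrt T$, so the $O_p(1)$ claim would then require the conditional-mean restriction above; I would state that restriction explicitly as the operative assumption.
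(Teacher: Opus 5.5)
Your argument follows the paper's proof. For $T^{-1/2}F^{0\top}D_kX_k$ you center entrywise using $E(m_{kt}\mid\mathcal C_{kt})=0$ and bound the variance with Davydov's inequality. For the other matrix you split $M_{F^0}=I_T-T^{-1}F^0F^{0\top}$ and treat the projection piece as a product of two $O_p(1)$ factors. That part is complete: it proves the second bound, and it proves the first bound when $i=k$.

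For $i\neq k$ you have found a step the paper's proof does not justify. The paper writes $E(X_{it,a}X_{kt,b}m_{kt}\mid\mathcal C_{kt})=X_{it,a}X_{kt,b}E(m_{kt}\mid\mathcal C_{kt})$, which requires $X_{it}$ to be $\mathcal C_{kt}$-measurable. It then reuses the Davydov argument for $\{X_{it}X_{kt}^\top m_{kt}\}_t$, although Assumption \ref{assum:Regularity}(iv) gives mixing only unit by unit.

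Your last paragraph is right that, without more structure, this term can have a nonzero mean of order $\sqrt T$. For example, let a component of $X_{it}$ equal $\mathbbm 1\{\epsilon_{kt}\le0\}$ plus independent noise; none of the listed assumptions rules this out. So the $i\neq k$ case does not follow from the lemma's hypotheses, and neither your proof nor the paper's establishes it.

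Your proposed repair is not available as stated, for three reasons.
\begin{itemize}
\item Assumption \ref{assum:Gaussian} is not among the lemma's hypotheses.
\item It would not supply the centering you need anyway. It gives $E(\xi_{kt}\mid\mathscr X)=0$, a conditional-mean restriction on the ES error. Centering $m_{kt}$ requires $P(\epsilon_{kt}\le0\mid\mathcal X_t)=\tau$, which it does not give.
\item Its mixing condition is for $\mathcal V_{jk,t}$, which contains $\overline\xi_{jt}$ and $\overline\xi_{kt}$ but not $m_{kt}$.
\end{itemize}
Likewise, imposing Assumption \ref{assum:ES-score}(i) conditionally on all contemporaneous covariates is a strictly stronger assumption, not a reading of the existing one.

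You should therefore state the extra conditions as added hypotheses. For example, assume $P(\epsilon_{kt}\le0\mid X_{it},X_{kt},f_t^0)=\tau$, together with joint stationarity and geometric $\alpha$-mixing of $\{(Y_{kt},X_{kt}^\top,X_{it}^\top,f_t^{0\top})\}_t$. Under these, your argument goes through verbatim. This matters because Lemma \ref{lemma:B.5}(i) invokes the bound for arbitrary pairs $i,k$.
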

\proof{Proof of Lemma \ref{lem:B5-score-bounds}.}
By Assumption \ref{assum:ES-score}(i), the conditional quantile restriction implies $E\left[\mathbbm{1}\{\epsilon_{kt}\leq 0\}\mid\mathcal C_{kt}\right]=\tau$, where $\mathcal C_{kt}=\sigma(X_{kt},f_t^0)$. Hence, by the definition of $m_{kt}$,
\begin{equation}\label{eq:tagB.2}
E(m_{kt}\mid\mathcal C_{kt})=1-\tau^{-1}E\left[\mathbbm{1}\{\epsilon_{kt}\leq 0\}\mid\mathcal C_{kt}\right]=1-\tau^{-1}\tau=0.
\end{equation}
This zero-mean property is the key ingredient. It suffices to show that each element of the matrix $T^{-1/2}F^{0\top}D_kX_k$ is $O_p(1)$. The $(a,b)$-th element is ${T}^{-1/2}\sum_{t=1}^T f_{t,a}^0 X_{kt,b} m_{kt}$. Define the time-local score variable $Z_{kt,ab} =f_{t,a}^0 X_{kt,b} m_{kt}$.
By \eqref{eq:tagB.2}, $E(Z_{kt,ab}\mid\mathcal C_{kt})= f_{t,a}^0 X_{kt,b} E(m_{kt}\mid\mathcal C_{kt})=0$. Thus $\{Z_{kt,ab}:t\in[T]\}$ is a zero-mean, $\alpha$-mixing process under Assumption \ref{assum:Regularity}. By Assumption \ref{assum:ES-score}(ii), it has uniformly bounded $(2+\nu)$-th moments for some $\nu>0$. Applying Davydov's inequality, we obtain
$$
\mathrm{Var}\left(\sum_{t=1}^T Z_{kt,ab}\right)= \sum_{t,s=1}^T \mathrm{Cov}(Z_{kt,ab},Z_{ks,ab})= O(T),
$$
because the mixing coefficients decay geometrically, so $\sum_{h=1}^\infty \alpha(h)^{\nu/(2+\nu)}<\infty$. Consequently, 
$$
\frac1{\sqrt T}\sum_{t=1}^T Z_{kt,ab}= O_p(1).
$$
Since there are only finitely many pairs $(a,b)$ with $a\leq r^0$ and $b\leq p+1$,
$$
\left\|\frac{1}{\sqrt T}F^{0\top}D_kX_k\right\|= O_p(1).
$$

Decompose
\begin{equation}\nonumber
\frac{1}{\sqrt T}X_i^\top M_{F^0}D_kX_k=\frac{1}{\sqrt T}X_i^\top D_kX_k-\frac{1}{\sqrt T}X_i^\top F^0(F^{0\top}F^0)^{-1}F^{0\top}D_kX_k.
\end{equation}
Using $T^{-1}F^{0\top}F^0=I_{r^0}$, the second term becomes
\begin{equation}\nonumber
\begin{aligned}
\frac{1}{\sqrt T}X_i^\top F^0(F^{0\top}F^0)^{-1}F^{0\top}D_kX_k&=\frac{1}{\sqrt T}X_i^\top F^0\cdot \frac{1}{T}I_{r^0}\cdot F^{0\top}D_kX_k\\
&=\left(\frac{1}{T}X_i^\top F^0\right)\left(\frac{1}{\sqrt T}F^{0\top}D_kX_k\right).
\end{aligned}
\end{equation}
Therefore,
$$
\left\|\frac{1}{\sqrt T}X_i^\top F^0(F^{0\top}F^0)^{-1}F^{0\top}D_kX_k\right\|=O_p(1)\cdot O_p(1)=O_p(1).
$$
It remains to control the first term. Its $(a,b)$-th element is ${T}^{-1/2}\sum_{t=1}^T X_{it,a} X_{kt,b} m_{kt}$. Define $Z_{kt,ab}^{(2)}= X_{it,a}X_{kt,b}m_{kt}$. By \eqref{eq:tagB.2}, $E(Z_{kt,ab}^{(2)}\mid\mathcal C_{kt})= X_{it,a}X_{kt,b}E(m_{kt}\mid\mathcal C_{kt})=0$. The same Davydov's inequality argument gives
$$
\frac{1}{\sqrt T}\sum_{t=1}^T X_{it,a}X_{kt,b}m_{kt}
= O_p(1).
$$
Since $p$ is fixed, there are only finitely many $(a,b)$ pairs, so
$$
\left\|\frac{1}{\sqrt T}X_i^\top D_kX_k\right\|= O_p(1).
$$
Combining the two pieces,
$$
\left\|\frac{1}{\sqrt T}X_i^\top M_{F^0}D_kX_k\right\|\leq\left\|\frac{1}{\sqrt T}X_i^\top D_kX_k\right\|+\left\|\frac{1}{\sqrt T}X_i^\top F^0(F^{0\top}F^0)^{-1}F^{0\top}D_kX_k\right\|=O_p(1).
$$
\Halmos
\endproof

\subsection{Proof of Theorem \ref{theo:asymptotic}}

\proof{Proof of Theorem \ref{theo:asymptotic}.}
Fix a target unit $i$. The assumed growth conditions ensure that the conditions of Theorem \ref{theo:Gaussian} hold for all sufficiently large $N$ and $T$. Moreover, ${\log^2(2T)}/{\sqrt T}+\mathfrak b_{N,T,p}^{2/3}\rightarrow0$. It follows from Theorem \ref{theo:Gaussian} that
\begin{equation}
\label{eq:asymptotic-finite-variance}
\sup_{\substack{u\in\mathbb S^p\\z\in\mathbb R}}\left|P\left\{\frac{\tau\sqrt T\,u^\top(\widehat\beta_{i,\tau}-\beta_{i,\tau}^0)}{\sqrt{u^\top\Omega_{i,\tau,T}u}}\leq z\right\}-\Phi(z)\right|\rightarrow0.
\end{equation}

Assumption \ref{assum:Gaussian}(iii) gives $\lambda_{\min}(\Omega_{i,\tau,T})\geq c_\Omega$. Together with Assumption \ref{assum:asymptotic-variance} and Weyl's inequality, this implies $\lambda_{\min}(\Omega_{i,\tau})\geq c_\Omega$. Consequently, for every $u\in\mathbb S^p$,
\begin{align*}
&\sup_{u\in\mathbb S^p}\left|\frac{\sqrt{u^\top\Omega_{i,\tau,T}u}}{\sqrt{u^\top\Omega_{i,\tau}u}}-1\right|\leq\frac{\|\Omega_{i,\tau,T}-\Omega_{i,\tau}\|_{\mathrm{op}}}{2c_\Omega}\rightarrow0.
\end{align*}
For a fixed $u\in\mathbb S^p$, write
$$
Z_{i,T}(u)=\frac{\tau\sqrt T\,u^\top(\widehat\beta_{i,\tau}-\beta_{i,\tau}^0)}{\sqrt{u^\top\Omega_{i,\tau,T}u}}.
$$
Equation \eqref{eq:asymptotic-finite-variance} implies $Z_{i,T}(u)\xrightarrow{d}\mathcal N(0,1)$. Since
$$
\frac{\tau\sqrt T\,u^\top(\widehat\beta_{i,\tau}-\beta_{i,\tau}^0)}{\sqrt{u^\top\Omega_{i,\tau}u}}=Z_{i,T}(u)\frac{\sqrt{u^\top\Omega_{i,\tau,T}u}}{\sqrt{u^\top\Omega_{i,\tau}u}},
$$
Slutsky's theorem gives
$$
\frac{\tau\sqrt T\,u^\top(\widehat\beta_{i,\tau}-\beta_{i,\tau}^0)}{\sqrt{u^\top\Omega_{i,\tau}u}}\overset{d}{\longrightarrow}\mathcal N(0,1).
$$

It remains to establish the vector convergence. Let $Y_{i,T}=\tau\sqrt T(\widehat\beta_{i,\tau}-\beta_{i,\tau}^0)$. For any nonzero $a\in\mathbb R^{p+1}$, set $u=a/\|a\|$. The preceding scalar convergence gives
$$
a^\top Y_{i,T}=\|a\|u^\top Y_{i,T}\xrightarrow{d}\mathcal N\left(0,\|a\|^2u^\top\Omega_{i,\tau}u\right)=\mathcal N\left(
0,a^\top\Omega_{i,\tau}a\right).
$$
Because $p$ is fixed, the Cram\'er--Wold device therefore yields $\tau\sqrt T(\widehat\beta_{i,\tau}-\beta_{i,\tau}^0)\xrightarrow{d}\mathcal N(0,\Omega_{i,\tau})$. Finally, division by the fixed constant $\tau$ gives $\sqrt T(\widehat\beta_{i,\tau}-\beta_{i,\tau}^0)\xrightarrow{d}\mathcal N(0,\tau^{-2}\Omega_{i,\tau})$.
\Halmos
\endproof

\section{Additional Simulation Results}\label{sec:Additional Simulation Results}

\subsection{Simulation Design}

For $i=1,\ldots,N$ and $t=1,\ldots,T$, we generate a covariate vector $X_{it}\in\mathbb{R}^{p+1}$ that includes an intercept and $p=3$ observed regressors. The latent tail component is driven by a factor structure with true dimension $r_\tau^0=2$. To align with the two-stage ESFM estimation strategy, we construct the data so that the conditional $\tau$-quantile satisfies a standard panel quantile regression,
$Q_{\tau}(Y_{it}\mid X_{it}) = X_{it}^\top\alpha_i$, and the conditional expected shortfall follows a location--scale form in which latent factors enter through the tail scale:
\begin{equation}
Y_{it} = \mu_{it} + \sigma_{it}\varepsilon_{it}^\ast,\qquad \mu_{it}=X_{it}^\top\alpha_i.
\label{eq:DGP_new}
\end{equation}
The innovation $\varepsilon_{it}^\ast$ is generated from a standardized Student-$t$ distribution with $\nu=5$ degrees of freedom and then shifted so that its unconditional $\tau$-quantile is approximately zero. This construction ensures that $Q_{\tau}(Y_{it}\mid X_{it})=\mu_{it}$ holds by design, while the magnitude of tail losses is governed by $\sigma_{it}$. The conditional ES is therefore $\mathrm{ES}_{\tau}(Y_{it}\mid X_{it},\sigma_{it})=\mu_{it}+\sigma_{it}\,\mathrm{ES}_{\tau}(\varepsilon_{it}^\ast)$, so that latent factors affect tail risk through the time variation in $\sigma_{it}$, without mechanically entering the quantile equation.

\paragraph{Latent factors and loadings.}
The ES factors $F_t\in\mathbb{R}^{r_\tau^0}$ follow a stationary AR(1) process (componentwise) and the loadings $\lambda_i\in\mathbb{R}^{r_\tau^0}$ are drawn independently across $i$ and scaled to control factor strength. The factor-driven tail scale is generated from the interactive component $\lambda_i^\top F_t$ through an exponential link, $\sigma_{it}=\exp\!\left(c_{\sigma}\cdot \frac{\lambda_i^\top F_t}{\mathrm{sd}(\lambda^\top F)}\right)$,
with truncation to avoid extreme values. The constant $c_{\sigma}$ controls the intensity of tail comovement.

\paragraph{Cross-sectional heterogeneity.}
To reflect heterogeneous exposures to observed risk drivers, we allow slope coefficients in the ES regression to vary across units via a grouped design: units are randomly assigned to a small number of groups, and the slope vector is shifted across groups. This introduces meaningful cross-sectional dispersion in $\beta_i$ while keeping the regressor dimension fixed.

\subsection{Data-generating Scenarios}\label{subsec:Simulation Scenarios}

We consider seven economically motivated scenarios (indexed by $\texttt{scenarioID}\in\{1,\ldots,7\}$) and differing only in how $\sigma_{it}$ or $\varepsilon_{it}^\ast$ are modified.

\begin{itemize}
\item \textbf{Scenario 1 (Baseline tail-factor structure).}
Tail dependence is introduced through a latent additive component in the conditional expected shortfall. 
Specifically, the ES contains a low-rank factor structure $\lambda_i'F_t$, while the conditional quantile location remains $\mu_{it} = X_{it}'\alpha_i$. This baseline design captures cross-sectional tail comovement driven by latent factors.

\item \textbf{Scenario 2 (Stronger tail-factor dependence).}
We increase the strength of the latent factor component in the ES structure. This design amplifies cross-sectional tail comovement and evaluates whether the estimator can accurately recover the factor space when the tail component dominates.

\item \textbf{Scenario 3 (Heterogeneous slopes).}
We strengthen cross-sectional heterogeneity in the slope coefficients by grouping units and shifting $\beta_i$ across groups, while maintaining factor-driven tail scale. This scenario evaluates robustness to heterogeneous covariate effects.

\item \textbf{Scenario 4 (Endogenous covariates).}
We introduce correlation between covariates and latent tail factors by letting one regressor load on the interactive component $\lambda_i'F_t$. This creates an omitted-factor-type bias for methods that do not account for latent tail components.

\item \textbf{Scenario 5 (Volatility-factor dominance).}
We increase the strength of the factor effect in $\sigma_{it}$, so that tail comovement is largely volatility-driven. This scenario stresses the setting where ignoring factors should be particularly costly for ES regression.

\item \textbf{Scenario 6 (Jump episodes).}
We add rare but large shocks to the innovation by augmenting $\varepsilon_{it}^\ast$ with a jump term that occurs with small probability. This generates episodic systemic stress while preserving the basic location--scale structure.

\item \textbf{Scenario 7 (Asymmetric tails).}
We modify the innovation distribution by introducing an additional asymmetric component that thickens the lower tail. This design captures asymmetric downside risk beyond symmetric heavy-tailed innovations.
\end{itemize}

Across these scenarios, ES regression is intentionally misspecified whenever latent tail factors drive $\sigma_{it}$, whereas ESFM explicitly models and estimates the latent tail component. The simulations therefore provide a direct check of the central mechanism of ESFM: incorporating tail factors improves the accuracy of $\beta_i$ estimation and yields reliable recovery of the factor space in finite samples.

\subsection{Simulation Results}\label{subsec:Simulation Results}

To evaluate the finite-sample accuracy of tail-risk estimation, we examine the bias in conditional ES across different cross-sectional and temporal dimensions. When latent common factors are present, misspecifying cross-sectional dependence can distort tail expectations, even if first-stage quantile estimation is accurate. Thus, we compare the results of ESR and ESFM. The true conditional ES is computed using Monte Carlo integration from the data-generating process.
\begin{table}
	\footnotesize
	\TABLE
	{Finite-sample bias of conditional expected shortfall (ES) estimators.
		\label{tab:Estimation error of ES}}
	{\footnotesize
		\setlength{\tabcolsep}{2.5pt}
		\renewcommand{\arraystretch}{0.85}
		\begin{tabular*}{\textwidth}
			{@{\extracolsep{\fill}}lcccccccccccccc}
			\toprule
			& \multicolumn{2}{c}{Scenario 1}
			& \multicolumn{2}{c}{Scenario 2}
			& \multicolumn{2}{c}{Scenario 3}
			& \multicolumn{2}{c}{Scenario 4}
			& \multicolumn{2}{c}{Scenario 5}
			& \multicolumn{2}{c}{Scenario 6}
			& \multicolumn{2}{c}{Scenario 7} \\
			& ESR & ESFM
			& ESR & ESFM
			& ESR & ESFM
			& ESR & ESFM
			& ESR & ESFM
			& ESR & ESFM
			& ESR & ESFM \\
			\midrule
			\multicolumn{15}{c}{Panel A: $\tau = 0.10$} \\
			\midrule
			\multicolumn{15}{c}{$N$ = 100} \\
			\midrule
			$T=100$ & 1.0207 & 0.9671 & 1.0336 & 0.9797 & 1.0286 & 0.9749 & 1.0274 & 0.9735 & 1.0274 & 0.9735 & 1.0333 & 0.9789 & 1.0375 & 0.9826 \\
			
			$T=200$ & 1.1538 & 1.0977 & 1.1577 & 1.1022 & 1.1644 & 1.1088 & 1.1536 & 1.0983 & 1.1619 & 1.1066 & 1.1592 & 1.1044 & 1.1512 & 1.0947 \\
			
			$T=300$ & 1.2128 & 1.1563 & 1.2327 & 1.1764 & 1.1977 & 1.1413 & 1.2117 & 1.1561 & 1.2119 & 1.1550 & 1.2156 & 1.1597 & 1.2110 & 1.1555 \\
			
			\midrule
			\multicolumn{15}{c}{$N$ = 200} \\
			\midrule
			
			$T=100$ & 1.0265 & 0.9727 & 1.0317 & 0.9771 & 1.0244 & 0.9709 & 1.0311 & 0.9773 & 1.0220 & 0.9685 & 1.0298 & 0.9754 & 1.0230 & 0.9683 \\
			
			$T=200$ & 1.1474 & 1.0921 & 1.1464 & 1.0911 & 1.1558 & 1.1000 & 1.1550 & 1.0993 & 1.1539 & 1.0984 & 1.1623 & 1.1067 & 1.1509 & 1.0949 \\
			
			$T=300$ & 1.2063 & 1.1501 & 1.2112 & 1.1552 & 1.2119 & 1.1558 & 1.2078 & 1.1518 & 1.2051 & 1.1493 & 1.2103 & 1.1540 & 1.2052 & 1.1494 \\
			
			\midrule
			\multicolumn{15}{c}{$N$ = 300} \\
			\midrule
			
			$T=100$ & 1.0356 & 0.9815 & 1.0337 & 0.9793 & 1.0285 & 0.9745 & 1.0309 & 0.9764 & 1.0434 & 0.9888 & 1.0278 & 0.9738 & 1.0276 & 0.9732 \\
			
			$T=200$ & 1.1603 & 1.1049 & 1.1497 & 1.0945 & 1.1520 & 1.0967 & 1.1535 & 1.0982 & 1.1574 & 1.1017 & 1.1486 & 1.0933 & 1.1556 & 1.1003 \\
			
			$T=300$ & 1.2173 & 1.1615 & 1.2087 & 1.1526 & 1.1989 & 1.1428 & 1.1943 & 1.1377 & 1.2090 & 1.1529 & 1.2104 & 1.1539 & 1.2163 & 1.1604 \\
            			\midrule
			\multicolumn{15}{c}{Panel B: $\tau = 0.05$} \\
			\midrule
			\multicolumn{15}{c}{$N$ = 100} \\
			\midrule
			
			$T=100$ & 0.8184 & 0.6970 & 0.8400 & 0.7189 & 0.8412 & 0.7188 & 0.8429 & 0.7212 & 0.8370 & 0.7150 & 0.8445 & 0.7215 & 0.8457 & 0.7232 \\
			
			$T=200$ & 1.0214 & 0.8966 & 1.0278 & 0.9041 & 1.0300 & 0.9077 & 1.0129 & 0.8895 & 1.0276 & 0.9050 & 1.0328 & 0.9100 & 1.0086 & 0.8833 \\
			
			$T=300$ & 1.1105 & 0.9859 & 1.1301 & 1.0049 & 1.0971 & 0.9720 & 1.1054 & 0.9812 & 1.1096 & 0.9836 & 1.1157 & 0.9906 & 1.1113 & 0.9865 \\
			
			\midrule
			\multicolumn{15}{c}{$N$ = 200} \\
			\midrule
			
			$T=100$ & 0.8310 & 0.7093 & 0.8464 & 0.7229 & 0.8291 & 0.7067 & 0.8372 & 0.7156 & 0.8371 & 0.7162 & 0.8395 & 0.7176 & 0.8241 & 0.7013 \\
			
			$T=200$ & 1.0109 & 0.8874 & 1.0067 & 0.8834 & 1.0224 & 0.8981 & 1.0197 & 0.8957 & 1.0229 & 0.8990 & 1.0317 & 0.9072 & 1.0146 & 0.8906 \\
			
			$T=300$ & 1.0927 & 0.9683 & 1.1123 & 0.9887 & 1.1076 & 0.9829 & 1.1009 & 0.9764 & 1.1021 & 0.9779 & 1.1027 & 0.9777 & 1.1091 & 0.9852 \\
			
			\midrule
			\multicolumn{15}{c}{$N$ = 300} \\
			\midrule
			
			$T=100$ & 0.8407 & 0.7181 & 0.8421 & 0.7189 & 0.8303 & 0.7085 & 0.8401 & 0.7171 & 0.8560 & 0.7334 & 0.8362 & 0.7140 & 0.8285 & 0.7058 \\
			
			$T=200$ & 1.0272 & 0.9038 & 1.0112 & 0.8886 & 1.0164 & 0.8933 & 1.0183 & 0.8950 & 1.0220 & 0.8983 & 1.0110 & 0.8875 & 1.0268 & 0.9031 \\
			
			$T=300$ & 1.1195 & 0.9952 & 1.1009 & 0.9760 & 1.0914 & 0.9668 & 1.0838 & 0.9589 & 1.1020 & 0.9777 & 1.1067 & 0.9809 & 1.1169 & 0.9928 \\
            			\midrule
			\multicolumn{15}{c}{Panel C: $\tau = 0.01$} \\
			\midrule
			\multicolumn{15}{c}{$N$ = 100} \\
			\midrule
			
			$T=100$ & 0.2899 & 0.2932 & 0.3666 & 0.3699 & 0.3314 & 0.3347 & 0.3711 & 0.3743 & 0.3236 & 0.3270 & 0.3251 & 0.3288 & 0.3539 & 0.3572 \\
			
			$T=200$ & 0.5127 & 0.3323 & 0.5344 & 0.3491 & 0.5467 & 0.3627 & 0.4970 & 0.3164 & 0.5315 & 0.3486 & 0.5493 & 0.3696 & 0.5409 & 0.3551 \\
			
			$T=300$ & 0.6798 & 0.3985 & 0.7094 & 0.4315 & 0.6411 & 0.3639 & 0.6617 & 0.3851 & 0.6949 & 0.4137 & 0.7022 & 0.4262 & 0.6835 & 0.4064 \\
			
			\midrule
			\multicolumn{15}{c}{$N$ = 200} \\
			\midrule
			
			$T=100$ & 0.3153 & 0.3187 & 0.3611 & 0.3645 & 0.3338 & 0.3374 & 0.3469 & 0.3500 & 0.3583 & 0.3617 & 0.3348 & 0.3384 & 0.3451 & 0.3486 \\
			
			$T=200$ & 0.5094 & 0.3272 & 0.4858 & 0.3077 & 0.5359 & 0.3558 & 0.5245 & 0.3414 & 0.5465 & 0.3636 & 0.5638 & 0.3786 & 0.5177 & 0.3368 \\
			
			$T=300$ & 0.6379 & 0.3612 & 0.6857 & 0.4120 & 0.6756 & 0.3985 & 0.6658 & 0.3905 & 0.6867 & 0.4076 & 0.6496 & 0.3739 & 0.6921 & 0.4120 \\
			
			\midrule
			\multicolumn{15}{c}{$N$ = 300} \\
			\midrule
			
			$T=100$ & 0.3440 & 0.3476 & 0.3375 & 0.3409 & 0.3236 & 0.3272 & 0.3418 & 0.3454 & 0.3295 & 0.3331 & 0.3071 & 0.3107 & 0.3290 & 0.3324 \\
			
			$T=200$ & 0.5573 & 0.3744 & 0.5050 & 0.3253 & 0.5385 & 0.3570 & 0.5263 & 0.3441 & 0.5343 & 0.3538 & 0.5301 & 0.3478 & 0.5568 & 0.3750 \\
			
			$T=300$ & 0.6992 & 0.4214 & 0.6543 & 0.3773 & 0.6502 & 0.3727 & 0.6326 & 0.3528 & 0.6348 & 0.3587 & 0.6488 & 0.3703 & 0.7159 & 0.4398 \\
			
			\bottomrule
		\end{tabular*}
	}
	{This table reports the bias of conditional expected shortfall estimators across Monte Carlo replications. Bias is defined as the difference between the estimated and true ES. Results are presented for seven data-generating scenarios, comparing the benchmark expected shortfall regression (ESR) and the proposed expected shortfall factor model (ESFM).}
\end{table}

Table \ref{tab:Estimation error of ES} reports the Monte Carlo bias of the ES estimators across various tail levels and sample sizes. A few key patterns emerge. First, for moderate tail levels ($\tau = 0.10$ and $\tau = 0.05$), ESFM consistently shows smaller bias than ESR across all scenarios. This improvement is more pronounced as the cross-sectional dimension increases, highlighting the advantage of latent factor modeling in estimating tail risk. Second, at the extreme tail level ($\tau = 0.01$), the performance difference narrows, especially in smaller samples. This is expected, as extreme tail extrapolation amplifies estimation noise, and the pseudo-response in ES construction increases lower-tail variability. Overall, these results confirm that modeling cross-sectional dependence significantly improves tail-risk estimation, especially when the tail probability is not too small.

The previous results assume the number of factors is known. Here, we examine the performance of the factor selection procedure. To assess the finite-sample behavior of the estimator, we report the Monte Carlo mean of the estimated factor number $\widehat{r}_{\tau}$ under different sample sizes and tail probability levels. For each simulation design, the factor number is estimated using the information criterion described in Section \ref{subsec:determine}. The true number of factors in the data-generating process is fixed at $r_\tau^0=2$, while the cross-sectional dimension $N$, the time dimension $T$, and the tail probability level $\tau$ vary across designs.
\begin{table}
	\footnotesize
	\TABLE
	{Monte Carlo mean of the estimated number of factors.
		\label{tab:Error for number of factors}}
	{\footnotesize
		\setlength{\tabcolsep}{2.5pt}
		\renewcommand{\arraystretch}{0.85}
		\begin{tabular*}{\textwidth}
			{@{\extracolsep{\fill}}lccccccc}
			\toprule
			& Scenario 1 & Scenario 2 & Scenario 3 & Scenario 4 
			& Scenario 5 & Scenario 6 & Scenario 7 \\
			\midrule
			
			\multicolumn{8}{c}{Panel A: $\tau = 0.10$} \\
			\midrule
			\multicolumn{8}{c}{$N$ = 100} \\
			\midrule
			
			$T=100$ & 2.1300 & 2.2500 & 2.1200 & 2.2400 & 2.2300 & 2.2200 & 2.2800 \\
			$T=200$ & 2.1100 & 2.1400 & 2.1000 & 2.1900 & 2.1600 & 2.1600 & 2.1600 \\
			$T=300$ & 2.0600 & 2.0500 & 2.0400 & 2.0900 & 2.0100 & 2.0600 & 2.1300 \\
			
			\midrule
			\multicolumn{8}{c}{$N$ = 200} \\
			\midrule
			
			$T=100$ & 2.1100 & 2.1600 & 2.1100 & 2.3500 & 2.1800 & 2.1800 & 2.1600 \\
			$T=200$ & 2.1000 & 2.1400 & 2.1000 & 2.1200 & 2.1100 & 2.1000 & 2.1400 \\
			$T=300$ & 2.0700 & 2.0500 & 2.0500 & 2.0300 & 2.0800 & 2.0500 & 2.0600 \\
			
			\midrule
			\multicolumn{8}{c}{$N$ = 300} \\
			\midrule
			
			$T=100$ & 2.1000 & 2.1500 & 2.1000 & 2.1000 & 2.1000 & 2.1300 & 2.1500 \\
			$T=200$ & 2.0900 & 2.1000 & 2.0700 & 2.0600 & 2.0300 & 1.9500 & 2.1100 \\
			$T=300$ & 2.0500 & 2.0400 & 2.0600 & 1.9800 & 1.9700 & 1.9600 & 2.0400 \\
			
			\midrule
			\multicolumn{8}{c}{Panel B: $\tau = 0.05$} \\
			\midrule
			\multicolumn{8}{c}{$N$ = 100} \\
			\midrule
			
			$T=100$ & 2.0900 & 2.0900 & 2.0900 & 2.1500 & 2.1500 & 2.2000 & 2.2000 \\
			$T=200$ & 2.0500 & 2.0400 & 2.0600 & 2.1300 & 2.1500 & 2.1500 & 2.1900 \\
			$T=300$ & 2.0000 & 2.0300 & 2.0500 & 2.0100 & 2.0500 & 2.1300 & 2.0600 \\
			
			\midrule
			\multicolumn{8}{c}{$N$ = 200} \\
			\midrule
			
			$T=100$ & 2.1500 & 2.1800 & 2.1900 & 2.1400 & 2.1300 & 2.1800 & 2.1900 \\
			$T=200$ & 2.1100 & 2.1200 & 2.0900 & 2.0600 & 2.0800 & 2.0900 & 2.0600 \\
			$T=300$ & 2.0900 & 2.0800 & 2.0200 & 2.0300 & 2.0500 & 2.0500 & 2.0200 \\
			
			\midrule
			\multicolumn{8}{c}{$N$ = 300} \\
			\midrule
			
			$T=100$ & 2.0900 & 2.1000 & 1.9100 & 1.9200 & 1.9200 & 1.9100 & 1.9000 \\
			$T=200$ & 2.0500 & 2.0900 & 1.9300 & 1.9400 & 1.9600 & 1.9500 & 1.9500 \\
			$T=300$ & 2.0800 & 2.0600 & 1.9500 & 1.9600 & 2.0200 & 1.9800 & 1.9900 \\
			
			\midrule
			\multicolumn{8}{c}{Panel C: $\tau = 0.01$} \\
			\midrule
			\multicolumn{8}{c}{$N$ = 100} \\
			\midrule
			
			$T=100$ & 1.9400 & 2.1000 & 1.9800 & 2.0600 & 1.9500 & 1.9400 & 1.9400 \\
			$T=200$ & 1.9600 & 2.0800 & 1.9600 & 2.0500 & 1.9600 & 1.9600 & 1.9500 \\
			$T=300$ & 2.0100 & 2.0200 & 2.0000 & 2.0100 & 1.9900 & 1.9800 & 1.9900 \\
			
			\midrule
			\multicolumn{8}{c}{$N$ = 200} \\
			\midrule
			
			$T=100$ & 1.9600 & 1.9600 & 1.9800 & 1.9600 & 1.9800 & 2.0400 & 2.0500 \\
			$T=200$ & 1.9800 & 1.9600 & 2.0200 & 1.9900 & 1.9900 & 2.0300 & 2.0200 \\
			$T=300$ & 1.9900 & 1.9900 & 2.0100 & 2.0000 & 2.0000 & 2.0100 & 2.0000 \\
			
			\midrule
			\multicolumn{8}{c}{$N$ = 300} \\
			\midrule
			
			$T=100$ & 1.9500 & 2.0400 & 1.9500 & 2.0300 & 1.9600 & 1.9600 & 2.0300 \\
			$T=200$ & 1.9900 & 1.9000 & 1.9800 & 1.9000 & 2.0100 & 2.0000 & 1.9900 \\
			$T=300$ & 2.0000 & 2.0000 & 2.0000 & 2.0000 & 2.0000 & 2.0000 & 2.0000 \\
			
			\bottomrule
		\end{tabular*}
	}
	{This table reports the Monte Carlo mean of the estimated number of factors $\widehat r$ across different simulation settings. The true number of factors in the data-generating process is $r=2$.}
\end{table}
Table \ref{tab:Error for number of factors} summarizes the Monte Carlo averages of $\widehat{r}_\tau$ across the considered simulation settings. Overall, the estimated factor number is close to the true value $r_\tau^0=2$ across all designs, with deviations diminishing as the sample size increases. In particular, when both $N$ and $T$ are moderately large, the average estimate concentrates tightly around 2, indicating reliable identification of the latent factor structure. The results are largely stable across different tail probability levels, suggesting that the factor-number selection procedure is not sensitive to the degree of tail sparsity.

\begin{figure}
\FIGURE
{\includegraphics[scale=0.45]{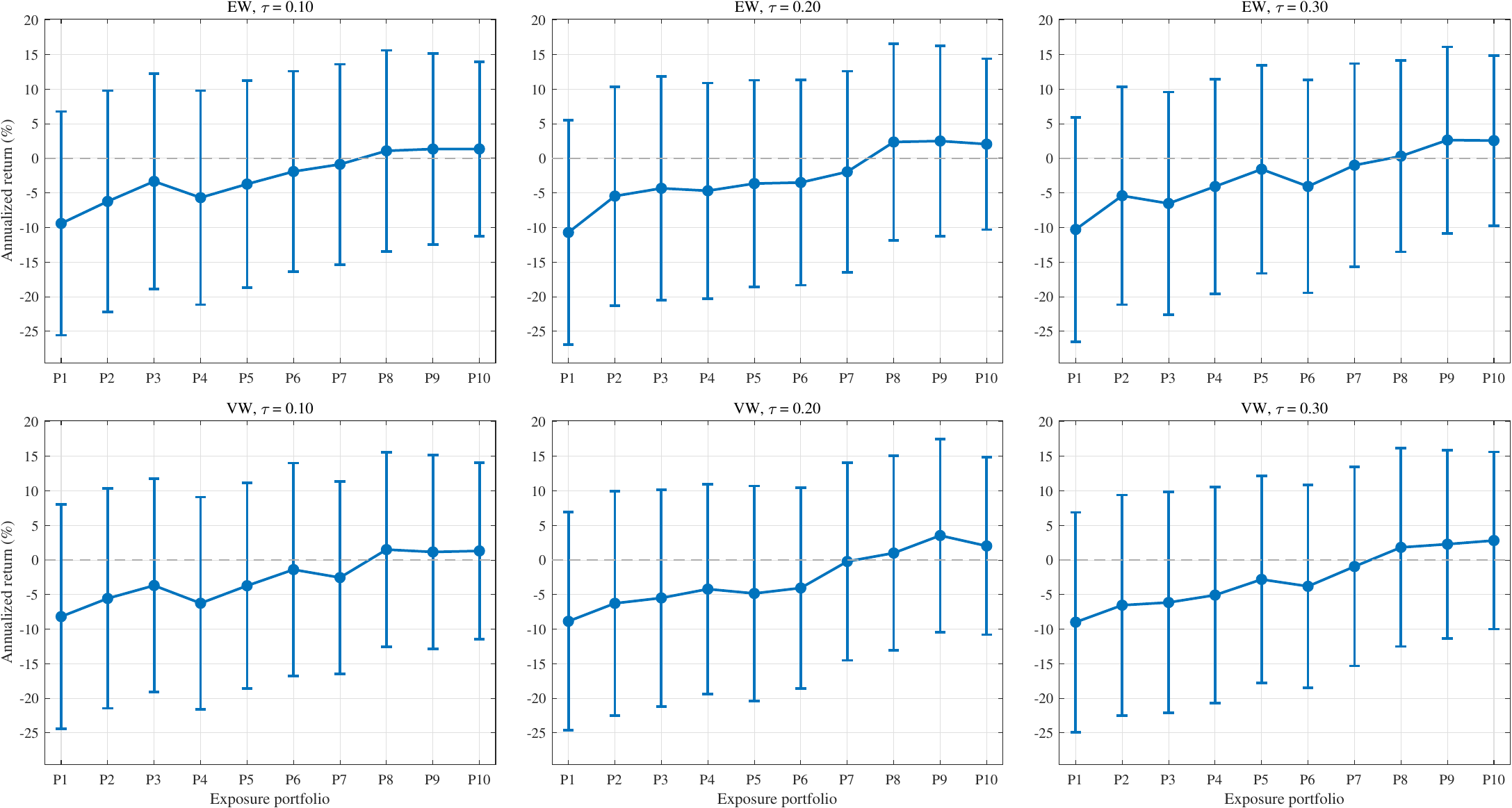}}
{Portfolio returns across ESFM exposure deciles.
\label{fig:ec_esfm_decile_returns}}
{This figure reports annualized average returns for decile portfolios sorted on estimated ESFM exposure. At each portfolio-formation date, stocks are assigned to ten portfolios, with P1 containing stocks with the lowest exposure and P10 those with the highest exposure. The top and bottom rows use latent factors estimated with equal-weighted and value-weighted versions of the observable factors, respectively. The columns correspond to $\tau=0.10$, $0.20$, and $0.30$. Sorting variables are estimated using information available at portfolio formation. Portfolios are equally weighted and held for one month. Returns are reported in percent per annum, and the vertical bars denote 95\% confidence intervals based on Newey--West standard errors with six lags.}
\end{figure}
\section{Additional Empirical Results}\label{sec:Additional Empirical Results}
This sample covers the most liquid and largest-cap stocks in the Chinese market, providing a representative laboratory for studying cross-sectional risk heterogeneity. To ensure data quality, we exclude assets with more than 20\% missing observations as well as stocks under special treatment (ST). This section includes additional empirical results not presented in the main text.

\subsection{Estimation Results}\label{subsec:additional_estimation_results}
The additional results reported in Figures \ref{fig:beta_EW}–\ref{fig:factor_EW_0p10} provide robustness checks under alternative specifications, including different tail levels and weighting schemes.

Figure \ref{fig:beta_EW} presents the corresponding coefficient estimates across models. The overall pattern remains stable relative to the main text: the ESFM continues to produce larger and more pronounced loadings across most observable factors, while the quantile-based model exhibits weaker and less stable estimates, particularly in the lower tail. The relative ordering across models is largely preserved across $\tau$ levels.

\begin{figure}
\FIGURE
{\includegraphics[scale=0.45]{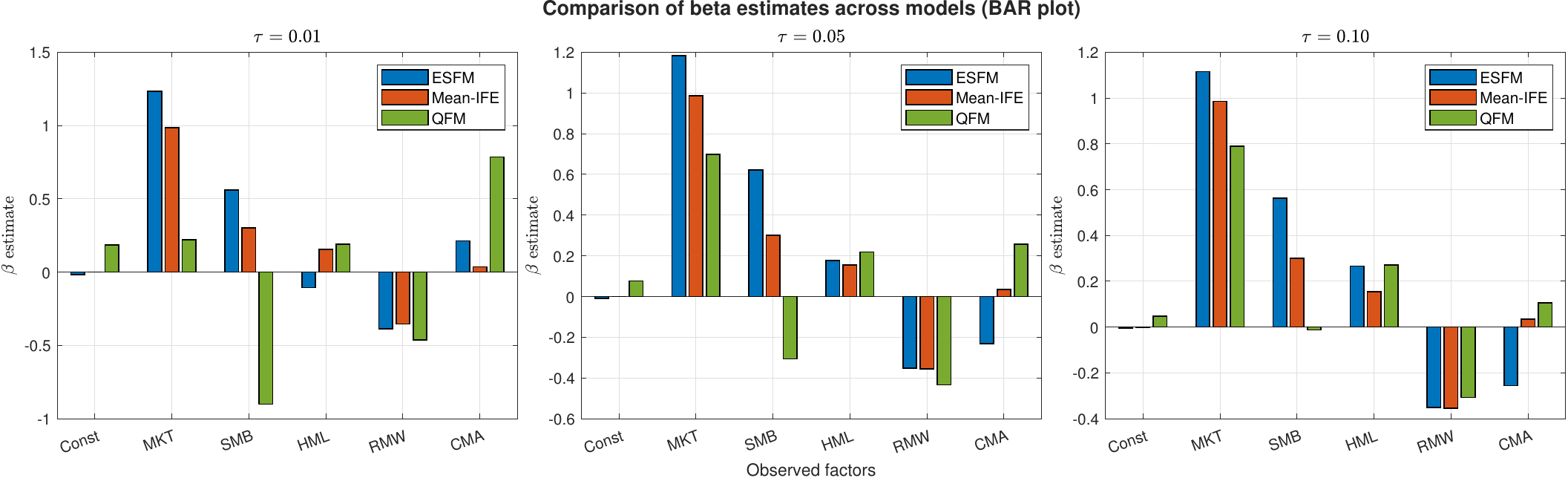}}
{Estimated coefficients ($\beta$) on observable risk factors.
\label{fig:beta_EW}}
{This figure reports the estimated coefficients on observable factors (MKT, SMB, HML, RMW, CMA, and a constant) from three models: ESFM, the mean factor model (Mean-IFE), and the quantile factor model (QFM). The estimates are obtained using EW observable factor returns, and results are shown for three tail levels ($\tau = 0.01, 0.05, 0.10$). Each panel corresponds to a different $\tau$, and bars represent the average coefficients across assets.}
\end{figure}

Figures \ref{fig:factor_VW_0p01} and \ref{fig:factor_VW_0p05} report the factor paths under value-weighted specifications for $\tau = 0.01$ and $\tau = 0.05$, respectively, while Figures \ref{fig:factor_EW_0p01}–\ref{fig:factor_EW_0p10} present the corresponding results under equal-weighted specifications. Several features are worth noting. First, the ES factors consistently display episodic spikes aligned with periods of market stress, whereas the mean factors remain comparatively smooth and the quantile factors exhibit persistent but less differentiated fluctuations. Second, these patterns are robust across weighting schemes, suggesting that the tail factor structure is not driven by portfolio construction choices. Third, as $\tau$ increases (Figures~\ref{fig:factor_VW_0p05}, \ref{fig:factor_VW_0p10}, \ref{fig:factor_EW_0p05}, and \ref{fig:factor_EW_0p10}), the distinction between ES and quantile factors becomes less pronounced, consistent with the interpretation that tail-specific information diminishes away from the extreme left tail.

\begin{figure}
\FIGURE
{\includegraphics[scale=0.59]{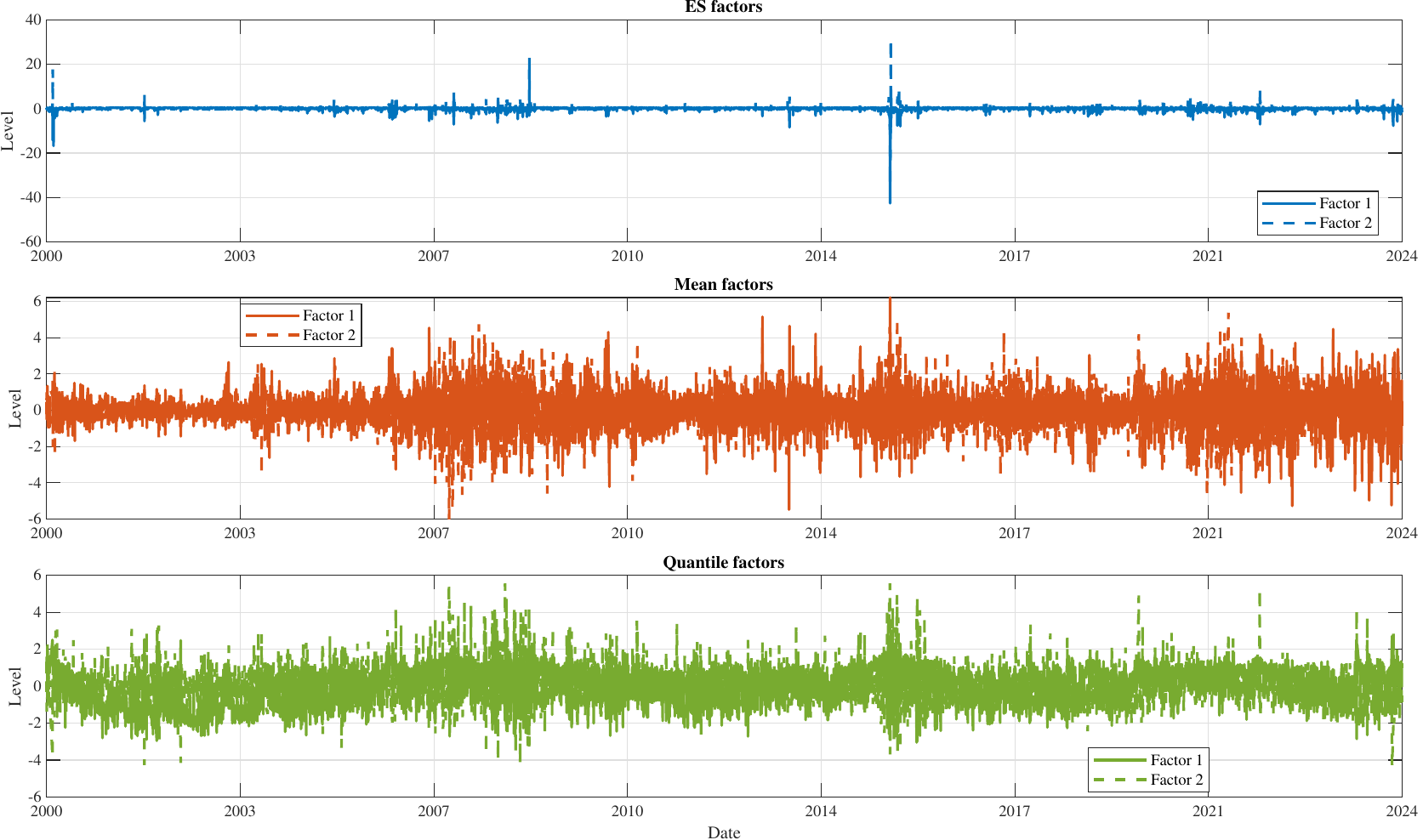}}
{Estimated factor paths across models ($r=2,\tau=0.01$).
\label{fig:factor_VW_0p01}}
{This figure plots the estimated latent factor paths from three models: ESFM (top panel), the mean factor model (middle panel), and the quantile factor model (bottom panel). Each model extracts two factors ($r = 2$) based on value-weighted Fama--French five factors. The quantile and ES factors are constructed using tail information at level $\tau = 0.01$.}
\end{figure}

\begin{figure}
\FIGURE
{\includegraphics[scale=0.59]{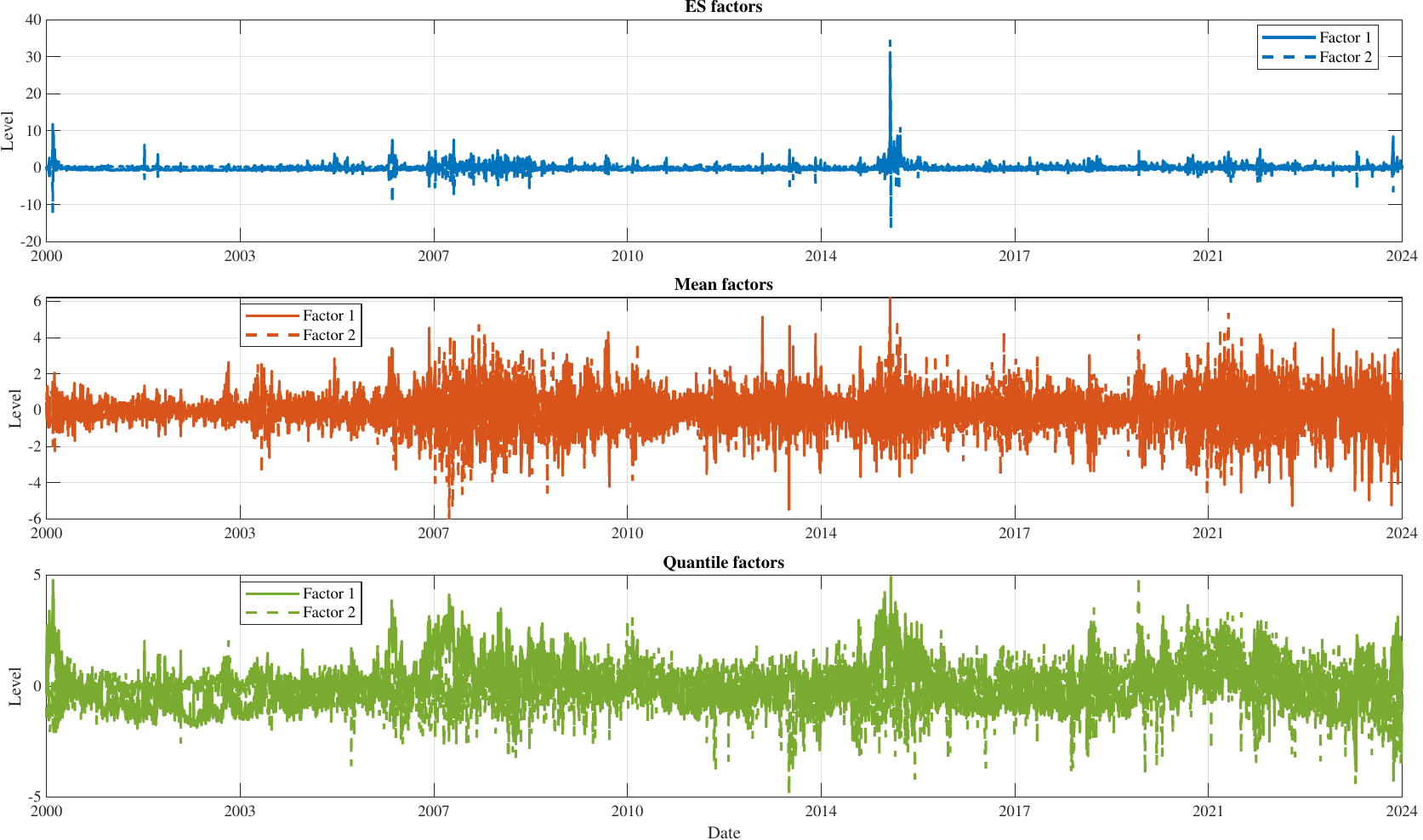}}
{Estimated factor paths across models ($r=2,\tau=0.05$).
\label{fig:factor_VW_0p05}}
{This figure plots the estimated latent factor paths from three models: ESFM (top panel), the mean factor model (middle panel), and the quantile factor model (bottom panel). Each model extracts two factors ($r = 2$) based on value-weighted Fama--French five factors. The quantile and ES factors are constructed using tail information at level $\tau = 0.05$.}
\end{figure}

\begin{figure}
\FIGURE
{\includegraphics[scale=0.59]{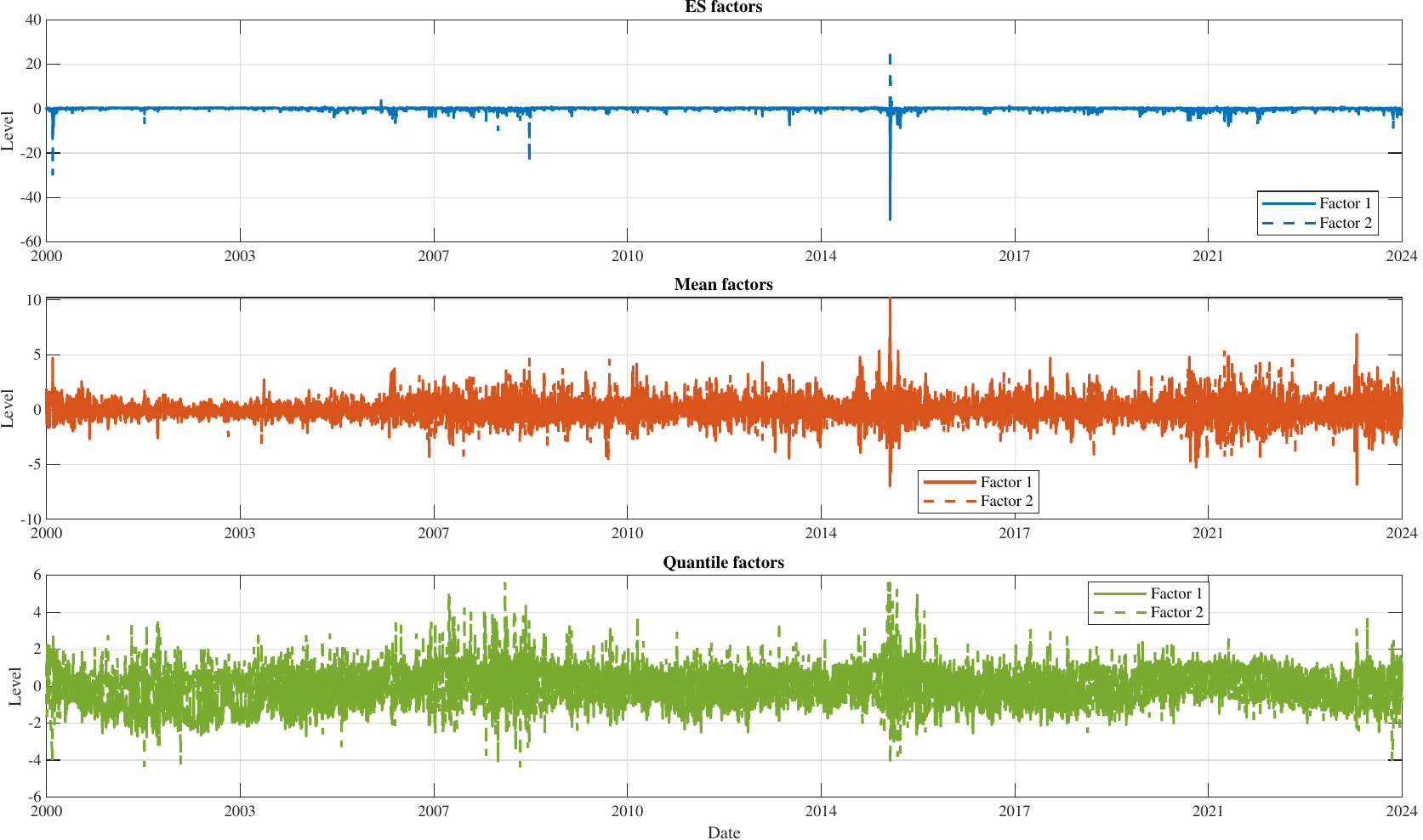}}
{Estimated factor paths across models ($r=2,\tau=0.01$).
\label{fig:factor_EW_0p01}}
{This figure plots the estimated latent factor paths from three models: ESFM (top panel), the mean factor model (middle panel), and the quantile factor model (bottom panel). Each model extracts two factors ($r = 2$) based on equal-weighted Fama--French five factors. The quantile and ES factors are constructed using tail information at level $\tau = 0.01$.}
\end{figure}

\begin{figure}
\FIGURE
{\includegraphics[scale=0.59]{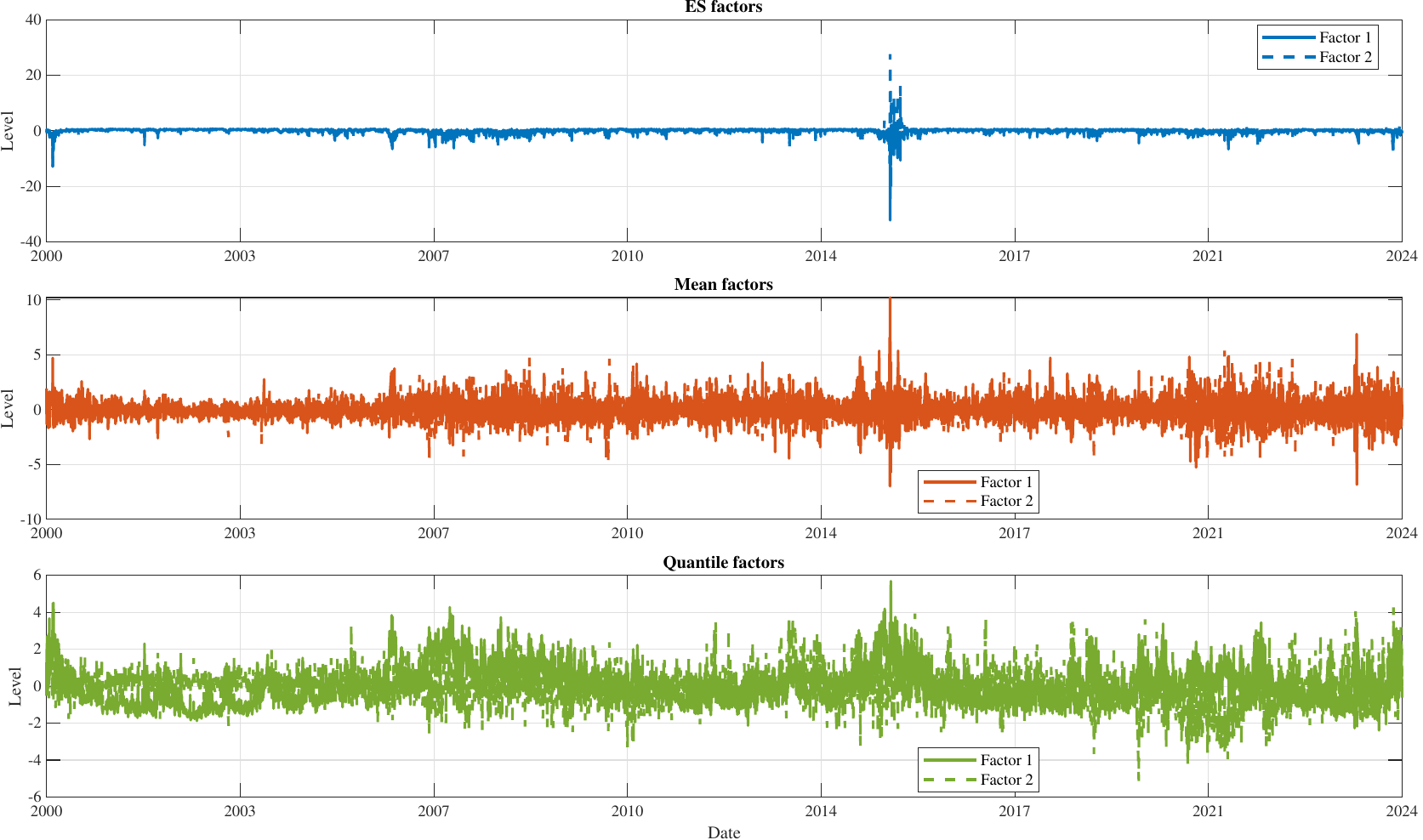}}
{Estimated factor paths across models ($r=2,\tau=0.05$).
\label{fig:factor_EW_0p05}}
{This figure plots the estimated latent factor paths from three models: ESFM (top panel), the mean factor model (middle panel), and the quantile factor model (bottom panel). Each model extracts two factors ($r = 2$) based on equal-weighted Fama--French five factors. The quantile and ES factors are constructed using tail information at level $\tau = 0.05$.}
\end{figure}

\begin{figure}
\FIGURE
{\includegraphics[scale=0.59]{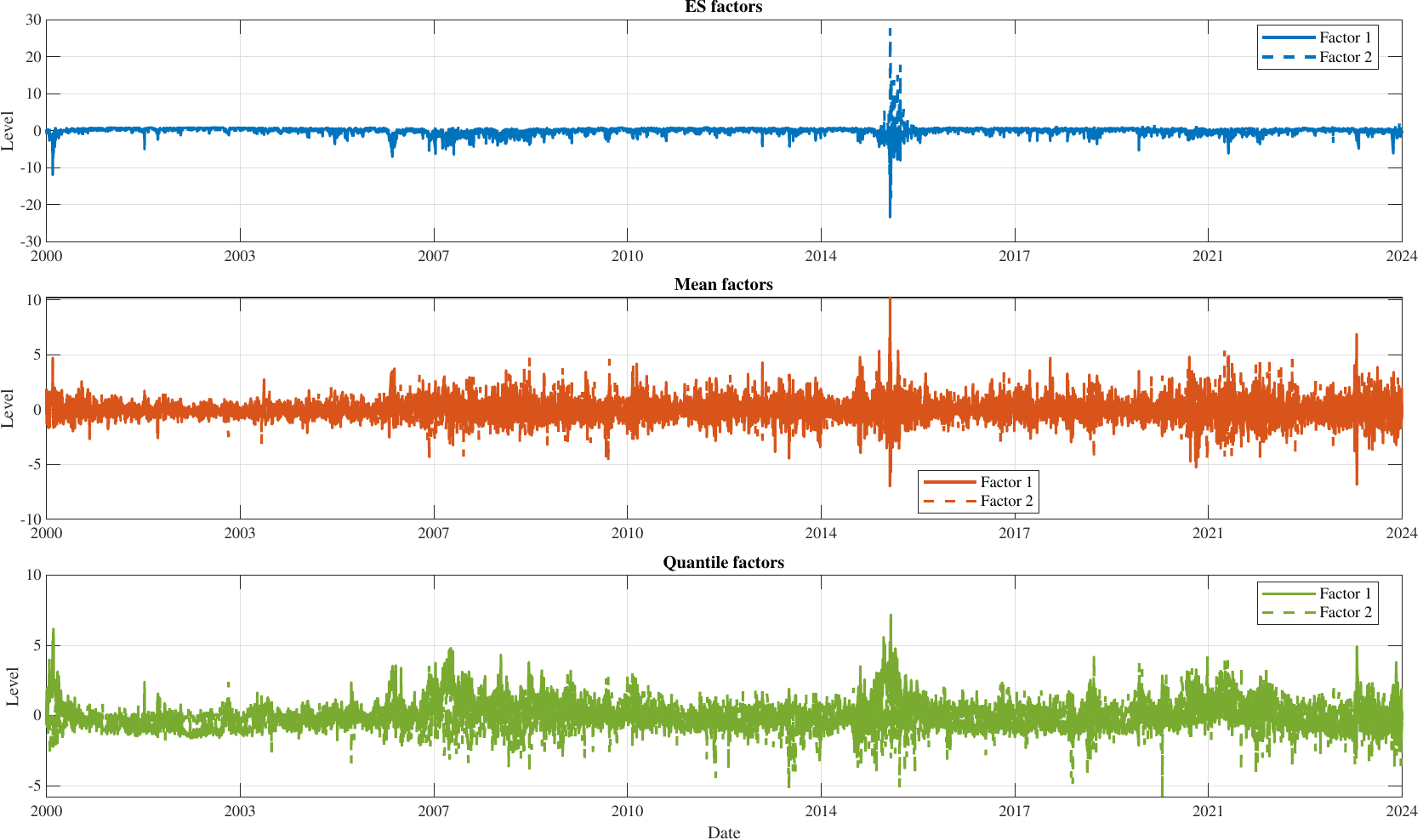}}
{Estimated factor paths across models ($r=2,\tau=0.10$).
\label{fig:factor_EW_0p10}}
{This figure plots the estimated latent factor paths from three models: ESFM (top panel), the mean factor model (middle panel), and the quantile factor model (bottom panel). Each model extracts two factors ($r = 2$) based on equal-weighted Fama--French five factors. The quantile and ES factors are constructed using tail information at level $\tau = 0.10$.}
\end{figure}

Overall, the appendix results confirm that the main empirical findings are not sensitive to alternative specifications. In particular, the ES factors continue to isolate tail risk components that are not captured by mean or quantile factor models, reinforcing the empirical relevance of the proposed framework.

\subsection{Asset Pricing}
As a robustness check, we re-compute the factor-mimicking portfolios using equal-weighted (EW) returns for the underlying Fama–French factors. Figure \ref{fig:cumu_returns_EW} plots the cumulative high-minus-low returns, and Table \ref{tab:Alphas in EW} reports the corresponding average returns and pricing alphas.

\begin{figure}
\FIGURE
{\includegraphics[scale=0.45]{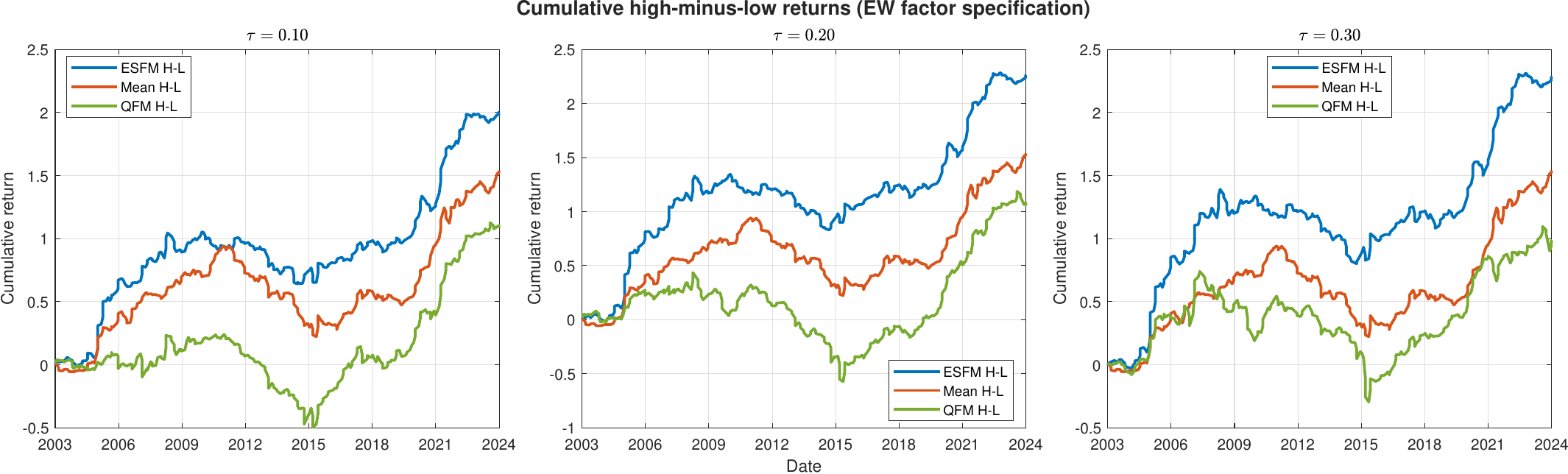}}
{Cumulative high-minus-low returns across models.
\label{fig:cumu_returns_EW}}
{This figure plots cumulative returns of high-minus-low (H--L) portfolios formed on factor exposures estimated from three models: ESFM, the mean factor model (Mean), and the quantile factor model (QFM). Stocks are sorted into five portfolios based on estimated exposures, and the H--L portfolio is constructed as the difference between the highest- and lowest-exposure portfolios. Portfolio returns are computed using equal-weighted (EW) returns, while the underlying Fama--French five factors are constructed using EW returns. Each panel corresponds to a different tail level ($\tau=0.10,0.20,0.30$).}
\end{figure}

The overall patterns remain closely aligned with the results in the main text. In Figure \ref{fig:cumu_returns_EW}, the ESFM-based portfolios continue to deliver the strongest and most persistent H–L performance across all tail levels, while the spreads implied by the mean and quantile factor models are noticeably weaker. This ranking is particularly clear at higher tail levels, where the ESFM spreads widen further relative to the alternatives.

\begin{table}
	\footnotesize
	\TABLE
	{Portfolio returns and alphas sorted on factor exposures across models.
	\label{tab:Alphas in EW}}
	{\footnotesize
	\setlength{\tabcolsep}{2.5pt}
	\renewcommand{\arraystretch}{0.85}
	\begin{tabular*}{\textwidth}
		{@{\extracolsep{\fill}}rrccccrccccrcccc}
		\toprule
		& & \multicolumn{4}{c}{$\tau=0.10$}
		& & \multicolumn{4}{c}{$\tau=0.20$}
		& & \multicolumn{4}{c}{$\tau=0.30$} \\
		\cmidrule{3-6}\cmidrule{8-11}\cmidrule{13-16}
		No.G & & Average & CAPM & FF3 & FF5
		& & Average & CAPM & FF3 & FF5
		& & Average & CAPM & FF3 & FF5 \\
		\midrule
		
		5 & ESFM
		& 9.16\% & 10.18\% & 11.58\% & 11.32\%
		& & 10.33\% & 11.45\% & 13.41\% & 12.17\%
		& & 10.37\% & 11.58\% & 14.02\% & 12.38\% \\
		
		& 
		& (2.86) & (3.32) & (3.96) & (3.62)
		& & (2.93) & (3.35) & (4.16) & (3.68)
		& & (2.80) & (3.22) & (4.27) & (3.74) \\
		
		5 & Mean
		& 7.02\% & 7.41\% & 7.82\% & 8.05\%
		& & 7.02\% & 7.41\% & 7.82\% & 8.05\%
		& & 7.02\% & 7.41\% & 7.82\% & 8.05\% \\
		
		&
		& (2.43) & (2.55) & (2.83) & (2.58)
		& & (2.43) & (2.55) & (2.83) & (2.58)
		& & (2.43) & (2.55) & (2.83) & (2.58) \\
		
		5 & QFM
		& 4.95\% & 5.94\% & 4.94\% & 5.08\%
		& & 4.82\% & 5.73\% & 6.93\% & 6.05\%
		& & 4.42\% & 5.07\% & 8.62\% & 4.66\% \\
		
		&
		& (1.66) & (2.04) & (1.72) & (1.72)
		& & (1.46) & (1.76) & (2.33) & (1.91)
		& & (1.21) & (1.35) & (2.67) & (1.48) \\
		
		\midrule
		
		10 & ESFM
		& 10.75\% & 11.99\% & 13.66\% & 13.48\%
		& & 12.73\% & 14.15\% & 16.68\% & 15.80\%
		& & 12.79\% & 14.34\% & 17.65\% & 16.06\% \\
		
		&
		& (2.71) & (3.18) & (3.85) & (3.45)
		& & (3.04) & (3.56) & (4.51) & (4.03)
		& & (2.95) & (3.46) & (4.65) & (4.07) \\
		
		10 & Mean
		& 9.92\% & 10.43\% & 10.96\% & 11.50\%
		& & 9.92\% & 10.43\% & 10.96\% & 11.50\%
		& & 9.92\% & 10.43\% & 10.96\% & 11.50\% \\
		
		&
		& (2.76) & (2.87) & (2.99) & (2.78)
		& & (2.76) & (2.87) & (2.99) & (2.78)
		& & (2.76) & (2.87) & (2.99) & (2.78) \\
		
		10 & QFM
		& 7.56\% & 8.77\% & 7.58\% & 7.28\%
		& & 6.10\% & 7.22\% & 8.97\% & 7.47\%
		& & 5.20\% & 5.85\% & 10.16\% & 5.92\% \\
		
		&
		& (2.01) & (2.41) & (2.05) & (1.89)
		& & (1.60) & (1.92) & (2.53) & (1.98)
		& & (1.32) & (1.46) & (2.91) & (1.71) \\
		
		\bottomrule
	\end{tabular*}
	}
	{This table reports annualized portfolio returns and alphas (in \%) sorted on factor exposures estimated from different models (ESFM, Mean, and QFM) across three tail levels ($\tau=0.10,0.20,0.30$). We report estimated intercepts (alphas) from regressing the returns on different sets of asset pricing factors: market (CAPM), the three-factor model of \citeEC{fama1993common} (FF3), and the five-factor model of \citeEC{fama2015five} (FF5). Stocks are sorted into No.G $=5$ or $10$ portfolios based on estimated exposures, and portfolio returns are computed using the equal-weighted (EW) scheme. The underlying Fama--French three- and five-factor returns are constructed using EW returns. The Average column reports the annualized mean H--L return. The CAPM, FF3, and FF5 columns report annualized intercepts from the corresponding time-series regressions. Newey--West $t$-statistics (six lags) are reported in parentheses. The ``High--Low'' spread captures the return difference between the highest- and lowest-exposure portfolios (the average is reported in column ``Average'').}
\end{table}

Table \ref{tab:Alphas in EW} provides complementary evidence from the alpha perspective. The ESFM portfolios generate the largest return spreads and corresponding alphas across specifications, and these alphas remain economically large after controlling for CAPM, FF3, and FF5 factors constructed using EW returns. In contrast, the pricing performance of the mean and quantile-based portfolios is substantially weaker and less stable.

Taken together, these results confirm that our main findings are not driven by the weighting scheme used to construct the benchmark factors. The ESFM-based tail factors continue to exhibit strong and robust pricing ability under alternative implementations.

\FloatBarrier
\clearpage

\bibliographystyleEC{informs2014}
\bibliographyEC{ref.bib}

\end{document}